
\documentclass[12pt, onecolumn]{IEEEtran} 

\usepackage[utf8]{inputenc} 
\usepackage[T1]{fontenc}
\usepackage{url}
\usepackage{ifthen}
\usepackage{cite}
\usepackage[cmex10]{amsmath} 
\usepackage{amssymb}
\usepackage{algorithm}
\usepackage{algorithmic}
\usepackage{subfigure}
\usepackage{tikz}
\usepackage{algorithm}
\usepackage{algorithmic}
\usepackage{amsmath}
\usepackage{amsthm}
\usepackage{amssymb}
\usepackage{ecltree}
\usepackage{enumerate}
\usepackage{mathrsfs}
\usepackage{xcolor}
\usepackage{comment}
\usepackage{tikz}
\usepackage{wrapfig}
\usepackage{mathtools}
\usetikzlibrary{trees}

\usepackage{here}
\usepackage{cases}
\usepackage{stackengine}
\usepackage{longtable}

\def\BibTeX{{\rm B\kern-.05em{\sc i\kern-.025em b}\kern-.08em
    T\kern-.1667em\lower.7ex\hbox{E}\kern-.125emX}}

\newtheorem{definition}{Definition}
\newtheorem{lemma}{Lemma}
\newtheorem{theorem}{Theorem}

\newtheorem{remark}{Remark}
\newtheorem{example}{Example}

\def\trans{\mathrm{\tau}}

\def\PREF{\mathcal{P}}

\def\hdec{\text{-}\mathrm{dec}}
\def\ext{\mathrm{ext}}

\def\reg{\mathrm{reg}}
\def\opt{\mathrm{opt}}

\def\AIFV{\mathrm{AIFV}}
\def\pref{\mathrm{pref}}
\def\suff{\mathrm{suff}}
\def\kernel{\mathcal{R}}
\def \assign{\mathcal{S}}
\def \prefset{\mathscr{P}}

\newcommand{\argmax}{\mathop{\rm arg~max}\limits}
\newcommand{\argmin}{\mathop{\rm arg~min}\limits}

\begin{document}

\title{The Optimality of AIFV Codes\\ in the Class of $2$-bit Delay Decodable Codes} 

\author{Kengo Hashimoto, Ken-ichi Iwata \thanks{University of Fukui, Japan. E-mail: \{khasimot, k-iwata\}@u-fukui.ac.jp}}


\maketitle

\begin{abstract}
AIFV (almost instantaneous fixed-to-variable length) codes are noiseless source codes that can attain a shorter average codeword length than Huffman codes by allowing a time-variant encoder with two code tables and a decoding delay of at most $2$ bits.
First, we consider a general class of noiseless source codes, called $k$-bit delay decodable codes, in which one allows a finite number of code tables and a decoding delay of at most $k$ bits for $k \geq 0$.
Then we prove that AIFV codes achieve the optimal average codeword length in the $2$-bit delay decodable codes class.
\end{abstract}


\section{Introduction}
\label{sec:introduction}

Huffman codes \cite{Huffman1952} achieve the optimal average codeword length in the class of instantaneous (i.e., uniquely decodable without decoding delay) codes.
McMillan's theorem \cite{McMillan1956} implies that Huffman codes achieve the optimal average codeword length also in the class of uniquely decodable codes.
However, McMillan's theorem implicitly assumes that a single code table is used for coding.
When multiple code tables and decoding delay of some bits are allowed, one can achieve a shorter average codeword length than Huffman codes. 
AIFV (almost instantaneous fixed-to-variable length) codes developed by Yamamoto, Tsuchihashi, and Honda \cite{Yamamoto2015} can attain a shorter average codeword length than Huffman codes by using a time-variant encoder with two code tables and allowing decoding delay of at most two bits.

AIFV codes are generalized to binary AIFV-$m$ codes \cite{Hu2017}, which can achieve a shorter average codeword length than AIFV codes for $m \geq 3$, allowing $m$ code tables and a decoding delay of at most $m$ bits.
The worst-case redundancy of AIFV and AIFV-$m$ codes are analyzed in \cite{Hu2017, Fujita2020} for $m=2,3,4,5$. 
The literature \cite{IY:ISITA16, IY:ITW17, Fujita2019, Fujita2018, ISIT2018, ISITA2018, Golin2019, Golin2020, ISIT2020, Golin2021, Golin2022, Sumigawa2017, Hashimoto2019, ITW2020} proposes the code construction and coding method of AIFV and AIFV-$m$ codes.
Extensions of AIFV-$m$ codes are proposed in \cite{Sugiura2018, Sugiura2022}.

The literature \cite{JSAIT2022} formalizes a binary encoder with a finite number of code tables as a \emph{code-tuple} and
introduces the class of code-tuples decodable with a delay of at most $k$ bits as the class of \emph{$k$-bit delay decodable codes},
which general properties are studied in \cite{IEICE2023}.
It is known that Huffman codes achieve the optimal average codeword length in the class of $1$-bit delay decodable code-tuples \cite{JSAIT2022}.
On the other hand, for the class of $2$-bit delay decodable code-tuples, only a partial result, limited to the case of two code tables, is known:
 AIFV codes achieve the optimal average codeword length in the class of $2$-bit delay decodable code-tuples with two code tables \cite{Hashimoto2021}.
This paper removes the constraint of two code tables and gives a complete result for the class of $2$-bit delay decodable code-tuples.
Namely, we prove that AIFV codes achieve the optimal average codeword length in the class of $2$-bit delay decodable codes with a finite number of code tables.
 
This paper is organized as follows.
\begin{itemize}
\item In Section \ref{sec:preliminary}, we prepare some notations, describe our data compression scheme, introduce some notions including $k$-bit delay decodable code-tuples, and show their basic properties.
\item In Section \ref{sec:optimality}, we prove the main result, the optimality of AIFV codes in the class of $2$-bit delay decodable code-tuples.
\item Lastly, we conclude this paper in Section \ref{sec:conclusion}.
\end{itemize}
To clarify the flow of the discussion, we relegate the proofs of most of the lemmas to the appendix.
The main notations are listed in Appendix \ref{subsec:notation}.

\section{Preliminaries}
\label{sec:preliminary}

This paper focuses on binary coding in which a source sequence over a finite alphabet $\mathcal{S}$ is encoded to a codeword sequence over $\mathcal{C} \coloneqq \{0, 1\}$.

We first define some notations based on \cite{JSAIT2022, IEICE2023}.
Let $|\mathcal{A}|$ denote the cardinality of a finite set $\mathcal{A}$.
Let $\mathcal{A}^k$ (resp. $\mathcal{A}^{\ast}$, $\mathcal{A}^{+}$) denote the set of all sequences of length $k$ (resp. finite length, finite positive length) over a set $\mathcal{A}$.
Namely, $\mathcal{A}^{+} = \mathcal{A}^{\ast} \setminus \{\lambda\}$, where $\lambda$ denotes the empty sequence.
The length of a sequence $\pmb{x}$ is denoted by $|\pmb{x}|$, in particular, $|\lambda| = 0$.
We say $\pmb{x} \preceq \pmb{y}$ if $\pmb{x}$ is a prefix of $\pmb{y}$, that is, there exists a sequence $\pmb{z}$, possibly $\pmb{z} = \lambda$, such that $\pmb{y} = \pmb{x}\pmb{z}$.
Also, we say $\pmb{x} \prec \pmb{y}$ if $\pmb{x} \preceq \pmb{y}$ and $\pmb{x} \neq \pmb{y}$.
For a non-empty sequence $ \pmb{x} = x_1x_2\ldots x_{n}$, we define
$\pref(\pmb{x}) = x_1x_2\ldots x_{n-1}$ and $\suff(\pmb{x}) = x_2\ldots x_{n-1}x_{n}$.
Namely, $\pref(\pmb{x})$ (resp. $\suff(\pmb{x})$) is the sequence obtained by deleting the last (resp. first) letter from $\pmb{x}$.
For $c \in \mathcal{C}$, the negation of $c$ is denoted by $\bar{c}$, that is, $\bar{0} \coloneqq 1$ and $\bar{1} \coloneqq 0$.
For $c \in \mathcal{C}$ and $\mathcal{A} \subseteq \mathcal{C}^{\ast}$, we define $c\mathcal{A} \coloneqq \{c\pmb{b} : \pmb{b} \in \mathcal{A}\} \subseteq \mathcal{C}^{\ast}$.
The main notations are listed in Appendix \ref{subsec:notation}.

In this paper, we consider a data compression system consisting of a source, an encoder, and a decoder, described as follows.
\begin{itemize}
\item Source: We consider an i.i.d.\ source, which outputs a sequence $\pmb{x} = x_1x_2\ldots x_n$ of symbols of the source alphabet $\mathcal{S} = \{s_1, s_2, \ldots, s_{\sigma}\}$, where $n$ and $\sigma$ denote the length of $\pmb{x}$ and the alphabet size, respectively.
In this paper, we assume $\sigma \geq 2$.
Each source output follows a fixed probability distribution $(\mu(s_1), \mu(s_2), \ldots, \mu(s_{\sigma}))$, where $\mu(s_i)$ is the probability of occurrence of $s_i$ for $i = 1, 2, \ldots, \sigma$.
More precisely, we fix a real-valued function $\mu : \mathcal{S} \rightarrow \mathbb{R}$ such that $\sum_{s \in \mathcal{S}} \mu(s) = 1$ and $0 < \mu(s) \leq 1$ for any $s \in \mathcal{S}$.
Note that we exclude the case where $\mu(s) = 0$ for some $s \in \mathcal{S}$ without loss of generality.

\item Encoder: The encoder has $m$ fixed code tables $f_0, f_1, \ldots, f_{m-1} : \mathcal{S} \rightarrow \mathcal{C}^{\ast}$.
The encoder reads the source sequence $\pmb{x} \in \mathcal{S}^{\ast}$ symbol by symbol from the beginning of $\pmb{x}$ and encodes them according to the code tables.
For the first symbol $x_1$, we use an arbitrarily chosen code table from $f_0, f_1, \ldots, f_{m-1}$.
For $x_2, x_3, \ldots, x_n$, we determine which code table to use to encode them according to $m$ fixed mappings $\trans_0, \trans_1, \ldots, \trans_{m-1} : \mathcal{S} \rightarrow [m] \coloneqq \{0, 1, 2, \ldots, m-1\}$.
More specifically, if the previous symbol $x_{i-1}$ is encoded by the code table $f_j$, then the current symbol $x_i$ is encoded by the code table $f_{\trans_j(x_{i-1})}$.
Hence, if we use the code table $f_i$ to encode $x_1$, then a source sequence $\pmb{x} = x_1x_2\ldots x_n$ is encoded to a codeword sequence $f(\pmb{x}) \coloneqq f_{i_1}(x_1)f_{i_2}(x_n)\ldots f_{i_n}(x_n)$, where
\begin{equation}
i_j \coloneqq
\begin{cases}
i &\,\,\text{if}\,\, j = 1,\\
\trans_{i_{j-1}}(x_{j-1})  &\,\,\text{if}\,\, j \geq 2
\end{cases}
\end{equation}
for $j = 1, 2, \ldots, n$.

\item Decoder: 
The decoder reads the codeword sequence $f(\pmb{x})$ bit by bit from the beginning of $f(\pmb{x})$.
Each time the decoder reads a bit, the decoder recovers as long prefix of $\pmb{x}$ as the decoder can uniquely identify from the prefix of $f(\pmb{x})$ already read.
We assume that the encoder and decoder share the index $i_1$ of the code table used to encode $x_1$ in advance.
\end{itemize}

\subsection{Code-tuples}
\label{subsec:treepair}

The behavior of the encoder and decoder for a given source sequence is completely determined by $m$ code tables $f_0, f_1, \ldots, f_{m-1}$, and $m$ mappings $\trans_0, \trans_1, \ldots, \trans_{m-1}$ if we fix the index of code table used to encode $x_1$.
Accordingly, we name a tuple $F(f_0, f_1, \ldots, f_{m-1}, \trans_0, \trans_1, \ldots, \trans_{m-1})$ as a \emph{code-tuple} $F$ and identify a source code with a code-tuple $F$.

\begin{definition}
  \label{def:treepair}
Let $m$ be a positive integer.
An \emph{$m$-code-tuple} $F(f_0, f_1, \ldots, f_{m-1}, \trans_0, \trans_1, \ldots, \trans_{m-1})$ is a tuple of
$m$ mappings $f_0, f_1, \ldots, f_{m-1} : \mathcal{S} \rightarrow \mathcal{C}^{\ast}$ and $m$ mappings $\trans_0, \trans_1, \ldots, \trans_{m-1} : \mathcal{S} \rightarrow [m]$.

We define $\mathscr{F}^{(m)}$ as the set of all $m$-code-tuples.
Also, we define
$\mathscr{F} \coloneqq  \mathscr{F}^{(1)} \cup \mathscr{F}^{(2)} \cup \mathscr{F}^{(3)} \cup \cdots.$
An element of $\mathscr{F}$ is called a \emph{code-tuple}.
\end{definition}

We write $F(f_0, f_1, \ldots, f_{m-1}, \trans_0, \trans_1, \ldots, \trans_{m-1})$ also as $F(f, \trans)$ or $F$ for simplicity.
For $F \in \mathscr{F}^{(m)}$, let $|F|$ denote the number of code tables of $F$, that is, $|F| \coloneqq m$.
We write $[|F|] = \{0, 1, 2, \ldots, |F|-1\}$ as $[F]$ for simplicity.

\begin{definition}
\label{def:assign}
For $F(f, \trans) \in \mathscr{F}, i \in [F]$, and $\pmb{b} \in \mathcal{C}^{\ast}$, we define
$\assign_{F, i}(\pmb{b}) \coloneqq \{s \in \mathcal{S} : f_i(s) = \pmb{b}\}$.
\end{definition}
Note that $f_i$ is injective if and only if $|\assign_{F, i}(\pmb{b})| \leq 1$ holds for any $\pmb{b} \in \mathcal{C}^{\ast}$.

\begin{table}
\caption{Examples of a code-tuple}
\label{tab:code-tuple}
\centering

\begin{tabular}{c | lclclclc}
\hline
$s \in \mathcal{S}$ & $f^{(\alpha)}_0$ & $\trans^{(\alpha)}_0$ & $f^{(\alpha)}_1$ & $\trans^{(\alpha)}_1$ & $f^{(\alpha)}_2$ & $\trans^{(\alpha)}_2$\\
\hline
a & 110 & 0 & 010 & 0 & $\lambda$ & 2\\
b & $\lambda$ & 1 & 011 & 2 & $\lambda$ & 2\\
c & 110 & 2 & 1 & 2 & $\lambda$ & 2\\
d & 111 & 0 & 10 & 1 & $\lambda$ & 2\\
\hline
\end{tabular}
\vspace{8pt}

\begin{tabular}{c | lclclc}
\hline
$s \in \mathcal{S}$ & $f^{(\beta)}_0$ & $\trans^{(\beta)}_0$ & $f^{(\beta)}_1$ & $\trans^{(\beta)}_1$ & $f^{(\beta)}_2$ & $\trans^{(\beta)}_2$\\
\hline
a & 11 & 1 & 0110 & 1 & 10 & 2\\
b & $\lambda$ & 1 & 0110 & 1 & 11 & 2\\
c & 101 & 2 & 01 & 1 & 1000 & 2\\
d & 1011 & 1 & 0111 & 1 & 1001 & 2\\
\hline
\end{tabular}
\vspace{8pt}

\begin{tabular}{c | lclclc}
\hline
$s \in \mathcal{S}$ & $f^{(\gamma)}_0$ & $\trans^{(\gamma)}_0$ & $f^{(\gamma)}_1$ & $\trans^{(\gamma)}_1$ & $f^{(\gamma)}_2$ & $\trans^{(\gamma)}_2$\\
\hline
a & 01 & 0 & 00 & 1 & 1100 & 1\\
b & 10 & 1 & $\lambda$ & 0 & 1110 & 0\\
c & 0100 & 0 & 00111 & 1 & 111000 & 2\\
d & 01 & 2 & 00111 & 2 & 110 & 2\\
\hline
\end{tabular}
\vspace{8pt}

\begin{tabular}{c | lclclc}
\hline
$s \in \mathcal{S}$ & $f^{(\delta)}_0$ & $\trans^{(\delta)}_0$ & $f^{(\delta)}_1$ & $\trans^{(\delta)}_1$ & $f^{(\delta)}_2$ & $\trans^{(\delta)}_2$\\
\hline
a & 01 & 0 & 00 & 1 & 100 & 1\\
b & 10 & 1 & $\lambda$ & 0 & 110 & 0\\
c & 0100 & 0 & 00111 & 1 & 110001 & 2\\
d & 011 & 2 & 001111 & 2 & 101 & 2\\
\hline
\end{tabular}
\vspace{8pt}

\begin{tabular}{c | lclclc}
\hline
$s \in \mathcal{S}$ & $f^{(\epsilon)}_0$ & $\trans^{(\epsilon)}_0$ & $f^{(\epsilon)}_1$ & $\trans^{(\epsilon)}_1$ & $f^{(\epsilon)}_2$ & $\trans^{(\epsilon)}_2$\\
\hline
a & 01 & 0 & 00 & 1 & 00 & 1\\
b & 10 & 1 & $\lambda$ & 0 & 10 & 0\\
c & 0100 & 0 & 00111 & 1 & 100011 & 2\\
d & 0111 & 2 & 0011111 & 2 & 011 & 2\\
\hline
\end{tabular}
\vspace{8pt}

\begin{tabular}{c | lclclc}
\hline
$s \in \mathcal{S}$ & $f^{(\zeta)}_0$ & $\trans^{(\zeta)}_0$ & $f^{(\zeta)}_1$ & $\trans^{(\zeta)}_1$ & $f^{(\zeta)}_2$ & $\trans^{(\zeta)}_2$\\
\hline
a & 10 & 0 & 01 & 1 & 00 & 1\\
b & 11 & 1 & $\lambda$ & 0 & 10 & 0\\
c & 1000 & 0 & 01001 & 1 & 100011 & 2\\
d & 1001 & 2 & 0100100 & 2 & 011 & 2\\
\hline
\end{tabular}
\vspace{8pt}

\begin{tabular}{c | lclclc}
\hline
$s \in \mathcal{S}$ & $f^{(\eta)}_0$ & $\trans^{(\eta)}_0$ & $f^{(\eta)}_1$ & $\trans^{(\eta)}_1$ & $f^{(\eta)}_2$ & $\trans^{(\eta)}_2$\\
\hline
a & 01 & 0 & 01 & 1 & 00 & 1\\
b & 1 & 1 & 1 & 0 & 101 & 0\\
c & 0001 & 0 & 01001 & 1 & 100011 & 2\\
d & 001 & 2 & 0100100 & 2 & 011 & 2\\
\hline
\end{tabular}
\vspace{8pt}

\begin{tabular}{c | lclclc}
\hline
$s \in \mathcal{S}$ & $f^{(\theta)}_0$ & $\trans^{(\theta)}_0$ & $f^{(\theta)}_1$ & $\trans^{(\theta)}_1$ & $f^{(\theta)}_2$ & $\trans^{(\theta)}_2$\\
\hline
a & 01 & 0 & 01 & 1 & 10 & 1\\
b & 1 & 1 & 1 & 0 & 011 & 0\\
c & 0001 & 0 & 01001 & 1 & 010011 & 2\\
d & 001 & 2 & 0100100 & 2 & 111 & 2\\
\hline
\end{tabular}
\vspace{8pt}

\begin{tabular}{c | lclc}
\hline
$s \in \mathcal{S}$ & $f^{(\iota)}_0$ & $\trans^{(\iota)}_0$ & $f^{(\iota)}_1$ & $\trans^{(\iota)}_1$\\
\hline
a & 01 & 1 & 01 & 1\\
b & 1 & 1 & 1 & 0\\
c & 0001 & 0 & 01001 & 1\\
d & 001 & 1 & 0100100 & 1\\
\hline
\end{tabular}
\vspace{8pt}

\begin{tabular}{c | lclclc}
\hline
$s \in \mathcal{S}$ & $f^{(\kappa)}_0$ & $\trans^{(\kappa)}_0$ & $f^{(\kappa)}_1$ & $\trans^{(\kappa)}_1$\\
\hline
a & 100 & 0 & 1100 & 0\\
b & 00 & 0 & 11 & 1\\
c & 01 & 0 & 01 & 0\\
d & 1 & 1 & 10 & 0\\
\hline
\end{tabular}

\end{table}

\begin{example}
Table \ref{tab:code-tuple} shows examples of a code-tuple for $\mathcal{S} = \{\mathrm{a}, \mathrm{b}, \mathrm{c}, \mathrm{d}\}$.
The code-tuples $F^{(\alpha)}, F^{(\beta)}, \allowbreak F^{(\gamma)}, \ldots, F^{(\theta)}$ are $3$-code-tuples and the code-tuples $F^{(\iota)}$ and $F^{(\kappa)}$ are $2$-code-tuples.
We have
\begin{equation}
\assign_{F^{(\alpha)}, 0}(110) = \{\mathrm{a}, \mathrm{c}\},\quad
\assign_{F^{(\beta)}, 1}(00000000) = \emptyset,\quad
\assign_{F^{(\alpha)}, 2}(\lambda) = \{\mathrm{a}, \mathrm{b}, \mathrm{c}, \mathrm{d}\}.
\end{equation}
\end{example}

\begin{example}
\label{ex:encode}
We consider encoding of a source sequence $\pmb{x} = x_1x_2x_3x_4 \coloneqq \mathrm{badb}$ with the code-tuple $F(f, \trans) \coloneqq F^{(\gamma)}$ in Table \ref{tab:code-tuple}.
If $x_1 = \mathrm{b}$ is encoded with the code table $f_0$, then the encoding process is as follows.
\begin{itemize}
\item $x_1 = \mathrm{b}$ is encoded to $f_0(\mathrm{b}) = 10$. The index of the next code table is $\trans_0(\mathrm{b}) = 1$.
\item $x_2 = \mathrm{a}$ is encoded to $f_1(\mathrm{a}) = 00$. The index of the next code table is $\trans_1(\mathrm{a}) = 1$.
\item $x_3 = \mathrm{d}$ is encoded to $f_1(\mathrm{d}) = 00111$. The index of the next code table is $\trans_1(\mathrm{d}) = 2$.
\item $x_4 = \mathrm{b}$ is encoded to $f_2(\mathrm{b}) = 1110$. The index of the next code table is $\trans_2(\mathrm{b}) = 0$.
\end{itemize}
As the result, we obtain a codeword sequence $f(\pmb{x}) \coloneqq f_0(\mathrm{b})f_1(\mathrm{a})f_1(\mathrm{d})f_2(\mathrm{b}) = 1000001111110$.

The decoding process of $f(\pmb{x}) = 1000001111110$ is as follows.
\begin{itemize}
\item After reading the prefix $10$ of $f(\pmb{x})$, the decoder can uniquely identify $x_1 = \mathrm{b}$ and $10 = f_0(\mathrm{b})$. The decoder can also know that $x_2$ is decoded with $f_{\trans_0(\mathrm{b})} = f_1$.
\item After reading the prefix $1000 = f_0(\mathrm{b})f_0(\mathrm{a})$ of $f(\pmb{x})$, the decoder still cannot uniquely identify  $x_2 = \mathrm{a}$ because there remain three possible cases: the case $x_2 = \mathrm{a}$, the case $x_2 = \mathrm{c}$, and the case $x_2 = \mathrm{d}$.
\item After reading the prefix $10000$ of $f(\pmb{x})$, the decoder can uniquely identify $x_2 = \mathrm{a}$ and $10000 = f_0(\mathrm{b})f_1(\mathrm{a})0$. The decoder can also know that $x_3$ is decoded with $f_{\trans_1(\mathrm{a})} = f_1$.
\item After reading the prefix $100000111 = f_0(\mathrm{b})f_1(\mathrm{a})\allowbreak f_1(\mathrm{d})$ of $f(\pmb{x})$, the decoder still cannot uniquely identify $x_3 = \mathrm{d}$ because there remain two possible cases: the case $x_3 = \mathrm{c}$ and the case $x_3 = \mathrm{d}$.
\item After reading the prefix $10000011111$ of $f(\pmb{x})$, the decoder can uniquely identify $x_3 = \mathrm{d}$ and $10000011111 = f_0(\mathrm{b})f_1(\mathrm{a})f_1(\mathrm{d})11$.
The decoder can also know that $x_4$ is decoded with $f_{\trans_1(\mathrm{d})} = f_2$.
\item After reading the entire sequence $f(\pmb{x}) = 1000001111110$, the decoder can uniquely identify $x_4 = \mathrm{b}$ and $1000001111110 = f_0(\mathrm{b})f_1(\mathrm{a})f_1(\mathrm{d})f_2(\mathrm{b})$.
\end{itemize}
Then the decoder recovers the original sequence $\pmb{x} = \mathrm{badb}$ correctly.
\end{example}

In encoding process of $\pmb{x} = x_1x_2 \ldots x_{n} \in \mathcal{S}^{\ast}$ with $F(f, \trans) \in \mathscr{F}^{(m)}$, 
the $m$ mappings $\trans_0, \trans_1, \ldots, \trans_{m-1}$ determine which code table to use to encode $x_2, x_3, \ldots, x_n$.
However, there are choices of which code table to use for the first symbol $x_1$.
For $i \in [F]$ and $\pmb{x} \in \mathcal{S}^{\ast}$, we define $f^{\ast}_i(\pmb{x}) \in \mathcal{C}^{\ast}$ as the codeword sequence in the case where $x_1$ is encoded with $f_i$.
Also, we define $\trans^{\ast}_i(\pmb{x}) \in [F]$ as the index of the code table used next after encoding $\pmb{x}$ in the case where $x_1$ is encoded with $f_i$.
We give formal definitions of $f^{\ast}_i$ and $\trans^{\ast}_i$ in the following Definition \ref{def:f_T} as recursive formulas.

\begin{definition}
 \label{def:f_T}
For $F(f, \trans) \in \mathscr{F}$ and $i \in [F]$, we define a mapping $f_i^{\ast} : \mathcal{S}^{\ast} \rightarrow \mathcal{C}^{\ast}$ and a mapping $\trans_i^{\ast} : \mathcal{S}^{\ast} \rightarrow [F]$ as
\begin{equation}
\label{eq:fstar}
f_i^{\ast}(\pmb{x}) = 
\begin{cases}
\lambda &\,\,\text{if}\,\, \pmb{x} = \lambda,\\
f_i(x_1)f_{\trans_i(x_1)}^{\ast}(\suff(\pmb{x})) &\,\,\text{if}\,\, \pmb{x} \neq \lambda,\\
\end{cases}
\end{equation}
\begin{equation}
\label{eq:tstar}
\trans_i^{\ast}(\pmb{x}) = 
\begin{cases}
i &\,\,\text{if}\,\, \pmb{x} = \lambda,\\
\trans^{\ast}_{\trans_i(x_1)}(\suff(\pmb{x})) &\,\,\text{if}\,\, \pmb{x} \neq \lambda\\
\end{cases}
\end{equation}
for $\pmb{x} = x_1 x_2 \ldots x_{n} \in \mathcal{S}^{\ast}$.
\end{definition}

\begin{example}
We consider $F(f, \trans) \coloneqq F^{(\gamma)}$ in Table \ref{tab:code-tuple}.
Then $f_0^{\ast}(\mathrm{badb})$ and $\trans^{\ast}_0(\mathrm{badb})$ is given as follows (cf. Example \ref{ex:encode}):
\begin{align*}
f_0^{\ast}(\mathrm{badb})
&= f_0(\mathrm{b}) f_1^{\ast}(\mathrm{adb})\\
&= f_0(\mathrm{b}) f_1(\mathrm{a}) f_1^{\ast}(\mathrm{db})\\
&= f_0(\mathrm{b}) f_1(\mathrm{a}) f_1(\mathrm{d}) f_2^{\ast}(\mathrm{b})\\
&= f_0(\mathrm{b}) f_1(\mathrm{a}) f_1(\mathrm{d}) f_2(\mathrm{b}) f_0^{\ast}(\lambda)\\
&= 1000001111110,
\end{align*}
\begin{align}
\trans^{\ast}_0(\mathrm{badb})
= \trans^{\ast}_{1}(\mathrm{adb})
= \trans^{\ast}_{1}(\mathrm{db})
= \trans^{\ast}_{2}(\mathrm{b})
= \trans^{\ast}_{0}(\mathrm{\lambda})
= 0.
\end{align}
\end{example}

The following Lemma \ref{lem:f_T} follows from Definition \ref{def:f_T}.
\begin{lemma}
\label{lem:f_T}
For any $F(f, \trans) \in \mathscr{F}$, $i \in [F]$, and $\pmb{x}, \pmb{y} \in \mathcal{S}^{\ast}$, 
the following statements (i)--(iii) hold.
\begin{itemize}
\item[(i)] $f_i^{\ast}(\pmb{x} \pmb{y}) = f_i^{\ast}(\pmb{x}) f^{\ast}_{\trans_i^{\ast}(\pmb{x})}(\pmb{y})$. 
\item[(ii)] $\trans_i^{\ast}(\pmb{x} \pmb{y}) = \trans^{\ast}_{\trans^{\ast}_i(\pmb{x})}(\pmb{y})$.
\item[(iii)] If $\pmb{x} \preceq \pmb{y}$, then $f^{\ast}_i(\pmb{x}) \preceq f^{\ast}_i(\pmb{y})$.
\end{itemize}
\end{lemma}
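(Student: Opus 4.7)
The plan is to prove parts (i) and (ii) simultaneously by induction on $|\pmb{x}|$, and then derive part (iii) as an immediate corollary of (i).

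For the base case $|\pmb{x}| = 0$, i.e., $\pmb{x} = \lambda$, part (i) reduces to $f_i^{\ast}(\pmb{y}) = \lambda \cdot f_i^{\ast}(\pmb{y})$ since $f_i^{\ast}(\lambda) = \lambda$ and $\trans_i^{\ast}(\lambda) = i$ by Definition \ref{def:f_T}; part (ii) reduces similarly to $\trans_i^{\ast}(\pmb{y}) = \trans_i^{\ast}(\pmb{y})$. Both are immediate.

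For the inductive step, suppose $\pmb{x} = x_1 x_2 \ldots x_n$ with $n \geq 1$ and assume the statements hold for $\suff(\pmb{x})$, which has length $n-1$. Apply the recursive clause of (\ref{eq:fstar}) to the concatenation $\pmb{x}\pmb{y}$, whose first symbol is $x_1$ and whose suffix from position $2$ is $\suff(\pmb{x}) \pmb{y}$, to obtain
\begin{equation*}
f_i^{\ast}(\pmb{x}\pmb{y}) = f_i(x_1)\, f^{\ast}_{\trans_i(x_1)}(\suff(\pmb{x})\,\pmb{y}).
\end{equation*}
The induction hypothesis applied to $\suff(\pmb{x})$ and $\pmb{y}$ (with index $\trans_i(x_1)$) rewrites the trailing factor as $f^{\ast}_{\trans_i(x_1)}(\suff(\pmb{x}))\, f^{\ast}_{\trans^{\ast}_{\trans_i(x_1)}(\suff(\pmb{x}))}(\pmb{y})$. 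Then reassembling via the recursive clause and applying (\ref{eq:tstar}) gives $f_i(x_1) f^{\ast}_{\trans_i(x_1)}(\suff(\pmb{x})) = f_i^{\ast}(\pmb{x})$ and $\trans^{\ast}_{\trans_i(x_1)}(\suff(\pmb{x})) = \trans_i^{\ast}(\pmb{x})$, which together yield (i). An analogous calculation using (\ref{eq:tstar}) and the induction hypothesis for (ii) gives $\trans_i^{\ast}(\pmb{x}\pmb{y}) = \trans^{\ast}_{\trans_i(x_1)}(\suff(\pmb{x})\pmb{y}) = \trans^{\ast}_{\trans^{\ast}_{\trans_i(x_1)}(\suff(\pmb{x}))}(\pmb{y}) = \trans^{\ast}_{\trans^{\ast}_i(\pmb{x})}(\pmb{y})$.

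For (iii), if $\pmb{x} \preceq \pmb{y}$ then $\pmb{y} = \pmb{x}\pmb{z}$ for some $\pmb{z} \in \mathcal{S}^{\ast}$, and part (i) gives $f^{\ast}_i(\pmb{y}) = f^{\ast}_i(\pmb{x})\, f^{\ast}_{\trans^{\ast}_i(\pmb{x})}(\pmb{z})$, so $f^{\ast}_i(\pmb{x}) \preceq f^{\ast}_i(\pmb{y})$ by definition of the prefix relation. There is no genuine obstacle here; the only care required is to apply the inductive hypothesis to the shorter argument $\suff(\pmb{x})$ rather than $\pmb{x}$, and to keep track of the correct code table index $\trans_i(x_1)$ propagating through the recursion.
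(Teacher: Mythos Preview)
Your proof is correct. The paper itself does not give a detailed argument for this lemma, stating only that it follows from Definition~\ref{def:f_T}; your induction on $|\pmb{x}|$ for (i) and (ii) and the derivation of (iii) from (i) is exactly the natural elaboration of that remark.
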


\subsection{$k$-bit Delay Decodable Code-tuples}

In Example \ref{ex:encode}, despite $f^{\ast}_0(\mathrm{ba}) = 1000$, to uniquely identify $x_1x_2 = \mathrm{ba}$, it is required to read $f^{\ast}_0(\mathrm{ba})0 = 10000$ including the additional $1$ bit. 
Namely, a decoding delay of $1$ bit occurs at the time to decode $x_2 = \mathrm{a}$.
Similarly, despite $f^{\ast}_0(\mathrm{bad}) = 100000111$, to uniquely identify $x_1x_2x_3 = \mathrm{bad}$, it is required to read $f^{\ast}_0(\mathrm{bad})11 = 10000011111$ including the additional $2$ bits.
Namely, a decoding delay of $2$ bits occurs at the time to decode $x_3 = \mathrm{d}$.
In general, in the decoding process with $F^{(\gamma)}$,
it is required to read the additional at most $2$ bits for the decoder to uniquely identify each symbol of a given source sequence.
We say that a code-tuple is \emph{$k$-bit delay decodable} if the decoder can always uniquely identify each source symbol by reading the additional $k$ bits of the codeword sequence.
The code-tuple $F^{(\gamma)}$ in Table \ref{tab:code-tuple} is an example of a $2$-bit delay decodable code-tuple.
To state the formal definition of a $k$-bit delay decodable code-tuple,
we introduce the following Definition \ref{def:pref}.

\begin{definition}
\label{def:pref}
For an integer $k \geq 0$, $F(f, \trans) \in \mathscr{F}, i \in [F]$, and $\pmb{b} \in \mathcal{C}^{\ast}$, we define 
\begin{equation}
\label{eq:pref1}
\PREF^k_{F, i}(\pmb{b}) \coloneqq \{\pmb{c} \in \mathcal{C}^k : \pmb{x} = x_1x_2\ldots x_{n} \in \mathcal{S}^{+}, f^{\ast}_i(\pmb{x}) \succeq \pmb{b}\pmb{c}, f_i(x_1) \succeq \pmb{b}  \},
\end{equation}
\begin{equation}
\label{eq:pref2}
\bar{\PREF}^k_{F, i}(\pmb{b}) \coloneqq \{\pmb{c} \in \mathcal{C}^k : \pmb{x} = x_1x_2\ldots x_{n} \in \mathcal{S}^{+}, f^{\ast}_i(\pmb{x}) \succeq \pmb{b}\pmb{c}, f_i(x_1) \succ \pmb{b}  \}.
\end{equation}
Namely, $\PREF^k_{F, i}(\pmb{b})$ (resp. $\bar{\PREF}^k_{F, i}(\pmb{b})$) is the set of all $\pmb{c} \in \mathcal{C}^k$ such that there exists $\pmb{x} = x_1x_2\ldots x_n \in \mathcal{S}^{+}$ satisfying $f^{\ast}_i(\pmb{x}) \succeq \pmb{b}\pmb{c}$ and $f_i(x_1) \succeq \pmb{b}$ (resp. $f_i(x_1) \succ \pmb{b}$).
\end{definition}

We write $\PREF^k_{F, i}(\lambda)$ (resp. $\bar{\PREF}^k_{F, i}(\lambda)$) as $\PREF^k_{F, i}$ (resp. $\bar{\PREF}^k_{F, i}$) for simplicity.
We have
\begin{eqnarray}
\PREF^k_{F, i} \overset{(\mathrm{A})}{=} \{\pmb{c} \in \mathcal{C}^k : \pmb{x} \in \mathcal{S}^{+}, f^{\ast}_i(\pmb{x}) \succeq \pmb{c}\}
\overset{(\mathrm{B})}{=} \{\pmb{c} \in \mathcal{C}^k : \pmb{x} \in \mathcal{S}^{\ast}, f^{\ast}_i(\pmb{x}) \succeq \pmb{c}\},\label{eq:pref3}
\end{eqnarray}
where (A) follows from (\ref{eq:pref1}), and (B) is justified as follows.
The relation ``$\subseteq$'' holds by $\mathcal{S}^{+} \subseteq \mathcal{S}^{\ast}$.
We show the relation ``$\supseteq$''.
We choose $\pmb{c} \in \mathcal{C}^k$ such that $f^{\ast}_i(\pmb{x}) \succeq \pmb{c}$ for some $\pmb{x} \in \mathcal{S}^{\ast}$ arbitrarily and show that $f^{\ast}_i(\pmb{x}') \succeq \pmb{c}$ for some $\pmb{x}' \in \mathcal{S}^{+}$.
The case $\pmb{x} \in \mathcal{S}^{+}$ is trivial.
In the case $\pmb{x} \in \{\lambda\} = \mathcal{S}^{\ast} \setminus \mathcal{S}^{+}$,
we have $\pmb{c} = \lambda$ since $\pmb{c} \preceq f^{\ast}_i(\pmb{x}) = f^{\ast}_i(\lambda) = \lambda$ by (\ref{eq:fstar}).
This leads to that any $\pmb{x}' \in \mathcal{S}^{+}$ satisfies $f^{\ast}_i(\pmb{x}') \succeq \lambda = \pmb{c}$ as desired.
Hence, the relation ``$\supseteq$'' holds.

\begin{example}
We consider $F(f, \trans) \coloneqq F^{(\beta)}$ in Table \ref{tab:code-tuple}.
First, we confirm $\PREF^3_{F, 0}(\pmb{b}) = \{100, 101, 111\}$ for $\pmb{b} = 101$ as follows.
\begin{itemize}
\item $100 \in \PREF^3_{F, 0}(\pmb{b})$ holds because $\pmb{x} = \mathrm{cc}$ satisfies $f^{\ast}_0(\pmb{x}) = 1011000 \succeq \pmb{b}100$ and $f_0(x_1) = 101 \succeq \pmb{b}$.
\item $101 \in \PREF^3_{F, 0}(\pmb{b})$ holds because $\pmb{x} = \mathrm{da}$ satisfies $f^{\ast}_0(\pmb{x})  = 10110110 \succeq \pmb{b}101$ and $f_0(x_1) = 1011 \succeq \pmb{b}$.
\item $111 \in \PREF^3_{F, 0}(\pmb{b})$ holds because $\pmb{x} = \mathrm{cbb}$ satisfies $f^{\ast}_0(\pmb{x}) = 1011111 \succeq \pmb{b}111$ and $f_0(x_1) = 101 \succeq \pmb{b}$.
\end{itemize}
Next, we confirm $\bar{\PREF}^3_{F, 0}(\pmb{b}) = \{101\}$ for $\pmb{b} = 101$ as follows.
\begin{itemize}
\item $101 \in \bar{\PREF}^3_{F, 0}(\pmb{b})$ holds because $\pmb{x} = \mathrm{da}$ satisfies $f^{\ast}_0(\pmb{x})  = 10110110 \succeq \pmb{b}101$ and $f_0(x_1) = 1011 \succ \pmb{b}$.
\end{itemize}
Also, we confirm $\bar{\PREF}^0_{F, 1}(\pmb{b}) = \{\lambda\}$ for $\pmb{b} = 011$ as follows.
\begin{itemize}
\item $\lambda \in \bar{\PREF}^0_{F, 1}(\pmb{b})$ holds because $\pmb{x} = \mathrm{a}$ satisfies $f^{\ast}_1(\pmb{x}) = 0110 \succeq \pmb{b} = \pmb{b}\lambda$ and $f_1(x_1) = 0110 \succ \pmb{b}$.
\end{itemize}
\end{example}

\begin{example}
Table \ref{tab:pref} shows $\PREF^1_{F, i}$ and $\PREF^2_{F, i}$ for the code-tuples $F$ in Table \ref{tab:code-tuple}.
The rightmost column of Table \ref{tab:pref} is used later in Example \ref{ex:classes}.
Also, Table \ref{tab:k-bitdelay} shows $\bar{\PREF}^2_{F, i}(f_i(s))$ for $F(f, \trans) \coloneqq F^{(\gamma)}$ in Table \ref{tab:code-tuple}.
\end{example}

\begin{table}
\caption{The set $\PREF^1_{F, i}$ and $\PREF^2_{F, i}$ for the code-tuples $F$ in Table \ref{tab:code-tuple}}
\label{tab:pref}
\centering
\begin{tabular}{c | ccc | ccc | c}
\hline
$F \in \mathscr{F}$ & $\PREF^1_{F, 0}$ & $\PREF^1_{F, 1}$ & $\PREF^1_{F, 2}$ & $\PREF^2_{F, 0}$ & $\PREF^2_{F, 1}$ & $\PREF^2_{F, 2}$ & \\
\hline
$F^{(\alpha)}$ & $\{0, 1\}$ & $\{0, 1\}$ & $\emptyset$ & $\{01, 10, 11\}$ & $\{01, 10\}$ & $\emptyset$ & $F \in \mathscr{F}_{2\hdec} \setminus \mathscr{F}_0$\\
$F^{(\beta)}$ & $\{0, 1\}$ & $\{0\}$ & $\{1\}$ & $\{01, 10, 11\}$ & $\{01\}$ & \{10, 11\} & $F \in \mathscr{F}_{\reg} \setminus \mathscr{F}_0$\\
$F^{(\gamma)}$ & $\{0, 1\}$ & $\{0, 1\}$ & $\{1\}$ & $\{01, 10\}$ & $\{00, 01, 10\}$ & \{11\} & $F \in \mathscr{F}_0 \setminus \mathscr{F}_1$\\
$F^{(\delta)}$  & $\{0, 1\}$ & $\{0, 1\}$ & $\{1\}$ & $\{01, 10\}$ & $\{00, 01, 10\}$ & $\{10, 11\}$ & $F \in \mathscr{F}_0 \setminus \mathscr{F}_1$\\
$F^{(\epsilon)}$ & $\{0, 1\}$ & $\{0, 1\}$ & $\{0, 1\}$ & $\{01, 10\}$ & $\{00, 01, 10\}$ & $\{00, 01, 10\}$ & $F \in \mathscr{F}_1 \setminus \mathscr{F}_2$\\
$F^{(\zeta)}$ & \{1\}& \{0, 1\}& \{0, 1\}& $\{10, 11\}$ & $\{01, 10, 11\}$ & $\{00, 01, 10\}$ & $F \in \mathscr{F}_0 \setminus \mathscr{F}_1$\\
$F^{(\eta)}$ & \{0, 1\}& \{0, 1\}& \{0, 1\}& $\{00, 01, 10, 11\}$ & $\{01, 10, 11\}$ & $\{00, 01, 10\}$ & $F \in \mathscr{F}_2 \setminus \mathscr{F}_3$\\
$F^{(\theta)}$ & \{0, 1\}& \{0, 1\}& \{0, 1\} & $\{00, 01, 10, 11\}$ & $\{01, 10, 11\}$ & $\{01, 10, 11\}$ & $F \in \mathscr{F}_3 \setminus \mathscr{F}_4$\\
$F^{(\iota)}$ & \{0, 1\}& \{0, 1\}& & $\{00, 01, 10, 11\}$ & $\{01, 10, 11\}$ & & $F \in \mathscr{F}_4 \setminus \mathscr{F}_{\AIFV}$\\
$F^{(\kappa)}$ & \{0, 1\}& \{0, 1\}& & $\{00, 01, 10, 11\}$ & $\{01, 10, 11\}$ & & $F \in \mathscr{F}_{\AIFV}$\\
\hline
\end{tabular}
\vspace{8pt}
\end{table}

We consider the situation where the decoder has already read the prefix $\pmb{b}'$ of a given codeword sequence and identified $x_1x_2\ldots x_l$ of the original sequence $\pmb{x}$.
Then we have $\pmb{b}' = f_{i_1}(x_1)f_{i_2}(x_2)\ldots f_{i_l}(x_l)\pmb{b}$ for some $\pmb{b} \in \mathcal{C}^{\ast}$.
We now consider identifying the next symbol $x_{l+1}$.
Let $i \coloneqq i_{l+1}$ and $\assign_{F, i}(\pmb{b}) = \{s_1, s_2, \ldots, s_r\}$.
Then there are the following $r+1$ possible cases for $x_{l+1}$: the case $x_{l+1} = s_1$, the case $x_{l+1} = s_2$, $\ldots,$ the case $x_{l+1} = s_r$, and the case $f_i(x_{l+1}) \succ \pmb{b}$.
For a code-tuple $F$ to be $k$-bit delay decodable, the decoder must be able to distinguish the $r+1$ cases by reading the following $k$ bits of the codeword sequence.
Namely, it is required that the following $r+1$ sets are disjoint:
\begin{itemize}
\item $\PREF^k_{F, \trans_i(s_1)}$, the set of all possible following $k$ bits in the case $x_{l+1} = s_1$,
\item $\PREF^k_{F, \trans_i(s_2)}$, the set of all possible following $k$ bits in the case $x_{l+1} = s_2$,
\item $\cdots$,
\item $\PREF^k_{F, \trans_i(s_r)}$, the set of all possible following $k$ bits in the case $x_{l+1} = s_r$,
\item $\bar{\PREF}^k_{F, i}(\pmb{b})$, the set of all possible following $k$ bits in the case $f_i(x_{l+1}) \succ \pmb{b}$.
\end{itemize}

\begin{example}
We obtain $f^{\ast}_0(\pmb{x}) = 1000001111110$ by encoding $\pmb{x} \coloneqq \mathrm{badb}$ with $F(f, \trans) \coloneqq F^{(\gamma)}$ in Table \ref{tab:code-tuple} (cf. Example \ref{ex:encode}).
We consider the decoding process of $f^{\ast}_0(\pmb{x})$.

\begin{itemize}
\item First, we suppose that the decoder already read the prefix $\pmb{b}' = 1000$ of $f^{\ast}_0(\pmb{x})$ and identified $x_1 = \mathrm{b}$.
Then we have $\pmb{b}' = f_0(x_1)00$ and $\assign_{F, 1}(00) = \{\mathrm{a}\}$, and the next symbol $x_2$ is decoded with $f_{\trans_0(\mathrm{b})} = f_1$.
Now, there are two possible cases for $x_2$: the case $x_2 = \mathrm{a}$ and the case $f_1(x_2) \succ 00$ (i.e., $x_2 = \mathrm{c}$ or $x_2 = \mathrm{d}$).
The decoder can distinguish these two cases by reading the following $2$ bits because 
\begin{itemize}
\item $\PREF^2_{F, \trans_1(\mathrm{a})}$, the set of all possible following $2$ bits in the case $x_{2} = \mathrm{a}$, and
\item $\bar{\PREF}^2_{F, 1}(00)$, the set of all possible following $2$ bits in the case $f_1(x_2) \succ \pmb{b}$,
\end{itemize}
are disjoint: $\PREF^2_{F, \trans_1(\mathrm{a})} \cap \bar{\PREF}^2_{F, 1}(f_1(\mathrm{a})) = \{00, 01, 10\} \cap \{11\} = \emptyset$.
Since the following $2$ bits are $00 \in \PREF^2_{F, \trans_1(\mathrm{a})}$, the decoder can identify $x_2 = \mathrm{a}$ indeed.

\item Next, we suppose that the decoder already read the prefix $\pmb{b}' = 100000$ of $f^{\ast}_0(\pmb{x})$ and identified $x_1x_2 = \mathrm{ba}$.
Then we have $\pmb{b}' = f^{\ast}_0(x_1x_2)00$ and $\assign_{F, 1}(00) = \{\mathrm{a}\}$, and the next symbol $x_3$ is decoded with $f_{\trans_1(\mathrm{a})} = f_1$.
Now, there are two possible cases for $x_3$: the case $x_3 = \mathrm{a}$ and the case $f_1(x_3) \succ 00$ (i.e., $x_3 = \mathrm{c}$ or $x_3 = \mathrm{d}$).
The decoder can distinguish these two cases by reading the following $2$ bits because 
\begin{itemize}
\item $\PREF^2_{F, \trans_1(\mathrm{a})}$, the set of all possible following $2$ bits in the case $x_3 = \mathrm{a}$, and
\item $\bar{\PREF}^2_{F, 1}(00)$, the set of all possible following $2$ bits in the case $f_1(x_3) \succ \pmb{b}$,
\end{itemize}
are disjoint: $\PREF^2_{F, \trans_1(\mathrm{a})} \cap \bar{\PREF}^2_{F, 1}(f_1(\mathrm{a})) = \{00, 01, 10\} \cap \{11\} = \emptyset$.
Since the following $2$ bits are $11 \in \bar{\PREF}^2_{F, 1}(00)$, the decoder can identify $f_1(x_3) \succ 00$, in particular, $x_3 \neq \mathrm{a}$ indeed.

\item Lastly, we suppose that the decoder already read the prefix $\pmb{b}' = 100000111$ of $f^{\ast}_0(\pmb{x})$ and identified $x_1x_2 = \mathrm{ba}$.
Then we have $\pmb{b}' = f^{\ast}_0(\mathrm{ba})00111$ and $\assign_{F, 1}(00111) = \{\mathrm{c}, \mathrm{d}\}$.
Now, there are two possible cases for $x_3$: the case $x_3 = \mathrm{c}$ and the case $x_3 = \mathrm{d}$.
The decoder can distinguish these two cases by reading the following $2$ bits because 
\begin{itemize}
\item $\PREF^2_{F, \trans_1(\mathrm{c})}$, the set of all possible following $2$ bits in the case $x_{2} = \mathrm{c}$, and
\item $\PREF^2_{F, \trans_1(\mathrm{d})}$, the set of all possible following $2$ bits in the case $x_{2} = \mathrm{d}$,
\end{itemize}
are disjoint: $\PREF^2_{F, \trans_1(\mathrm{c})} \cap \PREF^2_{F, \trans_1(\mathrm{d})} = \{00, 01, 10\} \cap \{11\} = \emptyset$.
Since the following $2$ bits are $11 \in \PREF^2_{F, \trans_1(\mathrm{d})}$, the decoder can identify $x_3 = \mathrm{d}$ indeed.
\end{itemize}
\end{example}

The discussion above leads to the following Definition \ref{def:k-bitdelay}.

 \begin{definition}
  \label{def:k-bitdelay}
 Let $k \geq 0$ be an integer. 
A code-tuple $F(f, \trans)$ is said to be \emph{$k$-bit delay decodable} if the following conditions (i) and (ii) hold.
\begin{enumerate}[(i)]
\item For any $i \in [F]$ and $s \in \mathcal{S}$, it holds that $\PREF^k_{F, \trans_i(s)} \cap \bar{\PREF}^k_{F, i}(f_i(s)) = \emptyset$.
\item For any $i \in [F]$ and $s, s' \in \mathcal{S}$, if $s \neq s'$ and $f_i(s) = f_i(s')$, then $\PREF^k_{F, \trans_i(s)} \cap \PREF^k_{F, \trans_i(s')} =  \emptyset$.
\end{enumerate}
 For an integer $k \geq 0$, we define $\mathscr{F}_{k\hdec}$ as the set of all $k$-bit delay decodable code-tuples, that is, 
 \begin{equation}
 \mathscr{F}_{k\hdec} \coloneqq \{F \in \mathscr{F} : F \text{ is } k \text{-bit delay decodable} \}.
 \end{equation}
\end{definition}

\begin{example}
\label{ex:k-bitdelay}
We confirm $F(f, \trans) \coloneqq F^{(\gamma)}$ in Table \ref{tab:code-tuple} is $2$-bit delay decodable as follows.

First, we see that $F$ satisfies Definition \ref{def:k-bitdelay} (i) as follows (cf. Tables \ref{tab:pref} and \ref{tab:k-bitdelay}).
\begin{itemize}
\item $\PREF^2_{F, \trans_0(\mathrm{a})} \cap \bar{\PREF}^2_{F, 0}(f_0(\mathrm{a})) = \PREF^2_{F, 0} \cap \bar{\PREF}^2_{F, 0}(f_0(\mathrm{a})) = \{01, 10\} \cap \{00\} = \emptyset$.
\item $\PREF^2_{F, \trans_0(\mathrm{b})} \cap \bar{\PREF}^2_{F, 0}(f_0(\mathrm{b})) = \PREF^2_{F, 1} \cap \bar{\PREF}^2_{F, 0}(f_0(\mathrm{b})) = \{00, 01, 10\} \cap \emptyset = \emptyset$.
\item $\PREF^2_{F, \trans_0(\mathrm{c})} \cap \bar{\PREF}^2_{F, 0}(f_0(\mathrm{c})) = \PREF^2_{F, 0} \cap \bar{\PREF}^2_{F, 0}(f_0(\mathrm{c})) = \{01, 10\} \cap \emptyset = \emptyset$.
\item $\PREF^2_{F, \trans_0(\mathrm{d})} \cap \bar{\PREF}^2_{F, 0}(f_0(\mathrm{d})) = \PREF^2_{F, 2} \cap \bar{\PREF}^2_{F, 0}(f_0(\mathrm{d})) = \{11\} \cap \{00\} = \emptyset$.

\item $\PREF^2_{F, \trans_1(\mathrm{a})} \cap \bar{\PREF}^2_{F, 1}(f_1(\mathrm{a})) = \PREF^2_{F, 1} \cap \bar{\PREF}^2_{F, 1}(f_1(\mathrm{a})) = \{00, 01, 10\} \cap \{11\} = \emptyset$.
\item $\PREF^2_{F, \trans_1(\mathrm{b})} \cap \bar{\PREF}^2_{F, 1}(f_1(\mathrm{b})) = \PREF^2_{F, 0} \cap \bar{\PREF}^2_{F, 1}(f_1(\mathrm{b})) = \{01, 10\} \cap \{00\} = \emptyset$.
\item $\PREF^2_{F, \trans_1(\mathrm{c})} \cap \bar{\PREF}^2_{F, 1}(f_1(\mathrm{c})) = \PREF^2_{F, 1} \cap \bar{\PREF}^2_{F, 1}(f_1(\mathrm{c})) = \{00, 01, 10\} \cap \emptyset = \emptyset$.
\item $\PREF^2_{F, \trans_1(\mathrm{d})} \cap \bar{\PREF}^2_{F, 1}(f_1(\mathrm{d})) = \PREF^2_{F, 2} \cap \bar{\PREF}^2_{F, 1}(f_1(\mathrm{d})) = \{11\} \cap \emptyset = \emptyset$.

\item $\PREF^2_{F, \trans_2(\mathrm{a})} \cap \bar{\PREF}^2_{F, 2}(f_2(\mathrm{a})) = \PREF^2_{F, 1} \cap \bar{\PREF}^2_{F, 2}(f_2(\mathrm{a})) = \{00, 01, 10\} \cap \emptyset = \emptyset$.
\item $\PREF^2_{F, \trans_2(\mathrm{b})} \cap \bar{\PREF}^2_{F, 2}(f_2(\mathrm{b})) = \PREF^2_{F, 0} \cap \bar{\PREF}^2_{F, 2}(f_2(\mathrm{b})) = \{01, 10\} \cap \{00\} = \emptyset$.
\item $\PREF^2_{F, \trans_2(\mathrm{c})} \cap \bar{\PREF}^2_{F, 2}(f_2(\mathrm{c})) = \PREF^2_{F, 2} \cap \bar{\PREF}^2_{F, 2}(f_2(\mathrm{c})) = \{11\} \cap \emptyset = \emptyset$.
\item $\PREF^2_{F, \trans_2(\mathrm{d})} \cap \bar{\PREF}^2_{F, 2}(f_2(\mathrm{d})) = \PREF^2_{F, 2} \cap \bar{\PREF}^2_{F, 2}(f_2(\mathrm{d})) = \{11\} \cap \{00, 01\} = \emptyset$.
\end{itemize}

Next, we see that $F$ satisfies Definition \ref{def:k-bitdelay} (ii) as follows (cf. Table \ref{tab:pref}).
\begin{itemize}
\item $\PREF^2_{F, \trans_0(\mathrm{a})} \cap \PREF^2_{F, \trans_0(\mathrm{d})} = \PREF^2_{F, 0} \cap \PREF^2_{F, 2} = \{01, 10\} \cap \{11\} = \emptyset$.
\item $\PREF^2_{F, \trans_1(\mathrm{c})} \cap \PREF^2_{F, \trans_1(\mathrm{d})} = \PREF^2_{F, 1} \cap \PREF^2_{F, 2} = \{00, 01, 10\} \cap \{11\} = \emptyset$.
\end{itemize}

Consequently, we have $F \in \mathscr{F}_{2\hdec}$.
\end{example}

\begin{example}
In a similar way to Example \ref{ex:k-bitdelay}, we can see that the code-tuples in Table \ref{tab:code-tuple} are $2$-bit delay decodable except for $F^{(\beta)}$. We state some more examples as follows.
\begin{itemize}
\item For $F(f, \trans) \coloneqq F^{(\alpha)}$, we have $F \not\in \mathscr{F}_{1\hdec}$ because $\PREF^1_{F, \trans_0(\mathrm{b})} \cap \bar{\PREF}^1_{F, 0}(f_0(\mathrm{b})) = \{0, 1\} \cap \{1\} = \{1\} \neq \emptyset$.
\item For $F(f, \trans) \coloneqq F^{(\beta)}$, for any integer $k \geq 0$, we have $F \not\in \mathscr{F}_{k \hdec}$ because
$\PREF^k_{F, \trans_1(\mathrm{a})} \cap \PREF^k_{F, \trans_1(\mathrm{b})} = \PREF^k_{F, 1} \cap \PREF^k_{F, 1} = \PREF^k_{F, 1} \neq \emptyset$.
\item For $F(f, \trans) \coloneqq F^{(\gamma)}$, we have $F \not\in \mathscr{F}_{1\hdec}$ because $\PREF^1_{F, \trans_1(\mathrm{c})} \cap \PREF^1_{F, \trans_1(\mathrm{d})} = \{0, 1\} \cap \{1\} = \{1\} \neq \emptyset$.
\end{itemize}
\end{example}

\begin{table}
\caption{The set $\bar{\PREF}^2_{F, i}(f_i(s))$ for $F \coloneqq F^{(\gamma)}$}
\label{tab:k-bitdelay}
\centering
\begin{tabular}{c | ccc}
\hline\\[-1em]
$s \in \mathcal{S}$ & $\bar{\PREF}^2_{F, 0}(f_0(s))$ & $\bar{\PREF}^2_{F, 1}(f_1(s))$ & $\bar{\PREF}^2_{F, 2}(f_2(s))$ \\
\hline
$\mathrm{a}$ & $\{00\}$ & $\{11\}$ & $\emptyset$\\
$\mathrm{b}$ & $\emptyset$ & $\{00\}$ & $\{00\}$ \\
$\mathrm{c}$ & $\emptyset$ & $\emptyset$ & $\emptyset$\\
$\mathrm{d}$ & $\{00\}$ & $\emptyset$ & $\{00, 01\}$\\
\hline
\end{tabular}
\vspace{8pt}
\end{table}

\begin{remark}
\label{rem:injective}
If all the code tables $f_0, f_1, \ldots, f_{|F|-1}$ are injective, then Definition \ref{def:k-bitdelay} (ii) holds since there are no $i \in [F]$ and $s, s' \in \mathcal{S}$ such that $s \neq s$ and $f_i(s) \neq f_i(s')$.

If $k = 0$, then the converse also holds as seen below.
We consider Definition \ref{def:k-bitdelay} (ii) for the case $k = 0$.
Then by (\ref{eq:pref3}), we have $\PREF^k_{F, \trans_i(s)} \cap \PREF^k_{F, \trans_i(s')} =  \{\lambda\} \cap \{\lambda\} = \{\lambda\} \neq \emptyset$ for any $i \in [F]$ and $s, s' \in \mathcal{S}$.
Hence, for $F$ to satisfy Definition \ref{def:k-bitdelay} (ii), it is required that for any $i \in [F]$ and $s, s' \in \mathcal{S}$, if $s \neq s'$, then $f_i(s) \neq f_i(s')$, that is, $f_0, f_1, \ldots, f_{|F|-1}$ are injective.
\end{remark}

\begin{remark}
A $k$-bit delay decodable code-tuple $F$ is not necessarily uniquely decodable, that is, the mappings $f^{\ast}_0, f^{\ast}_1, \ldots, f^{\ast}_{|F|-1}$ are not necessarily injective. 
Indeed, for $F(f, \trans) \coloneqq F^{(\gamma)} \in \mathscr{F}_{2\hdec}$ in Table \ref{tab:code-tuple}, we have ${f_0}^{\ast}(\mathrm{bc}) = 1000111 = {f_0}^{\ast}(\mathrm{bd})$. In general, it is possible that the decoder cannot uniquely recover the last few symbols of the original source sequence in the case where the rest of the codeword sequence is less than $k$ bits. In such a case, we should append additional information for practical use (cf. \cite[Remark 2]{Yamamoto2015}).
\end{remark}

We now state the basic properties of $\PREF^k_{F, i}(\pmb{b})$ and $\bar{\PREF}^k_{F, i}(\pmb{b})$ as the following Lemmas \ref{lem:leaf} and \ref{lem:pref-sum}.
 
\begin{lemma}
\label{lem:leaf}
For any $F(f, \trans) \in \mathscr{F}$ and $i \in [F]$, the following statements (i)--(iii) hold.
\begin{enumerate}[(i)]
\item For any $\pmb{b} \in \mathcal{C}^{\ast}$, we have
$\bar{\PREF}^0_{F, i}(\pmb{b}) \neq \emptyset \iff {}^{\exists}s \in \mathcal{S} ; f_i(s) \succ \pmb{b}$.
\item There exists $s \in \mathcal{S}$ such that $\bar{\PREF}^0_{F, i}(f_i(s)) = \emptyset$.
\item If $|\assign_{F, i}(\lambda)| \leq 1$, in particular $f_i$ is injective, then $\bar{\PREF}^0_{F, i} \neq \emptyset$.
\end{enumerate}
\end{lemma}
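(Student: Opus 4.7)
The plan is to exploit the degenerate structure at $k = 0$: since $\mathcal{C}^0 = \{\lambda\}$, every set $\bar{\PREF}^0_{F,i}(\pmb{b})$ is either empty or equals $\{\lambda\}$. Consequently, non-emptiness becomes equivalent to the existence of a single witness $\pmb{x} = x_1 x_2 \ldots x_n \in \mathcal{S}^{+}$ satisfying $f^{\ast}_i(\pmb{x}) \succeq \pmb{b}$ and $f_i(x_1) \succ \pmb{b}$, and each of the three statements then reduces to an elementary question about the codeword set $\{f_i(s) : s \in \mathcal{S}\}$.

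For (i), I would verify both directions directly from Definition \ref{def:pref}. If $\bar{\PREF}^0_{F,i}(\pmb{b}) \neq \emptyset$, any witness $\pmb{x}$ supplies $s \coloneqq x_1$ with $f_i(s) \succ \pmb{b}$. Conversely, given such an $s$, I would take the single-letter sequence $\pmb{x} \coloneqq s$; by (\ref{eq:fstar}) we have $f^{\ast}_i(\pmb{x}) = f_i(s) f^{\ast}_{\trans_i(s)}(\lambda) = f_i(s) \succ \pmb{b}$, which certifies $\lambda \in \bar{\PREF}^0_{F,i}(\pmb{b})$.

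For (ii), I would pick $s \in \mathcal{S}$ maximizing $|f_i(s)|$, which is possible by finiteness of $\mathcal{S}$. Then no $s' \in \mathcal{S}$ can satisfy $f_i(s') \succ f_i(s)$, since the latter would force $|f_i(s')| > |f_i(s)|$. Applying (i) to $\pmb{b} \coloneqq f_i(s)$ yields $\bar{\PREF}^0_{F,i}(f_i(s)) = \emptyset$. For (iii), I would apply (i) at $\pmb{b} = \lambda$, reducing the claim to exhibiting some $s \in \mathcal{S}$ with $f_i(s) \neq \lambda$, i.e., $s \notin \assign_{F,i}(\lambda)$. This follows immediately from $|\mathcal{S}| = \sigma \geq 2$ together with the hypothesis $|\assign_{F,i}(\lambda)| \leq 1$, which prevents every symbol from mapping to the empty codeword.

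This argument is essentially bookkeeping, so I do not anticipate a genuine obstacle. The only point requiring a moment's care is the choice of witness in (ii): one must ensure no strict extension of $f_i(s)$ exists among the codewords, and maximizing the codeword length is the cleanest way to achieve this without assuming injectivity of $f_i$ or any other structural property of the code-tuple.
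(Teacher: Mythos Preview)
Your proposal is correct and matches the paper's own proof essentially step for step: the paper also unwinds the definition at $k=0$ for (i), picks $s \in \argmax\{|f_i(s')| : s' \in \mathcal{S}\}$ for (ii), and invokes $\sigma \geq 2$ together with $|\assign_{F,i}(\lambda)| \leq 1$ for (iii).
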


\begin{proof}[Proof of Lemma \ref{lem:leaf}]
(Proof of (i)):
We have
\begin{equation}
\lambda \in \bar{\PREF}^0_{F, i}(\pmb{b})
\overset{(\mathrm{A})}{\iff} {}^{\exists} \pmb{x} = x_1x_2\ldots x_n \in \mathcal{S}^{+}; (f^{\ast}_i(\pmb{x}) \succeq \pmb{b}, f_i(x_1) \succ \pmb{b})
\iff {}^{\exists} s \in \mathcal{S}; f_i(s) \succ \pmb{b}
\end{equation}
as desired, where (A) follows from (\ref{eq:pref2}).

(Proof of (ii)): 
Let $s \in \argmax \{|f_i(s')| : s' \in \mathcal{S}\}$.
Then there is no $s' \in \mathcal{S}$ such that $f_i(s) \prec f_i(s')$.
Hence, by (i) of this lemma, we obtain $\bar{\PREF}^0_{F, i}(f_i(s)) = \emptyset$.

(Proof of (iii)):
By $|\assign_{F, i}(\lambda)| \leq 1$ and the assumption that $\sigma \geq 2$,
there exists $s \in \mathcal{S}$ such that $f_i(s) \neq \lambda$.
This is equivalent to $\bar{\PREF}^0_{F, i} \neq \emptyset$ by (i) of this lemma.
\end{proof}

\begin{lemma}
\label{lem:pref-sum}
For any integer $k \geq 0$, $F(f, \trans) \in \mathscr{F}$, $i \in [F]$, and $\pmb{b} \in \mathcal{C}^{\ast}$, the following statements (i)--(iii) hold.
\begin{enumerate}[(i)]
\item 
\begin{equation}
\PREF^k_{F, i}(\pmb{b}) = \bar{\PREF}^k_{F, i}(\pmb{b}) \cup \Big( \bigcup_{s \in \assign_{F, i}(\pmb{b})} \PREF^k_{F, \trans_i(s)} \Big).
\end{equation}
\item If $F \in \mathscr{F}_{k\hdec}$, then
\begin{equation}
|\PREF^k_{F, i}(\pmb{b})| = |\bar{\PREF}^k_{F, i}(\pmb{b})| + \sum_{s \in \assign_{F, i}(\pmb{b})} |\PREF^k_{F, \trans_i(s)}|.
\end{equation}
\item If $k \geq 1$, then
\begin{equation}
\bar{\PREF}^k_{F, i}(\pmb{b}) = 0\PREF^{k-1}_{F, i}(\pmb{b}0) \cup 1\PREF^{k-1}_{F, i}(\pmb{b}1).
\end{equation}
\end{enumerate}
\end{lemma}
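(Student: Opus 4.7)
The plan is to prove the three parts in order, essentially by unrolling definitions; the only nontrivial ingredient is matching up Definition~\ref{def:k-bitdelay} with the decomposition in (i) to get disjointness for (ii).

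For (i), I would show both inclusions directly. Given any $\pmb{c} \in \PREF^k_{F, i}(\pmb{b})$, fix a witness $\pmb{x} = x_1 x_2 \ldots x_n \in \mathcal{S}^{+}$ with $f^{\ast}_i(\pmb{x}) \succeq \pmb{b}\pmb{c}$ and $f_i(x_1) \succeq \pmb{b}$. The condition $f_i(x_1) \succeq \pmb{b}$ splits into two cases: if $f_i(x_1) \succ \pmb{b}$ then $\pmb{c} \in \bar{\PREF}^k_{F, i}(\pmb{b})$ by (\ref{eq:pref2}); if $f_i(x_1) = \pmb{b}$, then $x_1 \in \assign_{F, i}(\pmb{b})$ and applying Definition~\ref{def:f_T} gives $\pmb{b} f^{\ast}_{\trans_i(x_1)}(\suff(\pmb{x})) = f^{\ast}_i(\pmb{x}) \succeq \pmb{b}\pmb{c}$, hence $f^{\ast}_{\trans_i(x_1)}(\suff(\pmb{x})) \succeq \pmb{c}$, which by (\ref{eq:pref3}) puts $\pmb{c}$ in $\PREF^k_{F, \trans_i(x_1)}$. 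The reverse inclusion is symmetric: $\bar{\PREF}^k_{F, i}(\pmb{b}) \subseteq \PREF^k_{F, i}(\pmb{b})$ is immediate from the definitions, and for $\pmb{c} \in \PREF^k_{F, \trans_i(s)}$ with $s \in \assign_{F, i}(\pmb{b})$, taking a witness $\pmb{x}' \in \mathcal{S}^{\ast}$ (via (\ref{eq:pref3})) and prepending $s$ yields $\pmb{x} \coloneqq s \pmb{x}' \in \mathcal{S}^{+}$ with $f^{\ast}_i(\pmb{x}) = \pmb{b} f^{\ast}_{\trans_i(s)}(\pmb{x}') \succeq \pmb{b}\pmb{c}$ and $f_i(x_1) = \pmb{b}$.

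For (ii), I would observe that the union in (i) is disjoint under the $k$-bit delay assumption, so cardinalities add. For $s \in \assign_{F, i}(\pmb{b})$ we have $f_i(s) = \pmb{b}$, so Definition~\ref{def:k-bitdelay}(i) gives $\PREF^k_{F, \trans_i(s)} \cap \bar{\PREF}^k_{F, i}(\pmb{b}) = \emptyset$; and for distinct $s, s' \in \assign_{F, i}(\pmb{b})$ we have $f_i(s) = f_i(s') = \pmb{b}$, so Definition~\ref{def:k-bitdelay}(ii) gives $\PREF^k_{F, \trans_i(s)} \cap \PREF^k_{F, \trans_i(s')} = \emptyset$. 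Equation (ii) then follows from (i) by summing cardinalities.

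For (iii), since $k \geq 1$ every element $\pmb{c} \in \bar{\PREF}^k_{F, i}(\pmb{b})$ decomposes as $\pmb{c} = c_1 \pmb{c}'$ with $c_1 \in \mathcal{C}$ and $|\pmb{c}'| = k-1$, so it suffices to show $\pmb{c} \in \bar{\PREF}^k_{F, i}(\pmb{b}) \iff \pmb{c}' \in \PREF^{k-1}_{F, i}(\pmb{b}c_1)$. The $(\Leftarrow)$ direction is immediate: a witness $\pmb{x}$ for $\pmb{c}' \in \PREF^{k-1}_{F, i}(\pmb{b}c_1)$ satisfies $f_i(x_1) \succeq \pmb{b}c_1 \succ \pmb{b}$ and $f^{\ast}_i(\pmb{x}) \succeq \pmb{b}c_1\pmb{c}' = \pmb{b}\pmb{c}$. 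The $(\Rightarrow)$ direction is the only step requiring slight care: given a witness $\pmb{x}$ for $\pmb{c} \in \bar{\PREF}^k_{F, i}(\pmb{b})$, Definition~\ref{def:f_T} yields $f_i(x_1) \preceq f^{\ast}_i(\pmb{x})$, so both $f_i(x_1)$ and $\pmb{b}c_1\pmb{c}'$ are prefixes of $f^{\ast}_i(\pmb{x})$; combining $f_i(x_1) \succ \pmb{b}$ (so $|f_i(x_1)| \geq |\pmb{b}|+1$) with the fact that the $(|\pmb{b}|+1)$-st bit of $f^{\ast}_i(\pmb{x})$ equals $c_1$, I conclude $f_i(x_1) \succeq \pmb{b}c_1$, so $\pmb{c}' \in \PREF^{k-1}_{F, i}(\pmb{b}c_1)$ with witness $\pmb{x}$.

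The whole argument is essentially bookkeeping with prefix relations; the only subtle point, and the one I would be most careful about, is the prefix comparison at the start of (iii)'s nontrivial direction, where one must compare $f_i(x_1)$ and $\pmb{b}c_1$ as competing prefixes of $f^{\ast}_i(\pmb{x})$.
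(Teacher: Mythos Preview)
Your proposal is correct and follows essentially the same approach as the paper's proof: both split (i) on whether $f_i(x_1) \succ \pmb{b}$ or $f_i(x_1) = \pmb{b}$, derive (ii) from (i) by noting that Definition~\ref{def:k-bitdelay}(i) and~(ii) make the union disjoint, and handle (iii) by the equivalence $f_i(x_1) \succ \pmb{b} \iff f_i(x_1) \succeq \pmb{b}c_1$ (given $f^{\ast}_i(\pmb{x}) \succeq \pmb{b}c_1$), which is exactly the prefix-comparison step you flagged. The paper just presents each part as a single chain of biconditionals rather than two inclusions, but the content is the same.
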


\begin{proof}[Proof of Lemma \ref{lem:pref-sum}]
(Proof of (i)): 
For any $\pmb{c} \in \mathcal{C}^k$, we have
\begin{eqnarray}
\pmb{c} \in \PREF^k_{F, i}(\pmb{b})
&\overset{(\mathrm{A})}{\iff}& {}^{\exists} \pmb{x} \in \mathcal{S}^{+}; (f^{\ast}_i(\pmb{x}) \succeq \pmb{b}\pmb{c}, f_i(x_1) \succeq \pmb{b})\\
&\iff& ({}^{\exists} \pmb{x} \in \mathcal{S}^{+}; (f^{\ast}_i(\pmb{x}) \succeq \pmb{b}\pmb{c}, f_i(x_1) \succ \pmb{b})) \,\,\text{or}\,\, ({}^{\exists} \pmb{x} \in \mathcal{S}^{+}; (f^{\ast}_i(\pmb{x}) \succeq \pmb{b}\pmb{c}, f_i(x_1) = \pmb{b}))\\
&\overset{(\mathrm{B})}{\iff}& \pmb{c} \in \bar{\PREF}^k_{F, i}(\pmb{b}) \,\,\text{or}\,\, {}^{\exists} \pmb{x} \in \mathcal{S}^{+}; (f^{\ast}_i(\pmb{x}) \succeq \pmb{b}\pmb{c}, f_i(x_1) = \pmb{b})\\
&\overset{(\mathrm{C})}{\iff}& \pmb{c} \in \bar{\PREF}^k_{F, i}(\pmb{b}) \,\,\text{or}\,\, {}^{\exists} \pmb{x} \in \mathcal{S}^{+}; (f^{\ast}_{\trans_i(x_1)}(\suff(\pmb{x})) \succeq \pmb{c}, f_i(x_1) = \pmb{b})\\
&\iff& \pmb{c} \in \bar{\PREF}^k_{F, i}(\pmb{b}) \,\,\text{or}\,\, {}^{\exists} s \in \mathcal{S}; {}^{\exists} \pmb{x} \in \mathcal{S}^{\ast}; (f^{\ast}_{\trans_i(s)}(\pmb{x}) \succeq \pmb{c}, f_i(s) = \pmb{b})\\
&\iff& \pmb{c} \in \bar{\PREF}^k_{F, i}(\pmb{b}) \,\,\text{or}\,\, {}^{\exists} s \in \assign_{F, i}(\pmb{b}); {}^{\exists} \pmb{x} \in \mathcal{S}^{\ast}; f^{\ast}_{\trans_i(s)}(\pmb{x}) \succeq \pmb{c}\\
&\overset{(\mathrm{D})}{\iff}& \pmb{c} \in \bar{\PREF}^k_{F, i}(\pmb{b}) \,\,\text{or}\,\, {}^{\exists} s \in \assign_{F, i}(\pmb{b}); \pmb{c} \in \PREF^k_{F, \trans_i(s)}\\
&\iff& \pmb{c} \in \bar{\PREF}^k_{F, i}(\pmb{b}) \,\,\text{or}\,\, \pmb{c} \in\bigcup_{s \in \assign_{F, i}(\pmb{b})}  \PREF^k_{F, \trans_i(s)}\\
&\iff& \pmb{c} \in \bar{\PREF}^k_{F, i}(\pmb{b}) \cup \Big( \bigcup_{s \in \assign_{F, i}(\pmb{b})} \PREF^k_{F, \trans_i(s)} \Big)
\end{eqnarray}
as desired, where $x_1$ denotes the first symbol of $\pmb{x}$, and 
(A) follows from (\ref{eq:pref1}),
(B) follows from (\ref{eq:pref2}),
(C) follows from (\ref{eq:fstar}),
and (D) follows from (\ref{eq:pref3}).

(Proof of (ii)): 
We have
\begin{equation}
|\PREF^k_{F, i}(\pmb{b})| \overset{(\mathrm{A})}{=} |\bar{\PREF}^k_{F, i}(\pmb{b}) \cup \Big( \bigcup_{s \in \assign_{F, i}(\pmb{b})} \PREF^k_{F, \trans_i(s)} \Big)|
\overset{(\mathrm{B})}{=} |\bar{\PREF}^k_{F, i}(\pmb{b})| + |\bigcup_{s \in \assign_{F, i}(\pmb{b})} \PREF^k_{F, \trans_i(s)}|
\overset{(\mathrm{C})}{=} |\bar{\PREF}^k_{F, i}(\pmb{b})| + \sum_{s \in \assign_{F, i}(\pmb{b})} |\PREF^k_{F, \trans_i(s)}|
\end{equation}
as desired, where
(A) follows from (i) of this lemma,
(B) follows from $F \in \mathscr{F}_{k\hdec}$ and Definition \ref{def:k-bitdelay} (i),
and (C) follows from $F \in \mathscr{F}_{k\hdec}$ and Definition \ref{def:k-bitdelay} (ii).

(Proof of (iii)): 
For any $\pmb{c} = c_1c_2\ldots c_k \in \mathcal{C}^k$, we have
\begin{eqnarray}
\pmb{c} \in \bar{\PREF}^k_{F, i}(\pmb{b})
&\overset{(\mathrm{A})}{\iff}& {}^{\exists} \pmb{x} \in \mathcal{S}^{+}; (f^{\ast}_i(\pmb{x}) \succeq \pmb{b}\pmb{c}, f_i(x_1) \succ \pmb{b})\\
&\iff& {}^{\exists} \pmb{x} \in \mathcal{S}^{+}; (f^{\ast}_i(\pmb{x}) \succeq \pmb{b}c_1\suff(\pmb{c}), f_i(x_1) \succeq \pmb{b}c_1)\\
&\iff& (c_1 = 0, {}^{\exists} \pmb{x} \in \mathcal{S}^{+}; (f^{\ast}_i(\pmb{x}) \succeq \pmb{b}0\suff(\pmb{c}), f_i(x_1) \succeq \pmb{b}0)) \,\,\text{or} \nonumber\\
&&(c_1 = 1, {}^{\exists} \pmb{x} \in \mathcal{S}^{+}; (f^{\ast}_i(\pmb{x}) \succeq \pmb{b}1\suff(\pmb{c}), f_i(x_1) \succeq \pmb{b}1))\\
&\overset{(\mathrm{B})}{\iff}& (c_1 = 0, \suff(\pmb{c}) \in \PREF^{k-1}_{F, i}(\pmb{b}0)) \,\,\text{or}\,\, (c_1 = 1, \suff(\pmb{c}) \in \PREF^{k-1}_{F, i}(\pmb{b}1))\\
&\iff& \pmb{c} \in 0\PREF^{k-1}_{F, i}(\pmb{b}0) \,\,\text{or}\,\, \pmb{c} \in 1\PREF^{k-1}_{F, i}(\pmb{b}1)\\
&\iff& \pmb{c} \in 0\PREF^{k-1}_{F, i}(\pmb{b}0) \cup 1\PREF^{k-1}_{F, i}(\pmb{b}1)
\end{eqnarray}
as desired, where $x_1$ denotes the first symbol of $\pmb{x}$, and
(A) follows from (\ref{eq:pref2}),
and (B) follows from (\ref{eq:pref1}).
\end{proof}

For $F(f, \trans) \coloneqq F^{(\alpha)}$ in Table \ref{tab:code-tuple}, we can see that ${f_2^\ast}(\pmb{x}) = \lambda$ holds for any $\pmb{x} \in \mathcal{S}^{\ast}$.
To exclude such abnormal and useless code-tuples, we introduce a class $\mathscr{F}_{\ext}$ in the following Definition \ref{def:F_ext}.

\begin{definition}
\label{def:F_ext}
A code-tuple $F$ is said to be \emph{extendable} if $\PREF^1_{F, i} \neq \emptyset$ for any $i \in [F]$.
We define $\mathscr{F}_{\ext}$ as the set of all extendable code-tuples, that is,
\begin{equation}
\mathscr{F}_{\ext} \coloneqq \{F \in \mathscr{F} : {}^{\forall}i \in [F]; \PREF^1_{F, i} \neq \emptyset\}.
\end{equation}
\end{definition}

\begin{example}
The code-tuple $F^{(\alpha)}$ in Table \ref{tab:code-tuple} is not extendable because $\PREF^1_{F^{(\alpha)}, 2} = \emptyset$ by Table \ref{tab:pref}.
The other code-tuples in Table \ref{tab:code-tuple} are extendable.
\end{example}

For extendable code-tuples, the following Lemmas \ref{lem:F_ext}--\ref{cor:pref-sum} hold.
See \cite{IEICE2023} for the proofs of Lemmas \ref{lem:F_ext}, \ref{lem:pref-inc}, and \ref{lem:longest}.
Lemma \ref{lem:pref-inc2} is a direct consequence of Lemma \ref{lem:pref-inc}.

\begin{lemma}[{\cite[Lemma 3]{IEICE2023}}]
\label{lem:F_ext}
A code-tuple $F(f, \trans)$ is extendable if and only if for any $i \in [F]$ and integer $l \geq 0$, 
there exists $\pmb{x} \in \mathcal{S}^{\ast}$ such that $|f^{\ast}_i(\pmb{x})| \geq l$.
\end{lemma}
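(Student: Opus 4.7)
My plan is to prove the two directions separately. The backward (``if'') direction will be essentially immediate: if for every $i \in [F]$ there exists $\pmb{x} \in \mathcal{S}^{\ast}$ with $|f^{\ast}_i(\pmb{x})| \geq l$, then specializing to $l = 1$ shows the first bit of $f^{\ast}_i(\pmb{x})$ is a length-$1$ prefix of $f^{\ast}_i(\pmb{x})$, so by the characterization (\ref{eq:pref3}) we obtain $\PREF^1_{F, i} \neq \emptyset$ for every $i$, i.e., $F$ is extendable by Definition \ref{def:F_ext}.

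For the forward (``only if'') direction, I would proceed by induction on $l$. The base case $l = 0$ is handled by $\pmb{x} \coloneqq \lambda$, since $|f^{\ast}_i(\lambda)| = 0$ by (\ref{eq:fstar}). For the inductive step, I fix $l$ and assume that for every $j \in [F]$ there exists $\pmb{z} \in \mathcal{S}^{\ast}$ with $|f^{\ast}_j(\pmb{z})| \geq l$, then aim to derive the same statement with $l$ replaced by $l + 1$. Given an arbitrary $i \in [F]$, extendability of $F$ yields $\PREF^1_{F, i} \neq \emptyset$, so by (\ref{eq:pref3}) there exists $\pmb{y} \in \mathcal{S}^{\ast}$ with $|f^{\ast}_i(\pmb{y})| \geq 1$. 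Setting $j \coloneqq \trans^{\ast}_i(\pmb{y}) \in [F]$ and applying the inductive hypothesis at $j$ produces $\pmb{z} \in \mathcal{S}^{\ast}$ with $|f^{\ast}_j(\pmb{z})| \geq l$. Then Lemma \ref{lem:f_T}(i) gives
\[
|f^{\ast}_i(\pmb{y}\pmb{z})| = |f^{\ast}_i(\pmb{y}) f^{\ast}_{\trans^{\ast}_i(\pmb{y})}(\pmb{z})| = |f^{\ast}_i(\pmb{y})| + |f^{\ast}_j(\pmb{z})| \geq 1 + l,
\]
which closes the induction.

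I do not expect any genuine obstacle here; the only subtlety is that the inductive hypothesis must be phrased uniformly over all indices $j \in [F]$ rather than for a single fixed index, because the successor index $j = \trans^{\ast}_i(\pmb{y})$ produced by extending at $i$ is in general different from $i$. Lemma \ref{lem:f_T}(i) is the essential algebraic tool: it lets us concatenate a short ``bit-producing'' sequence $\pmb{y}$ at index $i$ with a long sequence $\pmb{z}$ at the resulting index $j$, turning a uniform one-bit extendability hypothesis into arbitrarily long codeword sequences from every starting code table.
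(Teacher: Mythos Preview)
Your proof is correct. The paper itself does not include a proof of this lemma; it cites \cite{IEICE2023} and states the result as \cite[Lemma 3]{IEICE2023}. Your argument---the immediate ``if'' direction via $l=1$ and (\ref{eq:pref3}), and the induction on $l$ for the ``only if'' direction using Lemma \ref{lem:f_T}(i) to concatenate a one-bit-producing sequence with an inductively long one---is the natural approach and is complete as written.
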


\begin{lemma}[{\cite[Lemma 4]{IEICE2023}}]
\label{lem:pref-inc}
Let $k, k'$ be integers such that $0 \leq k \leq k'$.
For any $F \in \mathscr{F}_{\ext}, i \in [F]$, $\pmb{b} \in \mathcal{C}^{\ast}$, and $\pmb
{c} \in \mathcal{C}^k$, the following statements (i) and (ii) hold.
\begin{enumerate}[(i)]
\item $\pmb{c} \in \PREF^k_{F, i}(\pmb{b}) \iff {}^{\exists}\pmb{c}' \in \mathcal{C}^{k'-k}; \pmb{c} \pmb{c}' \in \PREF^{k'}_{F, i}(\pmb{b}).$
\item $\pmb{c} \in \bar{\PREF}^k_{F, i}(\pmb{b}) \iff {}^{\exists}\pmb{c}' \in \mathcal{C}^{k'-k}; \pmb{c} \pmb{c}' \in \bar{\PREF}^{k'}_{F, i}(\pmb{b}).$
\end{enumerate}
\end{lemma}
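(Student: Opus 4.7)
My plan is to prove each of (i) and (ii) by handling the two implications separately, with (ii) obtained by the same argument as (i) under the replacement of ``$\succeq$'' by ``$\succ$'' in the constraint on $f_i(x_1)$. For the backward direction ``$\Leftarrow$'', I would argue directly from Definition \ref{def:pref}: if $\pmb{x} \in \mathcal{S}^+$ witnesses $\pmb{c}\pmb{c}' \in \PREF^{k'}_{F,i}(\pmb{b})$, i.e., $f^{\ast}_i(\pmb{x}) \succeq \pmb{b}\pmb{c}\pmb{c}'$ and $f_i(x_1) \succeq \pmb{b}$, then the same $\pmb{x}$ witnesses $\pmb{c} \in \PREF^k_{F,i}(\pmb{b})$ since $\pmb{b}\pmb{c} \preceq \pmb{b}\pmb{c}\pmb{c}' \preceq f^{\ast}_i(\pmb{x})$ by transitivity of $\preceq$.

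The forward direction ``$\Rightarrow$'' is the main content, and this is precisely where the hypothesis $F \in \mathscr{F}_{\ext}$ is used. The plan is to start from a witness $\pmb{x} \in \mathcal{S}^+$ of $\pmb{c} \in \PREF^k_{F,i}(\pmb{b})$ and extend $\pmb{x}$ using Lemma \ref{lem:F_ext} so that $f^{\ast}_i$ overshoots $\pmb{b}\pmb{c}$ by at least $k'-k$ bits; the overshoot then supplies the desired suffix $\pmb{c}'$. Concretely, I set $i' \coloneqq \trans^{\ast}_i(\pmb{x})$ and invoke Lemma \ref{lem:F_ext} with $l \coloneqq |\pmb{b}| + k'$ to obtain $\pmb{z} \in \mathcal{S}^{\ast}$ with $|f^{\ast}_{i'}(\pmb{z})| \geq l$. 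Lemma \ref{lem:f_T}(i) then gives $f^{\ast}_i(\pmb{x}\pmb{z}) = f^{\ast}_i(\pmb{x}) f^{\ast}_{i'}(\pmb{z})$, so $|f^{\ast}_i(\pmb{x}\pmb{z})| \geq |\pmb{b}| + k'$, and Lemma \ref{lem:f_T}(iii) gives $f^{\ast}_i(\pmb{x}\pmb{z}) \succeq f^{\ast}_i(\pmb{x}) \succeq \pmb{b}\pmb{c}$. I then define $\pmb{c}' \in \mathcal{C}^{k'-k}$ to be the $(k'-k)$-bit block of $f^{\ast}_i(\pmb{x}\pmb{z})$ immediately following $\pmb{b}\pmb{c}$, which guarantees $\pmb{b}\pmb{c}\pmb{c}' \preceq f^{\ast}_i(\pmb{x}\pmb{z})$. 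Since the first symbol of $\pmb{x}\pmb{z}$ is still $x_1$, the constraint $f_i(x_1) \succeq \pmb{b}$ carries over, so $\pmb{x}\pmb{z}$ witnesses $\pmb{c}\pmb{c}' \in \PREF^{k'}_{F,i}(\pmb{b})$ as required.

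Finally, (ii) follows by the identical argument: the strict inequality $f_i(x_1) \succ \pmb{b}$ is preserved when $\pmb{x}$ is replaced by $\pmb{x}\pmb{z}$, because the first symbol is unchanged, and the backward direction again uses only transitivity of $\preceq$. The trivial case $k = k'$ is handled uniformly with $\pmb{c}' = \lambda$. I do not anticipate any real obstacle; the only conceptually important point is that extendability supplies exactly the ``runway'' beyond any given prefix required to lengthen a $k$-step witness to a $k'$-step witness, and without it the forward direction would genuinely fail.
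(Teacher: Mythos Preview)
Your argument is correct. The backward directions are immediate from transitivity of $\preceq$, and for the forward directions your use of Lemma~\ref{lem:F_ext} to extend a witness $\pmb{x}$ by some $\pmb{z}\in\mathcal{S}^{\ast}$ so that $|f^{\ast}_i(\pmb{x}\pmb{z})|\geq |\pmb{b}|+k'$, combined with Lemma~\ref{lem:f_T}(i)(iii), is exactly the right idea; the first symbol of $\pmb{x}\pmb{z}$ is $x_1$ since $\pmb{x}\in\mathcal{S}^{+}$, so the side condition on $f_i(x_1)$ (weak for (i), strict for (ii)) carries over unchanged. One minor remark: your choice $l=|\pmb{b}|+k'$ is more than needed (any $l\geq k'-k$ suffices, since $|f^{\ast}_i(\pmb{x})|\geq |\pmb{b}|+k$ already), but this does no harm.

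As for comparison with the paper: the present paper does not actually prove this lemma but defers to \cite{IEICE2023}, so there is no in-paper proof to compare against. Your proof is the natural one and is almost certainly what appears in the cited reference.
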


\begin{lemma}
\label{lem:pref-inc2}
For any $F \in \mathscr{F}_{\ext}$, $i \in [F]$, and $\pmb{b} \in \mathcal{C}^{\ast}$,
the following statements (i) and (ii) hold.
\begin{enumerate}[(i)]
\item
\begin{enumerate}[(a)]
\item For any integer $k \geq 0$, we have $\PREF^k_{F, i}(\pmb{b}) = \emptyset \iff \PREF^0_{F, i}(\pmb{b}) = \emptyset$.
\item For any integers $k$ and $k'$ such that $0 \leq k \leq k'$, we have $|\PREF^k_{F, i}(\pmb{b})| \leq |\PREF^{k'}_{F, i}(\pmb{b})|$.
\end{enumerate}

\item 
\begin{enumerate}[(a)]
\item For any integer $k \geq 0$, we have $\bar{\PREF}^k_{F, i}(\pmb{b}) = \emptyset \iff \bar{\PREF}^0_{F, i}(\pmb{b}) = \emptyset$.
\item For any integers $k$ and $k'$ such that $0 \leq k \leq k'$, we have $|\bar{\PREF}^k_{F, i}(\pmb{b})| \leq |\bar{\PREF}^{k'}_{F, i}(\pmb{b})|$.
\end{enumerate}

\end{enumerate}
\end{lemma}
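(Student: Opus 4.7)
The plan is to derive both parts of the lemma directly from Lemma \ref{lem:pref-inc}, as the remark preceding the statement already signals. For each pair of integers $0 \leq k \leq k'$, the key reformulation is that Lemma \ref{lem:pref-inc}(i) asserts precisely that the truncation map $\phi : \PREF^{k'}_{F,i}(\pmb{b}) \to \PREF^{k}_{F,i}(\pmb{b})$ sending a length-$k'$ sequence $\pmb{c}\pmb{c}'$ (with $|\pmb{c}| = k$) to its length-$k$ prefix $\pmb{c}$ is well-defined (the forward direction of the biconditional) and surjective (the backward direction). The same reformulation of Lemma \ref{lem:pref-inc}(ii) yields an analogous well-defined surjection $\bar{\phi}$ between the corresponding $\bar{\PREF}$ sets.

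Given this, statement (b) of each part is immediate: surjectivity forces $|\PREF^k_{F, i}(\pmb{b})| \leq |\PREF^{k'}_{F, i}(\pmb{b})|$ and $|\bar{\PREF}^k_{F, i}(\pmb{b})| \leq |\bar{\PREF}^{k'}_{F, i}(\pmb{b})|$. For statement (a), I would specialize Lemma \ref{lem:pref-inc} to the case where its ``$k$'' is $0$ and its ``$k'$'' is the $k$ appearing in (a). The forward direction $\PREF^0_{F,i}(\pmb{b}) = \emptyset \Rightarrow \PREF^k_{F,i}(\pmb{b}) = \emptyset$ is the contrapositive of the observation that any $\pmb{c} \in \PREF^k_{F,i}(\pmb{b})$ would, as the length-$k$ extension of $\lambda$, witness $\lambda \in \PREF^0_{F,i}(\pmb{b})$ by Lemma \ref{lem:pref-inc}(i). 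The converse: if $\lambda \in \PREF^0_{F,i}(\pmb{b})$, then Lemma \ref{lem:pref-inc}(i) produces some $\pmb{c}' \in \mathcal{C}^k$ with $\pmb{c}' \in \PREF^k_{F,i}(\pmb{b})$, so $\PREF^k_{F,i}(\pmb{b}) \neq \emptyset$. The $\bar{\PREF}$ version is identical with Lemma \ref{lem:pref-inc}(ii) in place of (i).

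There is essentially no obstacle: the lemma is a convenient repackaging of Lemma \ref{lem:pref-inc} into monotonicity statements in $k$ (in the strong form of emptiness preservation, and the weaker form of cardinality inequality). The only care required is to keep the $\PREF$ and $\bar{\PREF}$ arguments separate and to invoke the matching part of the prior lemma in each case.
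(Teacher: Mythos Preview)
Your proposal is correct and matches the paper's approach exactly: the paper simply states that Lemma~\ref{lem:pref-inc2} is a direct consequence of Lemma~\ref{lem:pref-inc}, and your surjection/specialization argument is the natural way to unpack that. One trivial slip: you have the labels swapped---the \emph{backward} direction of Lemma~\ref{lem:pref-inc}(i) gives well-definedness of the truncation map, and the \emph{forward} direction gives surjectivity---but this does not affect the substance of the argument.
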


\begin{lemma}[{\cite[Lemma 5]{IEICE2023}}]
\label{lem:longest}
For any integer $k \geq 0$, $F(f, \trans) \in \mathscr{F}_{\ext} \cap \mathscr{F}_{k\hdec}, i \in [F]$, and $\pmb{x} \in \mathcal{S}^{\ast}$, if $f^{\ast}_i(\pmb{x}) = \lambda$, then $|\pmb{x}| < |F|$.
\end{lemma}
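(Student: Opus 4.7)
The plan is to argue by contradiction: assume $n := |\pmb{x}| \geq |F|$ and derive an incompatibility between a long silent run and the extendability of $F$. The governing observation is that each silent transition (one where $f_{i'}(s) = \lambda$) cannot strictly increase the quantity $|\PREF^k_{F, i'}|$ along the state trajectory, and a repeated state forces this quantity to stabilize, which, combined with extendability, yields a structural contradiction.

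Concretely, I would first set up notation: write $\pmb{x} = x_1 x_2 \ldots x_n$ and let $i_1 := i$ and $i_{m+1} := \trans_{i_m}(x_m)$ for $1 \leq m \leq n$ be the trajectory of code-table indices. From $f^{\ast}_i(\pmb{x}) = \lambda$ together with the recursive definition in Definition \ref{def:f_T}, each appended codeword $f_{i_m}(x_m)$ equals $\lambda$, so $x_m \in \assign_{F, i_m}(\lambda)$ for all $m$. The trajectory has $n+1 > |F|$ entries but only $|F|$ possible values, so the pigeonhole principle yields indices $1 \leq j < l \leq n+1$ with $i_j = i_l$.

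Next, I would apply Lemma \ref{lem:pref-sum}(ii) with $\pmb{b} = \lambda$ at each $m \in \{j, j+1, \ldots, l-1\}$. The right-hand side of that identity contains the nonnegative term $|\PREF^k_{F, i_{m+1}}|$ arising from $s = x_m$ (using $\trans_{i_m}(x_m) = i_{m+1}$), giving $|\PREF^k_{F, i_m}| \geq |\PREF^k_{F, i_{m+1}}|$. Chaining these inequalities from $m = j$ up to $m = l - 1$ and invoking $i_j = i_l$ collapses them all into equalities. In particular, the equality at $m = j$ forces $\bar{\PREF}^k_{F, i_j} = \emptyset$ and $|\PREF^k_{F, \trans_{i_j}(s)}| = 0$ for every $s \in \assign_{F, i_j}(\lambda) \setminus \{x_j\}$.

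To conclude, I would invoke extendability twice. First, Lemma \ref{lem:pref-inc2}(i)(a) together with the trivial fact $\PREF^0_{F, i'} = \{\lambda\}$ gives $|\PREF^k_{F, i'}| \geq 1$ for every $i' \in [F]$; combined with the above, this forces $\assign_{F, i_j}(\lambda) = \{x_j\}$, a singleton. Second, Lemma \ref{lem:pref-inc2}(ii)(a) upgrades $\bar{\PREF}^k_{F, i_j} = \emptyset$ to $\bar{\PREF}^0_{F, i_j} = \emptyset$, and Lemma \ref{lem:leaf}(i) translates this as the statement that no $s \in \mathcal{S}$ satisfies $f_{i_j}(s) \succ \lambda$, so every source symbol is silent at $i_j$ and $\assign_{F, i_j}(\lambda) = \mathcal{S}$, a set of size at least $2$. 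The singleton-versus-$\geq 2$ conflict closes the argument. The main subtlety I anticipate is cleanly extracting both equality consequences from Lemma \ref{lem:pref-sum}(ii); once those are isolated, the remaining steps are direct applications of the preceding lemmas.
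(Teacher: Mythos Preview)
Your argument is correct. The paper does not give its own proof of this lemma; it defers to the external reference \cite{IEICE2023} (see the sentence following Lemma~\ref{lem:pref-inc2}), so there is no in-paper argument to compare against. Your pigeonhole-plus-monotonicity route---tracking $|\PREF^k_{F,\cdot}|$ along the state trajectory via Lemma~\ref{lem:pref-sum}(ii), collapsing the chain at a repeated state, and then forcing the incompatible conclusions $\assign_{F,i_j}(\lambda)=\{x_j\}$ versus $\assign_{F,i_j}(\lambda)=\mathcal{S}$ through Lemmas~\ref{lem:pref-inc2} and~\ref{lem:leaf}(i)---is valid and fully self-contained within the paper's framework.
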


\begin{lemma}
\label{cor:pref-sum}
For any integer $k \leq 2$, $F(f, \trans) \in \mathscr{F}_{2\hdec} \cap \mathscr{F}_{\ext}$, $i \in [F]$, and $s \in \mathcal{S}$,
we have $|\bar{\PREF}^k_{F, i}(f_i(s))| + |\PREF^2_{F, \trans_i(s)}| \leq 4$.
\end{lemma}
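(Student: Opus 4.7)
The plan is to reduce immediately to the case $k=2$ and then invoke the $2$-bit delay decodability condition from Definition \ref{def:k-bitdelay}(i), which asserts exactly the disjointness between $\PREF^2_{F,\trans_i(s)}$ and $\bar{\PREF}^2_{F,i}(f_i(s))$.

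First, for general $k \leq 2$ I would use the monotonicity statement Lemma \ref{lem:pref-inc2}(ii)(b), valid because $F \in \mathscr{F}_{\ext}$, to obtain
\begin{equation}
|\bar{\PREF}^k_{F,i}(f_i(s))| \leq |\bar{\PREF}^2_{F,i}(f_i(s))|.
\end{equation}
Thus it suffices to prove the inequality for $k=2$, namely $|\bar{\PREF}^2_{F,i}(f_i(s))| + |\PREF^2_{F,\trans_i(s)}| \leq 4$.

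For this, observe that both $\bar{\PREF}^2_{F,i}(f_i(s))$ and $\PREF^2_{F,\trans_i(s)}$ are by definition subsets of $\mathcal{C}^2$, a set of cardinality $4$. Since $F \in \mathscr{F}_{2\hdec}$, Definition \ref{def:k-bitdelay}(i) (applied with $k=2$) yields $\PREF^2_{F,\trans_i(s)} \cap \bar{\PREF}^2_{F,i}(f_i(s)) = \emptyset$. Consequently,
\begin{equation}
|\bar{\PREF}^2_{F,i}(f_i(s))| + |\PREF^2_{F,\trans_i(s)}| = |\bar{\PREF}^2_{F,i}(f_i(s)) \cup \PREF^2_{F,\trans_i(s)}| \leq |\mathcal{C}^2| = 4,
\end{equation}
which gives the bound.

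There is essentially no obstacle here: the lemma is a direct bookkeeping consequence of the definition of $2$-bit delay decodability, with the extendability hypothesis entering only through the monotonicity step that lets us absorb the cases $k=0,1$ into the case $k=2$.
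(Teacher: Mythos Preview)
Your proof is correct and follows essentially the same approach as the paper: first reduce to $k=2$ via Lemma~\ref{lem:pref-inc2}(ii)(b) using extendability, then use the disjointness from $2$-bit delay decodability to bound the sum by $4$. The only cosmetic difference is that the paper routes the second step through Lemma~\ref{lem:pref-sum}(ii) to get $|\bar{\PREF}^2_{F,i}(f_i(s))| + |\PREF^2_{F,\trans_i(s)}| \leq |\PREF^2_{F,i}(f_i(s))| \leq 4$, whereas you invoke Definition~\ref{def:k-bitdelay}(i) directly and bound the disjoint union by $|\mathcal{C}^2|$; both amount to the same argument.
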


\begin{proof}[Proof of Lemma \ref{cor:pref-sum}]
We have
\begin{eqnarray}
|\bar{\PREF}^k_{F, i}(f_i(s))| + |\PREF^2_{F, \trans_i(s)}| 
\overset{(\mathrm{A})}{\leq} |\bar{\PREF}^2_{F, i}(f_i(s))| + |\PREF^2_{F, \trans_i(s)}|
\overset{(\mathrm{B})}{\leq} |\PREF^2_{F, i}(f_i(s))|
\leq 4
\end{eqnarray}
as desired, where 
(A) follows from $k \leq 2$, $F \in \mathscr{F}_{\ext}$, and Lemma \ref{lem:pref-inc2} (ii) (b),
and (B) follows from $F \in \mathscr{F}_{2\hdec}$ and Lemma \ref{lem:pref-sum} (ii).
\end{proof}

\subsection{Average Codeword Length of Code-Tuple}
\label{subsec:evaluation}

In this subsection, we introduce the average codeword length $L(F)$ of a code-tuple $F$.
First, for $F(f, \trans) \in \mathscr{F}$ and $i, j \in [F]$, we define the transition probability $Q_{i, j}(F)$ as the probability of using the code table $f_j$ next after using the code table $f_i$ in the encoding process.
 \begin{definition}
\label{def:transprobability}
For $F(f, \trans) \in \mathscr{F}$ and $i, j \in [F]$,
we define the \emph{transition probability} $Q_{i,j}(F)$ as
\begin{equation}
\label{eq:9x9htdrx1001}
 Q_{i,j}(F) \coloneqq \sum_{s \in \mathcal{S}, \trans_i(s) = j} \mu(s).
 \end{equation}
We also define the \emph{transition probability matrix} $Q(F)$ as the following $|F|\times|F|$ matrix:
 \begin{equation}
  \left[
    \begin{array}{cccc}
      Q_{0,0}(F) & Q_{0,1}(F) & \cdots & Q_{0, |F|-1}(F) \\
       Q_{1,0}(F) &  Q_{1,1}(F) & \cdots &  Q_{1, |F|-1}(F) \\
      \vdots & \vdots & \ddots & \vdots \\
      Q_{|F|-1, 0}(F) &  Q_{|F|-1, 1}(F) & \cdots &  Q_{|F|-1, |F|-1}(F) 
    \end{array}
  \right].
 \end{equation}
  \end{definition}

We fix $F \in \mathscr{F}$ and consider the encoding process with $F$.
Let $I_i \in [F]$ be the index of the code table used to encode the $i$-th symbol of a source sequence for $i = 1, 2, 3, \ldots$.
Then $\{I_i\}_{i = 1, 2, 3, \ldots}$ is a Markov process with the transition probability matrix $Q(F)$.
As stated later in Definition \ref{def:evaluation}, the average codeword length $L(F)$ of $F$ is defined depending on the stationary distribution $\pmb{\pi}$ of the Markov process $\{I_i\}_{i = 1, 2, 3, \ldots}$ (i.e., a solution of the simultaneous equations (\ref{eq:stationary1}) and (\ref{eq:stationary2})). 
To define $L(F)$ uniquely, we limit the scope of consideration to the class $\mathscr{F}_{\reg}$ defined in the following Definition \ref{def:regular}.

\begin{definition}
\label{def:regular}
A code-tuple $F$ is said to be \emph{regular} if
the following simultaneous equations (\ref{eq:stationary1}) and (\ref{eq:stationary2}) have the unique solution $\pmb{\pi} = (\pi_0, \pi_1, \ldots, \pi_{|F|-1})$:

\begin{numcases}{}
\pmb{\pi}Q(F) = \pmb{\pi}\label{eq:stationary1},\\
 \sum_{i \in [F]} \pi_i = 1. \label{eq:stationary2}
 \end{numcases}
 
 We define $\mathscr{F}_{\reg}$ as the set of all regular code-tuples, that is,
 \begin{equation}
 \mathscr{F}_{\reg} \coloneqq \{F \in \mathscr{F} : F \text{ is regular}\}.
 \end{equation}
 For $F \in \mathscr{F}_{\reg}$, we define $\pmb{\pi}(F) = (\pi_0(F), \pi_1(F), \ldots, \pi_{|F|-1}(F))$
 as the unique solution of the simultaneous equations (\ref{eq:stationary1}) and (\ref{eq:stationary2}).
 \end{definition}
 
Since the transition probability matrix $Q(F)$ depends on $\mu$, it might seem that the class $\mathscr{F}_{\reg}$ also depends on $\mu$.
However, we show later as Lemma \ref{lem:kernel} that in fact $\mathscr{F}_{\reg}$ is independent from $\mu$.
More precisely, whether a code-tuple $F(f, \trans)$ belongs to $\mathscr{F}_{\reg}$ depends only on $\trans_0, \trans_1, \ldots, \trans_{|F|-1}$.
 
\begin{remark}
\label{rem:transitionprobability}
Note that $Q(F), L_i(F), L(F)$ and $\pmb{\pi}(F)$ depend on $\mu$.
However, since we are now discussing on a fixed $\mu$,
the average codeword length $L_i(F)$ of $f_i$ (resp. the transition probability matrix $Q(F)$) is determined only by the mapping $f_i$ (resp. $\trans_0, \trans_1, \ldots, \trans_{|F|-1}$) and therefore $\pmb{\pi}(F)$ of a regular code-tuple $F$ is also determined only by $\trans_0, \trans_1, \ldots, \trans_{|F|-1}$.
\end{remark} 
  
For any $F \in \mathscr{F}_{\reg}$, the asymptotical performance (i.e. average codeword length per symbol) does not depend on from which code table we start encoding:
the average codeword length $L(F)$ of a regular code-tuple $F \in \mathscr{F}_{\reg}$ is
the weighted sum of the average codeword lengths of the code tables $f_0, f_1, \ldots, f_{|F|-1}$ weighted by the stationary distribution $\pmb{\pi}(F)$.
Namely, $L(F)$ is defined as the following Definition \ref{def:evaluation}. 
 
 \begin{definition} 
 \label{def:evaluation}
 For $F(f, \trans) \in \mathscr{F}$ and $i \in [F]$, we define the \emph{average codeword length $L_i(F)$ of the single code table} $f_i : \mathcal{S} \rightarrow \mathcal{C}^{\ast}$ as
  \begin{equation}
 L_i(F) \coloneqq \sum_{s \in \mathcal{S}} |f_i(s)| \cdot \mu(s).
  \end{equation}
For $F \in \mathscr{F}_{\reg}$,
we define the \emph{average codeword length $L(F)$ of the code-tuple $F$} as 
 \begin{equation}
 \label{eq:evaluation}
 L(F) \coloneqq \sum_{i \in [F]} \pi_i(F)L_i(F).
 \end{equation}
 \end{definition}

 \begin{example}
 \label{ex:evaluation}
We consider $F \coloneqq F^{(\gamma)}$ of Table \ref{tab:code-tuple}, where $(\mu(\mathrm{a}), \mu(\mathrm{b}), \mu(\mathrm{c}), \mu(\mathrm{d})) = (0.1, 0.2, 0.3, 0.4)$.

We have 
 \begin{equation}
  Q(F) = \left[
    \begin{array}{cccc}
      0.4 & 0.2 & 0.4 \\
       0.2 &  0.4 & 0.4\\
      0.2 &  0.1 & 0.7 
    \end{array}
  \right].
\end{equation}
The simultaneous equations (\ref{eq:stationary1}) and (\ref{eq:stationary2}) has the unique solution $\pmb{\pi}(F) = (\pi_0(F), \pi_1(F), \pi_2(F)) = (1/4, 5/28, 4/7)$.
Hence, we have $F \in \mathscr{F}_{\reg}$.
Also, we have
\begin{eqnarray}
L_0(F) = 2.6, \quad L_1(F) = 3.7, \quad L_2(F) = 4.2.
\end{eqnarray}
Therefore, the average codeword length $L(F)$ of the code-tuple $F$ is given as
\begin{eqnarray}
L(F) = \pi_0(F)L_0(F) + \pi_1(F)L_1(F) + \pi_2(F)L_2(F) \approx 3.7107.
\end{eqnarray}
 \end{example}

A regular code-tuple is characterized as a code-tuple $F$ such that the set $\kernel_F$, defined as the following Definition \ref{def:kernel}, is not empty.

 \begin{definition}
\label{def:kernel}
For $F(f, \trans) \in \mathscr{F}$, we define $\kernel_F$ as
\begin{equation}
\label{eq:7tuxvj14yeno}
\kernel_F \coloneqq \{i \in [F] : {}^{\forall}j \in [F]; {}^{\exists}\pmb{x} \in \mathcal{S}^{\ast}; \trans^{\ast}_j(\pmb{x}) = i\}.
\end{equation}
Namely, $\kernel_F$ is the set of indices $i$ of code tables such that for any $j \in [F]$, there exists $\pmb{x} \in \mathcal{S}^{\ast}$ such that $\trans^{\ast}_j(\pmb{x}) = i$.
\end{definition}

\begin{example}
First, we consider $F(f, \trans) \coloneqq F^{(\alpha)}$ in Table \ref{tab:code-tuple}. Then we confirm $\kernel_{F} = \{2\}$ as follows.
\begin{itemize}
\item $0 \not\in \kernel_{F}$ because there exists no $\pmb{x} \in \mathcal{S}^{\ast}$ such that $\trans^{\ast}_2(\pmb{x}) = 0$.
\item $1 \not\in \kernel_{F}$ because there exists no $\pmb{x} \in \mathcal{S}^{\ast}$ such that $\trans^{\ast}_2(\pmb{x}) = 1$.
\item $2 \in \kernel_{F}$ because $\trans^{\ast}_0(\mathrm{bc}) = \trans^{\ast}_1(\mathrm{c}) = \trans^{\ast}_2(\lambda) = 2$.
\end{itemize}

Next, we consider $F(f, \trans) \coloneqq F^{(\beta)}$ in Table \ref{tab:code-tuple}. Then we confirm $\kernel_{F} =  \emptyset$ as follows.
\begin{itemize}
\item $0 \not\in \kernel_{F}$ because there exists no $\pmb{x} \in \mathcal{S}^{\ast}$ such that $\trans^{\ast}_1(\pmb{x}) = 0$.
\item $1 \not\in \kernel_{F}$ because there exists no $\pmb{x} \in \mathcal{S}^{\ast}$ such that $\trans^{\ast}_2(\pmb{x}) = 1$.
\item $2 \not\in \kernel_{F}$ because there exists no $\pmb{x} \in \mathcal{S}^{\ast}$ such that $\trans^{\ast}_1(\pmb{x}) = 2$.
\end{itemize}

Lastly, we consider $F(f, \trans) \coloneqq F^{(\gamma)}$ in Table \ref{tab:code-tuple}. Then we confirm $\kernel_{F} =  \{0, 1, 2\}$ as follows.
\begin{itemize}
\item $0 \in \kernel_{F}$ because $\trans^{\ast}_0(\lambda) = \trans^{\ast}_1(\mathrm{b}) = \trans^{\ast}_2(\mathrm{b}) = 0$.
\item $1 \in \kernel_{F}$ because $\trans^{\ast}_0(\mathrm{b}) = \trans^{\ast}_1(\lambda) = \trans^{\ast}_2(\mathrm{a}) = 1$.
\item $2 \in \kernel_{F}$ because $\trans^{\ast}_0(\mathrm{d}) = \trans^{\ast}_1(\mathrm{d}) = \trans^{\ast}_2(\lambda) = 2$.
\end{itemize}

Similarly, we can see $\kernel_{F^{(\delta)}} = \kernel_{F^{(\epsilon)}} = \kernel_{F^{(\zeta)}} = \kernel_{F^{(\eta)}} = \kernel_{F^{(\theta)}} = \{0, 1, 2\}$ and $\kernel_{F^{(\iota)}} = \kernel_{F^{(\kappa)}} = \{0, 1\}$.
\end{example}

Regarding $\kernel_F$, the following Lemma \ref{lem:kernel} holds.

\begin{lemma}[{\cite[Lemmas 8 and 9]{IEICE2023}}]
\label{lem:kernel}
For any $F \in \mathscr{F}$, the following statements (i)--(iii) hold.
\begin{enumerate}[(i)]
\item $F \in \mathscr{F}_{\reg}$ if and only if $\kernel_F \neq \emptyset$.
\item If $F \in \mathscr{F}_{\reg}$, then for any $i \in [F]$, the following equivalence relation holds: $\pi_i(F) > 0 \iff i \in \kernel_F$.
\item For any $F \in \mathscr{F}_{\reg} \cap \mathscr{F}_{\ext} \cap \mathscr{F}_{2\hdec}$, there exists $\bar{F} \in \mathscr{F}_{\reg} \cap \mathscr{F}_{\ext} \cap \mathscr{F}_{2\hdec}$ such that $L(\bar{F}) = L(F)$ and $\kernel_{\bar{F}} = [\bar{F}]$.
\end{enumerate}
\end{lemma}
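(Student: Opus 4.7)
The plan is to recast the content in the language of finite Markov chains. Observe that the code-table index process $\{I_i\}$ forms a Markov chain on $[F]$ with transition matrix $Q(F)$, and that because $\mu(s) > 0$ for every $s \in \mathcal{S}$, by (\ref{eq:9x9htdrx1001}) and (\ref{eq:tstar}) the existence of $\pmb{x} \in \mathcal{S}^{\ast}$ with $\trans^{\ast}_j(\pmb{x}) = i$ is equivalent to $Q(F)$ assigning positive probability to some finite path $j \to \cdots \to i$. Hence (\ref{eq:7tuxvj14yeno}) says that $\kernel_F$ is exactly the set of states reachable from every state of the chain. A short case analysis on the (nonempty, finite) collection of closed communicating classes of $Q(F)$ then shows that $\kernel_F$ equals the unique closed class when there is exactly one, and is empty otherwise: a state in a closed class cannot be reached from a distinct closed class, and any transient state is not reached from a closed class.

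Parts (i) and (ii) follow by combining this observation with the classical fact that (\ref{eq:stationary1})--(\ref{eq:stationary2}) have a unique solution $\pmb{\pi}(F)$ iff $Q(F)$ has a unique closed communicating class, in which case $\pmb{\pi}(F)$ is strictly positive on that class and zero on the (necessarily transient) remaining states.

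For (iii), the plan is to prune $F$ to its essential part. Given $F \in \mathscr{F}_{\reg} \cap \mathscr{F}_{\ext} \cap \mathscr{F}_{2\hdec}$, define $\bar{F}$ by keeping only the code tables whose indices lie in $\kernel_F$, reindexed to $[|\kernel_F|]$. To see this is well defined, observe that $\kernel_F$ is closed under each $\trans_i(\cdot)$ for $i \in \kernel_F$: if $j$ reaches $i$, then $j$ also reaches $\trans_i(s)$, so $\trans_i(s) \in \kernel_F$. Consequently, for each $i \in \kernel_F$ and $\pmb{x} \in \mathcal{S}^{\ast}$, the sequences $f^{\ast}_i(\pmb{x})$ and $\trans^{\ast}_i(\pmb{x})$ agree whether computed in $F$ or in $\bar{F}$. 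Hence the prefix sets $\PREF^k_{\bar{F}, i}(\pmb{b})$ and $\bar{\PREF}^k_{\bar{F}, i}(\pmb{b})$ coincide with those of $F$ for every $i \in \kernel_F$, and so $\bar{F}$ inherits extendability (Definition \ref{def:F_ext}) and $2$-bit delay decodability (Definition \ref{def:k-bitdelay}). The restricted chain $Q(\bar{F})$ is irreducible, giving $\bar{F} \in \mathscr{F}_{\reg}$ with $\kernel_{\bar{F}} = [\bar{F}]$ and $\pmb{\pi}(\bar{F})$ equal to the restriction of $\pmb{\pi}(F)$ to $\kernel_F$. Finally, $L(\bar{F}) = L(F)$ by (\ref{eq:evaluation}), because by (ii) every term with $i \notin \kernel_F$ in the sum defining $L(F)$ already vanishes.

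The main obstacle is (iii): one must preserve regularity, extendability, $2$-bit delay decodability, and the average codeword length simultaneously. The linchpin is the closure of $\kernel_F$ under each $\trans_i$ with $i \in \kernel_F$, which ensures that the $\PREF$ and $\bar{\PREF}$ sets computed inside $\bar{F}$ are unchanged from those in $F$, so that all the delay and extendability conditions transfer without modification.
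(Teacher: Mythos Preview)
Your argument is correct. The paper itself does not prove this lemma; it simply cites \cite[Lemmas 8 and 9]{IEICE2023} and writes ``See [5, Lemmas 8 and 9] for the proof of Lemma 9.'' So there is no in-paper proof to compare against.

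That said, your route via standard finite Markov chain structure is exactly the natural one, and each step checks out: the identification of $\kernel_F$ with the unique closed communicating class (or $\emptyset$ if there are several) is the key observation, and it immediately yields (i) and (ii) from the classical description of the stationary simplex. For (iii), the crucial point you correctly isolate is that $\kernel_F$ is absorbing for the maps $\trans_i$ with $i \in \kernel_F$; once that is established, the $\PREF$ and $\bar{\PREF}$ sets are literally unchanged under restriction, so Definitions~\ref{def:F_ext} and~\ref{def:k-bitdelay} transfer verbatim, and $L(\bar{F}) = L(F)$ follows from (ii). One small remark: when you assert that $\pmb{\pi}(\bar{F})$ equals the restriction of $\pmb{\pi}(F)$, it may be worth saying explicitly that this restriction already sums to $1$ by (ii), so it solves (\ref{eq:stationary1})--(\ref{eq:stationary2}) for $\bar{F}$ without renormalization.
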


See [5, Lemmas 8 and 9] for the proof of Lemma 9.

\section{The Optimality of AIFV Code}
\label{sec:optimality}

In this section, we prove the optimality of AIFV codes as the main result of this paper.
As stated in the previous section, we limit the scope of consideration to regular, extendable, and $2$-bit delay decodable code-tuples.
Namely, we prove the optimality of AIFV codes in the class $\mathscr{F}_{0}$ defined as the following Definition \ref{def:subsetT0}.

\begin{definition}
\label{def:subsetT0}
We define $\mathscr{F}_0$ as
\begin{equation}
\mathscr{F}_0 \coloneqq \mathscr{F}_{\reg} \cap \mathscr{F}_{\ext} \cap \mathscr{F}_{2\hdec} = \{F \in \mathscr{F}_{\reg} \cap \mathscr{F}_{2\hdec}: {}^{\forall}i \in [F]; \PREF^1_{F, i} \neq \emptyset \}.
\end{equation}
\end{definition}

We consider \emph{optimal code-tuples} in the class $\mathscr{F}_0$.
The class $\mathscr{F}_0$ is an infinite set; however, an optimal code-tuple does exist indeed as stated in the following Lemma \ref{lem:goodsetTopt}.
See the proof of Lemma \ref{lem:goodsetTopt} for \cite[Appendix B]{IEICE2023}.

\begin{lemma}[{\cite[Appendix B]{IEICE2023}}]
 \label{lem:goodsetTopt}
There exists $F \in \mathscr{F}_0$ such that for any $F' \in \mathscr{F}_0$, it holds that $L(F) \leq L(F')$ 
\end{lemma}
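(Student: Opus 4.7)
The plan is to reduce the problem of minimizing $L$ over the infinite class $\mathscr{F}_0$ to the problem of minimizing $L$ over a finite subclass. Since $\mathscr{F}_0$ is nonempty (any Huffman code, viewed as a $1$-code-tuple, satisfies regularity trivially and belongs to $\mathscr{F}_{2\hdec} \cap \mathscr{F}_{\ext}$), the infimum $L_{\opt} \coloneqq \inf\{L(F) : F \in \mathscr{F}_0\}$ is finite. Fixing any threshold $L^{\ast}$ slightly above $L_{\opt}$, it suffices to exhibit a finite subset $\mathscr{G} \subseteq \mathscr{F}_0$ that contains some $F$ with $L(F) \leq L^{\ast}$ and a minimizer of $L$ among all $F \in \mathscr{F}_0$ with $L(F) \leq L^{\ast}$.

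By Lemma \ref{lem:kernel} (iii), I may restrict attention to code-tuples $F \in \mathscr{F}_0$ satisfying $\kernel_F = [F]$, because any $F \in \mathscr{F}_0$ admits a counterpart in $\mathscr{F}_0$ with the same average codeword length whose kernel is its full index set. For such $F$, Lemma \ref{lem:kernel} (ii) ensures $\pi_i(F) > 0$ for every $i \in [F]$. In particular, combining Definition \ref{def:transprobability} with Lemma \ref{lem:longest}, every state $i \in [F]$ is reachable from every $j \in [F]$ within at most $|F|-1$ transitions, so the stationary mass satisfies a combinatorial lower bound of the form $\pi_i(F) \geq \mu_{\min}^{|F|-1}/|F|$ where $\mu_{\min} \coloneqq \min_{s} \mu(s) > 0$.

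Next, I would bound the codeword lengths. Since $L(F) = \sum_{i \in [F]} \pi_i(F) L_i(F) \geq \pi_i(F) \mu(s) |f_i(s)|$ for each $i$ and $s$, a uniform upper bound on $L(F)$ together with the lower bound on $\pi_i(F)$ yields an upper bound $K$ on $|f_i(s)|$, depending on $L^{\ast}$, $\mu_{\min}$, and $|F|$. Finally, to control $|F|$, I would argue that duplicate or equivalent code tables can be merged without increasing $L$: if two indices $i \neq j$ have $f_i = f_j$ and can be identified without breaking $2$-bit delay decodability or regularity, then we remove one and redirect the transitions accordingly, strictly decreasing $|F|$ and preserving $L$. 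Since there are only finitely many code tables $\mathcal{S} \rightarrow \mathcal{C}^{\leq K}$, iterating this merging procedure yields $|F| \leq M$ for some constant $M$ depending only on $\sigma$ and $K$.

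Putting these reductions together, the search for a minimizer can be confined to the finite set of code-tuples $F$ with $|F| \leq M$ and $\max_{i, s} |f_i(s)| \leq K$ satisfying the membership conditions of $\mathscr{F}_0$, on which $L$ attains its minimum. The main obstacle is the circular dependence between steps two and three: the lower bound on $\pi_i(F)$ depends on $|F|$, the upper bound on codeword lengths depends on this lower bound, and the merging argument in turn depends on boundedness of the codeword lengths. Resolving this cleanly would likely require iterating the three reductions, or first establishing an a priori bound on $|F|$ by exploiting the constraint $|\PREF^2_{F, i}(\pmb{b})| \leq 4$ from Lemma \ref{cor:pref-sum} to limit the number of distinguishable "states" the decoder can be in, thereby collapsing equivalent code tables even before the codeword-length bound is available.
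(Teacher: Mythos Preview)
Your overall strategy is sound and, in fact, the resolution you sketch in your final sentence is precisely the one used in the cited reference: the paper itself does not prove this lemma but defers to \cite[Appendix B]{IEICE2023}, and the essential step there is exactly the a~priori bound on $|F|$ that you suspect. That bound appears in the present paper as Lemma~\ref{lem:differ}: for any $F \in \mathscr{F}_0$ there is $F^{\dagger} \in \mathscr{F}_0$ with $L(F^{\dagger}) \leq L(F)$ and $|F^{\dagger}| = |\prefset^2_{F^{\dagger}}|$, where $\prefset^2_{F^{\dagger}} = \{\PREF^2_{F^{\dagger}, i} : i \in [F^{\dagger}]\}$. Since each $\PREF^2_{F^{\dagger}, i}$ is a nonempty subset of $\mathcal{C}^2$, this gives $|F^{\dagger}| \leq 2^4 - 1 = 15$ independently of anything else, breaking your circularity cleanly.

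The gap in your write-up is the merging criterion. Collapsing indices with $f_i = f_j$ does not suffice: the number of distinct maps $\mathcal{S} \to \mathcal{C}^{\leq K}$ still depends on $K$, which depends on $|F|$, so you remain in the loop. The correct equivalence is $\PREF^2_{F,i} = \PREF^2_{F,j}$: two code tables whose $\PREF^2$-sets agree are interchangeable from the standpoint of Definition~\ref{def:k-bitdelay}, so one can redirect every transition into $j$ to instead use whichever of $i,j$ has the smaller single-table average length, preserving membership in $\mathscr{F}_{2\hdec}$ and not increasing $L$. That is the content of Lemma~\ref{lem:differ}. Once $|F|$ is bounded absolutely, your steps two and three (lower-bounding $\pi_i$ via irreducibility, then upper-bounding codeword lengths) go through and reduce the problem to a finite search, exactly as you outline. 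Your citation of Lemma~\ref{lem:longest} for the reachability bound is slightly off; irreducibility within $|F|$ steps follows directly from $\kernel_F = [F]$ and Definition~\ref{def:kernel}.
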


We define the class $\mathscr{F}_{\opt}$ of all optimal code-tuples as follows.
\begin{definition}
\label{def:optimalset}
$\mathscr{F}_{\opt} \coloneqq \argmin_{F \in \mathscr{F}_0} L(F)$.
\end{definition}
Note that $\mathscr{F}_{\opt}$ depends on the source probability distribution $\mu$, and we are now discussing for an arbitrarily fixed $\mu$.

The class of AIFV codes can be stated with our notations as the following Definition \ref{def:subsetTaifv}.

\begin{definition}
\label{def:subsetTaifv}
We define $\mathscr{F}_{\AIFV}$ as the set of all $F(f, \trans) \in \mathscr{F}^{(2)}$ 
satisfying all of the following conditions (i)--(vii).
\begin{enumerate}[(i)]
\item $f_0$ and $f_1$ are injective.
\item For any $i \in [2]$ and $s \in \mathcal{S}$, it holds that $\bar{\PREF}^1_{F, i}(f_i(s)) \not\owns 1$ and $\bar{\PREF}^1_{F, i}(f_i(s)0) \not\owns 1$.
\item For any $i \in [2]$ and $s, s' \in \mathcal{S}$, it holds that $f_i(s') \neq f_i(s)0$.
\item For any $i \in [2]$ and $s \in \mathcal{S}$, it holds that
\begin{equation}
\trans_i(s) = \begin{cases}
0 &\,\,\text{if}\,\, \bar{\PREF}^0_{F, i}(f_i(s)) = \emptyset,\\
1 &\,\,\text{if}\,\, \bar{\PREF}^0_{F, i}(f_i(s)) \neq \emptyset.\\
\end{cases}
\end{equation}
\item For any $s \in \mathcal{S}$, it holds that $f_1(s) \neq \lambda$ and $f_1(s) \neq 0$.
\item $\bar{\PREF}^1_{F, 1}(0) \not\owns 0$.

\item For any $i \in [2]$ and $\pmb{b} \in \mathcal{C}^{\ast}$, if $|\bar{\PREF}^1_{F, i}(\pmb{b})| = 1$, then at least one of the following conditions (a) and (b) hold.
\begin{enumerate}[(a)]
\item $f_i(s)\pmb{c} = \pmb{b}$ for some $s \in \mathcal{S}$ and $\pmb{c} \in \mathcal{C}^0 \cup \mathcal{C}^1$.
\item $(i, \pmb{b}) = (1, 0)$.
\end{enumerate}
\end{enumerate} 
\end{definition}

\begin{example}
The code-tuple $F^{(\kappa)}$ in Table \ref{tab:code-tuple} is in $\mathscr{F}_{\AIFV}$.
\end{example}

Now, our main theorem can be stated as follows. 
\setcounter{theorem}{0}
\begin{theorem}
\label{thm:optimality}
$\mathscr{F}_{\opt} \cap \mathscr{F}_{\AIFV} \neq \emptyset$.
\end{theorem}

Theorem \ref{thm:optimality} claims that there exists an optimal AIFV code,
that is, the class of AIFV codes achieves the optimal average codeword length in $\mathscr{F}_0$.
We prove Theorem \ref{thm:optimality} through this section.
To prove this, we introduce four classes of code-tuples $\mathscr{F}_1$, $\mathscr{F}_2$, $\mathscr{F}_3$ and $\mathscr{F}_4$, as follows.

\begin{definition}
\label{def:classes}
We define $\mathscr{F}_1$, $\mathscr{F}_2$, $\mathscr{F}_3$ and $\mathscr{F}_4$ as follows.
\begin{itemize}
\item $\mathscr{F}_1 = \{F \in \mathscr{F}_{\reg} \cap \mathscr{F}_{2\hdec}: {}^{\forall}i \in [F]; \PREF^1_{F, i} = \{0, 1\} \}$.
\item $\mathscr{F}_2 = \{F \in \mathscr{F}_{\reg} \cap \mathscr{F}_{2\hdec}: {}^{\forall}i \in [F]; |\PREF^2_{F, i}| \geq 3 \}$.
\item $\mathscr{F}_3 = \{F \in \mathscr{F}_{\reg} \cap \mathscr{F}_{2\hdec}: {}^{\forall}i \in [F]; \PREF^2_{F, i} \supseteq \{01, 10, 11\} \}$.
\item $\mathscr{F}_4 = \{F \in \mathscr{F}_{\reg} \cap \mathscr{F}_{2\hdec} \cap \mathscr{F}^{(2)}: \PREF^2_{F, 0} = \{00, 01, 10, 11\}, \PREF^2_{F, 1} = \{01, 10, 11\}\}$.
\end{itemize}
\end{definition}
By the definitions, the classes defined above form a hierarchical structure as follows: 
\begin{equation}
\mathscr{F}_0 \supseteq \mathscr{F}_1
\overset{(\mathrm{A})}{\supseteq} \mathscr{F}_2 \supseteq \mathscr{F}_3 \supseteq \mathscr{F}_4
\overset{(\mathrm{B})}{\supseteq} \mathscr{F}_{\AIFV},
\end{equation}
where 
(A) follows from Lemma \ref{lem:pref-inc} (i),
and (B) is stated as the following Lemma \ref{lem:aifv-decodable}, which proof is in Appendix \ref{subsec:proof-aifv-decodable}.

\begin{lemma}
\label{lem:aifv-decodable}
$\mathscr{F}_4 \supseteq \mathscr{F}_{\AIFV}$.
\end{lemma}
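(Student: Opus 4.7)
The plan is to fix an arbitrary $F(f, \trans) \in \mathscr{F}_{\AIFV}$ and verify each defining condition of $\mathscr{F}_4$: namely $F \in \mathscr{F}^{(2)}$ (immediate from Definition~\ref{def:subsetTaifv}), $F \in \mathscr{F}_{\reg} \cap \mathscr{F}_{2\hdec}$, and the two set equalities $\PREF^2_{F, 0} = \{00, 01, 10, 11\}$ and $\PREF^2_{F, 1} = \{01, 10, 11\}$. I would begin by translating conditions~(ii), (iii), (v), (vi) of Definition~\ref{def:subsetTaifv} into concrete structural statements about the codewords: (v) and (vi) together imply that every $f_1$-codeword begins with $1$ or with $01$; (iii) forbids $f_i(s)0$ from being a codeword of $f_i$; and (ii) forbids any codeword of $f_i$ from having $f_i(s)1$ or $f_i(s)01$ as a prefix. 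Combined with injectivity from condition~(i), these statements delimit the code-tree structure of both $f_0$ and $f_1$ quite tightly.

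Next, I would determine $\PREF^2_{F, 0}$ and $\PREF^2_{F, 1}$ exactly. The upper bound $\PREF^2_{F, 1} \subseteq \{01, 10, 11\}$ is immediate from the previous paragraph, and $\PREF^2_{F, 0} \subseteq \mathcal{C}^2$ is trivial. For the reverse inclusions, I would apply Lemma~\ref{lem:pref-sum}(i) and~(iii) recursively to rewrite $\PREF^2_{F, i}$ in terms of codewords and lower-level $\PREF^k_{F, j}$, and then invoke condition~(vii) to rule out ``degenerate'' configurations: for instance, if $f_0$ had no codeword starting with $1$, then $|\bar{\PREF}^1_{F, 0}(\lambda)| = 1$ would trigger~(vii)(a) and force $\lambda$ to be a codeword of $f_0$, whose target code table under $\trans_0$ would then have to supply the missing branch. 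Iterating such arguments together with condition~(iv), injectivity, and the standing hypothesis $\sigma \geq 2$ should show that each required $2$-bit prefix is actually realized.

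For $F \in \mathscr{F}_{2\hdec}$ I would verify both parts of Definition~\ref{def:k-bitdelay}. Part~(ii) follows immediately from injectivity of $f_0, f_1$ via Remark~\ref{rem:injective}. For part~(i), I would split on $\trans_i(s) \in \{0, 1\}$ as prescribed by~(iv): when $\trans_i(s) = 0$, $\bar{\PREF}^0_{F, i}(f_i(s)) = \emptyset$ and hence $\bar{\PREF}^2_{F, i}(f_i(s)) = \emptyset$ by Lemma~\ref{lem:pref-inc2}(ii)(a), making the intersection trivially empty; when $\trans_i(s) = 1$ I would combine~(ii), (iii), (vi) to show $\bar{\PREF}^2_{F, i}(f_i(s)) \subseteq \{00\}$, which is disjoint from $\PREF^2_{F, 1} = \{01, 10, 11\}$ computed above. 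For $F \in \mathscr{F}_{\reg}$ I would show $\kernel_F = \{0, 1\}$ by exhibiting, for each pair $(i, j) \in [2] \times [2]$, a source sequence $\pmb{x}$ with $\trans^{\ast}_j(\pmb{x}) = i$, using~(iv) and the codeword richness established in the previous step, then invoke Lemma~\ref{lem:kernel}(i).

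The main obstacle should be the reverse-inclusion halves of the $\PREF^2$-computations: no single AIFV condition directly asserts that every allowed $2$-bit prefix is realized, and the argument must weave together injectivity, conditions~(ii)--(vii), and a careful case analysis on the shortest codewords of $f_0$ and $f_1$. Condition~(vii) in particular acts as a minimality constraint that rules out degenerate ``thin'' branches of the code tree unless they coincide with a codeword boundary (or the exceptional pair $(1, 0)$), and extracting its combinatorial content into an explicit case-by-case verification is where the bulk of the technical work should lie.
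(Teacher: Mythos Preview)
Your proposal is correct and follows essentially the same approach as the paper: the paper likewise computes $\PREF^2_{F,1}=\{01,10,11\}$ and $\PREF^2_{F,0}=\{00,01,10,11\}$ via Lemma~\ref{lem:pref-sum}, condition~(vii), and a case split on whether $\assign_{F,i}(\lambda)$ is empty, then verifies $\mathscr{F}_{2\hdec}$ by the same $\trans_i(s)\in\{0,1\}$ split (showing $\bar{\PREF}^2_{F,i}(f_i(s))=\{00\}$ in the nonempty case), and establishes regularity by showing $0\in\kernel_F$ via condition~(iv) and Lemma~\ref{lem:kernel}(i). The only cosmetic difference is that the paper packages the preliminary richness facts ($\PREF^1_{F,i}=\{0,1\}$, $\PREF^1_{F,i}(b)=\{0,1\}$ under mild hypotheses, and the $\bar{\PREF}^2=\{00\}$ computation) into a separate auxiliary lemma before the main argument, and it only proves $0\in\kernel_F$ rather than the full $\kernel_F=\{0,1\}$ you propose.
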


\begin{example}
\label{ex:classes}
The rightmost column of Table \ref{tab:pref} indicates the class to which each code-tuple in Table \ref{tab:code-tuple} belongs.
\end{example}

We have $\mathscr{F}_{\opt} \cap \mathscr{F}_0 \neq \emptyset$ directly from Definition \ref{def:optimalset}.
We sequentially prove $\mathscr{F}_{\opt} \cap \mathscr{F}_i \neq \emptyset$ for $i = 1, 2, 3, 4$, in Subsection \ref{subsec:T1}, \ref{subsec:T2}, \ref{subsec:T3}, \ref{subsec:T4}, respectively. 
Then in Subsection \ref{subsec:optimality}, we finally prove Theorem \ref{thm:optimality} from $\mathscr{F}_{\opt} \cap \mathscr{F}_4 \neq \emptyset$.

\subsection{The Class $\mathscr{F}_1$}
\label{subsec:T1}

In this subsection, we state the following Lemma \ref{lem:goodsetT1} and some basic properties of the class $\mathscr{F}_1$.

\begin{lemma}[{\cite[Section III]{JSAIT2022}}]
 \label{lem:goodsetT1}
$\mathscr{F}_{\opt} \cap \mathscr{F}_1 \neq \emptyset$.
\end{lemma}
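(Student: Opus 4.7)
The plan is to start from any optimal code-tuple $F^{\ast} \in \mathscr{F}_{\opt}$, whose existence is guaranteed by Lemma \ref{lem:goodsetTopt}, and either verify directly that $F^{\ast} \in \mathscr{F}_1$ or transform it, without increasing the average codeword length, into another optimal code-tuple lying in $\mathscr{F}_1$. Since $F^{\ast} \in \mathscr{F}_{\opt} \subseteq \mathscr{F}_0 \subseteq \mathscr{F}_{\ext}$, we have $\PREF^1_{F^{\ast}, i} \neq \emptyset$ for every $i \in [F^{\ast}]$ by Definition \ref{def:F_ext}. Hence the only way $F^{\ast}$ can fail to be in $\mathscr{F}_1$ is that there exists an index $i^{\ast} \in [F^{\ast}]$ with $\PREF^1_{F^{\ast}, i^{\ast}} = \{c\}$ for a single bit $c \in \mathcal{C}$. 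By (\ref{eq:pref3}), this means every non-empty codeword sequence produced starting from the code table $f_{i^{\ast}}$ begins with the predictable bit $c$, so the first bit emitted upon entering state $i^{\ast}$ carries no information.

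The main construction would then be to absorb the predictable bit $c$ into the transitions entering state $i^{\ast}$: for every pair $(j, s)$ with $\trans_j(s) = i^{\ast}$, replace $f_j(s)$ by $f_j(s)\,c$; simultaneously, for every symbol $s'$ with $f_{i^{\ast}}(s') = c\pmb{b}$ (the non-empty branch), replace $f_{i^{\ast}}(s')$ by $\pmb{b}$. Symbols $s'$ with $f_{i^{\ast}}(s') = \lambda$ must be handled by unfolding the transition chain into the first state that produces a non-empty codeword, using Lemma \ref{lem:longest} to ensure this chain terminates within $|F^{\ast}|$ steps. Call the resulting object $F'$. By Lemma \ref{lem:f_T} (i), $F'$ produces exactly the same codeword sequences as $F^{\ast}$ for every source sequence, so the average codeword length is preserved (i.e., $L(F') = L(F^{\ast})$), and $F' \in \mathscr{F}_{\opt}$ provided the closure properties below hold.

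Next I would verify that $F'$ remains in $\mathscr{F}_{\reg} \cap \mathscr{F}_{\ext} \cap \mathscr{F}_{2\hdec}$. For regularity and extendability, the transformation only relabels the bit positions inside codeword sequences while preserving $\trans_0, \trans_1, \ldots, \trans_{|F^{\ast}|-1}$, so $\kernel_{F'} = \kernel_{F^{\ast}}$ (Lemma \ref{lem:kernel}) and Lemma \ref{lem:F_ext} still applies. For $2$-bit delay decodability, one compares $\PREF^2_{F', i}(\pmb{b})$ and $\bar{\PREF}^2_{F', i}(\pmb{b})$ with the original sets via Lemma \ref{lem:pref-sum} and Lemma \ref{lem:pref-inc}; disjointness conditions (i) and (ii) of Definition \ref{def:k-bitdelay} transfer because the two families of continuations that needed to be distinguished in $F^{\ast}$ get the same predictable prefix $c$ prepended (or none), and they remain distinguishable afterward.

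Finally, I would argue termination of the iteration. After the modification, the index $i^{\ast}$ now satisfies $\PREF^1_{F', i^{\ast}} = \{0, 1\}$ (by construction the guaranteed first bit $c$ has been removed, so both bits become possible unless the state has become useless, which is excluded by extendability). If some other index still has a singleton $\PREF^1$, repeat. The main obstacle is ensuring this iteration strictly decreases a monotone integer quantity; a natural choice is the number of indices $i \in [F^{\ast}]$ with $|\PREF^1_{F, i}| = 1$, which drops by at least one per step since the modification at $i^{\ast}$ does not reintroduce a singleton at any previously "full" index. The verification that the transformation does not re-create singleton $\PREF^1$ sets elsewhere is the most delicate bookkeeping, and this is where the detailed construction from \cite{JSAIT2022} would be invoked; once established, after finitely many iterations we obtain an $F'' \in \mathscr{F}_{\opt} \cap \mathscr{F}_1$, proving the lemma.
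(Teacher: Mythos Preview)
Your core intuition matches the paper's: when $\PREF^1_{F,i^{\ast}}=\{c\}$, the first bit emitted from state $i^{\ast}$ is predictable and can be shifted onto the incoming transitions without altering any codeword sequence. This is exactly the content of the paper's \emph{rotation} operation (Definition~\ref{def:rotation}). However, your one-index-at-a-time implementation has two genuine gaps.

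First, the claim that after the modification $\PREF^1_{F',i^{\ast}}=\{0,1\}$ is not justified and is in general false. If every codeword sequence leaving $i^{\ast}$ begins with $cc$ (perfectly consistent with $\PREF^1_{F,i^{\ast}}=\{c\}$), then after stripping one leading $c$ the sequences still all begin with $c$, so $\PREF^1_{F',i^{\ast}}=\{c\}$ again. Hence the quantity you propose to track (the number of singleton indices) need not drop, and your termination argument collapses.

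Second, the handling of symbols $s'$ with $f_{i^{\ast}}(s')=\lambda$ is not well-defined. You cannot ``unfold into the first state that produces a non-empty codeword'' because that state depends on subsequent source symbols, not on $s'$ alone; and if you instead leave $f'_{i^{\ast}}(s')=\lambda$ while appending $c$ to all incoming edges, the resulting code-tuple emits one extra $c$ that is never absorbed (by Lemma~\ref{lem:pref-sum}~(i) the successor $\trans_{i^{\ast}}(s')$ also has $\PREF^1\subseteq\{c\}$, but you have not modified it).

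The paper's rotation sidesteps both issues by shifting the bit boundary \emph{simultaneously} at every state: each state with singleton $\PREF^1$ drops its forced first bit, and each state appends the forced first bit (if any) of its successor. The identity $d_{F,i}\,\widehat{f^{\ast}_i}(\pmb{x})=f^{\ast}_i(\pmb{x})\,d_{F,\trans^{\ast}_i(\pmb{x})}$ of Lemma~\ref{lem:rotation}~(i) makes the $\lambda$-chains automatically consistent, and parts (ii)--(iv) give closure in $\mathscr{F}_0$ with $L(\widehat{F})=L(F)$. Iteration is still required (see Example~\ref{ex:rotation}), but the termination argument in \cite{JSAIT2022} does not rely on a single application making any particular $\PREF^1$ full.
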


See \cite[Section III]{JSAIT2022} for the complete proof of Lemma \ref{lem:goodsetT1}.
The outline of the proof is as follows.

\begin{itemize}
\item First, we define an operation called \emph{rotation}, which transforms a given code-tuple $F$ into the code-tuple $\widehat{F}$ defined as Definition \ref{def:rotation}.
\item Next, we show that $\widehat{F} \in \mathscr{F}_0$ and $L(\widehat{F}) = L(F)$ hold for any $F \in \mathscr{F}_0$ as Lemma \ref{lem:rotation}.
\item Then we show that we can transform any $F \in \mathscr{F}_{\opt} \cap \mathscr{F}_0$ into some $F' \in \mathscr{F}_{\opt} \cap \mathscr{F}_1$ by repeating of rotation. 
This shows Lemma \ref{lem:goodsetT1} since $\mathscr{F}_{\opt} \cap \mathscr{F}_0 \neq \emptyset$.
\end{itemize}

\begin{definition}
\label{def:rotation}
For $F(f, \trans) \in \mathscr{F}_{\ext}$, we define $\widehat{F}(\widehat{f}, \widehat{\trans}) \in \mathscr{F}^{(|F|)}$ as follows.

For $i \in [F]$ and $s \in \mathcal{S}$,
\begin{equation}
\label{eq:8ckwdpt4s2b5}
\widehat{f}_i(s) \coloneqq
\begin{cases}
f_i(s) d_{F, \trans_i(s)} &\,\,\text{if}\,\, \PREF^1_{F, i} = \{0, 1\},\\
\suff(f_i(s) d_{F, \trans_i(s)}) &\,\,\text{if}\,\, \PREF^1_{F, i} \neq \{0, 1\},\\
\end{cases}
\end{equation}
\begin{equation}
\label{eq:g8ovoqynvwxb}
\widehat{\trans}_i(s) = \trans_i(s),
\end{equation}
where
\begin{equation}
\label{eq:emksr78o7m54}
d_{F, i} \coloneqq
\begin{cases}
0 &\,\,\text{if}\,\, \PREF^1_{F, i} = \{0\},\\
1 &\,\,\text{if}\,\, \PREF^1_{F, i} = \{1\},\\
\lambda &\,\,\text{if}\,\, \PREF^1_{F, i} = \{0, 1\}.\\
\end{cases}
\end{equation}
\end{definition}

\begin{example}
\label{ex:rotation}
We consider $F(f, \trans) \coloneqq F^{(\delta)}$ in Table \ref{tab:code-tuple}.
Then we have
\begin{equation}
d_{F, 0} = \lambda, d_{F, 1} = \lambda, d_{F, 2} = 1
\end{equation}
since $\PREF^1_{F, 0} = \{0, 1\}$, $\PREF^1_{F, 1} = \{0, 1\}$, and $\PREF^1_{F, 2} = \{1\}$ by Table \ref{tab:pref}, respectively.
We have 
\begin{itemize}
\item $\widehat{f}_0(\mathrm{a}) = f_0(\mathrm{a})d_{F, 0} = 01$ applying the first case of (\ref{eq:8ckwdpt4s2b5}) since $\PREF^1_{F, 0} = \{0, 1\}$,
\item $\widehat{f}_0(\mathrm{d}) = f_0(\mathrm{d})d_{F, 2} = 0111$ applying the first case of (\ref{eq:8ckwdpt4s2b5}) since $\PREF^1_{F, 0} = \{0, 1\}$,
\item $\widehat{f}_2(\mathrm{a}) = \suff(f_2(\mathrm{a})d_{F, 0}) = 00$ applying the second case of (\ref{eq:8ckwdpt4s2b5}) since $\PREF^1_{F, 2} \neq \{0, 1\}$,
\item $\widehat{f}_2(\mathrm{d}) = \suff(f_2(\mathrm{d})d_{F, 2}) = 011$ applying the first case of (\ref{eq:8ckwdpt4s2b5}) since $\PREF^1_{F, 2} = \{0, 1\}$.
\end{itemize}
We have $\widehat{F^{(\gamma)}} = F^{(\delta)}, \widehat{F^{(\delta)}} = F^{(\epsilon)}$ and $\widehat{F^{(\epsilon)}} = F^{(\epsilon)}$.
\end{example}

\begin{lemma}[{\cite[Section III]{JSAIT2022}}]
\label{lem:rotation}
For any $F(f, \trans) \in \mathscr{F}_{\ext}$, the following statements (i)--(iv) hold.
\begin{enumerate}[(i)]
\item $d_{F, i} \widehat{f^{\ast}_i}(\pmb{x}) = f^{\ast}_i(\pmb{x}) d_{F, \trans^{\ast}_i(\pmb{x})}$ for any  $i \in [F]$ and $\pmb{x} \in \mathcal{S}^{\ast}$.
\item $\widehat{F} \in \mathscr{F}_{\ext}$.
\item If $F \in \mathscr{F}_{\reg}$, then $\widehat{F} \in \mathscr{F}_{\reg}$ and $L(\widehat{F}) = L(F)$.
\item For any integer $k \geq 0$, if $F \in \mathscr{F}_{k\hdec}$, then $\widehat{F} \in \mathscr{F}_{k\hdec}$.
\end{enumerate}
\end{lemma}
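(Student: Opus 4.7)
The plan is to prove the four claims in the order stated, using (i) as a structural identity that makes (ii)--(iv) short.

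For (i), I would induct on $|\pmb{x}|$. The base case $\pmb{x} = \lambda$ is immediate from $\widehat{f^{\ast}_i}(\lambda) = \lambda$ and $\trans^{\ast}_i(\lambda) = i$. For the inductive step with $\pmb{x} = x_1 \suff(\pmb{x})$, I would expand both sides via (\ref{eq:fstar}) and (\ref{eq:tstar}) and split on whether $\PREF^1_{F, i} = \{0, 1\}$. If so, $d_{F, i} = \lambda$ and $\widehat{f}_i(x_1) = f_i(x_1) d_{F, \trans_i(x_1)}$, so the identity reduces cleanly to the induction hypothesis at index $\trans_i(x_1)$. Otherwise $d_{F, i}$ is a single bit, and the crux is that $f_i(x_1) d_{F, \trans_i(x_1)}$ must begin with $d_{F, i}$ so that the $\suff$ in (\ref{eq:8ckwdpt4s2b5}) strips exactly that bit: either $f_i(x_1)$ is nonempty and its first bit lies in $\PREF^1_{F, i} = \{d_{F, i}\}$, or $f_i(x_1) = \lambda$, in which case Lemma \ref{lem:pref-sum}(i) gives $\PREF^1_{F, \trans_i(x_1)} \subseteq \PREF^1_{F, i}$ and extendability forces $d_{F, \trans_i(x_1)} = d_{F, i}$.

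Part (ii) follows at once: (i) implies $|\widehat{f^{\ast}_i}(\pmb{x})| \geq |f^{\ast}_i(\pmb{x})| - 1$, so arbitrarily long $f^{\ast}_i$-outputs (guaranteed by Lemma \ref{lem:F_ext} applied to $F \in \mathscr{F}_{\ext}$) yield arbitrarily long $\widehat{f^{\ast}_i}$-outputs, whence $\widehat{F} \in \mathscr{F}_{\ext}$. For (iii), equation (\ref{eq:g8ovoqynvwxb}) gives $\widehat{\trans} = \trans$, so $Q(\widehat{F}) = Q(F)$ and both regularity and the stationary distribution transfer directly. From (\ref{eq:8ckwdpt4s2b5}) one obtains the uniform length identity $|\widehat{f}_i(s)| = |f_i(s)| + |d_{F, \trans_i(s)}| - |d_{F, i}|$ (the second case is nonvacuous because $f_i(s) d_{F, \trans_i(s)}$ is nonempty by the same argument as in (i)). Weighting by $\mu(s)$ and $\pi_i(F)$ and regrouping yields
\begin{equation}
L(\widehat{F}) - L(F) = \sum_{j \in [F]} |d_{F, j}| \Big( \sum_{i \in [F]} \pi_i(F) Q_{i, j}(F) \Big) - \sum_{i \in [F]} \pi_i(F) |d_{F, i}|,
\end{equation}
which vanishes by the stationarity relation $\pmb{\pi}(F) Q(F) = \pmb{\pi}(F)$.

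Part (iv) is the main obstacle. The plan is to translate the two disjointness conditions of Definition \ref{def:k-bitdelay} for $\widehat{F}$ into corresponding conditions for $F$, using (i) to relate their prefix sets. Concretely, I would show that $\pmb{c} \in \PREF^k_{\widehat{F}, i}(\pmb{b})$ (resp.\ $\bar{\PREF}^k_{\widehat{F}, i}(\pmb{b})$) corresponds to the existence of $\pmb{x} \in \mathcal{S}^+$ with $d_{F, i} \pmb{b} \pmb{c}$ a prefix of $f^{\ast}_i(\pmb{x}) d_{F, \trans^{\ast}_i(\pmb{x})}$ together with the matching constraint on $f_i(x_1)$ imposed by $\pmb{b}$. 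A nonempty intersection of two such sets for $\widehat{F}$ would then produce a nonempty intersection of analogous sets for $F$ at shifted base words, which via Lemma \ref{lem:pref-inc}(i) extends to a nonempty intersection at length $k$ in $F$, contradicting $F \in \mathscr{F}_{k\hdec}$. The delicate part is the bookkeeping: $|d_{F, i}|$ and the lengths of the appended tails $d_{F, \trans^{\ast}_i(\pmb{x})}$ vary with the index, so one must verify that the shifted prefix sets align at exactly length $k$ relative to the correct base in $F$, and handle separately the boundary cases where $f_i(x_1) = \lambda$ or where $\assign_{F, i}(\pmb{b})$ contains multiple symbols.
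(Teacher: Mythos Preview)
The paper does not prove Lemma~\ref{lem:rotation}; immediately after the statement it writes ``See \cite[Section III]{JSAIT2022} for the proof of Lemma \ref{lem:rotation}'' and moves on. There is therefore no in-paper argument to compare your proposal against.

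That said, your outline is sound and matches the standard argument one would expect from the cited reference. Part (i) is the right structural identity to prove by induction on $|\pmb{x}|$, and your handling of the case $|\PREF^1_{F,i}| = 1$ correctly isolates the only subtlety (that the first bit of $f_i(x_1)d_{F,\trans_i(x_1)}$ is forced to be $d_{F,i}$). Parts (ii) and (iii) follow exactly as you describe; the telescoping of the $|d_{F,j}|$ terms via stationarity is the canonical computation. For (iv), your translation strategy is the right one: part (i) gives a length-preserving bijection (up to the fixed prefix $d_{F,i}$ and variable suffix $d_{F,\trans^\ast_i(\pmb{x})}$) between codeword sequences of $F$ and of $\widehat{F}$, and this lets one pull back any witness of a nonempty intersection in $\widehat{F}$ to one in $F$. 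You are right that the bookkeeping is the only hazard; the cleanest way to avoid case explosion is to first establish $\PREF^k_{\widehat{F},j} = \PREF^k_{F,j}$ whenever $d_{F,j} = \lambda$ and $d_{F,j}\PREF^k_{\widehat{F},j} \subseteq \PREF^{k+1}_{F,j}$ otherwise (and analogously for $\bar{\PREF}$), from which both conditions of Definition~\ref{def:k-bitdelay} for $\widehat{F}$ reduce to those for $F$ without further branching on $\assign_{F,i}(\pmb{b})$.
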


See \cite[Section III]{JSAIT2022} for the proof of Lemma \ref{lem:rotation}.

We now state the basic properties of $\mathscr{F}_1$ as the following Lemmas \ref{lem:propertyT1} and \ref{lem:complete-leaf}.
See Appendix \ref{subsec:proof-propertyT1} and \ref{subsec:proof-complete-leaf} for the proofs of Lemmas \ref{lem:propertyT1} and \ref{lem:complete-leaf}, respectively.

\begin{lemma}
 \label{lem:propertyT1}
For any $F(f, \trans) \in \mathscr{F}_1$ and $i \in [F]$, the following statements (i)--(vi) hold.
\begin{enumerate}[(i)]
\item $\PREF^2_{F, i} \supseteq \{0a, 1b\}$ for some $a, b \in \mathcal{C}$. In particular, $|\PREF^2_{F, i}| \geq 2$.
\item If $|\PREF^2_{F, i}| = 2$, then following statements (a) and (b) hold.
\begin{enumerate}[(a)]
\item For any $s \in \mathcal{S}$, we have $|f_i(s)| \geq 2$.
\item $\PREF^2_{F, i} = \bar{\PREF}^2_{F, i} = \{0a, 1b\}$ for some $a, b \in \mathcal{C}$.
\end{enumerate}
\item For any $s, s' \in \mathcal{S}$, if $s \neq s'$ and $f_i(s) = f_i(s')$, then $|\PREF^2_{F, \trans_i(s)}| = |\PREF^2_{F, \trans_i(s')}| = 2$.
\item For any $s \in \mathcal{S}$, we have 
\begin{equation}
|\assign_{F, i}(f_i(s))|\leq
\begin{cases}
1 &\,\,\text{if}\,\, \bar{\PREF}^0_{F, i}(f_i(s)) \neq \emptyset,\\
2 &\,\,\text{if}\,\, \bar{\PREF}^0_{F, i}(f_i(s)) = \emptyset.\\
\end{cases}
\end{equation}
\item For any $s, s' \in \mathcal{S}$, we have $f_i(s') \neq f_i(s)0$ and $f_i(s') \neq f_i(s)1$.
\item For any $s \in \mathcal{S}$, we have $|\bar{\PREF}^1_{F, i}(f_i(s) 0)| \leq 1$ and $|\bar{\PREF}^1_{F, i}(f_i(s) 1)| \leq 1$.
\end{enumerate}
\end{lemma}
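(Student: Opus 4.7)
Part (i) is immediate from Lemma \ref{lem:pref-inc}(i) applied to $0, 1 \in \PREF^1_{F, i}$: each of these bits must extend to some two-bit word in $\PREF^2_{F, i}$. The strategy for everything else is to apply the decomposition of Lemma \ref{lem:pref-sum}(ii) at carefully chosen $\pmb{b} \in \mathcal{C}^{\ast}$ and to squeeze the resulting sum against the trivial upper bound $|\PREF^k_{F, i}(\pmb{b})| \leq 2^k$; (i) feeds in the uniform lower bound $|\PREF^2_{F, j}| \geq 2$ for every $j \in [F]$.

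For (ii), I would take $\pmb{b} = \lambda$, so Lemma \ref{lem:pref-sum}(ii) reads $|\PREF^2_{F, i}| = |\bar{\PREF}^2_{F, i}| + \sum_{s \in \assign_{F, i}(\lambda)} |\PREF^2_{F, \trans_i(s)}|$. If some $f_i(s) = \lambda$, the sum alone is already $\geq 2$, forcing $|\bar{\PREF}^2_{F, i}| = 0$; by Lemma \ref{lem:pref-inc2}(ii)(a) and Lemma \ref{lem:leaf}(i) this cascades to $f_i(s') = \lambda$ for every $s'$, contradicting $\sigma \geq 2$. If some $f_i(s) = c \in \{0, 1\}$, extending $s$ by one more symbol that witnesses $\PREF^1_{F, \trans_i(s)} = \{0, 1\}$ puts both $c0$ and $c1$ into $\PREF^2_{F, i}$, and together with the opposite-starting element from (i) this yields $|\PREF^2_{F, i}| \geq 3$. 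Hence (ii)(a); part (ii)(b) then follows because $\assign_{F, i}(\lambda) = \emptyset$ collapses the identity to $|\PREF^2_{F, i}| = |\bar{\PREF}^2_{F, i}|$, with the trivial inclusion $\bar{\PREF}^2_{F, i} \subseteq \PREF^2_{F, i}$ supplying equality as sets.

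Parts (iii) and (iv) reuse the same decomposition at $\pmb{b} = f_i(s)$. In (iii), two distinct preimages $s, s' \in \assign_{F, i}(f_i(s))$ force the right-hand side to be $\geq 4$, hitting the ceiling $2^2$; every contributing $\PREF^2_{F, \trans_i(\cdot)}$ must therefore have size exactly $2$. In (iv), the only extra twist is that when $\bar{\PREF}^0_{F, i}(f_i(s)) \neq \emptyset$, Lemma \ref{lem:pref-inc2}(ii)(a) adds $|\bar{\PREF}^2_{F, i}(f_i(s))| \geq 1$ to the accounting, tightening $4 \geq 2|\assign_{F, i}(f_i(s))|$ to $4 \geq 1 + 2|\assign_{F, i}(f_i(s))|$, hence $\leq 1$ instead of $\leq 2$.

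Parts (v) and (vi) combine the 2-bit delay disjointness of Definition \ref{def:k-bitdelay}(i) with Lemma \ref{lem:pref-sum}(iii). For (v), if $f_i(s') = f_i(s)c$ for some $c \in \mathcal{C}$, then Lemma \ref{lem:pref-sum}(i) lifts $\PREF^1_{F, \trans_i(s')} = \{0, 1\}$ into $\PREF^1_{F, i}(f_i(s)c)$, and Lemma \ref{lem:pref-sum}(iii) then places the whole of $c\{0, 1\}$ inside $\bar{\PREF}^2_{F, i}(f_i(s))$; this meets $\PREF^2_{F, \trans_i(s)}$ on the $c$-starting element guaranteed by (i), contradicting $F \in \mathscr{F}_{2\hdec}$. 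For (vi), assuming $|\bar{\PREF}^1_{F, i}(f_i(s)0)| = 2$ yields $|\bar{\PREF}^2_{F, i}(f_i(s))| \geq 2$ via Lemma \ref{lem:pref-sum}(iii), and Lemma \ref{cor:pref-sum} together with (i) then squeezes this to the equality $\bar{\PREF}^2_{F, i}(f_i(s)) = \{00, 01\}$ with $|\PREF^2_{F, \trans_i(s)}| = 2$; the disjointness in Definition \ref{def:k-bitdelay}(i) thereafter forces $\PREF^2_{F, \trans_i(s)} \subseteq \{10, 11\}$, contradicting the $0a$ element of $\PREF^2_{F, \trans_i(s)}$ guaranteed by (i). The symmetric case disposes of $\bar{\PREF}^1_{F, i}(f_i(s)1)$. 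The main obstacle is the chaining in (vi): every derived equality must be tracked and actively reused to reach the final contradiction, whereas the rest reduces to careful bookkeeping of $\PREF^2$ and $\bar{\PREF}^2$ sizes.
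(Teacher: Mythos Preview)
Your proposal is correct and follows the same overall strategy as the paper: Lemma \ref{lem:pref-sum}(ii) at well-chosen $\pmb{b}$, the lower bound $|\PREF^2_{F,j}|\geq 2$ from (i), and the disjointness of Definition \ref{def:k-bitdelay}(i). Parts (i), (ii)(b), (iii), (iv), (v) match the paper almost line for line.

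Two small differences are worth noting. In (ii)(a), case $f_i(s)=\lambda$, the paper derives $|\bar{\PREF}^0_{F,i}|=0$ and $|\assign_{F,i}(\lambda)|=1$ directly from the inequality chain and then invokes Lemma \ref{lem:leaf}(iii) for the contradiction; your version instead cascades $|\bar{\PREF}^2_{F,i}|=0$ to ``every codeword is $\lambda$'' and feeds this back into the same sum to exceed $2$. Both are fine, though your phrase ``contradicting $\sigma\geq 2$'' hides the final step (namely $2 = \sum_{s'}|\PREF^2_{F,\trans_i(s')}| \geq 2\sigma \geq 4$) and should be spelled out.

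In (vi) you take a longer route than necessary. The paper's argument is essentially a replay of (v): from $\bar{\PREF}^1_{F,i}(f_i(s)c)=\{0,1\}$ one gets $\bar{\PREF}^2_{F,i}(f_i(s))\supseteq c\{0,1\}$ via Lemmas \ref{lem:pref-sum}(iii) and (i), and since $\PREF^2_{F,\trans_i(s)}$ contains some $cc'$ by (i), the intersection is already nonempty---no need for Lemma \ref{cor:pref-sum} or the exact identification $\bar{\PREF}^2_{F,i}(f_i(s))=\{00,01\}$. Your chaining works, but the ``main obstacle'' you flag disappears once you notice that one element of overlap suffices.
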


\begin{lemma}
 \label{lem:complete-leaf}
For any $F(f, \trans) \in \mathscr{F}_{\opt} \cap \mathscr{F}_1, i \in \kernel_{F}$ and $s \in \mathcal{S}$, if $\bar{\PREF}^0_{F, i}(f_i(s)) = \emptyset$ and $|\assign_{F, i}(f_i(s))| = 1$, then $|\PREF^2_{F, \trans_i(s)}| = 4$.
\end{lemma}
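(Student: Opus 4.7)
The plan is to prove the lemma by contradiction. Suppose $F \in \mathscr{F}_{\opt} \cap \mathscr{F}_1$ and $(i, s)$ satisfy both hypotheses, yet $|\PREF^2_{F, j}| < 4$ where $j := \trans_i(s)$; I will construct $F' \in \mathscr{F}_0$ with $L(F') < L(F)$, contradicting the optimality of $F$. A few preliminary reductions come for free: applying Lemma \ref{lem:pref-inc2}(ii)(a) to $\bar{\PREF}^0_{F, i}(f_i(s)) = \emptyset$ yields $\bar{\PREF}^2_{F, i}(f_i(s)) = \emptyset$, so Lemma \ref{cor:pref-sum} gives $|\PREF^2_{F, j}| \leq 4$; combined with Lemma \ref{lem:propertyT1}(i), the only case to rule out is $|\PREF^2_{F, j}| \in \{2, 3\}$. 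Moreover $f_i(s) \neq \lambda$, since otherwise Lemma \ref{lem:leaf}(i) would force every $f_i(s') = \lambda$, contradicting $|\assign_{F, i}(f_i(s))| = 1$ together with $\sigma \geq 2$.

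The construction: write $f_i(s) = \pmb{a}' c$ with $c \in \mathcal{C}$, and let $F'$ be obtained from $F$ by replacing $f_i(s)$ with $\pmb{a}'$ and leaving every other code-table entry and every transition mapping unchanged. Because $\trans' = \trans$, we have $Q(F') = Q(F)$, so by Lemma \ref{lem:kernel}(i),(ii) we get $F' \in \mathscr{F}_{\reg}$ with $\pmb{\pi}(F') = \pmb{\pi}(F)$ and $\kernel_{F'} = \kernel_F \ni i$, hence $\pi_i(F') = \pi_i(F) > 0$. Extendability transfers to $F'$ because shortening one codeword cannot shrink any $\PREF^1$ set. Combining these, $L(F') = L(F) - \pi_i(F)\mu(s) < L(F)$, so provided $F' \in \mathscr{F}_{2\hdec}$, the desired contradiction is obtained.

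The main obstacle is the verification $F' \in \mathscr{F}_{2\hdec}$: the $\PREF^2$ and $\bar{\PREF}^2$ sets of $F'$ differ from those of $F$ only as a consequence of the single shortening, but those differences propagate into a nontrivial case analysis over Definition \ref{def:k-bitdelay}(i),(ii). The key ingredients will be: (a) Lemma \ref{lem:propertyT1}(iv) applied at $\pmb{a}'$, which, since $\bar{\PREF}^0_{F, i}(\pmb{a}')$ contains $\lambda$ (because $f_i(s) \succ \pmb{a}'$), gives $|\assign_{F, i}(\pmb{a}')| \leq 1$ and therefore $|\assign_{F', i}(\pmb{a}')| \leq 2$; (b) Lemma \ref{lem:pref-sum}(i),(iii) applied with $\assign_{F, i}(f_i(s)) = \{s\}$, $\bar{\PREF}^1_{F, i}(f_i(s)) = \emptyset$, and $F \in \mathscr{F}_1$, yielding $\PREF^1_{F, i}(f_i(s)) = \{0, 1\}$ and hence $\bar{\PREF}^2_{F, i}(\pmb{a}') \supseteq \{c0, c1\}$; and (c) the slack assumption $|\PREF^2_{F, j}| \leq 3$, which supplies the missing 2-bit pattern that is precisely what permits the modification without breaking the disjointness conditions of Definition \ref{def:k-bitdelay}.

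The hardest subcase I anticipate is the collision subcase $|\assign_{F, i}(\pmb{a}')| = 1$: here some $s_1 \neq s$ with $f_i(s_1) = \pmb{a}'$ creates a genuine collision at $\pmb{a}'$ in $F'$, so Definition \ref{def:k-bitdelay}(ii) at $(i, s, s_1)$ demands $\PREF^2_{F', j} \cap \PREF^2_{F', \trans_i(s_1)} = \emptyset$. Handling this will require using (b) applied at $s_1$ (so that Definition \ref{def:k-bitdelay}(i) in the original $F$ at $(i, s_1)$ forces $\PREF^2_{F, \trans_i(s_1)} \subseteq \{\bar{c}0, \bar{c}1\}$), tracking how the single shortening perturbs $\PREF^2_{F', j}$ relative to $\PREF^2_{F, j}$, and exploiting (c) to verify that the perturbation cannot hit the complementary set $\{\bar{c}0, \bar{c}1\}$.
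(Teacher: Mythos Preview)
Your plan differs from the paper's: the paper never modifies $F$, but instead invokes Lemma~\ref{lem:complete} (the completeness property of optimal code-tuples taken from \cite{IEICE2023}). Taking any $\pmb{b}\in\mathcal{C}^2\setminus\PREF^2_{F,j}$ and setting $\pmb{d}=f_i(s)\pmb{b}$, the paper checks that the two-bit prefix of $\pmb{d}$ lies in $\PREF^2_{F,i}$, applies Lemma~\ref{lem:complete} to obtain $\pmb{y}$ with $f_i^\ast(\pmb{y})\succeq\pmb{d}$, and then a short case split on $f_i(y_1)$ versus $f_i(s)$ (using $|\assign_{F,i}(f_i(s))|=1$, $\bar{\PREF}^0_{F,i}(f_i(s))=\emptyset$, and Lemma~\ref{lem:propertyT1}(v)) yields the contradiction.

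Your direct shortening has a genuine gap. First, your ``hardest subcase'' is vacuous: if $f_i(s_1)=\pmb{a}'$ then $f_i(s)=f_i(s_1)c$, which Lemma~\ref{lem:propertyT1}(v) forbids, so $\assign_{F,i}(\pmb{a}')=\emptyset$ always. The real obstruction is Definition~\ref{def:k-bitdelay}(i) at the new pair $(i,s)$ in $F'$. Since the only $f'_i$-codewords strictly extending $\pmb{a}'$ are those extending $\pmb{a}'\bar{c}$, one gets $\bar{\PREF}^2_{F',i}(\pmb{a}')\subseteq\{\bar{c}0,\bar{c}1\}$, nonempty whenever some $s''$ has $f_i(s'')\succeq\pmb{a}'\bar{c}$. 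Your ingredient~(c) guarantees only that \emph{one} pattern is missing from $\PREF^2_{F,j}$; it need not be $\bar{c}0$ or $\bar{c}1$. Concretely, if $|f_i(s)|\ge 3$ (so that every $\PREF^2$ set is unchanged by the shortening), $\PREF^2_{F,j}=\{c0,c1,\bar{c}0\}$, and some $s''$ satisfies $f_i(s'')\succeq\pmb{a}'\bar{c}0$, then $\bar{c}0\in\PREF^2_{F',j}\cap\bar{\PREF}^2_{F',i}(\pmb{a}')$ and $F'\notin\mathscr{F}_{2\hdec}$. Optimality of $F$ does exclude such configurations, but establishing that is precisely the content of Lemma~\ref{lem:complete}; a bare one-bit truncation, which uses optimality only through $L(F')<L(F)$, has no mechanism to rule them out.
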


\subsection{The Class $\mathscr{F}_2$}
\label{subsec:T2}

In this subsection, we prove $\mathscr{F}_{\opt} \cap \mathscr{F}_2 \neq \emptyset$ and some properties of the class $\mathscr{F}_2$.

\begin{itemize}
\item First, we define an operation called \emph{dot operation}, which transforms a given code-tuple $F \in \mathscr{F}_1$ into the code-tuple $\dot{F}$ defined as Definition \ref{def:tilde}.
\item Next, we consider the code-tuple $\widehat{\dot{F}}$, obtained from $F$ by applying dot operation firstly and rotation secondly.
We show that $\widehat{\dot{F}} \in \mathscr{F}_1$ and $L(\widehat{\dot{F}}) = L(F)$ hold for any $F \in \mathscr{F}_1$.
\item Then we show that we can transform any $F \in \mathscr{F}_{\opt} \cap \mathscr{F}_1$ into some $F' \in \mathscr{F}_{\opt} \cap \mathscr{F}_2$ by repeating dot operation and rotation alternately.
This shows $\mathscr{F}_{\opt} \cap \mathscr{F}_2 \neq \emptyset$ since $\mathscr{F}_{\opt} \cap \mathscr{F}_1 \neq \emptyset$ by Lemma \ref{lem:goodsetT1}.
\end{itemize}

To state the definition of $\dot{F}$, we first introduce decomposition of a codeword called \emph{$\gamma$-decomposition}.
Fix $F(f, \trans) \in \mathscr{F}_1, i \in [F]$, and $s \in \mathcal{S}$, and define $\assign^{\prec}_{F, i}(f_i(s)) \coloneqq \{s' \in \mathcal{S} : f_i(s') \prec f_i(s) \}$.
By Lemma \ref{lem:leaf} (i), we have $|\bar{\PREF}^0_{F, i}(f_i(s'))| \neq \emptyset$ for any $s' \in \assign^{\prec}_{F, i}(f_i(s))$, which leads to $|\assign_{F, i}(f_i(s'))| = 1$ by Lemma \ref{lem:propertyT1} (iv).
Thus, without loss of generality, we may assume
\begin{equation}
\label{eq:xczvww7bzthf}
f_i(s_1) \prec f_i(s_2) \prec \cdots \prec f_i(s_{\rho}),
\end{equation}
where $\assign^{\prec}_{F, i}(f_i(s)) = \{s_1, s_2, \ldots, s_{\rho-1}\}$ and $s_{\rho} \coloneqq s$.
Then there uniquely exist $\gamma(s_1), \gamma(s_2), \ldots, \gamma(s_{\rho}) \in \mathcal{C}^{\ast}$ such that 
\begin{equation}
f_i(s_r) = \begin{cases}
\gamma(s_1)&\,\,\text{if}\,\,  r = 1,\\
f_i(s_{r-1})\gamma(s_r) &\,\,\text{if}\,\,  r = 2, 3, \ldots, \rho\\
\end{cases}
\end{equation}
for any $r = 1, 2, \ldots, \rho$.
We can represent $f_i(s)$ as
\begin{equation}
\label{eq:jtxwml98jk99}
f_i(s) = \gamma(s_1)\gamma(s_2)\ldots \gamma(s_{\rho}).
\end{equation}

\begin{definition}
\label{def:gamma}
For $F(f, \trans) \in \mathscr{F}_1, i \in [F]$, and $s \in \mathcal{S}$,
we define \emph{$\gamma$-decomposition} of $f_i(s)$ as the representation in (\ref{eq:jtxwml98jk99}).
Note that $s_{\rho} = s$.
\end{definition}

\begin{example}
\label{ex:4jkiglhmtrmq}
We consider $F(f, \trans) \coloneqq F^{(\epsilon)}$ in Table \ref{tab:code-tuple}.

\begin{itemize}
\item First, we consider the $\gamma$-decomposition of $f_1(\mathrm{d})$.
We have $\assign^{\prec}_{F, 1}(f_1(\mathrm{d})) = \{\mathrm{a}, \mathrm{b}, \mathrm{c}\}$.
Since $f_1(\mathrm{b}) = \lambda \prec f_1(\mathrm{a}) = 00 \prec f_1(\mathrm{c}) = 00111$.
Thus, we obtain the $\gamma$-decomposition of $f_1(\mathrm{d})$ as 
\begin{equation}
\label{eq:4r67wru78v8m}
f_1(\mathrm{d}) = \gamma(s_1)\gamma(s_2)\gamma(s_3)\gamma(s_4),
\end{equation}
where
\begin{equation}
s_1 = \mathrm{b}, s_2 = \mathrm{a}, s_3 = \mathrm{c}, s_4 = \mathrm{d},
\end{equation}
\begin{equation}
\label{eq:9uqqbuhquew9}
\gamma(s_1) = \lambda, \gamma(s_2) = 00, \gamma(s_3) = 111, \gamma(s_4) = 11.
\end{equation}

\item Next, we consider the $\gamma$-decomposition of $f_0(\mathrm{c})$.
We have $\assign^{\prec}_{F, 0}(f_0(\mathrm{c})) = \{\mathrm{a}\}$.
Thus we obtain the $\gamma$-decomposition as
\begin{equation}
\label{eq:4ztdhmqezxcp}
f_0(\mathrm{c}) = \gamma(s_1)\gamma(s_2),
\end{equation}
where
\begin{equation}
s_1 = \mathrm{a}, s_2 = \mathrm{c},
\end{equation}
\begin{equation}
\label{eq:0kfj81nwr5vq}
\gamma(s_1) = 01, \gamma(s_2) = 00.
\end{equation}
\end{itemize}
\end{example}

We show the basic properties of $\gamma$-decomposition as the following Lemma \ref{lem:dot-length}.

\begin{lemma}
\label{lem:dot-length}
For any $F(f, \trans) \in \mathscr{F}_1$, $i \in [F]$ and $s \in \mathcal{S}$, the following statements (i)--(iii) hold, where $\gamma(s_1)\gamma(s_2)\ldots \gamma(s_{\rho})$ is the $\gamma$-decomposition of $f_i(s)$. 
\begin{enumerate}[(i)]
\item $\assign_{F, i}(\lambda) \neq \emptyset \iff f_i(s_1) = \gamma(s_1) = \lambda$.
\item For any $r = 1, 2, \ldots, \rho$, if $r \geq 2$ or $|\PREF^2_{F, i}| = 2$, then $|\gamma(s_r)| \geq 2$.
\item For any $r = 2, \ldots, \rho$, we have $g_1g_2 \in \bar{\PREF}^2_{F, i}(f_i(s_{r-1}))$, where $\gamma(s_r) = g_1g_2\ldots g_l$.
\end{enumerate}
\end{lemma}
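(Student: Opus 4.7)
The plan is to dispatch the three claims by direct unfolding of the definition of $\gamma$-decomposition, relying on Lemma~\ref{lem:propertyT1} for the parts that need more than bookkeeping. None of the three parts seems to hide any real subtlety, since the ordering $f_i(s_1) \prec f_i(s_2) \prec \cdots \prec f_i(s_\rho)$ in (\ref{eq:xczvww7bzthf}) and the factorization in (\ref{eq:jtxwml98jk99}) already carry most of the structural information.

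For (i), the key observation is that $\gamma(s_1) = f_i(s_1)$ by construction, so the claim reduces to $\assign_{F,i}(\lambda) \neq \emptyset \iff f_i(s_1) = \lambda$. For the forward direction I pick any $s' \in \assign_{F,i}(\lambda)$: if $s' = s$, then $\assign^{\prec}_{F,i}(f_i(s))$ is empty (nothing is $\prec \lambda$), so $\rho = 1$ and $f_i(s_1) = f_i(s) = \lambda$; otherwise $f_i(s') = \lambda \prec f_i(s)$, hence $s'$ lies in $\assign^{\prec}_{F,i}(f_i(s))$ and the minimality of $f_i(s_1)$ in the chain (\ref{eq:xczvww7bzthf}) forces $f_i(s_1) = \lambda$. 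The reverse direction is immediate: $f_i(s_1) = \lambda$ puts $s_1$ into $\assign_{F,i}(\lambda)$.

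For (ii), I treat the two cases separately. When $r = 1$ and $|\PREF^2_{F,i}| = 2$, Lemma~\ref{lem:propertyT1}~(ii)~(a) gives $|f_i(s)| \geq 2$ for every $s \in \mathcal{S}$, in particular $|\gamma(s_1)| = |f_i(s_1)| \geq 2$. When $r \geq 2$, the inequality $f_i(s_{r-1}) \prec f_i(s_r)$ forces $\gamma(s_r) \neq \lambda$, and if $|\gamma(s_r)| = 1$ were to hold then $f_i(s_r) = f_i(s_{r-1})c$ for some $c \in \mathcal{C}$, contradicting Lemma~\ref{lem:propertyT1}~(v) applied to the pair $s_{r-1}, s_r$.

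For (iii), part (ii) lets me write $\gamma(s_r) = g_1g_2\ldots g_l$ with $l \geq 2$, so $g_1g_2$ makes sense. I exhibit $\pmb{x} := s_r \in \mathcal{S}^{+}$ as the witness required by (\ref{eq:pref2}): since $f_i^{\ast}(s_r) = f_i(s_r) = f_i(s_{r-1})g_1g_2\ldots g_l \succeq f_i(s_{r-1})g_1g_2$ and $f_i(s_r) \succ f_i(s_{r-1})$ by construction of the $\gamma$-decomposition, Definition~\ref{def:pref} yields $g_1g_2 \in \bar{\PREF}^2_{F,i}(f_i(s_{r-1}))$. The only delicate bookkeeping point in the whole proof is verifying carefully in (i) that $s_1$ is indeed the $\preceq$-minimal element among $\assign^{\prec}_{F,i}(f_i(s)) \cup \{s\}$ in the edge case $f_i(s) = \lambda$; the rest reduces to invoking Lemma~\ref{lem:propertyT1}.
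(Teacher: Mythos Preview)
Your proof is correct and follows essentially the same approach as the paper: parts (ii) and (iii) match the paper's argument almost verbatim (Lemma~\ref{lem:propertyT1}~(ii)(a) for $r=1$, Lemma~\ref{lem:propertyT1}~(v) to rule out $|\gamma(s_r)|=1$ when $r\ge 2$, and then $f_i(s_r)=f_i(s_{r-1})\gamma(s_r)$ as the witness for $\bar{\PREF}^2$). For (i) the paper simply writes ``directly from the definition of $\gamma$-decomposition,'' so your expanded case analysis is a welcome elaboration rather than a different route.
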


\begin{proof}[Proof of Lemma \ref{lem:dot-length}]
(Proof of (i)): Directly from the definition of $\gamma$-decomposition.

(Proof of (ii)):
We prove for the following two cases separately: the case $r \geq 2$ and the case $r = 1, |\PREF^2_{F, i}| = 2$.
\begin{itemize}
\item The case $r \geq 2$: We have $|\gamma(s_r)| \geq 1$ by (\ref{eq:xczvww7bzthf}).
If we assume $\gamma(s_r) = c$ for some $c \in \mathcal{C}$, then $f_i(s_r) = f_i(s_{r-1})\gamma(s_r) = f_i(s_{r-1})c$ holds, which conflicts with Lemma \ref{lem:propertyT1} (v).
This shows $|\gamma(s_r)| \geq 2$ as desired.
\item The case $r = 1, |\PREF^2_{F, i}| = 2$: By Lemma \ref{lem:propertyT1} (ii) (a),
we have $|\gamma(s_1)| = |f_i(s_1)| \geq 2$.
\end{itemize}

(Proof of (iii)):
By (ii) of this lemma, we have $|\gamma(s_r)| \geq 2$.
Hence, we have $f_i(s_r) = f_i(s_{r-1})\gamma(s_r) \succeq f_i(s_{r-1})g_1g_2$,
which leads to $g_1g_2 \in \bar{\PREF}^2_{F, i}(f_i(s_{r-1}))$ as desired.
\end{proof}

Using $\gamma$-decomposition, we now state the definition of $\dot{F}$ as the following Definition \ref{def:tilde}.

\begin{definition}
\label{def:tilde}
For $F(f, \trans) \in \mathscr{F}_1$, we define $\dot{F}(\dot{f}, \dot{\trans}) \in \mathscr{F}^{(|F|)}$ as
\begin{equation}
\label{eq:gae1ondmq51s}
\dot{f}_i(s) \coloneqq \dot{\gamma}(s_1)\dot{\gamma}(s_2)\ldots \dot{\gamma}(s_{\rho}),
\end{equation}
\begin{equation}
\label{eq:bpsfrfe2v0zi}
\dot{\trans}_i(s) \coloneqq \trans_i(s)
\end{equation}
for $i \in [F]$ and $s \in \mathcal{S}$.
Here, $\dot{\gamma}(s_r)$ is defined as
\begin{equation}
\label{eq:6807mxs1xye4}
\dot{\gamma}(s_r) \coloneqq 
\begin{cases}
a_{F, i} g_1 g_3 g_4 \ldots g_l &\,\,\text{if}\,\,  r = 1, |\PREF^2_{F, i}| = 2,\\
\gamma(s_r) &\,\,\text{if}\,\,  r = 1, |\PREF^2_{F, i}| \geq 3,\\
\bar{a}_{F, \trans_i(s_{r-1})} g_1 g_3 g_4 \ldots g_l &\,\,\text{if}\,\, r \geq 2, |\bar{\PREF}^1_{F, i}(f_i(s_{r-1}))| = 2,\\
\bar{a}_{F, \trans_i(s_{r-1})} 0 g_3 g_4 \ldots g_l  &\,\,\text{if}\,\, r \geq 2, |\bar{\PREF}^1_{F, i}(f_i(s_{r-1}))| = 1, |\bar{\PREF}^1_{F, \trans_i(s_{r-1})}| = 1,\\
\bar{a}_{F, \trans_i(s_{r-1})} 1 g_3 g_4 \ldots g_l &\,\,\text{if}\,\, r \geq 2, |\bar{\PREF}^1_{F, i}(f_i(s_{r-1}))| = 1, |\bar{\PREF}^1_{F, \trans_i(s_{r-1})}|  = 2, |\PREF^2_{F, \trans_i(s_{r-1})}| = 2,\\
\gamma(s_r) &\,\,\text{if}\,\, r \geq 2, |\bar{\PREF}^1_{F, i}(f_i(s_{r-1}))| = 1, |\bar{\PREF}^1_{F, \trans_i(s_{r-1})}|  = 2, |\PREF^2_{F, \trans_i(s_{r-1})}| \geq 3\\
\end{cases}
\end{equation}
for $r = 1, 2, \ldots, \rho$, where $\gamma(s_1)\gamma(s_2)\ldots \gamma(s_{\rho})$ is the $\gamma$-decomposition of $f_i(s)$ and $\gamma(s_r) = g_1g_2\ldots g_l$.
Also, $a_{F, i} \in \mathcal{C}$ is defined by the following recursive formula:
\begin{equation}
\label{v47umw9fprm9}
a_{F, i} \coloneqq 
\begin{cases}
a_{F, \trans_i(s')} &\,\,\text{if}\,\, \assign_{F, i}(\lambda) = \{s'\} \,\,\text{for some}\,\, s' \in \mathcal{S'},\\
0 &\,\,\text{if}\,\, |\assign_{F, i}(\lambda)| \neq 1, \PREF^2_{F, i} \owns 00,\\
1 &\,\,\text{if}\,\, |\assign_{F, i}(\lambda)| \neq 1, \PREF^2_{F, i} \not\owns 00
\end{cases}
\end{equation}
and $\bar{a}_{F, i}$ denotes the negation of $a_{F, i}$, that is, $\bar{a}_{F, i} \coloneqq 1 - a_{F, i}$.

We refer to the operation of obtaining the code-tuple $\dot{F}$ from a given code-tuple $F \in \mathscr{F}_1$ as \emph{dot operation}.
\end{definition}

\begin{remark}
In Definition \ref{def:tilde}, it holds that $|\gamma(s_r)| < 2$ only if $r = 1$ and $|\PREF^2_{F, i}| \geq 3$ by Lemma \ref{lem:dot-length} (ii).
Hence, the right hand side of (\ref{eq:6807mxs1xye4}) has enough length so that $\dot{\gamma}(s_r)$ is well-defined for every case.
\end{remark}

\begin{example}
We consider $F(f, \trans) \coloneqq F^{(\epsilon)}$ in Table \ref{tab:code-tuple}.
Then $a_{F, i}, i \in [F]$ are given as follows.
\begin{itemize}
\item $a_{F, 0} = 1$ applying the third case of (\ref{v47umw9fprm9}) since $|\assign_{F, 0}(\lambda)| \neq 1$ and $\PREF^2_{F, 0} \not\owns 00$.
\item $a_{F, 2} = 0$ applying the second case of (\ref{v47umw9fprm9}) since $|\assign_{F, 2}(\lambda)| \neq 1$ and $\PREF^2_{F, 0} \owns 00$.
\item $a_{F, 1} = a_{F, 0} = 1$ applying the first case of (\ref{v47umw9fprm9}) since $|\assign_{F, 1}(\lambda)| = \{\mathrm{b}\}$.
\end{itemize}

The codeword $\dot{f}_0(\mathrm{c})$ is obtained as follows since the $\gamma$-decomposition of $f_0(\mathrm{c})$ is given as (\ref{eq:4ztdhmqezxcp})--(\ref{eq:0kfj81nwr5vq}).
\begin{itemize}
\item we have $\dot{\gamma}(s_1) = a_{F, 0}0 = 10$ applying the first case of (\ref{eq:6807mxs1xye4}) since $|\PREF^2_{F, 0}| = 2$,
\item we have $\dot{\gamma}(s_2) = \bar{a}_{F, \trans_0(s_1)}0 = \bar{a}_{F, 1}0 = 00$ applying the third case of (\ref{eq:6807mxs1xye4}) since $|\bar{\PREF}^1_{F, 0}(f_0(s_1))| = |\bar{\PREF}^1_{F, 0}(01)| = 2$.
\end{itemize}
Therefore, we obtain $\dot{f}_0(\mathrm{c}) = \dot{\gamma}(s_1)\dot{\gamma}(s_2) = 1000$.

The codeword $\dot{f}_1(\mathrm{d})$ is obtained as follows
 since the $\gamma$-decomposition of $f_1(\mathrm{d})$ is given as (\ref{eq:4r67wru78v8m})--(\ref{eq:9uqqbuhquew9}).
\begin{itemize}
\item we have $\dot{\gamma}(s_1) = \gamma(s_1) = \lambda$ applying the second case of (\ref{eq:6807mxs1xye4}) since $|\PREF^2_{F, 1}| \geq 3$,
\item we have $\dot{\gamma}(s_2) = \bar{a}_{F, \trans_0(s_1)}1 = \bar{a}_{F, 0}1 = 01$ applying the fifth case of (\ref{eq:6807mxs1xye4}) since $|\bar{\PREF}^1_{F, 1}(f_1(s_1))| = |\bar{\PREF}^1_{F, 1}| = 1$, $|\bar{\PREF}^1_{F, \trans_1(s_1)}| = |\bar{\PREF}^1_{F, 0}| = 2$, and $|\PREF^2_{F, \trans_1(s_1)}| = |\PREF^2_{F, 0}| = 2$,
\item we have $\dot{\gamma}(s_3) = \bar{a}_{F, \trans_1(s_2)}00 = \bar{a}_{F, 1}1 = 001$ applying the fourth case of (\ref{eq:6807mxs1xye4}) since $|\bar{\PREF}^1_{F, 1}(f_1(s_2))| = |\bar{\PREF}^1_{F, 1}(00)| = 1$ and $|\bar{\PREF}^1_{F, \trans_1(s_2)}| = |\bar{\PREF}^1_{F, 1}| = 1$.
\end{itemize}
Therefore, we obtain $\dot{f}_1(\mathrm{d}) = \dot{\gamma}(s_1)\dot{\gamma}(s_2)\dot{\gamma}(s_3) = 01001$.

The code table $F^{(\zeta)}$ in Table \ref{tab:code-tuple} is obtained as $\dot{F}^{(\epsilon)}$.
Moreover, the code table $F^{(\eta)}$ in Table \ref{tab:code-tuple} is obtained as $\widehat{F^{(\zeta)}} (= \widehat{\dot{F}^{(\epsilon)}})$.
\end{example}

Now we enumerate some properties of $\dot{F}$ as the following Lemmas \ref{lem:tilde-equal}--\ref{lem:tilde-pref2}.

\begin{lemma}
\label{lem:tilde-equal}
For any $F(f, \trans) \in \mathscr{F}_1$ and $i \in [F]$, the following statements (i)--(iii) hold.
\begin{enumerate}[(i)]
\item Let $s \in \mathcal{S}$ and let $\gamma(s_1)\gamma(s_2)\ldots \gamma(s_{\rho})$ be the $\gamma$-decomposition of $f_i(s)$.
Then we have $|\dot{\gamma}(s_r)| = |\gamma(s_r)|$ for any $r = 1, 2, \ldots, \rho$.
\item For any $s \in \mathcal{S}$, we have $|\dot{f}_i(s)| = |f_i(s)|$.
\item For any $s, s' \in \mathcal{S}$, we have $f_i(s) \preceq f_i(s') \iff \dot{f}_i(s) \preceq \dot{f}_i(s')$.
\end{enumerate}
\end{lemma}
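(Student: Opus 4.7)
Part (i) reduces to a direct six-way case analysis on (\ref{eq:6807mxs1xye4}). Branches 2 and 6 give $\dot{\gamma}(s_r) = \gamma(s_r)$, so $|\dot{\gamma}(s_r)| = |\gamma(s_r)|$ is trivial. Branches 1 and 3 prepend a single letter and drop $g_2$, yielding a sequence of length $1 + (l-1) = l$. Branches 4 and 5 prepend two letters and drop both $g_1$ and $g_2$, again yielding length $2 + (l-2) = l$. Each branch that drops a letter is well defined because Lemma \ref{lem:dot-length}(ii) guarantees $|\gamma(s_r)| \geq 2$ in exactly those branches. Part (ii) is an immediate consequence of (i): summing over the $\rho$ components of the decomposition gives $|\dot{f}_i(s)| = \sum_{r=1}^{\rho} |\dot{\gamma}(s_r)| = \sum_{r=1}^{\rho} |\gamma(s_r)| = |f_i(s)|$.

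For the forward direction of (iii), I would write the $\gamma$-decompositions of $f_i(s)$ and $f_i(s')$ as $\gamma(t_1)\cdots\gamma(t_\rho)$ with $t_\rho = s$ and $\gamma(t'_1)\cdots\gamma(t'_{\rho'})$ with $t'_{\rho'} = s'$. Assuming $f_i(s) \preceq f_i(s')$, the codeword chain for $s$ is an initial segment of that for $s'$; since every intermediate codeword in either chain has $\bar{\PREF}^0_{F, i}(f_i(\cdot)) \neq \emptyset$ and so is realized by a unique source symbol (Lemma \ref{lem:propertyT1}(iv)), we obtain $t'_r = t_r$ for $r = 1, \ldots, \rho$. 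Inspecting (\ref{eq:6807mxs1xye4}), $\dot{\gamma}(s_r)$ depends only on $r$, on the preceding symbol $s_{r-1}$ (through $\trans_i(s_{r-1})$ and the relevant $\PREF$ and $\bar{\PREF}$ cardinalities at $f_i(s_{r-1})$), and on the string $\gamma(s_r)$; all of these agree between the two decompositions at the common positions, so $\dot{\gamma}(t_r) = \dot{\gamma}(t'_r)$ for $r \leq \rho$, giving $\dot{f}_i(s) \preceq \dot{f}_i(s')$.

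The backward direction of (iii) I would handle by contrapositive. If $f_i(s') \prec f_i(s)$, the forward direction combined with part (ii) gives $\dot{f}_i(s') \preceq \dot{f}_i(s)$ with $|\dot{f}_i(s')| < |\dot{f}_i(s)|$, so $\dot{f}_i(s) \not\preceq \dot{f}_i(s')$. The remaining case is that $f_i(s)$ and $f_i(s')$ are $\preceq$-incomparable; choosing the largest $k$ with $t_r = t'_r$ for $r \leq k$ gives $k < \min(\rho, \rho')$ and $t_{k+1} \neq t'_{k+1}$, with the first $k$ components of $\dot{f}_i(s)$ and $\dot{f}_i(s')$ already matching. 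It then suffices to show that $\dot{\gamma}(t_{k+1})$ and $\dot{\gamma}(t'_{k+1})$ disagree at some position that survives into $\dot{f}_i$.

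The last verification is the main obstacle, because $\dot{\gamma}$ overwrites $g_2$ in branches 1 and 3, and both $g_1, g_2$ in branches 4 and 5, with bits that depend only on $F, i$, or $\trans_i(s_{r-1})$ and therefore agree between $\dot{\gamma}(t_{k+1})$ and $\dot{\gamma}(t'_{k+1})$. I must therefore rule out every pathological configuration in which the distinguishing bit between $\gamma(t_{k+1})$ and $\gamma(t'_{k+1})$ is confined to those overwritten positions. In branch 1 the pathological configuration would place two first-bit-equal entries into $\PREF^2_{F, i}$, contradicting $\PREF^2_{F, i} = \{0a, 1b\}$ from Lemma \ref{lem:propertyT1}(i) and $|\PREF^2_{F, i}| = 2$. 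In branch 3 the same configuration forces $|\bar{\PREF}^2_{F, i}(f_i(t_k))| \geq 3$, contradicting Lemma \ref{cor:pref-sum}. Branch 4 is ruled out by using Lemma \ref{cor:pref-sum} to squeeze $|\PREF^2_{F, \trans_i(t_k)}| = 2$, then invoking Lemma \ref{lem:propertyT1}(ii)(b) at index $\trans_i(t_k)$ to deduce $|\bar{\PREF}^1_{F, \trans_i(t_k)}| = 2$, contradicting the case hypothesis. Branch 5 is the most delicate: its pathological configuration yields $\bar{\PREF}^2_{F, i}(f_i(t_k)) = \{g_1 g_2, g_1 g'_2\}$ and $\PREF^2_{F, \trans_i(t_k)} = \{0a, 1b\}$, whereas the $2$-bit delay decoding condition (Definition \ref{def:k-bitdelay}(i)) requires these two sets to be disjoint; case-checking $g_1 \in \{0, 1\}$ shows that no such disjoint pair exists. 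Once all four pathological configurations are excluded, the divergence between $\gamma(t_{k+1})$ and $\gamma(t'_{k+1})$ lies at some position $\geq 3$, which $\dot{\gamma}$ transmits bit-for-bit, so $\dot{f}_i(s)$ and $\dot{f}_i(s')$ differ at that position and are incomparable, completing the contrapositive.
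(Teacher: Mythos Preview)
Your argument is correct and rests on the same core observations as the paper, but the organization of (iii)($\impliedby$) differs. The paper factors the backward direction through two auxiliary results: Lemma \ref{lem:tilde-equal0}, which establishes the equality biconditional $\gamma(s_r)=\gamma(s'_r)\iff\dot\gamma(s_r)=\dot\gamma(s'_r)$ whenever $r=1$ or $s_{r-1}=s'_{r-1}$, and Lemma \ref{lem:tilde-equal02}, which uses this inductively to show that $\dot f_i(s)\preceq\dot f_i(s')$ forces $\gamma(s_r)=\gamma(s'_r)$ for all $r\leq m$. The key economy in the paper is that for $r\geq 2$ it handles branches 3--6 in one stroke: assuming $g_1=g'_1$ and $g_2\neq g'_2$ yields $\{0,1\}\subseteq\PREF^1_{F,i}(f_i(s_{r-1})g_1)=\bar\PREF^1_{F,i}(f_i(s_{r-1})g_1)$ (the equality coming from Lemma \ref{lem:propertyT1}(v)), which contradicts Lemma \ref{lem:propertyT1}(vi). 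Hence $g_1\neq g'_1$, which forces $|\bar\PREF^1_{F,i}(f_i(s_{r-1}))|=2$, and only branch 3 is ever in play. Your separate contradictions for branches 3, 4, 5 via Lemma \ref{cor:pref-sum}, Lemma \ref{lem:propertyT1}(ii)(b), and Definition \ref{def:k-bitdelay}(i) reach the same endpoint but with more casework.

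One small gap to close in your write-up: in the incomparable case you implicitly use that $\gamma(t_{k+1})$ and $\gamma(t'_{k+1})$ are themselves $\preceq$-incomparable, so that a differing position exists within the shorter of the two. This is true---if, say, $f_i(t_{k+1})\preceq f_i(t'_{k+1})$ then $t_{k+1}$ would occur in the $\gamma$-decomposition of $f_i(s')$ at position $k+1$, contradicting $t_{k+1}\neq t'_{k+1}$---but you should state it, since without it your final sentence about the divergence lying at a position $\geq 3$ does not by itself rule out $\dot\gamma(t_{k+1})\prec\dot\gamma(t'_{k+1})$.
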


\begin{proof}[Proof of Lemma \ref{lem:tilde-equal}]
(Proof of (i)): Directly from (\ref{eq:6807mxs1xye4}).

(Proof of (ii)): 
We have
\begin{eqnarray}
|\dot{f}_i(s)|
= |\dot{\gamma}(s_1)| + |\dot{\gamma}(s_2)| + \cdots + |\dot{\gamma}(s_{\rho})|
\overset{(\mathrm{A})}{=} |\gamma(s_1)| + |\gamma(s_2)| + \cdots + |\gamma(s_{\rho})|
=  |f_i(s)|,
\end{eqnarray}
where (A) follows from (i) of this lemma.

(Proof of (iii)): See Appendix \ref{subsec:proof-tilde-equal}.
\end{proof}

\begin{lemma}
\label{lem:tilde-pref}
For any $F(f,\trans) \in \mathscr{F}_1$ and $i \in [F]$, the following statements (i) and (ii) hold.
\begin{enumerate}[(i)]
\item 
\begin{enumerate}[(a)]
\item If $|\PREF^2_{F, i}| = 2$, then $\PREF^2_{\dot{F}, i} = \{a_{F, i}0, a_{F, i}1\}$.
\item For any $s \in \mathcal{S}$, if $|\PREF^2_{F, j}| \geq 3$, then
\begin{equation}
\label{eq:ldq1qk35kx1g}
\PREF^2_{\dot{F}, j} \subseteq
\begin{cases}
\{00, 01, 10, 11\} &\,\,\text{if}\,\, |\bar{\PREF}^1_{F, i}(f_i(s))| = 0, \\
\{a_{F, j}0, a_{F, j}1, \bar{a}_{F, j}1\}  &\,\,\text{if}\,\,  |\bar{\PREF}^1_{F, i}(f_i(s))| = 1, |\bar{\PREF}^1_{F, j}| = 1,\\
\PREF^2_{F, j} &\,\,\text{if}\,\,  |\bar{\PREF}^1_{F, i}(f_i(s))| = 1, |\bar{\PREF}^1_{F, j}| = 2,\\
\end{cases}
\end{equation}
where $j \coloneqq \trans_i(s) = \dot{\trans}_i(s)$.
\end{enumerate}

\item For any $s \in \mathcal{S}$, we have
\begin{equation}
\label{eq:v5z4ipil23dc}
\bar{\PREF}^2_{\dot{F}, i}(\dot{f}_i(s)) \subseteq
\begin{cases}
\emptyset &\,\,\text{if}\,\, |\bar{\PREF}^1_{F, i}(f_i(s))| = 0,\\
\{\bar{a}_{F, j}0, \bar{a}_{F, j}1\} &\,\,\text{if}\,\, |\bar{\PREF}^1_{F, i}(f_i(s))| \geq 1, |\PREF^2_{F, j}| = 2,\\
\{\bar{a}_{F, j}0\}  &\,\,\text{if}\,\, |\bar{\PREF}^1_{F, i}(f_i(s))| \geq 1, |\PREF^2_{F, j}| \geq 3, |\bar{\PREF}^1_{F, j}| = 1,\\
\bar{\PREF}^2_{F, i}(f_i(s)) &\,\,\text{if}\,\, |\bar{\PREF}^1_{F, i}(f_i(s))| \geq 1, |\PREF^2_{F, j}| \geq 3, |\bar{\PREF}^1_{F, j}| = 2,\\
\end{cases}
\end{equation}
where $j \coloneqq \trans_i(s) = \dot{\trans}_i(s)$.
\end{enumerate}
\end{lemma}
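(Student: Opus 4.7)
The plan is to exploit the fact, established in Lemma~\ref{lem:tilde-equal}, that the dot operation preserves all codeword lengths and the full prefix relation, so $\dot{F}$ has exactly the same ``codeword-boundary structure'' as $F$; only the bit values inside each $\dot{\gamma}(s_r)$-block are altered, according to the six cases of~(\ref{eq:6807mxs1xye4}). Therefore the analysis of $\PREF^2_{\dot{F}, \cdot}$ and $\bar{\PREF}^2_{\dot{F}, \cdot}(\dot{f}_i(s))$ reduces to inspecting, block by block, which first two bits of $\dot{f}^*_j(\pmb{x})$ can be produced from the corresponding position.

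For part (i)(a), since $|\PREF^2_{F,j}|=2$ forces $|f_j(s)|\geq 2$ for every $s$ by Lemma~\ref{lem:propertyT1}~(ii)(a), every two-bit prefix of $\dot{f}^*_j(\pmb{x})$ lies entirely inside $\dot{f}_j(x_1)$. For each $s\in\mathcal{S}$ its $\gamma$-decomposition starts with $\gamma(s_1)=g_1 g_2\ldots g_l$, and the first case of~(\ref{eq:6807mxs1xye4}) replaces this initial block by $a_{F,j} g_1 g_3 \ldots g_l$, whose leading two bits are $a_{F,j} g_1$. Hence every element of $\PREF^2_{\dot{F},j}$ begins with $a_{F,j}$. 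Since $\PREF^2_{F,j}=\{0a,1b\}$ by Lemma~\ref{lem:propertyT1}~(ii)(b), there are two choices of $s$ producing $g_1=0$ and $g_1=1$ respectively, which puts both $a_{F,j}0$ and $a_{F,j}1$ in $\PREF^2_{\dot{F},j}$, establishing equality.

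For part (i)(b), $|\PREF^2_{F,j}|\geq 3$ allows codewords of length $0$ or $1$, so a two-bit prefix of $\dot{f}^*_j(\pmb{x})$ may straddle codeword boundaries. I would split according to how the two-bit window aligns with the first few codewords of $\dot{f}^*_j(\pmb{x})$, then in each subcase read off the prescribed leading bits from the appropriate case of~(\ref{eq:6807mxs1xye4}), which depends on $|\bar{\PREF}^1_{F,i}(f_i(s))|$ and $|\bar{\PREF}^1_{F,j}|$. A clean route is to first bound $\bar{\PREF}^2_{\dot{F},j}$ using Lemma~\ref{lem:pref-sum}~(iii), then combine with recursive copies of $\PREF^2_{\dot{F}, \trans_j(s')}$ via Lemma~\ref{lem:pref-sum}~(i). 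For part (ii), Lemma~\ref{lem:tilde-equal}~(iii) yields $\dot{f}_i(x_1)\succ\dot{f}_i(s)\Leftrightarrow f_i(x_1)\succ f_i(s)$, so non-emptiness of $\bar{\PREF}^2_{\dot{F},i}(\dot{f}_i(s))$ is governed by $|\bar{\PREF}^1_{F,i}(f_i(s))|$, which gives the first line immediately. In the remaining cases the extending $x_1$ lies at some $s_{r+1}$ in the $\gamma$-chain with $r\geq 1$, so by Lemma~\ref{lem:dot-length}~(ii) the block $\gamma(s_{r+1})$ has length at least $2$; the two bits immediately following $\dot{f}_i(s)$ are therefore the leading two bits of $\dot{\gamma}(s_{r+1})$, which are prescribed explicitly by the appropriate $r\geq 2$ case of~(\ref{eq:6807mxs1xye4}) and yield each listed upper bound.

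The main difficulty is the combinatorial bookkeeping: tracing through the six cases of~(\ref{eq:6807mxs1xye4}), the three or four branches of the displayed inclusions, and the interactions with codewords that are empty or a single bit. Particular care is needed to verify that the values $a_{F,j}$ and $\bar{a}_{F,j}$ prescribed recursively by~(\ref{v47umw9fprm9}) are consistent at every site where they are read off, so that the inserted bits really do combine to land inside the claimed right-hand sides rather than a strictly larger set.
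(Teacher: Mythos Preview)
Your proposal is correct and follows essentially the same route as the paper's proof: for (i)(a) the paper also uses Lemma~\ref{lem:dot-length}~(ii) to see every $\dot{\gamma}(s_1)$ has length $\ge 2$ and then reads off the first two bits from the first case of~(\ref{eq:6807mxs1xye4}); for (i)(b) it decomposes $\PREF^2_{\dot{F},j}$ via Lemma~\ref{lem:pref-sum}~(i) into $\bar{\PREF}^2_{\dot{F},j}$ and the contributions from $\assign_{F,j}(\lambda)$, then analyzes each according to which case of~(\ref{eq:6807mxs1xye4}) fires (using, exactly as you anticipate, the recursive clause of~(\ref{v47umw9fprm9}) to identify $a_{F,\trans_j(s')}=a_{F,j}$); and for (ii) it uses Lemma~\ref{lem:tilde-equal}~(iii) for the empty case and otherwise observes that the two bits after $\dot{f}_i(s)$ are the first two bits of some $\dot{\gamma}(s_{\rho+1})$ with $\rho+1\ge 2$, whose form is dictated by the $r\ge 2$ cases of~(\ref{eq:6807mxs1xye4}). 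The only point worth flagging is that in the second subcase of (i)(b) the paper needs the hypothesis $|\bar{\PREF}^1_{F,i}(f_i(s))|=1$ (via Lemma~\ref{cor:pref-sum}) to force $|\PREF^2_{F,j}|\le 3$ and hence $|\PREF^2_{F,\trans_j(s')}|=2$, which is what lets (i)(a) be applied recursively; make sure this constraint is used where it is needed.
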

See Appendix \ref{subsec:proof-tilde-pref} for the proof of Lemma \ref{lem:tilde-pref}.

The next lemma relates to $d_{F, i}$ and $a_{F, i}$ defined in Definitions \ref{def:rotation} and \ref{def:tilde}, respectively.

\begin{lemma}
\label{lem:tilde-pref2}
For any $F(f,\trans) \in \mathscr{F}_1$ and $i \in[F]$, the following statements (i) and (ii) hold.
\begin{enumerate}[(i)]
\item If $|\PREF^2_{F, i}| = 2$, then $d_{\dot{F}, i} = a_{F, i}$.
\item For any $s, s' \in \mathcal{S}$, if $s \neq s'$ and $\dot{f}_i(s) = \dot{f}_i(s')$,
then $d_{\dot{F}, \dot{\trans}_i(s)} = a_{F, \trans_i(s)} \neq a_{F, \trans_i(s')} = d_{\dot{F}, \dot{\trans}_i(s')}$.
\end{enumerate}
\end{lemma}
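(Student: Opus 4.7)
The plan is to handle (i) using the explicit description of $\PREF^2_{\dot{F}, i}$ already supplied by Lemma \ref{lem:tilde-pref}, and then to reduce (ii) to (i) together with the $2$-bit delay disjointness of $F$.

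For (i), assume $|\PREF^2_{F, i}| = 2$. Lemma \ref{lem:tilde-pref}(i)(a) gives $\PREF^2_{\dot{F}, i} = \{a_{F, i}0, a_{F, i}1\}$, so every length-$2$ prefix in $\dot{F}$ at index $i$ begins with $a_{F, i}$. Since $\dot{\trans} = \trans$ by (\ref{eq:bpsfrfe2v0zi}) and $|\dot{f}_i(s)| = |f_i(s)|$ by Lemma \ref{lem:tilde-equal}(ii), a short induction shows $|\dot{f}^{\ast}_i(\pmb{x})| = |f^{\ast}_i(\pmb{x})|$ for every $\pmb{x}$, so the extendability of $F \in \mathscr{F}_1 \subseteq \mathscr{F}_{\ext}$ transfers to $\dot{F}$ by Lemma \ref{lem:F_ext}. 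Hence Lemma \ref{lem:pref-inc}(i) is available for $\dot{F}$, and applying it with $k = 1, k' = 2$ yields $\PREF^1_{\dot{F}, i} = \{a_{F, i}\}$. The definition (\ref{eq:emksr78o7m54}) then gives $d_{\dot{F}, i} = a_{F, i}$, as required.

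For (ii), take $s \neq s'$ with $\dot{f}_i(s) = \dot{f}_i(s')$ and write $j \coloneqq \trans_i(s)$, $j' \coloneqq \trans_i(s')$. Applying Lemma \ref{lem:tilde-equal}(iii) in both directions forces $f_i(s) = f_i(s')$, whence Lemma \ref{lem:propertyT1}(iii) delivers $|\PREF^2_{F, j}| = |\PREF^2_{F, j'}| = 2$. Part (i) of the present lemma then supplies $d_{\dot{F}, j} = a_{F, j}$ and $d_{\dot{F}, j'} = a_{F, j'}$ (and $\dot{\trans}_i = \trans_i$ by (\ref{eq:bpsfrfe2v0zi})), so everything reduces to showing $a_{F, j} \neq a_{F, j'}$.

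The main step invokes $F \in \mathscr{F}_{2\hdec}$: since $f_i(s) = f_i(s')$ with $s \neq s'$, Definition \ref{def:k-bitdelay}(ii) forces $\PREF^2_{F, j} \cap \PREF^2_{F, j'} = \emptyset$, and by Lemma \ref{lem:propertyT1}(ii)(b) each of the two sets has the form $\{0\ast, 1\ast\}$; hence they must be $\{0a, 1b\}$ and $\{0\bar{a}, 1\bar{b}\}$, so exactly one of them contains the string $00$. Moreover, Lemma \ref{lem:propertyT1}(ii)(a) gives $|f_j(s'')| \geq 2$ for every $s'' \in \mathcal{S}$, so $\assign_{F, j}(\lambda) = \emptyset$, and likewise for $j'$; this rules out the first (recursive) case of (\ref{v47umw9fprm9}), reducing $a_{F, j}$ and $a_{F, j'}$ each to the single test ``is $00$ in the corresponding $\PREF^2$?''. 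These tests give opposite answers, so $a_{F, j} \neq a_{F, j'}$, finishing (ii). The only non-mechanical point is recognizing that the $2$-bit delay disjointness combined with the $\{0\ast, 1\ast\}$-shape from Lemma \ref{lem:propertyT1}(ii)(b) collapses the whole question to membership of the single codeword $00$; after that, the proof chains entirely through the structural lemmas already established.
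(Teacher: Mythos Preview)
Your proof is correct and follows essentially the same route as the paper: for (i) you invoke Lemma \ref{lem:tilde-pref}(i)(a) and then read off $d_{\dot{F},i}$ from (\ref{eq:emksr78o7m54}); for (ii) you reduce via Lemma \ref{lem:tilde-equal}(iii) and Lemma \ref{lem:propertyT1}(iii) to part (i), rule out the recursive branch of (\ref{v47umw9fprm9}) using Lemma \ref{lem:propertyT1}(ii)(a), and finish with the $2$-bit delay disjointness. The only cosmetic difference is that the paper packages the last step as ``$0a_{F,j} \in \PREF^2_{F,j}$ for each $j$, hence disjointness forces $a_{F,j} \neq a_{F,j'}$'', whereas you phrase it as ``exactly one of the two disjoint $\{0\ast,1\ast\}$-sets contains $00$''; these are the same observation. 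Your explicit verification that $\dot{F} \in \mathscr{F}_{\ext}$ before invoking Lemma \ref{lem:pref-inc} is a point of extra care that the paper leaves implicit.
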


See Appendix \ref{subsec:proof-tilde-pref2} for the proof of Lemma \ref{lem:tilde-pref2}.

Using the properties above, we now prove the following Lemma \ref{lem:transA-preserve}.

\begin{lemma}
\label{lem:transA-preserve}
For any $F \in \mathscr{F}_1$, we have $\widehat{\dot{F}} \in \mathscr{F}_1$ and $L(\widehat{\dot{F}}) = L(F)$.
\end{lemma}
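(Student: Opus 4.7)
The plan is to decompose the lemma into two successive steps: first show $\dot{F} \in \mathscr{F}_0$ with $L(\dot{F}) = L(F)$, then transfer the conclusion to $\widehat{\dot{F}}$ via Lemma~\ref{lem:rotation} and check the extra condition $\PREF^1_{\widehat{\dot{F}},i} = \{0,1\}$ for every $i$.

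For step one, note that $\dot{\trans}_i = \trans_i$ by (\ref{eq:bpsfrfe2v0zi}), so $Q(\dot{F}) = Q(F)$ and hence $\dot{F} \in \mathscr{F}_{\reg}$ with $\pmb{\pi}(\dot{F}) = \pmb{\pi}(F)$ by Remark~\ref{rem:transitionprobability}. Combined with $|\dot{f}_i(s)| = |f_i(s)|$ from Lemma~\ref{lem:tilde-equal}(ii), this yields $L_i(\dot{F}) = L_i(F)$ and therefore $L(\dot{F}) = L(F)$. To show $\dot{F} \in \mathscr{F}_{\ext} \cap \mathscr{F}_{2\hdec}$ I would verify Definitions~\ref{def:F_ext} and~\ref{def:k-bitdelay} directly using the containments in Lemma~\ref{lem:tilde-pref}: the five cases of (\ref{eq:6807mxs1xye4}) are arranged so that the inserted bits $a_{F,j}, \bar{a}_{F,j}$ keep $\PREF^2_{\dot{F}, \dot{\trans}_i(s)}$ disjoint from $\bar{\PREF}^2_{\dot{F}, i}(\dot{f}_i(s))$ in each case, while the collision case $\dot{f}_i(s) = \dot{f}_i(s')$ with $s \neq s'$ is handled by Lemma~\ref{lem:tilde-pref2}(ii), which guarantees $d_{\dot{F}, \dot{\trans}_i(s)} \neq d_{\dot{F}, \dot{\trans}_i(s')}$ and hence disjointness of the corresponding $\PREF^2$ sets via Lemma~\ref{lem:tilde-pref}(i)(a).

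For step two, once $\dot{F} \in \mathscr{F}_0$ is established, Lemma~\ref{lem:rotation}(ii)--(iv) directly yields $\widehat{\dot{F}} \in \mathscr{F}_0$ with $L(\widehat{\dot{F}}) = L(\dot{F}) = L(F)$. To upgrade this to $\widehat{\dot{F}} \in \mathscr{F}_1$, I split on $|\PREF^2_{F, i}|$. When $|\PREF^2_{F, i}| = 2$, Lemma~\ref{lem:tilde-pref}(i)(a) gives $\PREF^2_{\dot{F}, i} = \{a_{F,i}0,\, a_{F,i}1\}$, so $\PREF^1_{\dot{F}, i} = \{a_{F,i}\}$ and $d_{\dot{F}, i} = a_{F,i}$ by Lemma~\ref{lem:tilde-pref2}(i); rotation then appends $a_{F,i}$ to every codeword entering state $i$ and strips the leading $a_{F,i}$ from every codeword leaving it, which, combined with the fact that the incoming codewords already emitted both $0$ and $1$ as second letters (by $|\PREF^2_{F,i}| = 2$ and $F \in \mathscr{F}_1$), produces both $0$ and $1$ as first bits at state $i$ of $\widehat{\dot{F}}$. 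When $|\PREF^2_{F, i}| \geq 3$, the formula (\ref{eq:6807mxs1xye4}) together with $\PREF^1_{F, i} = \{0,1\}$ keeps both leading bits alive in $\PREF^1_{\dot{F}, i}$, and rotation preserves them since $d_{\dot{F}, i} = \lambda$.

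The main obstacle will be the uniform verification of the disjointness conditions for $\dot{F}$ and the $\PREF^1 = \{0,1\}$ invariant for $\widehat{\dot{F}}$. Both reduce to a multi-case analysis driven by (\ref{eq:6807mxs1xye4}) and the auxiliary Lemmas~\ref{lem:tilde-pref} and~\ref{lem:tilde-pref2}, so the full proof will be long but largely mechanical once the right case split is fixed and the boundary cases where $\dot{\gamma}(s_r)$ can be empty or of length one — which by Lemma~\ref{lem:dot-length}(ii) only occur at $r = 1$ with $|\PREF^2_{F,i}| \geq 3$ — are carefully tracked.
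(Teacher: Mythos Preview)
Your approach is essentially the same as the paper's: first establish the required properties of $\dot{F}$ (regularity and equal average length via $\dot{\trans}=\trans$ and Lemma~\ref{lem:tilde-equal}(ii); $2$-bit delay decodability via the case analysis of Lemma~\ref{lem:tilde-pref} together with Lemma~\ref{lem:tilde-pref2}(ii) for the collision case), then lift to $\widehat{\dot{F}}$ via Lemma~\ref{lem:rotation}, and finally verify $\PREF^1_{\widehat{\dot{F}},i}=\{0,1\}$.

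One place where your sketch is too optimistic: in the case $|\PREF^2_{F,i}|\geq 3$ you assert that the second branch of (\ref{eq:6807mxs1xye4}) together with $\PREF^1_{F,i}=\{0,1\}$ already forces $\PREF^1_{\dot{F},i}=\{0,1\}$ and hence $d_{\dot{F},i}=\lambda$. This is immediate only when $\assign_{F,i}(\lambda)=\emptyset$, because then $\gamma(s_1)\neq\lambda$ and $\dot{\gamma}(s_1)=\gamma(s_1)$ preserves the first bit. When $\assign_{F,i}(\lambda)\neq\emptyset$, however, $\dot{\gamma}(s_1)=\lambda$ and the first bit of $\dot{f}_i(s)$ comes from $\dot{\gamma}(s_2)$, whose leading bit is $\bar{a}_{F,\trans_i(s_1)}$ or $g_1$ depending on the subcase of (\ref{eq:6807mxs1xye4}); it is not obvious that both values $0$ and $1$ arise. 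The paper handles this by a reduction: using Lemma~\ref{lem:longest} it takes the longest $\pmb{x}$ with $f^{\ast}_i(\pmb{x})=\lambda$, so that $\assign_{F,\trans^{\ast}_i(\pmb{x})}(\lambda)=\emptyset$, applies the already-proved empty case there, and then pulls $\{0,1\}$ back to $i$ via repeated use of Lemma~\ref{lem:pref-sum}(i). You should incorporate this reduction (or an equivalent induction along chains of empty codewords) rather than treating the $|\PREF^2_{F,i}|\geq 3$ case as uniformly immediate.
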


\begin{proof}[Proof of Lemma \ref{lem:transA-preserve}]
It suffices to prove the following three statements (i)--(iii) for any $F \in \mathscr{F}_1$.
\begin{enumerate}[(i)]
\item $\widehat{\dot{F}} \in \mathscr{F}_{2\hdec}$. 
\item $\PREF^1_{\widehat{\dot{F}}, i} = \{0, 1\}$ for any $i \in [F]$.
\item $\widehat{\dot{F}} \in \mathscr{F}_{\reg}$ and $L(\widehat{\dot{F}}) = L(F)$.
\end{enumerate}

(Proof of (i)):
It suffices to prove $\dot{F} \in \mathscr{F}_{2\hdec}$ because this implies $\widehat{\dot{F}} \in \mathscr{F}_{2\hdec}$ by Lemma \ref{lem:rotation} (iv).

We first show that $\dot{F}$ satisfies Definition \ref{def:k-bitdelay} (i).
Choose $i \in [F]$ and $s \in \mathcal{S}$ arbitrarily and put $j \coloneqq \trans_i(s)$.
We consider the following two cases separately: the case $|\bar{\PREF}^1_{F, i}(f_i(s))| = 0$ and the case $|\bar{\PREF}^1_{F, i}(f_i(s))| \geq 1$.
\begin{itemize}
\item The case $|\bar{\PREF}^1_{F, i}(f_i(s))| = 0$:
We have
\begin{equation}
\PREF^2_{\dot{F}, j} \cap \bar{\PREF}^2_{\dot{F}, i}(\dot{f}_i(s))
\overset{(\mathrm{A})}{\subseteq} \{00, 01, 10, 11\} \cap \bar{\PREF}^2_{\dot{F}, i}(\dot{f}_i(s))
\overset{(\mathrm{B})}{\subseteq} \{00, 01, 10, 11\} \cap \emptyset
= \emptyset
\end{equation}
as desired, where
(A) follows from $|\bar{\PREF}^1_{F, i}(f_i(s))| = 0$ and the first case of (\ref{eq:ldq1qk35kx1g}),
and (B) follows from $|\bar{\PREF}^1_{F, i}(f_i(s))| = 0$ and the first case of (\ref{eq:v5z4ipil23dc}).

\item The case $|\bar{\PREF}^1_{F, i}(f_i(s))| \geq 1$:
We consider the following three cases separately: the case $|\PREF^2_{F, j}| = 2$, the case $|\PREF^2_{F, j}| \geq 3, |\bar{\PREF}^1_{F, j}| = 1$, and the case $|\PREF^2_{F, j}| \geq 3, |\bar{\PREF}^1_{F, j}| = 2$.
\begin{itemize}
\item The case $|\PREF^2_{F, j}| = 2$: We have
\begin{equation}
\PREF^2_{\dot{F}, j} \cap \bar{\PREF}^2_{\dot{F}, i}(\dot{f}_i(s))
\overset{(\mathrm{A})}{=}\{a_{F, j}0, a_{F, j}1\} \cap \bar{\PREF}^2_{\dot{F}, i}(\dot{f}_i(s))
\overset{(\mathrm{B})}{\subseteq}\{a_{F, j}0, a_{F, j}1\} \cap \{\bar{a}_{F, j}0, \bar{a}_{F, j}1\} = \emptyset
\end{equation}
as desired, where
(A) follows from $|\PREF^2_{F, j}| = 2$ and Lemma \ref{lem:tilde-pref} (i) (a),
and (B) follows from $|\bar{\PREF}^1_{F, i}(f_i(s))| \geq 1$, $|\PREF^2_{F, j}| = 2$, and the second case of (\ref{eq:v5z4ipil23dc}).

\item The case $|\PREF^2_{F, j}| \geq 3$:
Then we have $|\bar{\PREF}^1_{F, i}(f_i(s))| \leq 1$ by Lemma \ref{cor:pref-sum}.
Combining this with $|\bar{\PREF}^1_{F, i}(f_i(s))| \geq 1$, we obtain
\begin{equation}
\label{eq:fccspjsv4h5m}
|\bar{\PREF}^1_{F, i}(f_i(s))| = 1.
\end{equation}
\begin{itemize}
\item The case $|\bar{\PREF}^1_{F, j}| = 1$: We have
\begin{equation}
\PREF^2_{\dot{F}, j} \cap \bar{\PREF}^2_{\dot{F}, i}(\dot{f}_i(s))
\overset{(\mathrm{A})}{\subseteq} \{a_{F, j}0, a_{F, j}1, \bar{a}_{F, j}1\} \cap \bar{\PREF}^2_{\dot{F}, i}(\dot{f}_i(s))
\overset{(\mathrm{B})}{\subseteq} \{a_{F, j}0, a_{F, j}1, \bar{a}_{F, j}1\} \cap \{\bar{a}_{F, j}0\} = \emptyset,
\end{equation}
where (A) follows from (\ref{eq:fccspjsv4h5m}), $|\bar{\PREF}^1_{F, j}| = 1$, and the second case of (\ref{eq:ldq1qk35kx1g}),
and (B) follows from $|\bar{\PREF}^1_{F, i}(f_i(s))| \geq 1$, $|\PREF^2_{F, j}| \geq 3$, $|\bar{\PREF}^1_{F, j}| = 1$, and the third case of (\ref{eq:v5z4ipil23dc}).

\item The case $|\bar{\PREF}^1_{F, j}| = 2$: 
We have
\begin{equation}
\PREF^2_{\dot{F}, j} \cap \bar{\PREF}^2_{\dot{F}, i}(\dot{f}_i(s))
\overset{(\mathrm{A})}{\subseteq} \PREF^2_{F, j} \cap \bar{\PREF}^2_{\dot{F}, i}(\dot{f}_i(s))
\overset{(\mathrm{B})}{\subseteq} \PREF^2_{F, j} \cap \bar{\PREF}^2_{F, i}(f_i(s))
\overset{(\mathrm{C})}{=} \emptyset,
\end{equation}
where 
(A) follows from (\ref{eq:fccspjsv4h5m}), $|\bar{\PREF}^1_{F, j}| = 2$, and the third case of (\ref{eq:ldq1qk35kx1g}),
(B) follows from $|\bar{\PREF}^1_{F, i}(f_i(s))| \geq 1$, $|\PREF^2_{F, j}| \geq 3$, $|\bar{\PREF}^1_{F, j}| = 2$, and the fourth case of (\ref{eq:v5z4ipil23dc}),
and (C) follows from $F \in \mathscr{F}_{2\hdec}$.
\end{itemize}

\end{itemize}
\end{itemize}
These cases show that $\dot{F}$ satisfies Definition \ref{def:k-bitdelay} (i).

Next, we show that $\dot{F}$ satisfies Definition \ref{def:k-bitdelay} (ii).
Choose $i \in [F]$ and $s, s' \in \mathcal{S}$ such that 
\begin{equation}
\label{eq:flmngh4pebrz}
s \neq s', \quad \dot{f}_i(s) = \dot{f}_i(s')
\end{equation}
arbitrarily and put $j \coloneqq \trans_i(s)$.
Since (\ref{eq:flmngh4pebrz}) and Lemma \ref{lem:tilde-equal} (iii) lead to $f_i(s) = f_i(s')$, we have
\begin{equation}
\label{eq:4cz3f30drctg}
|\PREF^2_{F, \trans_i(s)}| = |\PREF^2_{F, \trans_i(s')}| = 2
\end{equation}
applying Lemma \ref{lem:propertyT1} (iii).
Hence, we obtain
\begin{equation}
\label{eq:8t0598t6za5v}
\PREF^2_{\dot{F}, \trans_i(s)} \cap \PREF^2_{\dot{F}, \trans_i(s')}
\overset{(\mathrm{A})}{=} \{a_{F, \trans_i(s)}0, a_{F, \trans_i(s)}1\} \cap \{a_{F, \trans_i(s')}0, a_{F, \trans_i(s')}1\}
\overset{(\mathrm{B})}{=}  \emptyset
\end{equation}
as desired, where
(A) follows from (\ref{eq:4cz3f30drctg}) and Lemma \ref{lem:tilde-pref} (i) (a),
and (B) follows since $a_{F, \trans_i(s)} \neq a_{F, \trans_i(s')}$ by (\ref{eq:flmngh4pebrz}) and Lemma \ref{lem:tilde-pref2} (ii).

(Proof of (ii)):
We prove for the following two cases separately: (I) the case $\assign_{F, i}(\lambda) = \emptyset$; (II) the case $\assign_{F, i}(\lambda) \neq \emptyset$.
\begin{enumerate}[(I)]
\item The case $\assign_{F, i}(\lambda) = \emptyset$: 
It suffices to show
\begin{equation}
\label{eq:uif3yx8nh5hg}
{}^{\forall}c \in \mathcal{C}; {}^{\exists}\pmb{x} \in \mathcal{S}^{\ast}; \dot{f}^{\ast}_i(\pmb{x}) \succeq d_{\dot{F}, i}c
\end{equation}
because this implies that for any $c \in \mathcal{C}$, there exists $\pmb{x} \in \mathcal{S}^{\ast}$ such that 
\begin{equation}
d_{\dot{F}, i}c \overset{(\mathrm{A})}{\preceq} \dot{f}^{\ast}_i(\pmb{x}) \preceq \dot{f}^{\ast}_i(\pmb{x}) d_{\dot{F}, \trans^{\ast}_i(\pmb{x})} \overset{(\mathrm{B})}{=} d_{\dot{F}, i} \widehat{\dot{f}}^{\ast}_i(\pmb{x}),
\end{equation}
where
(A) follows from (\ref{eq:uif3yx8nh5hg}),
and (B) follows from Lemma \ref{lem:rotation} (i).
This shows that $\widehat{\dot{f}}^{\ast}_i(\pmb{x}) \succeq c$ for some $\pmb{x} \in \mathcal{S}^{\ast}$,
which leads to $c \in \PREF^1_{\widehat{\dot{F}}, i}$ as desired.
Thus, we prove (\ref{eq:uif3yx8nh5hg}) considering the following two cases separately: the case $|\PREF^2_{F, i}| = 2$ and the case $|\PREF^2_{F, i}| \geq 3$.
\begin{itemize}
\item The case $|\PREF^2_{F, i}| = 2$: For any $c \in \mathcal{C}$, we have
\begin{equation}
\PREF^2_{\dot{F}, i} \overset{(\mathrm{A})}{=} \{a_{F, i}0, a_{F, i}1\} \overset{(\mathrm{B})}{=} \{d_{\dot{F}, i}0, d_{\dot{F}, i}1\} \owns d_{\dot{F}, i}c,
\end{equation}
where
(A) follows from Lemma \ref{lem:tilde-pref} (i) (a),
and (B) follows from Lemma \ref{lem:tilde-pref2} (i).
Hence, there exists $\pmb{x} \in \mathcal{S}^{+}$ such that $\dot{f}^{\ast}_{\dot{F}, i}(\pmb{x}) \succeq d_{\dot{F}, i}c$ as desired.

\item The case $|\PREF^2_{F, i}| \geq 3$:
Choose $c \in \mathcal{C}$ arbitrarily.
We have $\PREF^1_{F, i} = \{0, 1\} \owns c$ by $F \in \mathscr{F}_1$.
Hence, there exists $\pmb{x} = x_1x_2\ldots x_n\in \mathcal{S}^{+}$ such that $f^{\ast}_i(\pmb{x}) \succeq c$.
Let $\gamma(s_1)\gamma(s_2)\ldots \gamma(s_{\rho})$ be the $\gamma$-decomposition of $f_i(x_1)$.
We have
\begin{equation}
\label{eq:etbrghjnh494}
\dot{f}^{\ast}_i(\pmb{x}) \succeq \dot{f}_i(x_1) \succeq \dot{\gamma}(s_1) \overset{(\mathrm{A})}{=} \gamma(s_1) \overset{(\mathrm{B})}{\succeq} c,\end{equation}
where
(A) follows from $|\PREF^2_{F, i}| \geq 3$ and the second case of (\ref{eq:6807mxs1xye4}),
and (B) follows from $\assign_{F, i}(\lambda) = \emptyset$ and Lemma \ref{lem:dot-length} (i).

Since $c$ is arbitrarily chosen, we have $\PREF^1_{\dot{F}, i} = \{0, 1\}$ by (\ref{eq:etbrghjnh494}).
This implies $d_{\dot{F}, i} = \lambda$ by $(\ref{eq:emksr78o7m54})$.
Therefore, by (\ref{eq:etbrghjnh494}), we obtain $\dot{f}^{\ast}_i(\pmb{x}) \succeq c = d_{\dot{F}, i}c$ for any $c \in \mathcal{C}$ as desired.
\end{itemize}

\item The case $\assign_{F, i}(\lambda) \neq \emptyset$:
By Lemma \ref{lem:longest}, we can choose the longest sequence $\pmb{x} \in \mathcal{S}^{+}$ such that $f^{\ast}_i(\pmb{x}) = \lambda$.
Then $\assign_{F, \trans^{\ast}_i(\pmb{x})}(\lambda) = \emptyset$.
Hence, from the result of the case (I) above, we have $\PREF^2_{\widehat{\dot{F}}, \trans^{\ast}_i(\pmb{x})} = \{0, 1\}$.
Thus, we obtain
\begin{equation}
\PREF^2_{\widehat{\dot{F}}, i}
\overset{(\mathrm{A})}{\supseteq} \PREF^2_{\widehat{\dot{F}}, \trans^{\ast}_i(x_1)}
\overset{(\mathrm{A})}{\supseteq} \PREF^2_{\widehat{\dot{F}}, \trans^{\ast}_i(x_1x_2)}
\overset{(\mathrm{A})}{\supseteq} \cdots
\overset{(\mathrm{A})}{\supseteq} \PREF^2_{\widehat{\dot{F}}, \trans^{\ast}_i(\pmb{x})} = \{0, 1\}
\end{equation}
as desired, where (A)s follow from Lemma \ref{lem:pref-sum} (i).
\end{enumerate}

(Proof of (iii)):
We have 
\begin{equation}
\label{eq:1x4ezuoeffus}
Q(F) \overset{(\mathrm{A})}{=} Q(\dot{F}) \overset{(\mathrm{B})}{=} Q(\widehat{\dot{F}}),
\end{equation}
where
(A) follows from (\ref{eq:bpsfrfe2v0zi}),
and (B) follows from (\ref{eq:g8ovoqynvwxb}) (cf. Remark \ref{rem:transitionprobability}).
Hence, $F \in \mathscr{F}_{\reg}$ implies $\widehat{\dot{F}} \in \mathscr{F}_{\reg}$.
Also, we have 
\begin{equation}
L(F) \overset{(\mathrm{A})}{=} L(\dot{F}) \overset{(\mathrm{B})}{=} L(\widehat{\dot{F}}),
\end{equation}
where
(A) follows from (\ref{eq:1x4ezuoeffus}) and Lemma \ref{lem:tilde-equal} (ii) (cf. Remark \ref{rem:transitionprobability}),
and (B) follows from Lemma \ref{lem:rotation} (iii).
\end{proof}

For $F \in \mathscr{F}_1$ and an integer $t \geq 0$, we define
\begin{equation}
F^{(t)} =
\begin{cases}
F &\,\,\text{if}\,\,  t = 0,\\
\widehat{\dot{F^{(t-1)}}} &\,\,\text{if}\,\, t > 0.\\
\end{cases}
\end{equation}
Namely, $F^{(t)}$ is the code-tuple obtained by applying dot operation and rotation to $F$ $t$ times.
We now prove that any code-tuple of $\mathscr{F}_1$ is transformed into a code-tuple of $\mathscr{F}_2$ by repeating of dot operation and rotation, that is, $\mathcal{M}_{F^{(t)}} = \emptyset$ holds for a sufficiently large $t$, where $\mathcal{M}_F \coloneqq \{i \in [F] : |\PREF^2_{F, i}| = 2\}$.
To prove this fact, we use the following Lemma \ref{lem:transA-arg4}.
See Appendix \ref{subsec:proof-transA-arg4} for the proof of Lemma  \ref{lem:transA-arg4}.

\begin{lemma}
\label{lem:transA-arg4}
For any $F \in \mathscr{F}_{\opt} \cap \mathscr{F}_1$ such that $\kernel_F = [F]$ and two integers $t$ and $t'$ such that $0 \leq t < t'$, it holds that $\mathcal{M}_{F^{(t)}} \cap \mathcal{M}_{F^{(t')}}  = \emptyset$.
\end{lemma}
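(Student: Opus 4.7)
We argue by contradiction. Suppose there exist integers $0 \le t < t'$ and an index $i_0 \in [F]$ with $i_0 \in \mathcal{M}_{F^{(t)}} \cap \mathcal{M}_{F^{(t')}}$. Iterating Lemma \ref{lem:transA-preserve} gives $F^{(s)} \in \mathscr{F}_1$ and $L(F^{(s)}) = L(F)$ for every $s \geq 0$, so $F^{(s)} \in \mathscr{F}_{\opt} \cap \mathscr{F}_1$. Because both the dot operation and the rotation leave every $\trans$-mapping unchanged, the transition matrix is preserved throughout, and therefore $\kernel_{F^{(s)}} = [F]$ for every $s$. The proof reduces to the following \emph{persistence claim}: whenever $i_0 \in \mathcal{M}_{F^{(s)}}$, we have $|\PREF^2_{F^{(s')}, i_0}| \geq 3$ for every $s' > s$, which contradicts $i_0 \in \mathcal{M}_{F^{(t')}}$.

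The plan is to prove persistence by induction on $s'$, separated into a one-step escape and a subsequent preservation argument. For the \emph{one-step escape} ($s' = s+1$), start from $i_0 \in \mathcal{M}_{F^{(s)}}$. By Lemma \ref{lem:tilde-pref} (i) (a), $\PREF^2_{\dot{F}^{(s)}, i_0} = \{a0, a1\}$ with $a := a_{F^{(s)}, i_0}$; in particular $\PREF^1_{\dot{F}^{(s)}, i_0} = \{a\}$ and $d_{\dot{F}^{(s)}, i_0} = a$. The rotation at $i_0$ therefore strips this leading $a$ from every codeword of table $i_0$ and appends a trailing $d$-bit. I would then use the identity in Lemma \ref{lem:rotation} (i) to translate $\PREF^2_{F^{(s+1)}, i_0}$ into a prefix condition on $\dot{F}^{(s)}$ shifted by $a$, and combine this with Lemma \ref{lem:complete-leaf} applied at $i_0$ (legitimate since $i_0 \in \kernel_{F^{(s)}}$ and $F^{(s)} \in \mathscr{F}_{\opt} \cap \mathscr{F}_1$) to force $|\PREF^2_{F^{(s)}, \trans_{i_0}(s)}| = 4$ at every source symbol $s$ whose codeword reaches a deepest leaf in table $i_0$. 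These four-element target sets yield at least three distinct elements in $\PREF^2_{F^{(s+1)}, i_0}$.

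For \emph{preservation} ($s' \ge s+1$), I would show that dot+rotation preserves $|\PREF^2_{F, j}| \geq 3$ for every $j \in [F]$. Case-split on the values of $|\bar{\PREF}^1_{F, i}(f_i(s))|$ and $|\bar{\PREF}^1_{F, j}|$ as laid out in Lemma \ref{lem:tilde-pref} (i) (b); in each of the three cases I would promote the inclusion $\PREF^2_{\dot{F}, j} \subseteq \cdots$ provided by that lemma to the lower bound $|\PREF^2_{\dot{F}, j}| \geq 3$, invoking Lemma \ref{lem:pref-sum} (ii) (to count elements exactly), Lemma \ref{lem:complete-leaf} (to pin target $|\PREF^2|$ values to $4$ at optimal configurations), and Lemma \ref{lem:propertyT1} (vi). The subsequent rotation only permutes leading bits via the $d_{\dot{F}, \cdot}$ values, which preserves the cardinality of $\PREF^2$; hence $|\PREF^2_{F^{(s'+1)}, j}| \geq 3$.

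\textbf{Main obstacle.} The delicate point is the preservation step: Lemma \ref{lem:tilde-pref} (i) (b) delivers only upper-bound inclusions for $\PREF^2_{\dot{F}, j}$, so the matching lower bound $|\PREF^2_{\dot{F}, j}| \geq 3$ has to be extracted from the structural rigidity forced by optimality. This is precisely the role of Lemma \ref{lem:complete-leaf}: at optimal code-tuples the target $|\PREF^2_{F, \trans_i(s)}|$ values at terminal source symbols are pinned to $4$, and it is this saturation that propagates $|\PREF^2| \geq 3$ through both dot and rotation, and that makes the hypothesis $F \in \mathscr{F}_{\opt}$ with $\kernel_F = [F]$ indispensable.
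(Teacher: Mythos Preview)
Your overall contradiction setup is right, but the heart of the argument---your persistence claim---has a genuine gap, and the paper's route is quite different.

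\textbf{Where your argument breaks.} In the preservation step you want to show that dot\,+\,rotation preserves $|\PREF^2_{F,j}|\ge 3$. Lemma~\ref{lem:tilde-pref}\,(i)(b) gives only \emph{inclusions} (upper bounds) for $\PREF^2_{\dot F,j}$; the lower bound $\ge 3$ does not follow. Your proposed fix is to invoke Lemma~\ref{lem:complete-leaf}, but that lemma needs, as input, that some codeword in the \emph{iterated} code-tuple is a leaf (has $\bar\PREF^0=\emptyset$) and that the relevant $f$ is injective there. Neither is available: you have not established any leaf/injectivity information for $\widehat{\dot F}$, so you cannot feed Lemma~\ref{lem:complete-leaf} at that stage. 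The same issue undercuts your one-step escape: you apply Lemma~\ref{lem:complete-leaf} to $F^{(s)}$ at table $i_0$, which pins $|\PREF^2_{F^{(s)},\trans_{i_0}(\cdot)}|$ for tables $i_0$ transitions \emph{into}; it says nothing about $|\PREF^2_{F^{(s+1)},i_0}|$, and the rotation identity does not convert one into the other.

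\textbf{What the paper does instead.} The paper never tracks $|\PREF^2_{\cdot,p}|$ directly. It picks any $(i,s)$ with $\trans_i(s)=p$ (possible since $\kernel_F=[F]$) and follows the single codeword $f_i^{(\cdot)}(s)$ in table $i$. Two auxiliary lemmas do the work: Lemma~\ref{lem:transA-arg1} shows that (a) once $f_i(s)$ is a leaf ($\bar\PREF^0_{F,i}(f_i(s))=\emptyset$) it stays a leaf after dot\,+\,rotation, and (b) if $\trans_i(s)\in\mathcal M_F$ then $f_i(s)$ \emph{becomes} a leaf in $\widehat{\dot F}$. So $p\in\mathcal M_{F^{(t)}}$ forces $f_i^{(t+1)}(s)$ to be a leaf, and then it remains a leaf through $t'$. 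Lemma~\ref{lem:transA-arg2} shows each $\widehat{\dot f}_i$ is injective, so at time $t'$ the hypotheses of Lemma~\ref{lem:complete-leaf} are met \emph{in $F^{(t')}$ at table $i$}, yielding $|\PREF^2_{F^{(t')},p}|=4$ and contradicting $p\in\mathcal M_{F^{(t')}}$. The leaf property is a binary, stable invariant that survives dot\,+\,rotation; the cardinality $|\PREF^2|\ge 3$ is not known to be, and that is exactly the missing piece in your plan.
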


\begin{lemma}
 \label{lem:goodsetT2}
$\mathscr{F}_{\opt} \cap \mathscr{F}_2 \neq \emptyset$.
\end{lemma}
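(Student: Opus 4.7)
The strategy is to take an optimal code-tuple in $\mathscr{F}_1$ and iteratively apply the combined dot-then-rotation transformation until no index $i$ has $|\PREF^2_{F, i}| = 2$. First, by Lemma \ref{lem:goodsetT1} there exists some $F \in \mathscr{F}_{\opt} \cap \mathscr{F}_1$. Applying Lemma \ref{lem:kernel}(iii) to $F$ (which lies in $\mathscr{F}_0$ because $\mathscr{F}_1 \subseteq \mathscr{F}_0$) yields $\bar{F} \in \mathscr{F}_{\opt} \cap \mathscr{F}_0$ with $\kernel_{\bar{F}} = [\bar{F}]$. Running the rotation-based reduction from the proof of Lemma \ref{lem:goodsetT1} on $\bar{F}$ returns us to $\mathscr{F}_1$, and since rotation leaves $\trans$ (and therefore $\kernel$) unchanged by (\ref{eq:g8ovoqynvwxb}) and Definition \ref{def:kernel}, we may henceforth assume $F \in \mathscr{F}_{\opt} \cap \mathscr{F}_1$ with $\kernel_F = [F]$.

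Next, define $F^{(0)} \coloneqq F$ and $F^{(t)} \coloneqq \widehat{\dot{F^{(t-1)}}}$ for $t \geq 1$. By induction from Lemma \ref{lem:transA-preserve}, every $F^{(t)}$ belongs to $\mathscr{F}_1$ with $L(F^{(t)}) = L(F)$, so in fact $F^{(t)} \in \mathscr{F}_{\opt} \cap \mathscr{F}_1$ for every $t \geq 0$. Because both Definition \ref{def:tilde} and Definition \ref{def:rotation} produce members of $\mathscr{F}^{(|F|)}$, the index set $[F^{(t)}] = [F]$ is fixed throughout the iteration.

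Now apply Lemma \ref{lem:transA-arg4}: the sets $\{\mathcal{M}_{F^{(t)}}\}_{t \geq 0}$ are pairwise disjoint subsets of the finite set $[F]$, so only finitely many of them can be non-empty. Pick $t^{\star} \geq 0$ with $\mathcal{M}_{F^{(t^{\star})}} = \emptyset$. Lemma \ref{lem:propertyT1}(i) gives $|\PREF^2_{F^{(t^{\star})}, i}| \geq 2$ for every $i \in [F^{(t^{\star})}]$, and combined with $\mathcal{M}_{F^{(t^{\star})}} = \emptyset$ this forces $|\PREF^2_{F^{(t^{\star})}, i}| \geq 3$ for every such $i$. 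Together with $F^{(t^{\star})} \in \mathscr{F}_1 \subseteq \mathscr{F}_{\reg} \cap \mathscr{F}_{2\hdec}$, the code-tuple $F^{(t^{\star})}$ satisfies the defining conditions of $\mathscr{F}_2$, so $F^{(t^{\star})} \in \mathscr{F}_{\opt} \cap \mathscr{F}_2$.

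The substantive work has already been done in Lemma \ref{lem:transA-preserve} (invariance of $\mathscr{F}_1$ and of $L$ under the composite transformation) and Lemma \ref{lem:transA-arg4} (pairwise disjointness of the witness sets $\mathcal{M}_{F^{(t)}}$); once both are in hand, Lemma \ref{lem:goodsetT2} reduces to a finite pigeonhole on $[F]$. The real obstacle is therefore entirely hidden inside Lemma \ref{lem:transA-arg4}: the hard part is showing that once an index $i$ has been removed from $\mathcal{M}_{F^{(t)}}$ by the $\gamma$-decomposition-driven rewriting of codewords, it can never re-enter $\mathcal{M}_{F^{(t')}}$ at any later stage $t' > t$.
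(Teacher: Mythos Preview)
Your proof is correct and follows essentially the same route as the paper: start from $F \in \mathscr{F}_{\opt} \cap \mathscr{F}_1$ with $\kernel_F = [F]$, iterate $F^{(t)} = \widehat{\dot{F^{(t-1)}}}$, and use the pairwise disjointness of the $\mathcal{M}_{F^{(t)}}$ from Lemma~\ref{lem:transA-arg4} together with pigeonhole on $[F]$ to find some $t^\star$ with $\mathcal{M}_{F^{(t^\star)}} = \emptyset$. Your handling of the reduction to $\kernel_F = [F]$ is in fact more careful than the paper's, which simply invokes Lemma~\ref{lem:kernel}(iii) and asserts the assumption ``without loss of generality''; your observation that rotation preserves $\trans$ (hence $\kernel$) is exactly what makes that reduction legitimate.
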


\begin{proof}[Proof of Lemma \ref{lem:goodsetT2}]
By Lemma \ref{lem:goodsetT1}, there exists $F \in \mathscr{F}_{\opt} \cap \mathscr{F}_1$. 
By Lemma \ref{lem:kernel} (iii), we may assume $\kernel_F = [F]$ without loss of generality.
Consider $|F|+1$ code-tuples $F^{(0)}, F^{(1)}, \ldots, F^{(|F|)}$.
Because Lemma \ref{lem:transA-arg4} shows that the $|F|+1$ sets $\mathcal{M}_{F^{(0)}}, \mathcal{M}_{F^{(1)}}, \ldots, \mathcal{M}_{F^{(|F|)}}$ are disjoint, there exists an integer $\bar{t} \in \{0, 1, 2, \ldots, |F|\}$ such that $\mathcal{M}_{F^{(\bar{t})}} = \emptyset$.
This shows that $|\PREF^2_{F^{(\bar{t})}, i}| \geq 3$ for any $i \in [F]$. 
Since $F^{(\bar{t})} \in \mathscr{F}_{\opt} \cap \mathscr{F}_1$ by Lemma \ref{lem:transA-preserve}, 
we obtain $F^{(\bar{t})} \in \mathscr{F}_{\opt} \cap \mathscr{F}_2$.
\end{proof}

We state some properties of $\mathscr{F}_2$ as the following Lemmas \ref{lem:propertyT2} and \ref{lem:complete-leaf2}.

\begin{lemma}
 \label{lem:propertyT2}
For any $F(f, \trans) \in \mathscr{F}_2$ and $i \in [F]$, the mapping $f_i$ is injective.
\end{lemma}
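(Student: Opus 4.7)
The proof will be a short contradiction argument that essentially combines the defining property of $\mathscr{F}_2$ with Lemma \ref{lem:propertyT1} (iii). The plan is to fix $F(f, \trans) \in \mathscr{F}_2$ and $i \in [F]$, and to assume for contradiction that $f_i$ is not injective, i.e., there exist $s, s' \in \mathcal{S}$ with $s \neq s'$ and $f_i(s) = f_i(s')$.

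First, I would observe that $\mathscr{F}_2 \subseteq \mathscr{F}_1$, which is part of the hierarchy noted in the paper (this inclusion follows from Lemma \ref{lem:pref-inc} (i), since $|\PREF^2_{F, i}| \geq 3$ forces $\PREF^1_{F, i} = \{0, 1\}$). Hence $F$ lies in $\mathscr{F}_1$, and we may apply Lemma \ref{lem:propertyT1} (iii) to the pair $(s, s')$. That lemma delivers
\begin{equation}
|\PREF^2_{F, \trans_i(s)}| = |\PREF^2_{F, \trans_i(s')}| = 2.
\end{equation}

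Then I would derive the contradiction by noting that $\trans_i(s) \in [F]$, so the defining property of $\mathscr{F}_2$ (namely $|\PREF^2_{F, j}| \geq 3$ for every $j \in [F]$) requires $|\PREF^2_{F, \trans_i(s)}| \geq 3$, clashing with the equality above. This forces the initial assumption to fail, so $f_i$ must be injective.

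There is no real obstacle here; the work has already been done in Lemma \ref{lem:propertyT1} (iii). The only point worth stating carefully is the inclusion $\mathscr{F}_2 \subseteq \mathscr{F}_1$ so that Lemma \ref{lem:propertyT1} is applicable, after which the contradiction is immediate from reading off the definitions.
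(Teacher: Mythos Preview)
Your argument is correct. It differs from the paper's proof, which does not pass through Lemma \ref{lem:propertyT1} (iii) or the inclusion $\mathscr{F}_2 \subseteq \mathscr{F}_1$ at all. Instead, the paper gives a direct counting bound: for any $s \in \mathcal{S}$, using Lemma \ref{lem:pref-sum} (ii) together with the $\mathscr{F}_2$ hypothesis $|\PREF^2_{F, \trans_i(s')}| \geq 3$ for every $s' \in \assign_{F, i}(f_i(s))$, one gets
\[
|\assign_{F, i}(f_i(s))| \leq \frac{1}{3}\sum_{s' \in \assign_{F, i}(f_i(s))} |\PREF^2_{F, \trans_i(s')}| \leq \frac{|\PREF^2_{F, i}(f_i(s))|}{3} \leq \frac{4}{3},
\]
forcing $|\assign_{F, i}(f_i(s))| \leq 1$. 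Your route is shorter on the page because the work is already packaged inside Lemma \ref{lem:propertyT1} (iii); the paper's route is more self-contained, using only Lemma \ref{lem:pref-sum} (ii) and the defining inequality of $\mathscr{F}_2$, without appealing to the $\mathscr{F}_1$ layer.
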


\begin{proof}[Proof of Lemma \ref{lem:propertyT2}]
For any $s \in \mathcal{S}$, we have
\begin{equation}
|\assign_{F, i}(f_i(s))| = \frac{3|\assign_{F, i}(f_i(s))|}{3} \overset{(\mathrm{A})}{\leq} \frac{\sum_{s' \in \assign_{F, i}(f_i(s))} |\PREF^2_{F, \trans_i(s')}|}{3} \overset{(\mathrm{B})}{\leq} \frac{|\PREF^2_{F, i}(f_i(s))|}{3} \leq \frac{4}{3},
\end{equation}
where
(A) follows since $|\PREF^2_{F, \trans_i(s')}| \geq 3$ for any $s' \in \assign_{F, i}(f_i(s))$ from $F \in \mathscr{F}_2$,
and (B) follows from Lemma \ref{lem:pref-sum} (ii).
Therefore, we have $|\assign_{F, i}(f_i(s))| \leq 1$ for any $s \in \mathcal{S}$.
This shows that $f_i$ is injective as desired.
\end{proof}

\begin{lemma}
 \label{lem:complete-leaf2}
For any $F(f, \trans) \in \mathscr{F}_{\opt} \cap \mathscr{F}_2$, there exists $i \in \kernel_F$ such that $|\PREF^2_{F, i}| = 4$.
\end{lemma}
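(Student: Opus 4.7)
The plan is to derive the conclusion directly by combining Lemma \ref{lem:complete-leaf} with two short auxiliary observations; no new construction or exchange argument will be needed.

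First I would fix $F(f, \trans) \in \mathscr{F}_{\opt} \cap \mathscr{F}_2$. Since $\mathscr{F}_2 \subseteq \mathscr{F}_1$, this $F$ also lies in $\mathscr{F}_{\opt} \cap \mathscr{F}_1$, so Lemma \ref{lem:complete-leaf} becomes available. Moreover $F \in \mathscr{F}_{\reg}$ forces $\kernel_F \neq \emptyset$ by Lemma \ref{lem:kernel} (i), so I can pick an arbitrary $i \in \kernel_F$ to use as the starting index.

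Next I would record the simple closure property that $\kernel_F$ is forward-invariant under the transition mappings: for any $i \in \kernel_F$ and $s \in \mathcal{S}$, one has $\trans_i(s) \in \kernel_F$. Indeed, given any $j \in [F]$, Definition \ref{def:kernel} supplies $\pmb{x} \in \mathcal{S}^{\ast}$ with $\trans^{\ast}_j(\pmb{x}) = i$, and then Lemma \ref{lem:f_T} (ii) gives $\trans^{\ast}_j(\pmb{x}s) = \trans^{\ast}_{\trans^{\ast}_j(\pmb{x})}(s) = \trans_i(s)$, so $\trans_i(s)$ meets the condition required by Definition \ref{def:kernel}.

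Finally I would apply Lemma \ref{lem:leaf} (ii) to the chosen $i \in \kernel_F$ to obtain $s \in \mathcal{S}$ with $\bar{\PREF}^0_{F, i}(f_i(s)) = \emptyset$. Because $F \in \mathscr{F}_2$, Lemma \ref{lem:propertyT2} ensures that $f_i$ is injective, hence $\assign_{F, i}(f_i(s)) = \{s\}$ and in particular $|\assign_{F, i}(f_i(s))| = 1$. The two hypotheses of Lemma \ref{lem:complete-leaf} are therefore met, which yields $|\PREF^2_{F, \trans_i(s)}| = 4$. Setting $j \coloneqq \trans_i(s)$, the closure property of the previous paragraph gives $j \in \kernel_F$, completing the proof. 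I do not foresee any genuine obstacle here; the role of the hypothesis $F \in \mathscr{F}_2$ is precisely to upgrade the ``unique-preimage'' condition of Lemma \ref{lem:complete-leaf} to a costless one via injectivity, so that Lemma \ref{lem:leaf} (ii) alone supplies the needed leaf $s$.
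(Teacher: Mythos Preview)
Your proposal is correct and essentially identical to the paper's proof: both pick an index in $\kernel_F$, use Lemma \ref{lem:leaf} (ii) to find a leaf, invoke injectivity from Lemma \ref{lem:propertyT2} to get $|\assign_{F, i}(f_i(s))| = 1$, apply Lemma \ref{lem:complete-leaf}, and then verify via Lemma \ref{lem:f_T} (ii) that $\trans_i(s) \in \kernel_F$. The only cosmetic difference is that you state the forward-invariance of $\kernel_F$ as a separate observation before using it, whereas the paper proves it inline at the end.
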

\begin{proof}[Proof of Lemma \ref{lem:complete-leaf2}]
Choose $p \in \kernel_F$.
By Lemma \ref{lem:leaf} (ii), there exists $s \in \mathcal{S}$ such that $\bar{\PREF}^0_{F, p}(f_p(s)) = \emptyset$.
Also, by Lemma \ref{lem:propertyT2}, we have $|\assign_{F, p}(f_p(s))| = 1$.
Hence, by Lemma \ref{lem:complete-leaf}, we obtain $|\PREF^2_{F, i}| = 4$ for $i \coloneqq \trans_p(s)$.

By $p \in \kernel_F$, for any $j \in [F]$, there exists $\pmb{x} \in \mathcal{S}^{\ast}$ such that $\trans^{\ast}_j(\pmb{x}) = p$, which leads to
\begin{equation}
\trans^{\ast}_j(\pmb{x}s)
\overset{(\mathrm{A})}{=} \trans_{\trans^{\ast}_j(\pmb{x})}(s)
= \trans_p(s)
= i,
\end{equation}
where (A) follows from Lemma \ref{lem:f_T} (ii).
This shows $i \in \kernel_{F}$.
\end{proof}

\subsection{The Class $\mathscr{F}_3$}
\label{subsec:T3}

In this subsection, we prove $\mathscr{F}_{\opt} \cap \mathscr{F}_3 \neq \emptyset$,
which proof is outlined as follows.

\begin{itemize}
\item First, we define the code-tuple $\ddot{F}$ as Definition \ref{def:tilde2} for a given code-tuple $F \in \mathscr{F}_2$.
\item Then we show that $\ddot{F} \in \mathscr{F}_{\opt} \cap \mathscr{F}_3$ holds for any $F \in \mathscr{F}_{\opt} \cap \mathscr{F}_2$.
This shows $\mathscr{F}_{\opt} \cap \mathscr{F}_3 \neq \emptyset$ since $\mathscr{F}_{\opt} \cap \mathscr{F}_2 \neq \emptyset$ by Lemma \ref{lem:goodsetT2}.
\end{itemize}

\begin{definition}
\label{def:tilde2}
For $F(f, \trans) \in \mathscr{F}_2$, we define $\ddot{F}(\ddot{f}, \ddot{\trans}) \in \mathscr{F}^{(|F|)}$ as 
\begin{equation}
\label{eq:w5alb8rohz3x}
\ddot{f}_i(s) \coloneqq \ddot{\gamma}(s_1)\ddot{\gamma}(s_2)\ldots \ddot{\gamma}(s_{\rho}),
\end{equation}
\begin{equation}
\label{eq:hlr7zybfne7c}
\ddot{\trans}_i(s) \coloneqq \trans_i(s)
\end{equation}
for $i \in [F]$ and $s \in \mathcal{S}$.
Here, $\ddot{\gamma}(s_r)$ is defined as
\begin{equation}
\label{eq:oezl1k2insm5}
\ddot{\gamma}(s_r) = 
\begin{cases} 
\gamma(s_r) &\,\,\text{if}\,\, r = 1, |\PREF^2_{F, i}| = 4,\\
1 &\,\,\text{if}\,\, r = 1, |\PREF^2_{F, i}| = 3, |\gamma(s_r)| = 1,\\
01g_3 g_4 \ldots g_l &\,\,\text{if}\,\, r = 1, |\PREF^2_{F, i}| = 3, |\gamma(s_r)| \geq 2, g_1\bar{g_2} \not\in \PREF^2_{F, i},\\
1g_2g_3 g_4 \ldots g_l &\,\,\text{if}\,\, r = 1, |\PREF^2_{F, i}| = 3, |\gamma(s_r)| \geq 2, g_1\bar{g_2} \in \PREF^2_{F, i},\\
00g_3 g_4 \ldots g_l &\,\,\text{if}\,\, r \geq 2\\
\end{cases}
\end{equation}
for $r = 1, 2, \ldots, \rho$, where $\gamma(s_1)\gamma(s_2)\ldots \gamma(s_{\rho})$ is the $\gamma$-decomposition of $f_i(s)$ and $\gamma(s_r) = g_1g_2\ldots g_l$.
\end{definition}

\begin{example}
We consider $F(f, \trans) \coloneqq F^{(\eta)}$ in Table \ref{tab:code-tuple}.

\begin{itemize}
\item The $\gamma$-decomposition of $f_0(\mathrm{d})$ is $ f_0(\mathrm{d}) = \gamma(s_1)$,
where $\gamma(s_1) = 001$.
We have $\ddot{\gamma}(s_1) = \gamma(s_1) = 001$ applying the first case of (\ref{eq:oezl1k2insm5}) since $|\PREF^2_{F, 0}| = 4$.
Hence, we have $\ddot{f}_0(\mathrm{d}) = \ddot{\gamma}(s_1) = 001$.

\item The $\gamma$-decomposition of $f_1(\mathrm{c})$ is $f_1(\mathrm{c}) = \gamma(s_1)\gamma(s_2)$, where $\gamma(s_1) = 01$ and $\gamma(s_2) = 001$.
We have $\ddot{\gamma}(s_1) = 01$ applying the third case of (\ref{eq:oezl1k2insm5}) since $|\PREF^2_{F, 1}| = 3$ and $00 \not\in \PREF^2_{F, 1}$.
We have $\ddot{\gamma}(s_2) = 001$ applying the fifth case of (\ref{eq:oezl1k2insm5}).
Hence, we have $\ddot{f}_1(\mathrm{c}) = \ddot{\gamma}(s_1)\ddot{\gamma}(s_2) = 01001$.

\item The $\gamma$-decomposition of $f_1(\mathrm{b})$ is $ f_1(\mathrm{b}) = \gamma(s_1)$,
where $\gamma(s_1) = 1$.
We have $\ddot{\gamma}(s_1) = 1$ applying the second case of (\ref{eq:oezl1k2insm5}) since $|\PREF^2_{F, 1}| = 3$ and $|\gamma(s_1)| = 1$.
Hence, we have $\ddot{f}_1(\mathrm{b}) = \ddot{\gamma}(s_1) = 1$.

\item The $\gamma$-decomposition of $f_2(\mathrm{d})$ is $ f_2(\mathrm{d}) = \gamma(s_1)$,
where $\gamma(s_1) = 011$.
We have $\ddot{\gamma}(s_1) = 111$ applying the fourth case of (\ref{eq:oezl1k2insm5}) since $|\PREF^2_{F, 2}| = 3$ and $01 \in \PREF^2_{F, 2}$.
Hence, we have $\ddot{f}_2(\mathrm{d}) = \ddot{\gamma}(s_1) = 111$.
\end{itemize}

The code table $F^{(\theta)}$ in Table \ref{tab:code-tuple} is obtained as $\ddot{F}^{(\eta)}$.
\end{example}

We state some properties of $\ddot{F}$ as the following Lemmas \ref{lem:tilde2-equal} and \ref{lem:tilde2-pref} (cf. Lemmas \ref{lem:tilde-equal} and \ref{lem:tilde-pref}).

\begin{lemma}
\label{lem:tilde2-equal}
For any $F(f, \trans) \in \mathscr{F}_2$ and $i \in [F]$, the following statements (i)--(iii) hold.
\begin{enumerate}[(i)]
\item Let $s \in \mathcal{S}$ and let $\gamma(s_1)\gamma(s_2)\ldots \gamma(s_{\rho})$ be the $\gamma$-decomposition of $f_i(s)$.
Then we have $|\ddot{\gamma}(s_r)| = |\gamma(s_r)|$ for any $r = 1, 2, \ldots, \rho$.
\item For any $s \in \mathcal{S}$, we have $|\ddot{f}_i(s)| = |f_i(s)|$.
\item For any $s, s' \in \mathcal{S}$, we have $f_i(s) \preceq f_i(s') \iff \ddot{f}_i(s) \preceq \ddot{f}_i(s') $.
\end{enumerate}
\end{lemma}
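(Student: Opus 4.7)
The plan is to handle (i) and (ii) by direct computation from Definition \ref{def:tilde2}, and to treat (iii) by a case analysis that relies crucially on $F \in \mathscr{F}_2$ to control how codewords can diverge. For (i), I would check each of the five cases of (\ref{eq:oezl1k2insm5}): cases 1 and 2 are immediate, while in cases 3, 4, and 5 one has $|\ddot{\gamma}(s_r)| = 2 + (l - 2) = l = |\gamma(s_r)|$, which is well-defined because $|\gamma(s_r)| \geq 2$, explicit in cases 3--4 and supplied by Lemma \ref{lem:dot-length} (ii) for case 5. The only combination not explicitly listed in (\ref{eq:oezl1k2insm5}) is $r = 1$, $|\PREF^2_{F, i}| = 3$, $|\gamma(s_1)| = 0$; one rules it out by noting that $|\gamma(s_1)| = 0$ forces $\assign_{F, i}(\lambda) \neq \emptyset$ by Lemma \ref{lem:dot-length} (i), after which Lemma \ref{lem:pref-sum} (ii) combined with $|\PREF^2_{F, \trans_i(s_0)}| \geq 3$ contradicts $|\PREF^2_{F, i}| = 3$. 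Part (ii) is then an immediate summation:
\begin{equation*}
|\ddot{f}_i(s)| = \sum_{r=1}^{\rho} |\ddot{\gamma}(s_r)| = \sum_{r=1}^{\rho} |\gamma(s_r)| = |f_i(s)|.
\end{equation*}

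For the forward direction of (iii), I would observe that if $f_i(s) \preceq f_i(s')$, then the $\gamma$-decomposition of $f_i(s)$ is the initial segment of that of $f_i(s')$. Since (\ref{eq:oezl1k2insm5}) defines each $\ddot{\gamma}(s_r)$ purely from $F$, $i$, and the local data $\gamma(s_r)$ (together with $|\PREF^2_{F, i}|$ at $r = 1$), the sequences $\ddot{\gamma}(s_r)$ agree between the two decompositions for all $r \leq \rho$, so $\ddot{f}_i(s) \preceq \ddot{f}_i(s')$. The converse I would argue by contrapositive. If $f_i(s') \prec f_i(s)$, the already proved forward direction gives $\ddot{f}_i(s') \preceq \ddot{f}_i(s)$, while (ii) gives $|\ddot{f}_i(s)| > |\ddot{f}_i(s')|$, so $\ddot{f}_i(s) \not\preceq \ddot{f}_i(s')$. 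If $f_i(s)$ and $f_i(s')$ are incomparable, let $k$ be the largest index at which the two $\gamma$-chains coincide; it then suffices to show $\ddot{\gamma}(s_{k+1})$ and $\ddot{\gamma}(s_{k+1}')$ are incomparable.

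When $k \geq 1$, set $P \coloneqq f_i(s_k)$. Lemma \ref{cor:pref-sum} combined with $|\PREF^2_{F, \trans_i(s_k)}| \geq 3$ (from $F \in \mathscr{F}_2$) gives $|\bar{\PREF}^2_{F, i}(P)| \leq 1$, so all codewords strictly extending $P$ share the same first two bits; hence $\gamma(s_{k+1})$ and $\gamma(s_{k+1}')$ agree on their first two bits and first disagree at some position $j \geq 3$. Case 5 of (\ref{eq:oezl1k2insm5}) then yields $\ddot{\gamma}(s_{k+1}) = 00 g_3 \ldots g_l$ and $\ddot{\gamma}(s_{k+1}') = 00 g_3' \ldots g_{l'}'$, which still first differ at position $j$ and are therefore incomparable.

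The main obstacle is the subcase $k = 0$ with $|\PREF^2_{F, i}| = 3$, where cases 2, 3, 4 of (\ref{eq:oezl1k2insm5}) can interact. Here I would first rule out that both $s_1$ and $s_1'$ have $|\gamma| = 1$: two distinct length-one codewords $0$ and $1$ would, via Lemma \ref{lem:propertyT1} (v), force the absence of any length-two codeword, after which $F \in \mathscr{F}_1$ with Lemma \ref{lem:pref-sum} (ii) computes $|\PREF^2_{F, i}| = |\PREF^1_{F, \trans_i(s_1)}| + |\PREF^1_{F, \trans_i(s_1')}| = 4$, contradicting the hypothesis. Given that at most one of $s_1, s_1'$ has a single-bit $\gamma$, I would enumerate the admissible shapes of $\PREF^2_{F, i}$ (three-element subsets of $\{00, 01, 10, 11\}$ containing $\{0a, 1b\}$) and verify directly that the $g_1 \bar{g_2}$-membership test in (\ref{eq:oezl1k2insm5}) forces the resulting $\ddot{\gamma}(s_1), \ddot{\gamma}(s_1')$ to have incomparable leading segments in every admissible combination of cases 2, 3, 4. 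The easier subcase $|\PREF^2_{F, i}| = 4$ is immediate since case 1 preserves $\gamma$ verbatim.
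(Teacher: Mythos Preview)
Your treatment of (i) and (ii) matches the paper's (both appeal directly to (\ref{eq:oezl1k2insm5}) and then sum); your extra remark on the case $r=1$, $|\PREF^2_{F,i}|=3$, $|\gamma(s_1)|=0$ is a nice addition, though to finish it you also need $|\bar{\PREF}^2_{F,i}|\geq 1$, which follows from Lemma \ref{lem:propertyT2} and Lemma \ref{lem:leaf} (iii).

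For (iii) your route is correct but genuinely different from the paper's. The paper proceeds in two layers: it first isolates a segment-level equivalence (Lemma \ref{lem:tilde2-equal0}: under $r=1$ or $s_{r-1}=s'_{r-1}$ one has $\gamma(s_r)=\gamma(s'_r)\iff\ddot{\gamma}(s_r)=\ddot{\gamma}(s'_r)$), and then runs an induction entirely parallel to Lemma \ref{lem:tilde-equal02} to pull back $\ddot f_i(s)\preceq\ddot f_i(s')$ to $\gamma(s_r)=\gamma(s'_r)$ for all $r\leq\min(\rho,\rho')$. You instead argue the converse by contrapositive, locating the first index $k+1$ where the two $\gamma$-chains diverge and showing the divergence survives the transformation. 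The core combinatorics coincide: your $k\geq 1$ case is exactly the paper's observation that $|\bar{\PREF}^2_{F,i}(f_i(s_{k}))|\leq 1$ forces agreement in the first two bits (so only case 5 is in play), and your $k=0$, $|\PREF^2_{F,i}|=3$ enumeration is the content of the $r=1$ branch of Lemma \ref{lem:tilde2-equal0}. What your approach buys is directness and the avoidance of an auxiliary lemma plus induction; what the paper's buys is modularity and an exact reuse of the template already built for $\dot F$, so the reader can simply substitute Lemma \ref{lem:tilde2-equal0} for Lemma \ref{lem:tilde-equal0}. One small point to tighten in your $k=0$ sketch: when exactly one of $|\gamma(s_1)|,|\gamma(s'_1)|$ equals $1$, the presence of that length-one codeword forces $\{g_10,g_11\}\subseteq\PREF^2_{F,i}$, which in turn forces the other symbol into case 3 of (\ref{eq:oezl1k2insm5}); making this explicit is what guarantees the leading bits $1$ versus $01$ and hence incomparability.
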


\begin{proof}[Proof of Lemma \ref{lem:tilde2-equal}]
(Proof of (i)): Directly from (\ref{eq:oezl1k2insm5}).

(Proof of (ii)): 
We have
\begin{eqnarray}
|\ddot{f}_i(s)|
=  |\ddot{\gamma}(s_1)| + |\ddot{\gamma}(s_2)| + \cdots + |\ddot{\gamma}(s_{\rho})|
\overset{(\mathrm{A})}{=} |\gamma(s_1)| + |\gamma(s_2)| + \cdots + |\gamma(s_{\rho})|
= |f_i(s)|,
\end{eqnarray}
where (A) follows from (i) of this lemma.

(Proof of (iii)): See Appendix \ref{subsec:proof-tilde2-equal}.
\end{proof}

\begin{lemma}
\label{lem:tilde2-pref}
For any $F \in \mathscr{F}_2$ and $i \in [F]$, the following statements (i) and (ii) hold.
\begin{enumerate}[(i)]
\item 
\begin{equation}
\label{eq:inrzieqeb2kf}
\PREF^2_{\ddot{F}, i} =
\begin{cases}
\{01, 10, 11\} &\,\,\text{if}\,\, |\PREF^2_{F, i}| = 3,\\
\{00, 01, 10, 11\} &\,\,\text{if}\,\, |\PREF^2_{F, i}| = 4.\\
\end{cases}
\end{equation}
\item For any $s \in \mathcal{S}$, we have
\begin{equation}
\label{eq:2oc1eobowcme}
\bar{\PREF}^2_{\ddot{F}, i}(\ddot{f}_i(s)) =
\begin{cases}
\emptyset &\,\,\text{if}\,\, \bar{\PREF}^0_{F, i}(f_i(s)) = \emptyset,\\
\{00\} &\,\,\text{if}\,\, \bar{\PREF}^0_{F, i}(f_i(s)) \neq \emptyset.\\
\end{cases}
\end{equation}
\end{enumerate}
\end{lemma}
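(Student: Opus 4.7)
The plan is to prove part~(ii) first by a direct inspection of the $\gamma$-decomposition, and then to deduce part~(i) by combining (ii) with a case analysis on $|\PREF^2_{F, i}|$.

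For part~(ii), note that $f_i$ is injective by Lemma~\ref{lem:propertyT2}, and combined with Lemma~\ref{lem:tilde2-equal}~(iii) this gives $\ddot{f}_i(x_1) \succ \ddot{f}_i(s) \iff f_i(x_1) \succ f_i(s)$. When $\bar{\PREF}^0_{F, i}(f_i(s)) = \emptyset$, Lemma~\ref{lem:leaf}~(i) forbids any $x_1$ with $f_i(x_1) \succ f_i(s)$, so the target set is empty. Otherwise, pick $s^{\star} \in \mathcal{S}$ with $f_i(s^{\star}) \succ f_i(s)$; then $s$ appears as some $s_r$ with $1 \leq r < \rho$ in the $\gamma$-decomposition of $f_i(s^{\star})$, so $\ddot{f}_i(s^{\star}) = \ddot{f}_i(s)\,\ddot{\gamma}(s_{r+1})\cdots\ddot{\gamma}(s_\rho)$. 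Since $r + 1 \geq 2$, the fifth case of~(\ref{eq:oezl1k2insm5}) together with Lemma~\ref{lem:dot-length}~(ii) and Lemma~\ref{lem:tilde2-equal}~(i) ensures that $\ddot{\gamma}(s_{r+1})$ has length at least two and begins with $00$; this is the unique two-bit extension past $\ddot{f}_i(s)$ available in $\ddot{F}$, independently of the choice of $s^{\star}$.

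For part~(i), I would first rule out $\assign_{F, i}(\lambda) \neq \emptyset$ in the subcase $|\PREF^2_{F, i}| = 3$: injectivity would otherwise give $\assign_{F, i}(\lambda) = \{s'\}$, and Lemma~\ref{lem:pref-sum}~(ii) combined with $F \in \mathscr{F}_2$ would force $|\PREF^2_{F, i}| \geq |\bar{\PREF}^2_{F, i}(\lambda)| + |\PREF^2_{F, \trans_i(s')}| \geq 1 + 3 = 4$, a contradiction (here $|\bar{\PREF}^2_{F, i}(\lambda)| \geq 1$ follows from $\sigma \geq 2$ together with extendability of $F$). Consequently every $\gamma(s_1)$ is non-empty, so $\ddot{\gamma}(s_1)$ falls under case~2,~3, or~4 of~(\ref{eq:oezl1k2insm5}), whose outputs start with $1$, $01$, or $1g_2$. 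Combined with case~5, which forces every later $\ddot{\gamma}(s_r)$ to begin with $00$, no prefix of $\ddot{f}^{\ast}_i(\pmb{x})$ can start with $00$, giving $\PREF^2_{\ddot{F}, i} \subseteq \{01, 10, 11\}$. For the reverse inclusion, writing $\pmb{m} = m_1 m_2$ for the unique missing pattern so that $\PREF^2_{F, i} = \{m_1 \bar{m}_2,\, \bar{m}_1 0,\, \bar{m}_1 1\}$, I would realize each remaining pattern by a witness: patterns $\bar{m}_1 c$ arise by selecting $s$ with $f_i(s) \succeq \bar{m}_1 c$ and $|f_i(s)| \geq 2$ (feeding case~4 to yield first bits $1c$) or, if only an $s$ with $|f_i(s)| = 1$ is available, by appending a symbol supplying $c$ via case~2, while the pattern $m_1 \bar{m}_2$ arises from case~3 and yields first bits $01$. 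The subcase $|\PREF^2_{F, i}| = 4$ is handled analogously: case~1 preserves the original first two bits for $s$ with $|\gamma(s_1)| \geq 2$, and $00$ arises from case~5 applied to any $s$ whose decomposition starts with $\gamma(s_1) = \lambda$ (forced when $\assign_{F, i}(\lambda) \neq \emptyset$) or via Lemma~\ref{lem:pref-sum}~(i) which includes $\PREF^2_{\ddot{F}, \trans_i(s')}$ for $s' \in \assign_{F, i}(\lambda)$.

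The main obstacle is the $\supseteq$ direction in part~(i): while~(ii) and the $\subseteq$ direction follow cleanly from the case-by-case form of~(\ref{eq:oezl1k2insm5}), pairing each element of $\PREF^2_{F, i}$ with a $\ddot{F}$-witness requires careful bookkeeping when $|f_i(x_1)| = 1$, because the first two bits of $\ddot{f}^{\ast}_i(\pmb{x})$ then straddle two $\ddot{\gamma}$ blocks and the second bit must be pulled from a continuation whose first bit has to be justified independently via extendability of $\ddot{F}$ (inherited from $F$ through $|\ddot{f}_i(s)| = |f_i(s)|$).
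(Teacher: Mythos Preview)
Your plan follows essentially the same route as the paper: handle (ii) via the prefix-order preservation of Lemma~\ref{lem:tilde2-equal}~(iii) together with the fifth case of~(\ref{eq:oezl1k2insm5}), and handle (i) by a case split on $|\PREF^2_{F,i}|$ using the explicit form of each $\ddot\gamma(s_r)$. Your choice to do (ii) first is harmless, and your upfront exclusion of $\assign_{F,i}(\lambda)\neq\emptyset$ in the $|\PREF^2_{F,i}|=3$ subcase is a bit tidier than the paper's in-line contradictions.

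The one real gap is precisely where you flag it. When $|\gamma(s_1)|=1$, the second bit of $\ddot f^{\ast}_i(\pmb x)$ has to come from $\PREF^1_{\ddot F,\,j}$ with $j=\trans_i(s_1)$, and you need that set to equal $\{0,1\}$, not merely to be nonempty. Extendability of $\ddot F$ (which you correctly inherit from $F$ via Lemma~\ref{lem:tilde2-equal}~(ii)) gives only $\PREF^1_{\ddot F,j}\neq\emptyset$; it does not supply the \emph{specific} bit $c_2$ you are after. The paper closes this (in its subcase $|\PREF^2_{F,i}|=4$, $\assign_{F,i}(\lambda)=\emptyset$) by a chaining argument: from $j$, use Lemma~\ref{lem:longest} to pass through $\lambda$-codewords until you reach some $j'=\trans^{\ast}_j(\pmb x')$ with $\assign_{F,j'}(\lambda)=\emptyset$. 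At $j'$ either $|\PREF^2_{F,j'}|=3$ (a case already settled), or $|\PREF^2_{F,j'}|=4$ and then the first case of~(\ref{eq:oezl1k2insm5}) makes $\ddot\gamma(s_1)=\gamma(s_1)$ for every symbol, so the first bit of each $\ddot f_{j'}(s)$ agrees with that of $f_{j'}(s)$ and hence $\PREF^1_{\ddot F,j'}=\PREF^1_{F,j'}=\{0,1\}$. Lemma~\ref{lem:pref-sum}~(i) then pushes $\{0,1\}$ back along the chain to $\PREF^1_{\ddot F,j}$. Replace your bare appeal to extendability with this chain-to-a-settled-index step and the argument goes through.
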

See Appendix \ref{subsec:proof-lem:tilde2-pref} for the proof of Lemma \ref{lem:tilde2-pref}.

Using the properties above, we prove the main result of this subsection as the following Lemma \ref{lem:goodsetT3}.

\begin{lemma}
\label{lem:goodsetT3}
$\mathscr{F}_{\opt} \cap \mathscr{F}_3 \neq \emptyset$.
\end{lemma}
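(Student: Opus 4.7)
The plan is to take any $F \in \mathscr{F}_{\opt} \cap \mathscr{F}_2$ (which exists by Lemma \ref{lem:goodsetT2}) and show that the derived code-tuple $\ddot{F}$ from Definition \ref{def:tilde2} lies in $\mathscr{F}_{\opt} \cap \mathscr{F}_3$. Since $\mathscr{F}_3 \subseteq \mathscr{F}_0$ (the condition $\PREF^2_{F,i} \supseteq \{01,10,11\}$ immediately implies $\PREF^1_{F,i} = \{0,1\}$ by Lemma \ref{lem:pref-inc} (i), i.e., extendability), it suffices to verify three things: $\ddot{F} \in \mathscr{F}_{\reg}$ with $L(\ddot{F}) = L(F)$; $\PREF^2_{\ddot{F}, i} \supseteq \{01, 10, 11\}$ for every $i \in [F]$; and $\ddot{F} \in \mathscr{F}_{2\hdec}$.

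The first two items are straightforward. Because $\ddot{\trans}_i = \trans_i$ by (\ref{eq:hlr7zybfne7c}), we have $Q(\ddot{F}) = Q(F)$ (cf.\ Remark \ref{rem:transitionprobability}), so $\ddot{F} \in \mathscr{F}_{\reg}$ with $\pmb{\pi}(\ddot{F}) = \pmb{\pi}(F)$, and combining this with $|\ddot{f}_i(s)| = |f_i(s)|$ (Lemma \ref{lem:tilde2-equal} (ii)) yields $L(\ddot{F}) = L(F)$. The inclusion $\PREF^2_{\ddot{F}, i} \supseteq \{01,10,11\}$ is a direct consequence of Lemma \ref{lem:tilde2-pref} (i), as $F \in \mathscr{F}_2$ guarantees $|\PREF^2_{F, i}| \in \{3, 4\}$.

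The main obstacle is verifying $\ddot{F} \in \mathscr{F}_{2\hdec}$. Condition (ii) of Definition \ref{def:k-bitdelay} is vacuous: Lemma \ref{lem:propertyT2} says each $f_i$ is injective, and Lemma \ref{lem:tilde2-equal} (iii) transfers injectivity to $\ddot{f}_i$, so no distinct $s, s'$ with $\ddot{f}_i(s) = \ddot{f}_i(s')$ exist. For condition (i), fix $i \in [F]$ and $s \in \mathcal{S}$ and split on whether $\bar{\PREF}^0_{F, i}(f_i(s))$ is empty. If it is empty, then Lemma \ref{lem:tilde2-pref} (ii) gives $\bar{\PREF}^2_{\ddot{F}, i}(\ddot{f}_i(s)) = \emptyset$ and the required intersection is trivially empty. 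If it is nonempty, Lemma \ref{cor:pref-sum} applied with $k = 0$ gives $|\bar{\PREF}^0_{F, i}(f_i(s))| + |\PREF^2_{F, \trans_i(s)}| \leq 4$, which together with $|\PREF^2_{F, \trans_i(s)}| \geq 3$ (from $F \in \mathscr{F}_2$) forces $|\PREF^2_{F, \trans_i(s)}| = 3$; then Lemma \ref{lem:tilde2-pref} (i) gives $\PREF^2_{\ddot{F}, \trans_i(s)} = \{01, 10, 11\}$, while Lemma \ref{lem:tilde2-pref} (ii) gives $\bar{\PREF}^2_{\ddot{F}, i}(\ddot{f}_i(s)) = \{00\}$, and the intersection is again empty.

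Combining the three items shows $\ddot{F} \in \mathscr{F}_3$, and since $L(\ddot{F}) = L(F)$ with $F \in \mathscr{F}_{\opt}$, also $\ddot{F} \in \mathscr{F}_{\opt}$. Therefore $\mathscr{F}_{\opt} \cap \mathscr{F}_3 \neq \emptyset$, completing the proof. The only delicate point is the interplay in the second case above, where the tightness of Lemma \ref{cor:pref-sum} is exactly what forces $|\PREF^2_{F, \trans_i(s)}|$ down from a potential value of $4$ to the value $3$ needed to eliminate $00$ from $\PREF^2_{\ddot{F}, \trans_i(s)}$.
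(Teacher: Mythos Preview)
Your proof is correct and follows essentially the same approach as the paper's own proof: both start from $F \in \mathscr{F}_{\opt} \cap \mathscr{F}_2$, pass to $\ddot{F}$, and verify regularity, the $\PREF^2$ inclusion, and $2$-bit delay decodability via the same case split on $\bar{\PREF}^0_{F,i}(f_i(s))$ using Lemmas \ref{lem:tilde2-equal}, \ref{lem:tilde2-pref}, and \ref{cor:pref-sum}. Your treatment of Definition \ref{def:k-bitdelay} (ii) is in fact slightly more explicit than the paper's, as you invoke Lemma \ref{lem:tilde2-equal} (iii) to transfer injectivity from $f_i$ to $\ddot{f}_i$, whereas the paper only cites injectivity of $f_i$ and Remark \ref{rem:injective}.
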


\begin{proof}[Proof of Lemma \ref{lem:goodsetT3}]
By Lemma \ref{lem:goodsetT2}, there exists $F(f, \trans) \in \mathscr{F}_2 \cap \mathscr{F}_{\opt}$.
We have 
\begin{equation}
\label{eq:4dj9z757iacx}
Q(\ddot{F}) = Q(F)
\end{equation}
by (\ref{eq:hlr7zybfne7c}) (cf. Remark \ref{rem:transitionprobability}).

Now, we show $\ddot{F} \in \mathscr{F}_{\opt} \cap \mathscr{F}_3$ as follows.

\begin{itemize}
\item (Proof of $\ddot{F} \in \mathscr{F}_{\reg}$): From $F \in \mathscr{F}_2 \subseteq \mathscr{F}_{\reg}$ and  (\ref{eq:4dj9z757iacx}).

\item (Proof of $\ddot{F} \in \mathscr{F}_{2\hdec}$):
We first show that $\ddot{F}$ satisfies Definition \ref{def:k-bitdelay} (i).
We choose $i \in [\ddot{F}]$ and $s \in \mathcal{S}$ arbitrarily and consider the following two cases separately: the case $\bar{\PREF}^0_{F, i}(f_i(s)) = \emptyset$ and the case $\bar{\PREF}^0_{F, i}(f_i(s)) \neq \emptyset$.
\begin{itemize}
\item The $\bar{\PREF}^0_{F, i}(f_i(s)) = \emptyset$:
We have
\begin{equation}
\PREF^2_{\ddot{F}, \ddot{\trans}_i(s)} \cap \bar{\PREF}^2_{\ddot{F}, i}(\ddot{f}_i(s))
\overset{(\mathrm{A})}{=} \PREF^2_{\ddot{F}, \ddot{\trans}_i(s)} \cap \emptyset
= \emptyset,
\end{equation}
where (A) follows from $\bar{\PREF}^0_{F, i}(f_i(s)) = \emptyset$ and the first case of (\ref{eq:2oc1eobowcme}).

\item The case $\bar{\PREF}^0_{F, i}(f_i(s)) \neq \emptyset$:
By Lemma \ref{cor:pref-sum}, we have $|\PREF^2_{F, \trans_i(s)}| \leq 3$.
In particular, it holds that
\begin{equation}
\label{eq:ic6rt2pqtzl4}
|\PREF^2_{F, \trans_i(s)}| = 3
\end{equation}
by $F \in \mathscr{F}_2$.
Thus, we have
\begin{equation}
\PREF^2_{\ddot{F}, \ddot{\trans}_i(s)} \cap \bar{\PREF}^2_{\ddot{F}, i}(\ddot{f}_i(s))
\overset{(\mathrm{A})}{=} \{01, 10, 11\} \cap \bar{\PREF}^2_{\ddot{F}, i}(\ddot{f}_i(s)) \overset{(\mathrm{B})}{=} \{01, 10, 11\} \cap \{00\} = \emptyset,
\end{equation}
where
(A) follows from (\ref{eq:ic6rt2pqtzl4}) and the first case of (\ref{eq:inrzieqeb2kf}),
and (B) follows from $\bar{\PREF}^0_{F, i}(f_i(s)) \neq \emptyset$ and the second case of (\ref{eq:2oc1eobowcme}).
\end{itemize}
These cases show that $\ddot{F}$ satisfies Definition \ref{def:k-bitdelay} (i).

Also, by $F \in \mathscr{F}_2$ and Lemma \ref{lem:propertyT2},
all the mappings $f_0, f_1, \ldots, f_{|F|-1}$ are injective.
This proves that $\ddot{F}$ satisfies Definition \ref{def:k-bitdelay} (ii) (cf. Remark \ref{rem:injective}).

\item (Proof of $\ddot{F} \in \mathscr{F}_{\opt}$): For any $i \in [F]$, we have $L_i(\ddot{F}) = L_i(F)$ by Lemma \ref{lem:tilde2-equal} (ii) and we have $\pi_i(\ddot{F}) = \pi_i(F)$ by (\ref{eq:4dj9z757iacx}) (cf. Remark \ref{rem:transitionprobability}).
Hence, we have $L(\ddot{F}) = L(F)$, which leads to $\ddot{F} \in \mathscr{F}_{\opt}$ by $F \in \mathscr{F}_{\opt}$.

\item (Proof of ${}^{\forall} i \in [\ddot{F}]; \PREF^2_{\ddot{F}, i} \supseteq \{01, 10, 11\}$):
Choose $i \in [\ddot{F}]$ arbitrarily.
Since $|\PREF^2_{F, i}| \geq 3$ by $F \in \mathscr{F}_2$, we obtain $\PREF^2_{\ddot{F}, i} \supseteq \{01, 10, 11\}$ applying Lemma \ref{lem:tilde2-pref} (i).
\end{itemize}
\end{proof}

\subsection{The Class $\mathscr{F}_4$}
\label{subsec:T4}

In this subsection, we show $\mathscr{F}_{\opt} \cap \mathscr{F}_4 \neq \emptyset$ using the following Lemma \ref{lem:differ} obtained by \cite[Theorem 1]{IEICE2023} with $k = 2$.
See \cite{IEICE2023} for the original statement and the proof.

\begin{lemma}
\label{lem:differ}
For any $F \in \mathscr{F}_0$, there exists $F^{\dagger} \in \mathscr{F}_0$ satisfying the following conditions (a)--(c),
where $\prefset^2_F \coloneqq \{\PREF^2_{F, i} : i \in [F]\}$ for $F \in \mathscr{F}$.
\begin{enumerate}[(a)]
\item $L(F^{\dagger}) \leq L(F)$.
\item $\prefset^2_{F^{\dagger}} \subseteq \prefset^2_F$.
\item $|\prefset^2_{F^{\dagger}}| = |F^{\dagger}|$.
\end{enumerate}
\end{lemma}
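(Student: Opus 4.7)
The plan is to prove Lemma \ref{lem:differ} by induction on $|F| - |\prefset^2_F| \geq 0$. The base case $|\prefset^2_F| = |F|$ is immediate by taking $F^{\dagger} := F$. For the inductive step, I assume $|\prefset^2_F| < |F|$ and aim to produce an intermediate $F' \in \mathscr{F}_0$ with $|F'| = |F| - 1$, $L(F') \leq L(F)$, and $\prefset^2_{F'} \subseteq \prefset^2_F$; applying the induction hypothesis to $F'$ then yields the desired $F^{\dagger}$.

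Since $|\prefset^2_F| < |F|$, there exist distinct $i, j \in [F]$ with $\PREF^2_{F, i} = \PREF^2_{F, j}$. I would construct $F'$ by \emph{merging} index $j$ into index $i$: fix a bijection $\phi : [F] \setminus \{j\} \to [|F|-1]$, extend $\phi$ by $\phi(j) := \phi(i)$, and set
\begin{equation}
f'_{\phi(k)}(s) := f_k(s), \qquad \trans'_{\phi(k)}(s) := \phi(\trans_k(s))
\end{equation}
for $k \in [F] \setminus \{j\}$ and $s \in \mathcal{S}$. The choice of which of $i, j$ to keep as the representative, and correspondingly whether to retain $\trans_i$ or $\trans_j$ on the merged state, will be dictated by minimizing $L(F')$ as discussed below.

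The core technical ingredient is the identity
\begin{equation}
\PREF^\ell_{F', \phi(k)} = \PREF^\ell_{F, k} \qquad \text{for all } k \in [F] \setminus \{j\},\,\, \ell \in \{1, 2\},
\end{equation}
which I would prove by induction on $\ell$ applying the recursion of Lemma \ref{lem:pref-sum} (i) along prefixes $\pmb{b}$ of $f_k(s)$, exploiting the hypothesis $\PREF^2_{F, i} = \PREF^2_{F, j}$ together with the derived equality $\PREF^1_{F, i} = \PREF^1_{F, j}$ obtained from Lemma \ref{lem:pref-inc} (i). Once established, this identity immediately yields $\prefset^2_{F'} \subseteq \prefset^2_F$ (property (b)); transfers extendability from $F$ to $F'$ via nonemptyness of $\PREF^1_{F', \phi(k)}$; and lets me verify $F' \in \mathscr{F}_{2\hdec}$ by translating each disjointness requirement in Definition \ref{def:k-bitdelay} for $F'$ term-by-term into the corresponding requirement for $F$ (which holds by $F \in \mathscr{F}_{2\hdec}$). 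Regularity $F' \in \mathscr{F}_{\reg}$ is checked via Lemma \ref{lem:kernel} (i) by verifying $\kernel_{F'} \neq \emptyset$: any $r \in \kernel_F \setminus \{j\}$ (or $\phi(i)$, when $j \in \kernel_F$) yields an index in $\kernel_{F'}$, because source sequences realising $\trans^\ast_{k''}(\pmb{x}) = r$ in $F$ lift to source sequences realising $\trans'^\ast_{\phi(k'')}(\pmb{x}) = \phi(r)$ in $F'$ after collapsing any intermediate visit to $j$.

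The hardest step will be establishing $L(F') \leq L(F)$ (property (a)), because merging two states with different outgoing transition distributions does not produce a quotient Markov chain, so $\pmb{\pi}(F')$ is \emph{not} simply obtained by summing the $i$- and $j$-coordinates of $\pmb{\pi}(F)$. To overcome this I would introduce the two merge variants $F'_{(i)}$ (retaining $\trans_i$ on the merged state) and $F'_{(j)}$ (retaining $\trans_j$), and prove a weighted averaging inequality of the form
\begin{equation}
\pi_i(F) L(F'_{(i)}) + \pi_j(F) L(F'_{(j)}) \leq (\pi_i(F) + \pi_j(F)) L(F),
\end{equation}
which forces at least one of $L(F'_{(i)}), L(F'_{(j)})$ to be at most $L(F)$. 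Such an inequality ought to follow from comparing the stationary balance equations of $Q(F'_{(i)})$ and $Q(F'_{(j)})$ against those of $Q(F)$, viewing the two merges as obtained by restricting $Q(F)$ to either the $i$-row or the $j$-row on the collapsed state. Once $F' \in \mathscr{F}_0$ with the three required properties is constructed, the inductive hypothesis applied to $F'$ delivers $F^{\dagger}$ and completes the argument.
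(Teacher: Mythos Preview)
The paper does not prove Lemma~\ref{lem:differ} itself; it is quoted from \cite[Theorem~1]{IEICE2023} specialised to $k=2$, so there is no in-paper argument to compare against. Your merge-and-induct strategy is the natural one and may well be how the cited reference proceeds; the preservation claim $\PREF^2_{F',\phi(k)}=\PREF^2_{F,k}$ is believable (though your sketch of it is thin).

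The genuine gap is the step $L(F')\le L(F)$: the weighted averaging inequality you propose is \emph{false}. Take $\mathcal{S}=\{a,b\}$ with $\mu(a)=\mu(b)=\tfrac12$, three states with deterministic transitions $0\to1\to2\to0$ (hence $\pi(F)=(\tfrac13,\tfrac13,\tfrac13)$), and prefix-free injective tables $f_0(a)=0$, $f_0(b)=1$, $f_1(a)=0^{10}$, $f_1(b)=1^{10}$, $f_2(a)=00$, $f_2(b)=11$. Then $F\in\mathscr{F}_0$, $\PREF^2_{F,1}=\PREF^2_{F,2}=\{00,11\}$, and $(L_0,L_1,L_2)=(1,10,2)$. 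Merging $2$ into $1$ makes the merged state absorbing, so $L(F'_{(1)})=L_1=10$; merging $1$ into $2$ gives a two-cycle, so $L(F'_{(2)})=\tfrac12(L_0+L_2)=\tfrac32$. Your inequality becomes $\tfrac13\cdot10+\tfrac13\cdot\tfrac32\le\tfrac23\cdot\tfrac{13}{3}$, i.e.\ $\tfrac{23}{6}\le\tfrac{26}{9}$, which is false. The conclusion you actually want, $\min\bigl(L(F'_{(i)}),L(F'_{(j)})\bigr)\le L(F)$, does hold in this example, but not via your convex-combination route; since merging can change the stationary distribution arbitrarily (here one merge creates an absorbing state), comparing balance equations row-by-row will not yield such an inequality. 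You need a different mechanism to control $L$ under the merge, or a different construction of $F^{\dagger}$ altogether.
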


Now, we prove the following desired Lemma \ref{lem:goodsetT4}.
\begin{lemma}
  \label{lem:goodsetT4}
$\mathscr{F}_{\opt} \cap \mathscr{F}_4 \neq \emptyset $.
\end{lemma}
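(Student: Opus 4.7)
The plan is to apply Lemma~\ref{lem:differ} to an element of $\mathscr{F}_{\opt}\cap\mathscr{F}_3$, which is nonempty by Lemma~\ref{lem:goodsetT3}. I fix such an $F$ and let $F^{\dagger}\in\mathscr{F}_0$ be the code-tuple from Lemma~\ref{lem:differ}. Property~(a) there, combined with $F^{\dagger}\in\mathscr{F}_0$ and the optimality of $F$, immediately yields $L(F^{\dagger})=L(F)$, hence $F^{\dagger}\in\mathscr{F}_{\opt}$.

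The decisive reduction is that membership in $\mathscr{F}_3$ forces each $\PREF^2_{F,i}$ to contain $\{01,10,11\}$ inside $\mathcal{C}^2=\{00,01,10,11\}$, leaving only the two possibilities $\{01,10,11\}$ and $\{00,01,10,11\}$. Hence $|\prefset^2_F|\le 2$, and properties~(b) and (c) of Lemma~\ref{lem:differ} give $|F^{\dagger}|=|\prefset^2_{F^{\dagger}}|\le 2$.

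If $|F^{\dagger}|=2$, then $\prefset^2_{F^{\dagger}}$ contains both patterns, and relabeling the two code tables---a symmetry which preserves $L$, $\mathscr{F}_{\reg}$, and $\mathscr{F}_{2\hdec}$---puts $F^{\dagger}$ into the normal form $\PREF^2_{F^{\dagger},0}=\{00,01,10,11\}$ and $\PREF^2_{F^{\dagger},1}=\{01,10,11\}$ required by $\mathscr{F}_4$, giving $F^{\dagger}\in\mathscr{F}_{\opt}\cap\mathscr{F}_4$.

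If $|F^{\dagger}|=1$, then $F^{\dagger}$ is a single-table code realizing one of the two patterns, and I will augment it with an auxiliary second code table whose transitions render it transient (so that the average codeword length is preserved) and whose codewords realize the other element of $\{\{01,10,11\},\{00,01,10,11\}\}$ as $\PREF^2$. A short case analysis on the alphabet size $\sigma$---the delicate case being $\sigma=2$, where a length-$1$ codeword combined with the transition back to the main table supplies the missing two-bit prefix---produces $F'\in\mathscr{F}_0\cap\mathscr{F}_4$ with $L(F')=L(F^{\dagger})$. The main obstacle is precisely this $|F^{\dagger}|=1$ construction: Lemma~\ref{lem:differ} does not force $|F^{\dagger}|=2$, so an explicit auxiliary table must be supplied and then verified to satisfy regularity, extendability, and $2$-bit delay decodability.
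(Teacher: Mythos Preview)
Your approach is essentially the same as the paper's: apply Lemma~\ref{lem:differ} to an element of $\mathscr{F}_{\opt}\cap\mathscr{F}_3$, observe that at most two $\PREF^2$-patterns are possible, relabel in the two-table case, and adjoin a transient auxiliary table in the one-table case. The only difference is that the paper invokes Lemma~\ref{lem:complete-leaf2} to guarantee that the single-table $F^{\dagger}$ must already realize the full pattern $\{00,01,10,11\}$, so only the easier auxiliary (with pattern $\{01,10,11\}$, given explicitly) has to be constructed; your plan to handle both possible single-table patterns is sound but does unnecessary work.
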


\begin{proof}[Proof of Lemma \ref{lem:goodsetT4}]
By Lemma \ref{lem:goodsetT3}, there exists $F \in \mathscr{F}_{\opt} \cap \mathscr{F}_3$.
Applying Lemma \ref{lem:differ}, there exists $F^{\dagger}(f^{\dagger}, \trans^{\dagger}) \in \mathscr{F}_{\opt} \cap \mathscr{F}_3$ satisfying $|F^{\dagger}| = |\prefset^2_{F^{\dagger}}|$.
By Lemma \ref{lem:complete-leaf2}, there exists $i \in \kernel_{F^{\dagger}}$ such that $\PREF^2_{F^{\dagger}, i} = \{00, 01, 10, 11\}$.
Hence, $F^{\dagger}$ satisfies exactly one of the following conditions (a) and (b).
\begin{enumerate}[(a)]
\item $|F^{\dagger}| = 2, \PREF^2_{F^{\dagger}, 0} = \{00, 01, 10, 11\}, \PREF^2_{F^{\dagger}, 1} = \{01, 10, 11\}$ (by swapping the indices of $(f^{\dagger}_0, \trans^{\dagger}_0)$ and $(f^{\dagger}_1, \trans^{\dagger}_1)$ if necessary).
\item $|F^{\dagger}| = 1, \PREF^2_{F^{\dagger}, 0} = \{00, 01, 10, 11\}$.
\end{enumerate}
In the case (a), we have $F^{\dagger} \in \mathscr{F}_{\opt} \cap \mathscr{F}_4$ as desired.
In the case (b), we can see that the code-tuple $F'(f', \trans') \in \mathscr{F}^{(2)}$ defined as below satisfies $F' \in \mathscr{F}_{\opt} \cap \mathscr{F}_4$ as desired:
\begin{eqnarray}
f'_0(s_r) \coloneqq f^{\dagger}_0(s_r), \quad \trans'_0(s_r) \coloneqq \trans^{\dagger}_0(s_r),
\end{eqnarray}
\begin{equation}
f'_1(s_r) = \begin{cases}
01 &\textrm{if}\,\, r = 1,\\
1^{r-1}0 &\textrm{if}\,\, 2 \leq r \leq \sigma-1,\\
1^{\sigma-1} &\textrm{if}\,\, r = \sigma,
\end{cases}
\quad \trans'_1(s_r) = 0
\end{equation}
for $s_r \in \mathcal{S}$, where we suppose $\mathcal{S} = \{s_1, s_2, \ldots, s_{\sigma}\}$ and the notation $1^l$ denotes the sequence obtained by concatenating $l$ copies of $1$ for an integer $l \geq 1$.
\end{proof}

\subsection{Proof of $\mathscr{F}_{\opt} \cap \mathscr{F}_{\AIFV} \neq \emptyset $}
\label{subsec:optimality}

Finally, we prove the following Theorem \ref{thm:optimality} as the main result of this paper.

\setcounter{theorem}{0}
\begin{theorem}
\label{thm:optimality}
$\mathscr{F}_{\opt} \cap \mathscr{F}_{\AIFV} \neq \emptyset$.
\end{theorem}

\begin{proof}[Proof of Theorem \ref{thm:optimality}]
By Lemma \ref{lem:goodsetT4}, there exists $F \in \mathscr{F}_{\opt} \cap \mathscr{F}_4$.
We have $0 \in \kernel_{F}$ by Lemma \ref{lem:complete-leaf2}.
We consider the following two cases separately: the case $\kernel_{F} = \{0, 1\}$ and the case $\kernel_{F} = \{0\}$.
\begin{itemize}

\item The case $\kernel_{F} = \{0, 1\}$:
We prove $F \in \mathscr{F}_{\AIFV}$ by showing that $F$ satisfies Definition \ref{def:subsetTaifv} (i)--(vii).
\begin{itemize}
\item (Proof of (i)): Directly from Lemma \ref{lem:propertyT2}.

\item (Proof of (ii)): Choose $s \in \mathcal{S}$ arbitrarily. 
We first prove $\bar{\PREF}^1_{F, i}(f_i(s)) \not\owns 1$ by contradiction assuming $\bar{\PREF}^1_{F, i}(f_i(s)) \owns 1$.
Then by Lemma \ref{lem:pref-inc} (ii), we have
\begin{equation}
\label{eq:aj6ct0ru9hkt}
\bar{\PREF}^2_{F, i}(f_i(s)) \owns 1c
\end{equation}
for some $c \in \mathcal{C}$.
On the other hand, by $F \in \mathscr{F}_4$, we have
\begin{equation}
\label{eq:z3k5ek3ib02v}
\PREF^2_{F, \trans_i(s)} \owns 10, 11.
\end{equation}
By (\ref{eq:aj6ct0ru9hkt}) and (\ref{eq:z3k5ek3ib02v}), we obtain
$\PREF^2_{F, \trans_i(s)} \cap \bar{\PREF}^2_{F, i}(f_i(s))  \neq \emptyset$,
which leads to $F \not \in \mathscr{F}_{2\hdec}$.
This conflicts with $F \in \mathscr{F}_4 \subseteq \mathscr{F}_{2\hdec}$.

Next, we prove $\bar{\PREF}^1_{F, i}(f_i(s)0) \not\owns 1$ by contradiction assuming
\begin{equation}
\label{eq:fuz0ot0x7u6z}
\bar{\PREF}^1_{F, i}(f_i(s)0) \owns 1.
\end{equation} 
Then we have
\begin{eqnarray}
\PREF^2_{F, \trans_i(s)} \cap \bar{\PREF}^2_{F, i}(f_i(s))
&\overset{(\mathrm{A})}{\supseteq}& \PREF^2_{F, \trans_i(s)} \cap  0\PREF^1_{F, i}(f_i(s)0)\\
&\overset{(\mathrm{B})}{\supseteq}& \PREF^2_{F, \trans_i(s)} \cap  0\bar{\PREF}^1_{F, i}(f_i(s)0)\\
&\overset{(\mathrm{C})}{\supseteq}& \PREF^2_{F, \trans_i(s)} \cap  0\{1\},\\
&\overset{(\mathrm{D})}{\supseteq}& \{01, 10, 11\} \cap \{01\}\\
&=& \{01\}\\
&\neq& \emptyset
\end{eqnarray}
where
(A) follows from Lemma \ref{lem:pref-sum} (iii),
(B) follows from Lemma \ref{lem:pref-sum} (i),
(C) follows from (\ref{eq:fuz0ot0x7u6z}),
and (D) follows from $F \in \mathscr{F}_4 \subseteq \mathscr{F}_3$.
Hence, we obtain $F \not\in \mathscr{F}_{2\hdec}$, which conflicts with $F \in \mathscr{F}_4 \subseteq \mathscr{F}_{2\hdec}$.

\item (Proof of (iii)): Directly from Lemma \ref{lem:propertyT1} (v).

\item (Proof of (iv)): Choose $i \in [F]$ and $s \in \mathcal{S}$ arbitrarily and consider the following two cases separately: the case $\bar{\PREF}^0_{F, i}(f_i(s)) = \emptyset$ and the case $\bar{\PREF}^0_{F, i}(f_i(s)) \neq \emptyset$:

\begin{itemize}
\item The case $\bar{\PREF}^0_{F, i}(f_i(s)) = \emptyset$:
We have $|\PREF^2_{F, \trans_i(s)}| = 4$ applying Lemma \ref{lem:complete-leaf} since $i \in \{0, 1\} = \kernel_{F}$ holds and $f_i$ is injective by Lemma \ref{lem:propertyT2}.
Hence, we obtain $\trans_i(s) = 0$ by $F \in \mathscr{F}_4$.

\item The case $\bar{\PREF}^0_{F, i}(f_i(s)) \neq \emptyset$:
We have $|\PREF^2_{F, \trans_i(s)}| \leq 3$ by Lemma \ref{cor:pref-sum}, 
Hence, we obtain $\trans_i(s) = 1$ by $F \in \mathscr{F}_4$.
\end{itemize}

\item (Proof of (v)):
We choose $i \in [F]$ arbitrarily and prove that if $f_i(s) = \lambda$ or $f_i(s) = 0$ for some $s \in \mathcal{S}$, then $\PREF^2_{F, i} \neq \{01, 10, 11\}$, which is equivalent to $i = 0$.
Choose $s \in \mathcal{S}$ such that $f_i(s) = \lambda$ or $f_i(s) = 0$.
We consider the following two cases separately: the case $f_i(s) = \lambda$ and the case $f_i(s) = 0$.
\begin{itemize}
\item The case $f_i(s) = \lambda$:
By Lemma \ref{lem:propertyT2}, the mapping $f_i$ is injective.
Thus, by Lemma \ref{lem:leaf} (iii), we have $\bar{\PREF}^0_{F, i} \neq \emptyset$.
Hence, by Lemma \ref{lem:pref-inc2} (ii) (a), we have
 \begin{equation}
\label{eq:5n1gycwcwfy6}
 \bar{\PREF}^2_{F, i} \neq \emptyset.
 \end{equation}
Also, we have
\begin{equation}
\label{eq:es0lyeh24u7q}
\bar{\PREF}^2_{F, i} \overset{(\mathrm{A})}{\subseteq} \mathcal{C}^2 \setminus \PREF^2_{F, i} \overset{(\mathrm{B})}{\subseteq} \mathcal{C}^2 \setminus \{01, 10, 11\} = \{00\},
\end{equation}
where
(A) follows from $F \in \mathscr{F}_4 \subseteq \mathscr{F}_{2\hdec}$,
and (B) follows from $F \in \mathscr{F}_4 \subseteq \mathscr{F}_3$.
Thus, we obtain
\begin{equation}
\PREF^2_{F, i} \overset{(\mathrm{A})}{\supseteq} \bar{\PREF}^2_{F, i} \overset{(\mathrm{B})}{=} \{00\}.
\end{equation}
where
(A) follows from Lemma \ref{lem:pref-sum} (i),
and (B) follows from (\ref{eq:5n1gycwcwfy6}) and (\ref{eq:es0lyeh24u7q}).
This shows $\PREF^2_{F, i} \neq \{01, 10, 11\}$ as desired.

\item The case $f_i(s) = 0$: 
We have
\begin{equation}
\PREF^2_{F, i} \overset{(\mathrm{A})}{\supseteq} \bar{\PREF}^2_{F, i} \overset{(\mathrm{B})}{\supseteq} 0\PREF^1_{F, i}(0) \overset{(\mathrm{C})}{=} 0\PREF^1_{F, i}(f_i(s))
\overset{(\mathrm{D})}{\supseteq} 0\PREF^1_{F, \trans_i(s)} \overset{(\mathrm{E})}{=} 0\{0, 1\} \owns 00,
\end{equation}
where
(A) follows from Lemma \ref{lem:pref-sum} (i),
(B) follows from Lemma \ref{lem:pref-sum} (iii),
(C) follows from $f_i(s) = 0$,
(D) follows from Lemma \ref{lem:pref-sum} (i),
and (E) follows from $F \in \mathscr{F}_4 \subseteq \mathscr{F}_1$.
This leads to $\PREF^2_{F, i} \neq \{01, 10, 11\}$.
\end{itemize}

\item (Proof of (vi)): We prove by contradiction assuming $\bar{\PREF}^1_{F, 1}(0) \owns 0$.
We have
\begin{equation}
\PREF^2_{F, 1} \overset{(\mathrm{A})}{\supseteq} \bar{\PREF}^2_{F, 1}
\overset{(\mathrm{B})}{\supseteq} 0\PREF^1_{F, 1}(0)
\overset{(\mathrm{C})}{\supseteq} 0\bar{\PREF}^1_{F, 1}(0)
\overset{(\mathrm{D})}{\owns} 00,
\end{equation}
where
(A) follows from Lemma \ref{lem:pref-sum} (i),
(B) follows from Lemma \ref{lem:pref-sum} (iii),
(C) follows from Lemma \ref{lem:pref-sum} (i),
and (D) follows from $\bar{\PREF}^1_{F, 1}(0) \owns 0$.
This shows $\PREF^2_{F, 1} \neq \{01, 10, 11\}$, which conflicts with $F \in \mathscr{F}_4$.

\item (Proof of (vii)):
We prove by contradiction assuming that there exist $i \in [F]$ and $\pmb{b} \in \mathcal{C}^{\ast}$ such that all of the following conditions (a)--(c) hold.
\begin{enumerate}[(a)]
\item $|\bar{\PREF}^1_{F, i}(\pmb{b})| = 1$.
\item $f_i(s)\pmb{c} \neq \pmb{b}$ for any $s \in \mathcal{S}$ and $\pmb{c} \in \mathcal{C}^0 \cup \mathcal{C}^1$.
\item $(i, \pmb{b}) \neq (1, 0)$.
\end{enumerate}

We have
\begin{eqnarray}
\label{eq:d5xf7ohenrre}
|\PREF^1_{F, i}(\pmb{b})| \overset{(\mathrm{A})}{=} |\bar{\PREF}^1_{F, i}(\pmb{b})| + \sum_{s \in \assign_{F, i}(\pmb{b})} |\PREF^1_{F, \trans_i(s)}|
\overset{(\mathrm{B})}{=} |\bar{\PREF}^1_{F, i}(\pmb{b})|
\overset{(\mathrm{C})}{=} 1,
\end{eqnarray}
where
(A) follows from Lemma \ref{lem:pref-sum} (ii),
(B) follows since $\assign_{F, i}(\pmb{b}) = \emptyset$ by the condition (b),
and (C) follows from the condition (a).

We consider the following three cases separately: the case $|\pmb{b}| = 0$, the case $|\pmb{b}| = 1$, and the case $|\pmb{b}| \geq 2$.
\begin{itemize}
\item The case $|\pmb{b}| = 0$: 
By (\ref{eq:d5xf7ohenrre}), we have $|\PREF^1_{F, i}|  = |\PREF^1_{F, i}(\pmb{b})| = 1$, which conflicts with $F \in \mathscr{F}_4 \subseteq \mathscr{F}_1$.

\item The case $|\pmb{b}| = 1$: We have
\begin{equation}
\label{eq:drhmpfslcqv1}
\PREF^2_{F, i} \overset{(\mathrm{A})}{=} \bar{\PREF}^2_{F, i} \cup \Big( \bigcup_{s \in \assign_{F, i}(\lambda)} \PREF^2_{F, \trans_i(s)} \Big)
\overset{(\mathrm{B})}{=} \bar{\PREF}^2_{F, i} \overset{(\mathrm{C})}{=} 0\PREF^1_{F, i}(0) \cup 1\PREF^1_{F, i}(1),
\end{equation}
where
(A) follows from Lemma \ref{lem:pref-sum} (i),
(B) follows because $\assign_{F, i}(\lambda) = \emptyset$ by $|\pmb{b}| = 1$ and the condition (b),
and (C) follows from Lemma \ref{lem:pref-sum} (iii).

On the other hand, we have $\PREF^2_{F, 0} = \{00, 01, 10, 11\}$ and $\PREF^2_{F, 1} = \{01, 10, 11\}$ by $F \in \mathscr{F}_4$.
Hence, comparing with (\ref{eq:drhmpfslcqv1}), we have $\PREF^1_{F, 0}(0) = \PREF^1_{F, 0}(1) = \PREF^1_{F, 1}(1) = \{0, 1\}$ and $\PREF^1_{F, 1}(0) = \{1\}$.
Therefore, by (\ref{eq:d5xf7ohenrre}) and $|\pmb{b}| = 1$, it must hold that $(i, \pmb{b}) = (1, 0)$, which conflicts with the condition (c).

\item The case $|\pmb{b}| \geq 2$: 
By the condition (a), we have
\begin{equation}
\label{eq:z3nwkrpl9ris}
\bar{\PREF}^1_{F, i}(\pmb{b}) = \{a\}
\end{equation}
for some $a \in \mathcal{C}$.
Then there exists $\pmb{x} = x_1x_2\ldots x_n \in \mathcal{S}^{+}$ such that
\begin{equation}
\label{eq:g13s78d26rg2}
f^{\ast}_i(\pmb{x}) \succeq \pmb{b}a, \quad f_i(x_1) \succ \pmb{b}.
\end{equation}
Hence, by $|\pmb{b}| \geq 2$, we have $f_i(x_1) \succ b_1b_2$, which leads to
\begin{equation}
\label{eq:b0896wplc0bb}
b_1b_2 \in \PREF^2_{F, i},
\end{equation}
where $b_1b_2$ denotes the prefix of length $2$ of $\pmb{b}$.
By $i \in \{0, 1\} = \kernel_{F}$ and (\ref{eq:b0896wplc0bb}), we have $\pmb{b}\bar{a} \in \PREF^{\ast}_{F, i}$ applying Lemma \ref{lem:complete} stated in Appendix \ref{subsec:proof-complete-leaf}.
Hence, there exists $\pmb{y} = y_1y_2\ldots y_{n'} \in \mathcal{S}^{+}$ such that
\begin{equation}
\label{eq:mbq9u28kebxt}
f^{\ast}_i(\pmb{y}) \succeq \pmb{b}\bar{a}.
\end{equation}
Then exactly one of $f_i(y_1) \succ \pmb{b}$ and $f_i(y_1) \preceq \pmb{b}$ holds.
Now, the latter $f_i(y_1) \preceq \pmb{b}$ holds because the former $f_i(y_1) \succ \pmb{b}$ implies $\bar{a} \in \bar{\PREF}^1_{F, i}(\pmb{b})$ by (\ref{eq:mbq9u28kebxt}), which conflicts with (\ref{eq:z3nwkrpl9ris}).
Therefore, there exists $\pmb{c} = c_1c_2\ldots c_l \in \mathcal{C}^{\ast}$ such that $f_i(y_1)\pmb{c} = \pmb{b}$.
By the condition (b), we have $|\pmb{c}| \geq 2$ so that
\begin{equation}
\label{eq:ret8gt7iszol}
f_i(y_1)c_1c_2 \preceq \pmb{b}.
\end{equation}
We have
\begin{equation}
f_i(y_1)f^{\ast}_{\trans_i(y_1)}(\suff(\pmb{y}))
= f^{\ast}_i(\pmb{y})
\overset{(\mathrm{A})}{\succeq} \pmb{b}\bar{a}
\succeq \pmb{b}
\overset{(\mathrm{B})}{\succeq} f_i(y_1)c_1c_2,
\end{equation}
where
(A) follows from (\ref{eq:mbq9u28kebxt}),
and (B) follows from (\ref{eq:ret8gt7iszol}).
Comparing both sides, we obtain $f^{\ast}_{\trans_i(y_1)}(\suff(\pmb{y})) \succeq c_1c_2$,
which leads to
\begin{equation}
\label{eq:q1zyk9xz64en}
c_1c_2 \in \PREF^2_{F, \trans_i(y_1)}.
\end{equation}
Also, by (\ref{eq:g13s78d26rg2}) and (\ref{eq:ret8gt7iszol}), we have $f_i(x_1) \succ f_i(y_1)c_1c_2$, which leads to
\begin{equation}
\label{eq:sonmfpcgjp39}
c_1c_2 \in \bar{\PREF}^2_{F, i}(f_i(y_1)).
\end{equation}
By (\ref{eq:q1zyk9xz64en}) and (\ref{eq:sonmfpcgjp39}), we obtain $\bar{\PREF}^2_{F, i}(f_i(y_1)) \cap \PREF^2_{F, \trans_i(y_1)} \neq \emptyset$, which conflicts with $F \in \mathscr{F}_{2\hdec}$.
\end{itemize}
\end{itemize}

\item The case $\kernel_{F} = \{0\}$:
We define $F'(f', \trans') \in \mathscr{F}^{(2)}$ as
\begin{eqnarray}
f'_0(s_r) \coloneqq f_0(s_r), \quad \trans'_0(s_r) \coloneqq \trans_0(s_r),
\end{eqnarray}
\begin{equation}
f'_1(s_r) = \begin{cases}
01 &\textrm{if}\,\, r = 1,\\
1^{r-1}0 &\textrm{if}\,\, 2 \leq r \leq \sigma-1,\\
1^{\sigma-1} &\textrm{if}\,\, r = \sigma,
\end{cases}
\quad \trans'_1(s_r) = 0
\end{equation}
for $s_r \in \mathcal{S}$, where we suppose $\mathcal{S} = \{s_1, s_2, \ldots, s_{\sigma}\}$ and the notation $1^l$ denotes the sequence obtained by concatenating $l$ copies of $1$ for an integer $l \geq 1$.
We can show that $F'$ satisfies Definition \ref{def:subsetTaifv} (i)--(vii) in a similar way to the case $\kernel_{F} = \{0, 1\}$.
\end{itemize}
\end{proof}

\section{conclusion}
\label{sec:conclusion}

We proved the optimality of binary AIFV codes in the class of $2$-bit delay decodable codes with a finite number of code tables.
First, we introduced a code-tuple as a model of a time-variant encoder with a finite number of code tables.
Next, we defined the class $\mathscr{F}_{k\hdec}$ (resp. $\mathscr{F}_{\ext}$,  $\mathscr{F}_{\reg}$) of $k$-bit delay decodable (resp. extendable, regular) code-tuples.
Then we proved Theorem \ref{thm:optimality} that the class of AIFV codes $\mathscr{F}_{\AIFV}$ achieves the optimal average codeword length in $\mathscr{F}_0 = \mathscr{F}_{\reg} \cap \mathscr{F}_{\ext} \cap \mathscr{F}_{2\hdec}$
by introducing the classes $\mathscr{F}_1, \mathscr{F}_2, \mathscr{F}_3, \mathscr{F}_4$ and showing $\mathscr{F}_{\opt} \cap \mathscr{F}_i \neq \emptyset$ sequentially for $i = 1, 2, 3, 4$ and finally $\mathscr{F}_{\opt} \cap \mathscr{F}_{\AIFV} \neq \emptyset$.

\appendix

\subsection{Proof of Lemma \ref{lem:aifv-decodable}}
\label{subsec:proof-aifv-decodable}

To prove Lemma \ref{lem:aifv-decodable}, we first show the following Lemma \ref{lem:aifv-decodable0}.

\begin{lemma}
\label{lem:aifv-decodable0}
For any $F \in \mathscr{F}_{\AIFV}$, the following conditions (i)--(iii) hold.
\begin{enumerate}[(i)]
\item $\PREF^1_{F, 0} = \PREF^1_{F, 1} = \{0, 1\}$.
\item For any $i \in [F]$ and $b \in \mathcal{C}$, if $\assign_{F, i}(\lambda) = \emptyset$ and $(i, b) \neq (1, 0)$, then $\PREF^1_{F, i}(b) = \{0, 1\}$.
\item For any $i \in [F]$ and $s \in \mathcal{S}$, if $\bar{\PREF}^0_{F, i}(f_i(s)) \neq \emptyset$, then $\bar{\PREF}^2_{F, i}(f_i(s)) = \{00\}$.
\end{enumerate}
\end{lemma}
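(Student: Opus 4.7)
The plan is to unpack each of (i), (ii), (iii) via the recursive identities in Lemma \ref{lem:pref-sum}, and then pin down the contents of the $\PREF^k$ and $\bar{\PREF}^k$ sets using the axioms (i)--(vii) of Definition \ref{def:subsetTaifv}. The three parts are not independent: (i) underpins (ii) and, via extendability, unlocks Lemma \ref{lem:pref-inc2} for use in (iii), so they are proved in order.

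For part (i), I would first handle $\PREF^1_{F, 1}$. Condition (v) gives $\assign_{F, 1}(\lambda) = \emptyset$, so Lemma \ref{lem:pref-sum} (i) collapses $\PREF^1_{F, 1}$ to $\bar{\PREF}^1_{F, 1}$. The latter is non-empty because $\sigma \geq 2$ and (v) forbids the empty codeword under $f_1$, and it cannot have cardinality $1$ because applying condition (vii) with $\pmb{b} = \lambda$ would force either $f_1(s) = \lambda$ for some $s$ (violating (v)) or $(1, \lambda) = (1, 0)$ (false). Hence $\PREF^1_{F, 1} = \{0, 1\}$. For $\PREF^1_{F, 0}$ the same argument handles the sub-case $\assign_{F, 0}(\lambda) = \emptyset$; otherwise injectivity (i) makes $\assign_{F, 0}(\lambda) = \{s_0\}$, Lemma \ref{lem:leaf} (iii) combined with $\sigma \geq 2$ gives $\bar{\PREF}^0_{F, 0}(\lambda) \neq \emptyset$, condition (iv) then forces $\trans_0(s_0) = 1$, and Lemma \ref{lem:pref-sum} (i) yields $\PREF^1_{F, 0} \supseteq \PREF^1_{F, 1} = \{0, 1\}$.

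For part (ii), I would split $\PREF^1_{F, i}(b)$ via Lemma \ref{lem:pref-sum} (i) into $\bar{\PREF}^1_{F, i}(b)$ and the union of $\PREF^1_{F, \trans_i(s)}$ over $s \in \assign_{F, i}(b)$. If some $s \in \assign_{F, i}(b)$ exists, part (i) immediately gives $\PREF^1_{F, i}(b) \supseteq \{0, 1\}$. Otherwise $\PREF^1_{F, i}(b) = \bar{\PREF}^1_{F, i}(b)$, and condition (vii) rules out $|\bar{\PREF}^1_{F, i}(b)| = 1$: case (a) would demand a codeword equal to $\lambda$ or $b$, contradicting $\assign_{F, i}(\lambda) = \assign_{F, i}(b) = \emptyset$, while case (b) is excluded by $(i, b) \neq (1, 0)$. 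Non-emptiness of $\bar{\PREF}^1_{F, i}(b)$ follows from part (i) and $\assign_{F, i}(\lambda) = \emptyset$, which together force $\bar{\PREF}^1_{F, i} = \PREF^1_{F, i} = \{0, 1\}$; some $f_i(s)$ therefore begins with $b$, and, being neither $\lambda$ nor $b$, it must strictly extend $b$, contributing to $\bar{\PREF}^1_{F, i}(b)$.

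For part (iii), condition (ii) of Definition \ref{def:subsetTaifv} gives $\bar{\PREF}^1_{F, i}(f_i(s)) \subseteq \{0\}$, and the hypothesis $\bar{\PREF}^0_{F, i}(f_i(s)) \neq \emptyset$ together with Lemma \ref{lem:pref-inc2} (ii) (a) (available because part (i) ensures $F \in \mathscr{F}_{\ext}$) forces $\bar{\PREF}^1_{F, i}(f_i(s)) = \{0\}$. Lemma \ref{lem:pref-sum} (iii) then writes $\bar{\PREF}^2_{F, i}(f_i(s))$ as $0\PREF^1_{F, i}(f_i(s)0) \cup 1\PREF^1_{F, i}(f_i(s)1)$. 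The second piece vanishes because any $s'$ with $f_i(s') \succeq f_i(s)1$ would place $1$ into $\bar{\PREF}^1_{F, i}(f_i(s))$, contradicting the equality to $\{0\}$; hence both $\assign_{F, i}(f_i(s)1)$ and $\bar{\PREF}^1_{F, i}(f_i(s)1)$ are empty. For the first piece, condition (iii) of Definition \ref{def:subsetTaifv} empties $\assign_{F, i}(f_i(s)0)$, condition (ii) restricts $\bar{\PREF}^1_{F, i}(f_i(s)0) \subseteq \{0\}$, and the witness of $0 \in \bar{\PREF}^1_{F, i}(f_i(s))$ together with condition (iii) exhibits a codeword strictly extending $f_i(s)0$, making $\bar{\PREF}^1_{F, i}(f_i(s)0)$ non-empty. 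Thus $\PREF^1_{F, i}(f_i(s)0) = \{0\}$ and $\bar{\PREF}^2_{F, i}(f_i(s)) = \{00\}$. The main obstacle throughout is bookkeeping: one must repeatedly exclude the exceptional pair $(1, 0)$ from condition (vii) and correctly handle the possibility that $\assign_{F, i}(\cdot)$ is non-empty, while respecting the dependency that (i) supplies extendability needed in the final step.
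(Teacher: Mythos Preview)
Your proposal is correct and follows essentially the same route as the paper: for (i) establish $|\bar{\PREF}^1_{F,1}| = 2$ via conditions (v) and (vii) and then handle $i=0$ by cases on $\assign_{F,0}(\lambda)$; for (ii) split on $\assign_{F,i}(b)$ and again invoke (vii); for (iii) pin down $\bar{\PREF}^1_{F,i}(f_i(s)) = \{0\}$ from condition (ii), decompose via Lemma \ref{lem:pref-sum} (iii), and finish using conditions (ii) and (iii). One small point where your argument is actually cleaner: in (i) you obtain $\bar{\PREF}^1_{F,1} \neq \emptyset$ directly from condition (v) (any $f_1(s)$ has a first bit), whereas the paper cites Lemma \ref{lem:pref-inc2} (ii)(a), which formally requires $F \in \mathscr{F}_{\ext}$ before that has been established.
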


\begin{proof}[Proof of Lemma \ref{lem:aifv-decodable0}]
(Proof of (i)):
We first show
\begin{equation}
\label{eq:2welgcl2lq9m}
\PREF^1_{F, 1} = \{0, 1\}.
\end{equation}
To prove it, it suffices to show $|\bar{\PREF}^1_{F, 1}| = 2$ because this implies 
$\PREF^1_{F, 1} \supseteq \bar{\PREF}^1_{F, 1} = \{0, 1\}$ by Lemma \ref{lem:pref-sum} (i).
\begin{itemize}
\item We obtain $|\bar{\PREF}^1_{F, 1}| \neq 0$ by applying Lemma \ref{lem:pref-inc2} (ii) (a) because $|\bar{\PREF}^0_{F, 1}| \neq 0$ by Definition \ref{def:subsetTaifv} (i) and Lemma \ref{lem:leaf} (iii).
\item Also, we have $|\bar{\PREF}^1_{F, 1}| \neq 1$ because neither the condition (a) nor (b) of Definition \ref{def:subsetTaifv} (vii) holds for $(i, \pmb{b}) = (1, \lambda)$ by Definition \ref{def:subsetTaifv} (v).
\end{itemize}
These show (\ref{eq:2welgcl2lq9m}).

Next, we show $\PREF^1_{F, 0} = \{0, 1\}$ by considering the following two cases separately: the case $\assign_{F, 0}(\lambda) = \emptyset$ and the case $\assign_{F, 0}(\lambda) \neq \emptyset$.
\begin{itemize}
\item The case $\assign_{F, 0}(\lambda) = \emptyset$: By a similar argument to derive (\ref{eq:2welgcl2lq9m}).

\item The case $\assign_{F, 0}(\lambda) \neq \emptyset$:
We have
\begin{equation}
\PREF^1_{F, 0}
\overset{(\mathrm{A})}{\supseteq} \bigcup_{s \in \assign_{F, 0}(\lambda)} \PREF^1_{F, \trans_0(s)}
\overset{(\mathrm{B})}{=} \bigcup_{s \in \assign_{F, 0}(\lambda)} \PREF^1_{F, 1}
\overset{(\mathrm{C})}{=} \bigcup_{s \in \assign_{F, 0}(\lambda)} \{0, 1\}
\overset{(\mathrm{D})}{=} \{0, 1\},
\end{equation}
where (A) follows from Lemma \ref{lem:pref-sum} (i),
(B) follows from Definition \ref{def:subsetTaifv} (iv) because $\bar{\PREF}^0_{F, 0}(f_0(s)) = \bar{\PREF}^0_{F, 0} \neq \emptyset$ by Definition \ref{def:subsetTaifv} (i) and Lemma \ref{lem:leaf} (iii),
(C) follows from (\ref{eq:2welgcl2lq9m}),
and (D) follows from $\assign_{F, 0}(\lambda) \neq \emptyset$.
\end{itemize}

(Proof of (ii)):
Assume $\assign_{F, i}(\lambda) = \emptyset$ and $(i, b) \neq (1, 0)$.
We consider the following two cases separately: the case $\assign_{F, i}(b) = \emptyset$ and the case $\assign_{F, i}(b) \neq \emptyset$.

\begin{itemize}
\item The case $\assign_{F, i}(b) = \emptyset$:
It suffices to show $|\bar{\PREF}^1_{F, 1}(b)| = 2$ because this implies $\PREF^1_{F, i}(b) \supseteq \bar{\PREF}^1_{F, i}(b) = \{0, 1\}$ by Lemma \ref{lem:pref-sum} (i).
\begin{itemize}
\item We have $b \in \{0, 1\} = \PREF^1_{F, i}$ by (i) of this lemma.
Hence, there exists $\pmb{x} = x_1x_2\ldots x_n \in \mathcal{S}^{+}$ such that $f^{\ast}_i(\pmb{x}) \succeq b$.
Since $\assign_{F, i}(\lambda) = \assign_{F, i}(b) = \emptyset$, we have $f_i(x_1) \succ b$ and thus $|\bar{\PREF}^1_{F, i}(b)| \neq 0$.
\item Also, by Definition \ref{def:subsetTaifv} (vii), it must hold that $|\bar{\PREF}^1_{F, i}(b)| \neq 1$ since $\assign_{F, i}(\lambda) = \assign_{F, i}(b) = \emptyset$ and $(i, b) \neq (1, 0)$.
\end{itemize}
These show $\PREF^1_{F, i}(b) = \{0, 1\}$ as desired.

\item The case $\assign_{F, i}(b) \neq \emptyset$:
We have
\begin{equation}
\PREF^1_{F, i}(b)
\overset{(\mathrm{A})}{\supseteq} \bigcup_{s \in \assign_{F, i}(b)} \PREF^1_{F, \trans_i(s)}
\overset{(\mathrm{B})}{=} \bigcup_{s \in \assign_{F, i}(b)} \{0, 1\}
\overset{(\mathrm{C})}{=} \{0, 1\}
\end{equation}
as desired, where (A) follows from Lemma \ref{lem:pref-sum} (i),
(B) follows from (i) of this lemma,
and (C) follows from $\assign_{F, i}(b) \neq \emptyset$.
\end{itemize}

(Proof of (iii)):
Assume $\bar{\PREF}^0_{F, i}(f_i(s)) \neq \emptyset$.
Then we have $\bar{\PREF}^1_{F, i}(f_i(s)) \neq \emptyset$ by Lemma \ref{lem:pref-inc2} (ii) (a).
Since $1 \not\in \bar{\PREF}^1_{F, i}(f_i(s))$ by Definition \ref{def:subsetTaifv} (ii),
it must hold that
\begin{equation}
\label{eq:gjmq0830tudo}
\bar{\PREF}^1_{F, i}(f_i(s)) = \{0\}.
\end{equation}
We have
\begin{equation}
\label{eq:8ly8azq6ao2o}
0\PREF^1_{F, i}(f_i(s)0) \cup 1\PREF^1_{F, i}(f_i(s)1)\\
\overset{(\mathrm{A})}{=} \bar{\PREF}^2_{F, i}(f_i(s)) \overset{(\mathrm{B})}{\subseteq} \{00, 01\},
\end{equation}
where
(A) follows from Lemma \ref{lem:pref-sum} (iii),
and (B) follows from by (\ref{eq:gjmq0830tudo}) and Lemma \ref{lem:pref-inc} (ii).
Comparing both sides of (\ref{eq:8ly8azq6ao2o}), we have
\begin{equation}
\label{eq:fqn4tmw6e84z}
1\PREF^1_{F, i}(f_i(s)1) = \emptyset.
\end{equation}
Thus, we obtain
\begin{eqnarray}
\bar{\PREF}^2_{F, i}(f_i(s))
&\overset{(\mathrm{A})}{=}& 0\PREF^1_{F, i}(f_i(s)0) \cup 1\PREF^1_{F, i}(f_i(s)1)\\
&\overset{(\mathrm{B})}{=}& 0\PREF^1_{F, i}(f_i(s)0)\\
&\overset{(\mathrm{C})}{=}&  0\Big(\bar{\PREF}^1_{F, i}(f_i(s)0) \cup \Big( \bigcup_{s' \in \assign_{F, i}(f_i(s)0)} \PREF^1_{F, \trans_i(s')} \Big)\Big)\\
&=&  0\bar{\PREF}^1_{F, i}(f_i(s)0) \cup \Big( \bigcup_{s' \in \assign_{F, i}(f_i(s)0)} 0\PREF^1_{F, \trans_i(s')} \Big)\\
&\overset{(\mathrm{D})}{=}& 0\bar{\PREF}^1_{F, i}(f_i(s)0)\\
&\overset{(\mathrm{E})}{=}& \{00\} \label{eq:gaoigeogjiae},
\end{eqnarray}
where
(A) follows from Lemma \ref{lem:pref-sum} (iii),
(B) follows from (\ref{eq:fqn4tmw6e84z}),
(C) follows from Lemma \ref{lem:pref-sum} (i),
(D) follows since $\assign_{F, i}(f_i(s)0) = \emptyset$ by Definition \ref{def:subsetTaifv} (iii),
and (E) follows from (\ref{eq:gjmq0830tudo}).

\end{proof}

\begin{proof}[Proof of Lemma \ref{lem:aifv-decodable}]
We fix $F \in \mathscr{F}_{\AIFV}$ arbitrarily and show $F \in \mathscr{F}_{\reg}$,  $F \in \mathscr{F}_{2\hdec}$, $\PREF^2_{F, 0} = \{00, 01, 10, 11\}$ and $\PREF^2_{F, 1} = \{01, 10, 11\}$.

(Proof of $F \in \mathscr{F}_{\reg}$):
By Lemma \ref{lem:leaf} (ii), the following (\ref{eq:h5nzoz10dp8e}) holds, which implies
\begin{eqnarray}
\lefteqn{{}^{\forall}i \in [F]; {}^{\exists} s \in \mathcal{S}; \bar{\PREF}^0_{F, i}(f_i(s)) = \emptyset} \label{eq:h5nzoz10dp8e}\\
&\overset{(\mathrm{A})}{\implies}& {}^{\forall}i \in [F]; {}^{\exists} s \in \mathcal{S}; \trans_i(s) = 0\\
&\overset{(\mathrm{B})}{\implies}& \kernel_F \owns 0\\
&\overset{(\mathrm{C})}{\implies}& F \in \mathscr{F}_{\reg},
\end{eqnarray}
where
(A) follows from Definition \ref{def:subsetTaifv} (iv),
(B) follows from (\ref{eq:7tuxvj14yeno}),
and (C) follows from Lemma \ref{lem:kernel} (i).

(Proof of $\PREF^2_{F, 1} = \{01, 10, 11\}$):
We have $0 \in \{0, 1\} = \PREF^1_{F, 1}$ by Lemma \ref{lem:aifv-decodable0} (i).
Hence, there exists $\pmb{x} = x_1x_2\ldots x_n \in \mathcal{S}^{+}$ such that $f^{\ast}_1(\pmb{x}) \succeq 0$.
By Definition \ref{def:subsetTaifv} (v), we have $f_1(x_1) \succ 0$ and thus
\begin{equation}
\label{eq:sc1kozwljv2y}
\bar{\PREF}^1_{F, 1}(0) \neq \emptyset.
\end{equation}
Therefore, we obtain
\begin{equation}
\label{eq:37a9r185alx0}
\PREF^1_{F, 1}(0) 
\overset{(\mathrm{A})}{=} \bar{\PREF}^1_{F, 1}(0) \cup \Big( \bigcup_{s' \in \assign_{F, 1}(0)} \PREF^1_{F, \trans_1(s')} \Big)
\overset{(\mathrm{B})}{=} \bar{\PREF}^1_{F, 1}(0)
\overset{(\mathrm{C})}{=} \{1\},
\end{equation}
where
(A) follows from Lemma \ref{lem:pref-sum} (i),
(B) follows since $\assign_{F, 1}(0) = \emptyset$ by Definition \ref{def:subsetTaifv} (v),
and (C) follows from (\ref{eq:sc1kozwljv2y}) and Definition \ref{def:subsetTaifv} (vi).
Thus, we obtain
\begin{eqnarray}
\PREF^2_{F, 1}
&\overset{(\mathrm{A})}{=}&  \bar{\PREF}^2_{F, 1} \cup \Big( \bigcup_{s' \in \assign_{F, 1}(\lambda)} \PREF^2_{F, \trans_1(s')} \Big)\\
&\overset{(\mathrm{B})}{=}&  \bar{\PREF}^2_{F, 1}\\
&\overset{(\mathrm{C})}{=}&  0\PREF^1_{F, 1}(0) \cup 1\PREF^1_{F, 1}(1)\\
&\overset{(\mathrm{D})}{=}&  0\{1\} \cup 1\PREF^1_{F, 1}(1)\\
&\overset{(\mathrm{E})}{=}& 0\{1\} \cup 1\{0, 1\}\\
&=& \{01, 10, 11\} \label{eq:l0ko6bqqdfsf}
\end{eqnarray}
as desired, where
(A) follows from Lemma \ref{lem:pref-sum} (i),
(B) follows since $\assign_{F, 1}(\lambda) = \emptyset$ by Definition \ref{def:subsetTaifv} (v),
(C) follows from Lemma \ref{lem:pref-sum} (iii),
(D) follows from (\ref{eq:37a9r185alx0}),
and (E) follows from Lemma \ref{lem:aifv-decodable0} (ii) since $\assign_{F, 1}(\lambda) = \emptyset$ by Definition \ref{def:subsetTaifv} (v).

(Proof of $\PREF^2_{F, 0} = \{00, 01, 10, 11\}$):
We consider the following two cases separately:
the case $\assign_{F, 0}(\lambda) = \emptyset$ and the case $\assign_{F, 0}(\lambda) \neq \emptyset$.
\begin{itemize}
\item The case $\assign_{F, 0}(\lambda) = \emptyset$:
We have
\begin{equation}
\PREF^2_{F, 0}
\overset{(\mathrm{A})}{\supseteq} \bar{\PREF}^2_{F, 0}
\overset{(\mathrm{B})}{=} 0\PREF^1_{F, 0}(0) \cup 1\PREF^1_{F, 1}(1)
\overset{(\mathrm{C})}{=} 0\{0, 1\} \cup 1\PREF^1_{F, 1}(1)
\overset{(\mathrm{D})}{=} 0\{0, 1\} \cup 1\{0, 1\}
= \{00, 01, 10, 11\}
\end{equation}
as desired, where
(A) follows from Lemma \ref{lem:pref-sum} (i),
(B) follows from Lemma \ref{lem:pref-sum} (iii),
(C) follows from Lemma \ref{lem:aifv-decodable0} (ii) since $\assign_{F, 0}(\lambda) = \emptyset$,
and (D) follows from Lemma \ref{lem:aifv-decodable0} (ii) since $\assign_{F, 1}(\lambda) = \emptyset$ by Definition \ref{def:subsetTaifv} (v).

\item The case $\assign_{F, 0}(\lambda) \neq \emptyset$:
Let $s \in \assign_{F, 0}(\lambda) \neq \emptyset$.
We have
\begin{equation}
\label{eq:e2czit3jvugw}
\bar{\PREF}^0_{F, 0}(f_0(s)) = \bar{\PREF}^0_{F, 0} \neq \emptyset
\end{equation}
by Definition \ref{def:subsetTaifv} (i) and Lemma \ref{lem:leaf} (iii),
and thus we have $\trans_0(s) = 1$ by Definition \ref{def:subsetTaifv} (iv).
Hence, we have
\begin{eqnarray}
\PREF^2_{F, 0}
&\overset{(\mathrm{A})}{=}&  \bar{\PREF}^2_{F, 0} \cup \Big( \bigcup_{s' \in \assign_{F, 0}(\lambda)} \PREF^2_{F, \trans_0(s')} \Big)\\
&\overset{(\mathrm{B})}{\supseteq}&  \bar{\PREF}^2_{F, 0}(f_0(s)) \cup \PREF^2_{F, \trans_0(s)}\\
&\overset{(\mathrm{C})}{=}&  \bar{\PREF}^2_{F, 0}(f_0(s)) \cup \PREF^2_{F, 1}\\
&\overset{(\mathrm{D})}{=}&  \{00\} \cup \PREF^2_{F, 1}\\
&\overset{(\mathrm{E})}{=}&  \{00\} \cup \{01, 10, 11\}\\
&=& \{00, 01, 10, 11\}
\end{eqnarray}
as desired, where
(A) follows from Lemma \ref{lem:pref-sum} (i),
(B) follows from $s \in \assign_{F, 0}(\lambda)$,
(C) follows from $\trans_0(s) = 1$,
(D) follows from (\ref{eq:e2czit3jvugw}) and Lemma \ref{lem:aifv-decodable0} (iii),
and (E) follows from (\ref{eq:l0ko6bqqdfsf}).
\end{itemize}

(Proof of $F \in \mathscr{F}_{2\hdec}$):
Since $f_0$ and $f_1$ are injective by Definition \ref{def:subsetTaifv} (i), the code-tuple $F$ satisfies Definition \ref{def:k-bitdelay} (ii) (cf. Remark \ref{rem:injective}).
We show that $F$ satisfies Definition \ref{def:k-bitdelay} (i). 
We choose $i \in [2]$ and $s \in \mathcal{S}$ arbitrarily and show $\PREF^2_{F, \trans_i(s)} \cap \bar{\PREF}^2_{F, i}(f_i(s)) = \emptyset$ for the following two cases: the case $\bar{\PREF}^0_{F, i}(f_i(s)) = \emptyset$ and the case $\bar{\PREF}^0_{F, i}(f_i(s)) \neq \emptyset$.

\begin{itemize}
\item The case $\bar{\PREF}^0_{F, i}(f_i(s)) = \emptyset$:
We have
\begin{equation}
\PREF^2_{F, \trans_i(s)} \cap \bar{\PREF}^2_{F, i}(f_i(s))
\overset{(\mathrm{A})}{=} \PREF^2_{F, \trans_i(s)} \cap \emptyset
= \emptyset
\end{equation}
as desired, where
(A) follows from $\bar{\PREF}^0_{F, i}(f_i(s)) = \emptyset$ and Lemma \ref{lem:pref-inc2} (ii) (a).

\item The case $\bar{\PREF}^0_{F, i}(f_i(s)) \neq \emptyset$:
We have
\begin{equation}
\PREF^2_{F, \trans_i(s)} \cap \bar{\PREF}^2_{F, i}(f_i(s))
\overset{(\mathrm{A})}{=} \PREF^2_{F, 1} \cap \bar{\PREF}^2_{F, i}(f_i(s))
\overset{(\mathrm{B})}{=} \{01, 10, 11\} \cap \bar{\PREF}^2_{F, i}(f_i(s))
\overset{(\mathrm{C})}{=} \{01, 10, 11\} \cap \{00\}
= \emptyset
\end{equation}
as desired, where
(A) follows from $\bar{\PREF}^0_{F, i}(f_i(s)) \neq \emptyset$ and Definition \ref{def:k-bitdelay} (iv),
(B) follows from (\ref{eq:l0ko6bqqdfsf}),
and (C) follows from $\bar{\PREF}^0_{F, i}(f_i(s)) \neq \emptyset$ and  Lemma \ref{lem:aifv-decodable0} (iii).
\end{itemize}
\end{proof}

\subsection{Proof of Lemma \ref{lem:propertyT1}}
\label{subsec:proof-propertyT1}

\begin{proof}[Proof of Lemma \ref{lem:propertyT1}]
(Proof of (i)): We have $\PREF^1_{F, i} = \{0, 1\}$ by $F \in \mathscr{F}_1$.
Hence, by Lemma \ref{lem:pref-inc} (i), there exist $a, b \in \mathcal{C}$ such that $0a, 1b \in \PREF^2_{F, i}$.

(Proof of (ii) (a)):
Assume $|\PREF^2_{F, i}| = 2$.
We prove by contradiction assuming that $|f_i(s)| \leq 1$ for some $s \in \mathcal{S}$.
We consider the following two cases separately: the case $|f_i(s)| = 0$ and the case $|f_i(s)| = 1$.
\begin{itemize}
\item The case $|f_i(s)| = 0$: We have
\begin{equation}
\label{eq:107tomuofv9w}
|\bar{\PREF}^0_{F, i}| + 2 |\assign_{F, i}(\lambda)| \overset{(\mathrm{A})}{\leq} |\bar{\PREF}^2_{F, i}| + 2 |\assign_{F, i}(\lambda)| \overset{(\mathrm{B})}{\leq} |\bar{\PREF}^2_{F, i}| + \sum_{s' \in \assign_{F, i}(\lambda)} |\PREF^2_{F, \trans_i(s')}|  \overset{(\mathrm{C})}{=} |\PREF^2_{F, i}| \overset{(\mathrm{D})}{=} 2,
\end{equation}
where
(A) follows from Lemma \ref{lem:pref-inc2} (ii) (b),
(B) follows since $|\PREF^2_{F, \trans_i(s')}| \geq 2$ for any $s' \in \assign_{F, i}(\lambda)$ by (i) of this lemma,
(C) follows from Lemma \ref{lem:pref-sum} (ii),
and (D) follows directly from the assumption.

Also, by $|f_i(s)| = 0$, we have
\begin{equation}
\label{eq:ox7dgz0qdi5h}
|\assign_{F, i}(\lambda)| \geq |\{s\}| = 1.
\end{equation}
By (\ref{eq:107tomuofv9w}) and (\ref{eq:ox7dgz0qdi5h}), we have
\begin{equation}
\label{eq:tcsrvoq3y8wx}
|\bar{\PREF}^0_{F, i}| = 0
\end{equation}
and
\begin{equation}
\label{eq:igbwb7inko25}
|\assign_{F, i}(\lambda)| = 1.
\end{equation}
By (\ref{eq:igbwb7inko25}) and Lemma \ref{lem:leaf} (iii), we obtain $\bar{\PREF}^0_{F, i} \neq \emptyset$, which conflicts with (\ref{eq:tcsrvoq3y8wx}).

\item The case $|f_i(s)| = 1$:
Put $f_i(s) = c \in \mathcal{C}$. We have
\begin{equation}
\label{eq:ow89loddy2iv}
\PREF^2_{F, i} \overset{(\mathrm{A})}{\supseteq} \bar{\PREF}^2_{F, i} \overset{(\mathrm{B})}{\supseteq} c \PREF^1_{F, i} \overset{(\mathrm{C})}{=} c\{0, 1\} = \{c0, c1\},
\end{equation}
where
(A) follows from Lemma \ref{lem:pref-sum} (i),
(B) follows from Lemma \ref{lem:pref-sum} (iii),
and (C) follows from $F \in \mathscr{F}_1$.
Also, by (i) of this lemma, we have
\begin{equation}
\label{eq:ulpchogyfb6d}
\PREF^2_{F, i} \supseteq \{ca, \bar{c}b\}
\end{equation}
for some $a, b \in \mathcal{C}$.
By (\ref{eq:ow89loddy2iv}) and (\ref{eq:ulpchogyfb6d}), we have $|\PREF^2_{F, i}| \geq |\{c0, c1, \bar{c}b\}| = 3$,
which conflicts with $|\PREF^2_{F, i}| = 2$.
\end{itemize}

(Proof of (ii) (b)):
Assume $|\PREF^2_{F, i}| = 2$.
We have
\begin{equation}
\bar{\PREF}^2_{F, i}
\overset{(\mathrm{A})}{=} \bar{\PREF}^2_{F, i} \cup \Big( \bigcup_{s \in \assign_{F, i}(\lambda)} \PREF^k_{F, \trans_i(s)} \Big)
\overset{(\mathrm{B})}{=} \PREF^2_{F, i}
\overset{(\mathrm{C})}{=} \{0a, 1b\}
\end{equation}
for some $a, b \in \mathcal{C}$ as desired, where 
(A) follows because $\assign_{F, i}(\lambda) = \emptyset$ by (ii) (a) of this lemma,
(B) follows from Lemma \ref{lem:pref-sum} (i),
and (C) follows from (i) of this lemma and $|\PREF^2_{F, i}| = 2$.

(Proof of (iii)):
Assume $s \neq s'$ and $f_i(s) = f_i(s')$.
We have
\begin{equation}
\label{eq:qumspsj9n1mg}
|\bar{\PREF}^2_{F, i}(f_i(s))| + |\PREF^2_{F, \trans_i(s)}| + |\PREF^2_{F, \trans_i(s')}|
\overset{(\mathrm{A})}{\leq} |\bar{\PREF}^2_{F, i}(f_i(s))| + \sum_{s'' \in \assign_{F, i}(f_i(s))} |\PREF^2_{F, \trans_i(s'')}|
\overset{(\mathrm{B})}{=} |\PREF^2_{F, i}(f_i(s))| \leq 4,
\end{equation}
where
(A) follows from $s \neq s'$ and $f_i(s) = f_i(s')$,
and (B) follows from Lemma \ref{lem:pref-sum} (ii).

Also, by (i) of this lemma, we have
\begin{equation}
\label{eq:xmldbb8cv9rl}
|\PREF^2_{F, \trans_i(s)}| \geq 2, \quad |\PREF^2_{F, \trans_i(s')}| \geq 2.
\end{equation}
By (\ref{eq:qumspsj9n1mg}) and (\ref{eq:xmldbb8cv9rl}), it must hold that
 $|\bar{\PREF}^2_{F, i}(f_i(s))| = 0$ and $|\PREF^2_{F, \trans_i(s)}| = |\PREF^2_{F, \trans_i(s')}| = 2$
as desired.

(Proof of (iv)): 
We have
\begin{eqnarray}
|\assign_{F, i}(f_i(s))| 
&=& \frac{2|\assign_{F, i}(f_i(s))|}{2}\\
&\overset{(\mathrm{A})}{\leq}& \frac{\sum_{s' \in \assign_{F, i}(f_i(s))} |\PREF^2_{F, \trans_i(s')}|}{2}\\
&\overset{(\mathrm{B})}{=}& \frac{|\PREF^2_{F, i}(f_i(s))| -  |\bar{\PREF}^2_{F, i}(f_i(s))|}{2}\\
&\leq& \frac{4 -  |\bar{\PREF}^2_{F, i}(f_i(s))|}{2}\\
&\overset{(\mathrm{C})}{\leq}& \frac{4 -  |\bar{\PREF}^0_{F, i}(f_i(s))|}{2}\\
&\leq& \begin{cases}
\frac{3}{2} &\,\,\text{if}\,\, \bar{\PREF}^0_{F, i}(f_i(s)) \neq \emptyset,\\
2 &\,\,\text{if}\,\, \bar{\PREF}^0_{F, i}(f_i(s)) = \emptyset,\\
\end{cases}
\end{eqnarray}
as desired, where
(A) follows since $|\PREF^2_{F, \trans_i(s')}| \geq 2$ for any $s' \in \assign_{F, i}(f_i(s))$ by (i) of this lemma,
(B) follows from Lemma \ref{lem:pref-sum} (ii),
and (C) follows from Lemma \ref{lem:pref-inc2} (ii) (b).

(Proof of (v)): We prove by contradiction assuming that there exist $s, s' \in \mathcal{S}$ and $c \in \mathcal{C}$ such that
\begin{equation}
\label{eq:la7vtp6riedu}
f_i(s') = f_i(s) c.
\end{equation}
By (i) of this lemma, we have
\begin{equation}
\label{eq:pd2a6k309u22}
\PREF^2_{F, \trans_i(s)} \owns cc'
\end{equation}
for some $c' \in \mathcal{C}$.
Also, we have
\begin{eqnarray}
\label{eq:vaezvhwa74vs}
\bar{\PREF}^2_{F, i}(f_i(s)) \overset{(\mathrm{A})}{\supseteq} c\PREF^1_{F, i}(f_i(s)c) \overset{(\mathrm{B})}{=} c\PREF^1_{F, i}(f_i(s')) \overset{(\mathrm{C})}{\supseteq} c\PREF^1_{F, \trans_i(s')} \overset{(\mathrm{D})}{=} c\{0, 1\} \owns cc',
\end{eqnarray}
where
(A) follows from Lemma \ref{lem:pref-sum} (iii),
(B) follows from (\ref{eq:la7vtp6riedu}),
(C) follows from Lemma \ref{lem:pref-sum} (i),
and (D) follows from $F \in \mathscr{F}_1$.
By (\ref{eq:pd2a6k309u22}) and (\ref{eq:vaezvhwa74vs}), we obtain $\PREF^2_{F, \trans_i(s)} \cap \bar{\PREF}^2_{F, i}(f_i(s)) \neq \emptyset$, which conflicts with $F \in \mathscr{F}_{2\hdec}$.

(Proof of (vi)): We prove by contradiction assuming that there exist $s \in \mathcal{S}$ and $c \in \mathcal{C}$ such that
\begin{equation}
\label{eq:o75as6t1z1y3}
\bar{\PREF}^1_{F, i}(f_i(s) c) = \{0, 1\}.
\end{equation}
By (i) of this lemma, we have
\begin{equation}
\label{eq:b13ch8706u09}
\PREF^2_{F, \trans_i(s)} \owns cc'
\end{equation}
for some $c' \in \mathcal{C}$.
Also, we have
\begin{eqnarray}
\label{eq:tmf097q6kika}
\bar{\PREF}^2_{F, i}(f_i(s)) \overset{(\mathrm{A})}{\supseteq} c\PREF^1_{F, i}(f_i(s)c) \overset{(\mathrm{B})}{\supseteq} c\bar{\PREF}^1_{F, i}(f_i(s)c) \overset{(\mathrm{C})}{=} c\{0, 1\}\owns cc',
\end{eqnarray}
where
(A) follows from Lemma \ref{lem:pref-sum} (iii),
(B) follows from Lemma \ref{lem:pref-sum} (i),
and (C) follows from (\ref{eq:o75as6t1z1y3}).
By (\ref{eq:b13ch8706u09}) and (\ref{eq:tmf097q6kika}), we obtain
$\PREF^2_{F, \trans_i(s)} \cap \bar{\PREF}^2_{F, i}(f_i(s)) \neq \emptyset$, which conflicts with $F \in \mathscr{F}_{2\hdec}$.
\end{proof}

\subsection{Proof of Lemma \ref{lem:complete-leaf}}
\label{subsec:proof-complete-leaf}

To prove Lemma \ref{lem:complete-leaf}, we use the following Lemma \ref{lem:complete} obtained by \cite[Theorem 2]{IEICE2023} with $k = 2$.
See \cite{IEICE2023} for the original statement and the proof.

\begin{lemma}
\label{lem:complete}
For any $F \in \mathscr{F}_{\opt}$, $i \in \kernel_F$, and $\pmb{b} = b_1b_2\ldots b_l \in \mathcal{C}^{\ast}$, if $|\pmb{b}| \geq 2$ and $b_1b_2 \in \PREF^2_{F, i}$, then $\pmb{b} \in \PREF^{\ast}_{F, i}$,
where $\PREF^{\ast}_{F, i} \coloneqq \PREF^0_{F, i} \cup \PREF^1_{F, i} \cup \PREF^2_{F, i} \cup \cdots$.
\end{lemma}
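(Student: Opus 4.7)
The plan is to prove Lemma \ref{lem:complete} by contradiction. Assume $F \in \mathscr{F}_{\opt}$, $i \in \kernel_F$, and $\pmb{b} = b_1 b_2 \ldots b_l$ with $l \geq 2$ and $b_1 b_2 \in \PREF^2_{F, i}$, but $\pmb{b} \not\in \PREF^{\ast}_{F, i}$. Choose such a counterexample of minimal length; then the proper prefix $\pmb{b}^{\prime} := b_1 \ldots b_{l-1}$ satisfies $\pmb{b}^{\prime} \in \PREF^{\ast}_{F, i}$ while $\pmb{b}^{\prime} b_l \not\in \PREF^{\ast}_{F, i}$, and $l \geq 3$.

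First I would locate a codeword that is structurally forced to continue with $\bar{b_l}$ after the position $\pmb{b}^{\prime}$ in the codeword stream. Since $\pmb{b}^{\prime} \in \PREF^{\ast}_{F, i}$, there exists $\pmb{x} \in \mathcal{S}^{\ast}$ with $f^{\ast}_i(\pmb{x}) \succeq \pmb{b}^{\prime}$; decomposing $\pmb{x} = \pmb{y} s \pmb{z}$ so that the boundary $\pmb{b}^{\prime}$ falls inside the codeword $f_j(s)$ at state $j = \trans^{\ast}_i(\pmb{y})$, the failure of $\pmb{b}^{\prime} b_l \in \PREF^{\ast}_{F, i}$ forces every codeword or transition continuation at this juncture to produce $\bar{b_l}$ rather than $b_l$. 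Using $i \in \kernel_F$ together with Lemma \ref{lem:kernel} (ii), all reachable code tables are visited with strictly positive stationary probability, so this "wasted" bit $\bar{b_l}$ appears with positive probability in the encoding of a typical source sequence.

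Next I would construct a modified code-tuple $F^{\prime}$ by deleting this forced bit from the offending codeword, aiming to produce $F^{\prime} \in \mathscr{F}_0$ with $L(F^{\prime}) < L(F)$, contradicting $F \in \mathscr{F}_{\opt}$. To preserve $2$-bit delay decodability, the edit must be chosen so that the freed position collides with no other codeword and does not push a $\PREF^2$ set into a conflicting $\bar{\PREF}^2$ set. The hypothesis $b_1 b_2 \in \PREF^2_{F, i}$ (not merely $b_1 \in \PREF^1_{F, i}$) is used here to guarantee a two-bit lookahead window in which the deletion can be absorbed: since $\bar{b_l}$ occurred exclusively in the context following $\pmb{b}^{\prime}$ and $\pmb{b}$ has length at least $2$, the $\PREF^2$-disjointness conditions of Definition \ref{def:k-bitdelay} can be re-verified after the edit by a direct comparison. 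Regularity and extendability survive because the transition functions $\trans_0, \trans_1, \ldots$ are left untouched, so the transition graph and hence $\kernel_{F^{\prime}}$ are unchanged, and no code table becomes empty.

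The main obstacle is certifying $F^{\prime} \in \mathscr{F}_{2\hdec}$: shortening a codeword shrinks its own contribution to various $\PREF^2_{F, j}(\pmb{b})$ sets but can simultaneously enlarge some $\bar{\PREF}^2_{F^{\prime}, j^{\prime}}(\pmb{b}^{\prime\prime})$ sets through the recursion of Lemma \ref{lem:pref-sum} (iii). Handling this cleanly requires a case analysis on whether the deleted bit is the terminal bit of the codeword (a pure truncation) or an interior bit (effectively relocating a suffix), combined with careful accounting of which other source symbols could produce conflicting $2$-bit continuations. This bookkeeping, rather than any conceptual subtlety, is where the bulk of the work lies, and it is essentially the content of the referenced \cite[Theorem 2]{IEICE2023}.
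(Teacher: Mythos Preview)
The paper does not prove this lemma in the text; it records that Lemma~\ref{lem:complete} is the case $k=2$ of \cite[Theorem~2]{IEICE2023} and refers the reader there for the proof. Your sketch follows what is the natural (and almost certainly the referenced) route: assume a gap, take a minimal one so that $\pmb{b}'=\pref(\pmb{b})$ is reachable but $\pmb{b}$ is not, observe that the bit following $\pmb{b}'$ is forced to be $\bar{b_l}$ and hence carries no information, delete it, and contradict optimality. The use of $i\in\kernel_F$ to guarantee that the affected code table has positive stationary weight (via Lemma~\ref{lem:kernel}~(ii)) is correct, since any table reachable from a table in $\kernel_F$ is again in $\kernel_F$.

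One point to tighten: your phrase ``the boundary $\pmb{b}'$ falls inside the codeword $f_j(s)$'' presumes a single codeword is responsible, but in general the position $\pmb{b}'$ may coincide with a codeword boundary (possibly after a chain of empty codewords), and even when it falls strictly inside some $f_j(s)$ there will typically be several codewords of $f_j$ sharing the same prefix up to that position, all of which must be shortened consistently. This does not invalidate your argument, but the edit is to a \emph{family} of codewords in a single table rather than to one codeword, and the two situations (interior position versus codeword boundary) need separate handling. You already flag the $\mathscr{F}_{2\hdec}$ verification as the main technical burden; this structural point is part of why that verification is delicate.
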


\begin{proof}[Proof of Lemma \ref{lem:complete-leaf}]
Assume $\bar{\PREF}^0_{F, i}(f_i(s)) = \emptyset$ and $|\assign_{F, i}(f_i(s))| = 1$.
We prove by contradiction assuming $|\PREF^2_{F, \trans_i(s)}| < 4$, that is, there exists
\begin{equation}
\label{eq:jze6jqdy4507}
\pmb{b} = b_1b_2 \in \mathcal{C}^2 \setminus \PREF^2_{F, \trans_i(s)}.
\end{equation}
First, we put
\begin{equation}
\label{eq:tvzly8tb85os}
\pmb{d} = d_1d_2\ldots d_l \coloneqq f_i(s)\pmb{b}
\end{equation}
and show
\begin{equation}
\label{eq:2opxijz9ic97}
d_1d_2 \in \PREF^2_{F, i}
\end{equation}
considering the following three cases separately: the case $|f_i(s)| = 0$, the case $|f_i(s)| = 1$, and the case $|f_i(s)| \geq 2$.
\begin{itemize}
\item The case $|f_i(s)| = 0$:
We have
\begin{equation}
\bar{\PREF}^0_{F, i}(f_i(s))
\overset{(\mathrm{A})}{=} \bar{\PREF}^0_{F, i}
\overset{(\mathrm{B})}{\neq} \emptyset,
\end{equation}
where
(A) follows from $|f_i(s)| = 0$,
and (B) follows from $|\assign_{F, i}(f_i(s))| = 1$ and Lemma \ref{lem:leaf} (iii).
This conflicts with the assumption.
Therefore, the case $|f_i(s)| = 0$ is impossible.

\item The case $|f_i(s)| = 1$: Then we have $f_i(s) = d_1$ by (\ref{eq:tvzly8tb85os}).
Also, we have $d_2 \in \{0, 1\} = \PREF^1_{F, \trans_i(s)}$ by $F \in \mathscr{F}_1$.
Thus, there exists $\pmb{x} \in \mathcal{S}^{+}$ such that $f^{\ast}_{\trans_i(s)}(\pmb{x}) \succeq d_2$.
Then we have $f^{\ast}_i(s\pmb{x}) = f_i(s)f^{\ast}_{\trans_i(s)}(\pmb{x}) \succeq d_1d_2$, which leads to (\ref{eq:2opxijz9ic97}).
\item The case $|f_i(s)| \geq 2$: Directly from $f_i(s) \succeq d_1d_2$ by (\ref{eq:tvzly8tb85os}).
\end{itemize}
Consequently, (\ref{eq:2opxijz9ic97}) holds.

By $i \in \kernel_{F}$ and (\ref{eq:2opxijz9ic97}), we obtain $\pmb{d} \in \PREF^{\ast}_{F, i}$ applying Lemma \ref{lem:complete}.
Hence, there exists $\pmb{y} = y_1y_2\ldots y_n \in \mathcal{S}^{+}$ such that
\begin{equation}
\label{eq:3y3n7k0vxrmk}
f^{\ast}_i(\pmb{y}) \succeq \pmb{d}.
\end{equation}
By (\ref{eq:tvzly8tb85os}) and (\ref{eq:3y3n7k0vxrmk}), exactly one of $f_i(y_1) \succ f_i(s)$ and $f_i(y_1) \preceq f_i(s)$ holds.
Now, the latter $f_i(y_1) \preceq f_i(s)$ must hold because the former $f_i(y_1) \succ f_i(s)$ conflicts with $\bar{\PREF}^0_{F, i}(f_i(s)) = \emptyset$ by Lemma \ref{lem:leaf} (i).
Therefore, there exists $\pmb{c} = c_1c_2\ldots c_r \in \mathcal{C}^{\ast}$ such that $f_i(y_1)\pmb{c} = f_i(s)$.
We divide into the following three cases by $|\pmb{c}|$.

\begin{itemize}
\item The case $|\pmb{c}| = 0$: We have $f_i(y_1) = f_i(s)$, which leads to $y_1 = s$ by $|\assign_{F, i}(f_i(s))| = 1$.
Hence, we have
\begin{equation}
f_i(s)f^{\ast}_{\trans_i(s)}(\suff(\pmb{y}))
=  f_i(y_1)f^{\ast}_{\trans_i(y_1)}(\suff(\pmb{y}))
= f^{\ast}_i(\pmb{y})
\overset{(\mathrm{A})}{\succeq} \pmb{d}
\overset{(\mathrm{B})}{=} f_i(s)\pmb{b},
\end{equation}
where
(A) follows from (\ref{eq:3y3n7k0vxrmk}),
and (B) follows from (\ref{eq:tvzly8tb85os}).
Comparing both sides, we obtain $\allowbreak f^{\ast}_{\trans_i(s)}(\suff(\pmb{y})) \succeq \pmb{b}$.
This leads to $\pmb{b} \in \PREF^2_{F, \trans_i(s)}$, which conflicts with (\ref{eq:jze6jqdy4507}).

\item The case $|\pmb{c}| = 1$: We have $f_i(y_1) = f_i(s)c_1$, which conflicts with Lemma \ref{lem:propertyT1} (v).

\item The case $|\pmb{c}| \geq 2$: We have
\begin{equation}
\label{eq:eap0acivmujc}
f_i(y_1)c_1c_2 \preceq f_i(s).
\end{equation}
which leads to
\begin{equation}
\label{eq:t9ucivzc29gh}
c_1c_2 \in \bar{\PREF}^2_{F, i}(f_i(y_1)).
\end{equation}
Also, we have
\begin{equation}
f_i(y_1)f^{\ast}_{\trans_i(y_1)}(\suff(\pmb{y}))
= f^{\ast}_i(\pmb{y})
\overset{(\mathrm{A})}{\succeq} \pmb{d}
\overset{(\mathrm{B})}{=} f_i(s)\pmb{b}
\succeq f_i(s)
\overset{(\mathrm{C})}{\succeq} f_i(y_1)c_1c_2,
\end{equation}
where
(A) follows from (\ref{eq:3y3n7k0vxrmk}),
(B) follows from (\ref{eq:tvzly8tb85os}),
and (C) follows from (\ref{eq:eap0acivmujc}).
Comparing both sides, we obtain $f^{\ast}_{\trans_i(y_1)}(\suff(\pmb{y})) \succeq c_1c_2$,
which leads to
\begin{equation}
\label{eq:xug0c0smuwng}
c_1c_2 \in \PREF^2_{F, \trans_i(y_1)}.
\end{equation}
By (\ref{eq:t9ucivzc29gh}) and (\ref{eq:xug0c0smuwng}), we obtain $\bar{\PREF}^2_{F, i}(f_i(y_1)) \cap \PREF^2_{F, \trans_i(y_1)} \neq \emptyset$, which conflicts with $F \in \mathscr{F}_{2\hdec}$.
\end{itemize}
\end{proof}

\subsection{Proof of Lemma \ref{lem:tilde-equal} (iii)}
\label{subsec:proof-tilde-equal}

To prove Lemma \ref{lem:tilde-equal} (iii), we prove the following Lemmas \ref{lem:tilde-equal0} and \ref{lem:tilde-equal02}.

\begin{lemma}
\label{lem:tilde-equal0}
Let $F \in \mathscr{F}_1$, $i \in [F]$, and $s, s' \in \mathcal{S}$, and let $\gamma(s_1)\gamma(s_2)\ldots \gamma(s_{\rho})$ (resp. $\gamma(s'_1)\gamma(s'_2)\ldots \gamma(s'_{\rho'})$) be the $\gamma$-decomposition of $f_i(s)$ (resp. $f_i(s')$).
For any $r = 1, 2, \ldots, m \coloneqq \min\{\rho, \rho'\}$, 
if one of the following conditions (a) and (b) holds, then $\gamma(s_r) = \gamma(s'_r) \iff \dot{\gamma}(s_{r}) = \dot{\gamma}(s'_r) $:
\begin{enumerate}[(a)]
\item $r = 1$.
\item $r \geq 2$ and $s_{r-1} = s'_{r-1}$.
\end{enumerate}
\end{lemma}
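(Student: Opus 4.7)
My plan is a direct case analysis on the six cases of definition (\ref{eq:6807mxs1xye4}). The crucial preliminary observation is that under hypothesis (a) or (b), the case-selection criteria for computing $\dot\gamma(s_r)$ and $\dot\gamma(s'_r)$ agree, since they depend only on $F$, $i$, and, when $r \geq 2$, on $s_{r-1}$—all common data under the hypotheses. Once we fix a common case, the forward direction is immediate: each of the six formulas produces $\dot\gamma(s_r)$ from the letters of $\gamma(s_r)$ together with constants ($a_{F,i}$, $\bar a_{F,\trans_i(s_{r-1})}$, or a literal $0$ or $1$) that are independent of $s_r$, so $\gamma(s_r) = \gamma(s'_r)$ at once gives $\dot\gamma(s_r) = \dot\gamma(s'_r)$.

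For the reverse direction, cases 2 and 6 are trivial since there $\dot\gamma = \gamma$. Cases 1, 3, 4, 5 are the genuine work. The map $\gamma \mapsto \dot\gamma$ is length-preserving (Lemma \ref{lem:tilde-equal} (i), already proved earlier in the lemma), so $\dot\gamma(s_r) = \dot\gamma(s'_r)$ forces $|\gamma(s_r)| = |\gamma(s'_r)|$ and, position by position, makes the retained tail $g_3 g_4 \ldots g_l$ agree between the two sides. It remains to recover the letters dropped by the dot operation. In cases 4 and 5 the hypothesis $|\bar\PREF^1_{F,i}(f_i(s_{r-1}))| = 1$ forces $g_1$ to be uniquely determined by $s_{r-1}$, while in cases 1 and 3 the letter $g_1$ is preserved by the formula and hence automatically matches $g'_1$. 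To recover $g_2$ in case 1, I will apply Lemma \ref{lem:propertyT1} (ii) (b): $\PREF^2_{F,i} = \{0a, 1b\}$, so $g_2$ is a function of $g_1$. For cases 3, 4, 5, I will first use Lemma \ref{lem:propertyT1} (v) to conclude $\assign_{F,i}(f_i(s_{r-1}) g_1) = \emptyset$; this collapses $\PREF^1_{F,i}(f_i(s_{r-1}) g_1)$ to $\bar\PREF^1_{F,i}(f_i(s_{r-1}) g_1)$ via Lemma \ref{lem:pref-sum} (i), and the latter has cardinality at most $1$ by Lemma \ref{lem:propertyT1} (vi); hence $g_2$ is uniquely determined by $s_{r-1}$ and $g_1$. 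Assembling these recoveries yields $\gamma(s_r) = \gamma(s'_r)$.

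The main obstacle is the careful bookkeeping required across the four nontrivial cases: for each one must match the letter dropped by the dot operation with the correct structural lemma that recovers it. The combinatorial linchpin is the collapse $\PREF^1 = \bar\PREF^1$ at prefixes of the form $f_i(s_{r-1}) g_1$, which follows from Lemma \ref{lem:propertyT1} (v); once this is in hand, Lemma \ref{lem:propertyT1} (vi) provides the needed uniqueness of $g_2$, and the other recoveries (of $g_1$ in cases 4, 5 via $|\bar\PREF^1| = 1$, and of $g_2$ in case 1 via $|\PREF^2_{F,i}| = 2$) are direct.
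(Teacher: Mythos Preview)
Your proposal is correct and uses essentially the same ingredients as the paper's proof: length preservation (Lemma~\ref{lem:tilde-equal}~(i)), the structure $\PREF^2_{F,i}=\{0a,1b\}$ from Lemma~\ref{lem:propertyT1}~(ii)(b) for the case $r=1,\,|\PREF^2_{F,i}|=2$, and the combination of Lemma~\ref{lem:propertyT1}~(v) and~(vi) to pin down $g_2$ in the $r\geq 2$ cases. The only organizational difference is that the paper argues by contraposition and, for $r\geq 2$, shows that $g_1g_2\neq g'_1g'_2$ forces $g_1\neq g'_1$ (hence $|\bar\PREF^1_{F,i}(f_i(s_{r-1}))|=2$, landing in the third case of~(\ref{eq:6807mxs1xye4})), whereas you work directly through all six cases and recover the dropped letters---but the underlying mechanism is the same.
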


\begin{proof}[Proof of Lemma \ref{lem:tilde-equal0}]
Assume that the condition (a) or (b) holds.

($\implies$) Directly from (\ref{eq:6807mxs1xye4}).

($\impliedby$) We prove the contraposition.
Namely, we prove $\dot{\gamma}(s_r) \neq \dot{\gamma}(s'_r)$ assuming $\gamma(s_{r}) \neq \gamma(s'_r)$.
Put $\gamma(s_{r}) = g_1g_2\ldots g_l$ and $\gamma(s'_r) = g'_1g'_2\ldots g'_{l'}$.
We consider the following two cases separately: the case $|\gamma(s_r)| \neq |\gamma(s'_r)|$ and the case $|\gamma(s_r)| = |\gamma(s'_r)|$.
\begin{itemize}
\item The case $|\gamma(s_r)| \neq |\gamma(s'_r)|$: We have
\begin{equation}
|\dot{\gamma}(s_r)| \overset{(\mathrm{A})}{=} |\gamma(s_r)| \overset{(\mathrm{B})}{\neq} |\gamma(s'_r)| \overset{(\mathrm{C})}{=} |\dot{\gamma}(s'_r)|,
\end{equation}
where
(A) follows from Lemma \ref{lem:tilde-equal} (i),
(B) follows from the assumption,
and (C) follows from Lemma \ref{lem:tilde-equal} (i).
This implies $\dot{\gamma}(s_r) \neq \dot{\gamma}(s'_r)$ as desired.

\item The case $|\gamma(s_r)| = |\gamma(s'_r)|$:
If $|\gamma(s_r)| = |\gamma(s'_r)| \geq 3$ and $g_3g_4\ldots g_{l} \neq g'_3g'_4\ldots g'_{l'}$, then we obtain $\dot{\gamma}(s_r) \neq \dot{\gamma}(s'_r)$ directly from (\ref{eq:6807mxs1xye4}).
Thus, we assume 
\begin{equation}
\label{eq:mzoa7o8x11w8}
g_j \neq g'_j \,\,\text{for some}\,\, 1 \leq j \leq \min\{2, |\gamma(s_r)|\}.
\end{equation}

We divide into the following two cases by which of the conditions (a) and (b) holds: the case $r = 1$ and the case $r \geq 2, s_{r-1} = s'_{r-1}$.
\begin{itemize}
\item The case $r = 1$: We consider the following two cases separately: the case $|\PREF^2_{F, i}| = 2$ and the case $|\PREF^2_{F, i}| \geq 3$.

\begin{itemize}
\item The case $|\PREF^2_{F, i}| = 2$: 
By Lemma \ref{lem:propertyT1} (ii), we have $\PREF^2_{F, i} = \{0a, 1b\}$ for some $a, b \in \mathcal{C}$ and
we have $|\gamma(s_1)| = |\gamma(s'_1)| \geq 2$.
This shows $g_1g_2, g'_1g'_2 \in \{0a, 1b\}$.
Hence, since $g_1g_2 \neq g'_1g'_2$ by (\ref{eq:mzoa7o8x11w8}), we may assume
\begin{equation}
\label{eq:4yzbb62w8vq0}
g_1 \neq g'_1.
\end{equation}
Thus, we obtain
\begin{equation}
\dot{\gamma}(s_r) \overset{(\mathrm{A})}{=} a_{F, i} g_1 g_3 g_4 \ldots g_l  \overset{(\mathrm{B})}{\neq} a_{F, i} g'_1 g'_3 g'_4 \ldots g'_{l'}  \overset{(\mathrm{C})}{=}  \dot{\gamma}(s'_r)
\end{equation}
as desired, where
(A) follows from the first case of (\ref{eq:6807mxs1xye4}) since $r = 1$ and $|\PREF^2_{F, i}| = 2$,
(B) follows from (\ref{eq:4yzbb62w8vq0}),
and (C) follows from the first case of (\ref{eq:6807mxs1xye4}) since $r = 1$ and $|\PREF^2_{F, i}| = 2$.

\item The case $|\PREF^2_{F, i}| \geq 3$: 
We obtain
\begin{equation}
\dot{\gamma}(s_r) \overset{(\mathrm{A})}{=} \gamma(s_r)  \overset{(\mathrm{B})}{\neq} \gamma(s'_r) \overset{(\mathrm{C})}{=}  \dot{\gamma}(s'_r)
\end{equation}
as desired, where
(A) follows from the second case of (\ref{eq:6807mxs1xye4}) since $r = 1$ and $|\PREF^2_{F, i}| \geq 3$,
(B) follows from (\ref{eq:mzoa7o8x11w8}),
and (C) follows from the second case of (\ref{eq:6807mxs1xye4}) since $r = 1$ and $|\PREF^2_{F, i}| \geq 3$.
\end{itemize}

\item The case $r \geq 2, s_{r-1} = s'_{r-1}$:
By Lemma \ref{lem:dot-length} (iii), we have $g_1g_2 \in \bar{\PREF}^2_{F, i}(f_i(s_{r-1}))$ and $g'_1g'_2 \in \bar{\PREF}^2_{F, i}(f_i(s'_{r-1}))$.
Since $s_{r-1} = s'_{r-1}$, we have
\begin{equation}
\label{eq:zwl5e2nk9t36}
\{g_1g_2, g'_1g'_2\} \subseteq \bar{\PREF}^2_{F, i}(f_i(s_{r-1})).
\end{equation}

Now, we show
\begin{equation}
\label{eq:ja6s7uodbdon}
g_1 \neq g'_1
\end{equation}
by contradiction assuming the contrary $g_1 = g'_1$.
Then by (\ref{eq:mzoa7o8x11w8}), it must hold that $|\gamma(s_r)| = |\gamma(s'_r)| \geq 2$ and $g_2 \neq g'_2$.
Hence, we have
\begin{eqnarray}
\label{eq:s0xt1lbh5fvr}
g_1\PREF^1_{F, i}(f_i(s_{r-1})g_1) \cup \bar{g_1}\PREF^1_{F, i}(f_i(s_{r-1})\bar{g_1})
&\overset{(\mathrm{A})}{=}& \bar{\PREF}^2_{F, i}(f_i(s_{r-1}))\\
&\overset{(\mathrm{B})}{\supseteq}& \{g_1g_2, g'_1g'_2\}\\
&\overset{(\mathrm{C})}{=}& g_1\{g_2, g'_2\}\\
&\overset{(\mathrm{D})}{=}& g_1\{0, 1\},
\end{eqnarray}
where
(A) follows from Lemma \ref{lem:pref-sum} (iii), 
(B) follows from (\ref{eq:zwl5e2nk9t36}),
(C) follows from $g_1 = g'_1$ and (\ref{eq:ja6s7uodbdon}),
and (D) follows from $g_2 \neq g'_2$.
Comparing both sides of (\ref{eq:s0xt1lbh5fvr}), we obtain $\PREF^1_{F, i}(f_i(s_{r-1})g_1) = \{0, 1\}$, which conflicts with Lemma \ref{lem:propertyT1} (vi).
Hence, we conclude that (\ref{eq:ja6s7uodbdon}) holds.

We have
\begin{equation}
\label{eq:6mg0ji0y1mca}
|\bar{\PREF}^1_{F, i}(f_i(s_{r-1}))| \overset{(\mathrm{A})}{=} |\{g_1, g'_1\}| \overset{(\mathrm{B})}{=} |\{0, 1\}| = 2,
\end{equation}
where (A) follows from (\ref{eq:zwl5e2nk9t36}) and Lemma \ref{lem:pref-inc} (ii),
and (B) follows from (\ref{eq:ja6s7uodbdon}).
Therefore, we obtain
\begin{equation}
\dot{\gamma}(s_r) \overset{(\mathrm{A})}{=} \bar{a}_{F, \trans_i(s_{r-1})} g_1 g_3 g_4 \ldots g_l  \overset{(\mathrm{B})}{\neq} \bar{a}_{F, \trans_i(s'_{r-1})} g'_1 g'_3 g'_4 \ldots g'_{l'}  \overset{(\mathrm{C})}{=}  \dot{\gamma}(s'_r)
\end{equation}
as desired, where
(A) follows from the third case of (\ref{eq:6807mxs1xye4}) since $r \geq 2$ and (\ref{eq:6mg0ji0y1mca}) hold,
(B) follows from (\ref{eq:ja6s7uodbdon}),
and (C) follows from the third case of (\ref{eq:6807mxs1xye4}) since $r \geq 2$ and (\ref{eq:6mg0ji0y1mca}) hold.
\end{itemize}

\end{itemize}
\end{proof}

\begin{lemma}
\label{lem:tilde-equal02}
Let $F \in \mathscr{F}_1$, $i \in [F]$, and $s, s' \in \mathcal{S}$, and let $\gamma(s_1)\gamma(s_2)\ldots \gamma(s_{\rho})$ (resp. $\gamma(s'_1)\gamma(s'_2)\ldots \gamma(s'_{\rho'})$) be the $\gamma$-decomposition of $f_i(s)$ (resp. $f_i(s')$).
If $\dot{f}_i(s) \preceq \dot{f}_i(s')$, then 
for any $r = 1, 2, \ldots, m \coloneqq \min\{\rho, \rho'\}$, we have $\gamma(s_{r}) = \gamma(s'_{r})$.
\end{lemma}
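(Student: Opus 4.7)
\emph{Proof plan.} The plan is to proceed by strong induction on $r$ from $r = 1$ to $r = m$, at each stage assuming $\gamma(s_j) = \gamma(s'_j)$ for all $j < r$ (vacuous at $r = 1$) and deriving $\gamma(s_r) = \gamma(s'_r)$. For $r \geq 2$, I would first upgrade the induction hypothesis to $s_{r-1} = s'_{r-1}$: from $f_i(s_{r-1}) = f_i(s'_{r-1})$ together with $f_i(s_{r-1}) \prec f_i(s_r)$ (which holds because $r - 1 < r \leq \rho$), Lemma \ref{lem:leaf} (i) gives $\bar{\PREF}^0_{F, i}(f_i(s_{r-1})) \neq \emptyset$, so Lemma \ref{lem:propertyT1} (iv) forces $|\assign_{F, i}(f_i(s_{r-1}))| = 1$, whence $s_{r-1} = s'_{r-1}$. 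Lemma \ref{lem:tilde-equal0} then converts the induction hypothesis into $\dot{\gamma}(s_j) = \dot{\gamma}(s'_j)$ for $j < r$, and cancelling this common prefix from $\dot{f}_i(s) \preceq \dot{f}_i(s')$ yields $\dot{\gamma}(s_r)\dot{\gamma}(s_{r+1})\cdots \preceq \dot{\gamma}(s'_r)\dot{\gamma}(s'_{r+1})\cdots$, so $\dot{\gamma}(s_r)$ and $\dot{\gamma}(s'_r)$ are comparable under $\preceq$. The goal reduces to $|\dot{\gamma}(s_r)| = |\dot{\gamma}(s'_r)|$, since then $\dot{\gamma}(s_r) = \dot{\gamma}(s'_r)$ and Lemma \ref{lem:tilde-equal0} yields $\gamma(s_r) = \gamma(s'_r)$. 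The base case $r = 1$ is the direct analogue, skipping the step producing $s_{r-1} = s'_{r-1}$ and comparing $\dot{\gamma}(s_1)$ and $\dot{\gamma}(s'_1)$ immediately from the hypothesis.

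Write $\gamma(s_r) = g_1 g_2 \ldots g_l$ and $\gamma(s'_r) = g'_1 g'_2 \ldots g'_{l'}$. The technical heart is the following \emph{key claim}: if $\dot{\gamma}(s_r) \prec \dot{\gamma}(s'_r)$, then $\gamma(s_r) \prec \gamma(s'_r)$. I would verify this by inspecting the six subcases of (\ref{eq:6807mxs1xye4}). The subcases where $\dot{\gamma}(s_r) = \gamma(s_r)$ (the second and sixth) are immediate. In the remaining subcases, agreement on the first $l$ bits of $\dot{\gamma}(s_r)$ and $\dot{\gamma}(s'_r)$ yields $g_j = g'_j$ for $j \geq 3$, and the work reduces to matching $g_1 g_2 = g'_1 g'_2$. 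This splits further: when $|\PREF^2_{F, i}| = 2$ (subcase at $r = 1$) or $|\bar{\PREF}^1_{F, i}(f_i(s_{r-1}))| = 2$ (subcase for $r \geq 2$), the admissible values of $g_1 g_2$ form an explicit two-element set (via Lemma \ref{lem:propertyT1} (ii)(b), respectively via Lemmas \ref{lem:propertyT1} (vi) and \ref{lem:pref-sum} (iii)), so $g_1 = g'_1$ forces $g_2 = g'_2$; when $|\bar{\PREF}^1_{F, i}(f_i(s_{r-1}))| = 1$, the singleton forces $g_1 = g'_1$, and Lemma \ref{lem:propertyT1} (v) gives $\assign_{F, i}(f_i(s_{r-1})g_1) = \emptyset$, which by Lemma \ref{lem:pref-sum} (i) collapses $\PREF^1_{F, i}(f_i(s_{r-1})g_1)$ to $\bar{\PREF}^1_{F, i}(f_i(s_{r-1})g_1)$, a singleton by Lemma \ref{lem:propertyT1} (vi), whence $g_2 = g'_2$.

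Granted the key claim, I would close as follows: $f_i(s_r) = f_i(s_{r-1})\gamma(s_r) \prec f_i(s_{r-1})\gamma(s'_r) = f_i(s'_r) \preceq f_i(s')$ (with the obvious simplification at $r = 1$), so $s_r \in \assign^{\prec}_{F, i}(f_i(s'))$; hence some $j \in \{1, \ldots, \rho' - 1\}$ satisfies $f_i(s'_j) = f_i(s_r)$, but the strict chain $f_i(s'_{r-1}) \prec f_i(s_r) \prec f_i(s'_r)$ forces $r - 1 < j < r$, which has no integer solution (for $r = 1$, $f_i(s_1) \prec f_i(s'_1)$ already contradicts the minimality of $|f_i(s'_1)|$ among the codewords in the chain of $f_i(s')$). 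The symmetric case $\dot{\gamma}(s'_r) \prec \dot{\gamma}(s_r)$ is handled by the same argument with $s$ and $s'$ exchanged, using the chain of $s$ in place of $s'$. The main obstacle will be the case analysis underlying the key claim: although each subcase is short, the six subcases of (\ref{eq:6807mxs1xye4}) each require a slightly different combination of Lemmas \ref{lem:propertyT1}, \ref{lem:pref-sum}, and \ref{lem:dot-length} to pin down the first two bits $g_1 g_2$ purely from the structural constraints.
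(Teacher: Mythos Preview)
Your approach is correct and follows essentially the same inductive structure as the paper's proof: cancel the common prefix, establish that $\dot{\gamma}(s_r)$ and $\dot{\gamma}(s'_r)$ are comparable, and derive a contradiction from strict inequality via membership in $\assign^{\prec}_{F,i}(f_i(s'))$. Your explicit ``key claim'' (that $\dot{\gamma}(s_r) \prec \dot{\gamma}(s'_r)$ forces $\gamma(s_r) \prec \gamma(s'_r)$), argued by case analysis on (\ref{eq:6807mxs1xye4}), is in fact more careful than the paper, which writes the contradiction hypothesis directly as $\gamma(s_q) \prec \gamma(s'_q)$ and proceeds without justifying the passage from $\dot{\gamma}$ to $\gamma$.
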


\begin{proof}[Proof of Lemma \ref{lem:tilde-equal02}]
Assume 
\begin{equation}
\label{eq:w464gftu6uii}
\dot{f}_i(s) \preceq \dot{f}_i(s').
\end{equation}
It suffices to prove that the following conditions (a) and (b) hold for any $r = 1, 2, \ldots, m$ by induction for $r$.
\begin{enumerate}[(a)]
\item $\gamma(s_{r}) = \gamma(s'_{r})$.
\item If $r \neq m$, then $s_{r} = s'_{r}$.
\end{enumerate}
We fix $q \geq 1$ and show that (a) and (b) hold for $r = q$ under the assumption that (a) and (b) hold for any $r = 1, 2, \ldots, q-1$.

We first show that the condition (a) holds for $r = q$.
We have
\begin{eqnarray}
\dot{f}_i(s_{q-1})\dot{\gamma}(s_q) \dot{\gamma}(s_{q+1}) \ldots \dot{\gamma}(s_{\rho})
&=& \dot{f}_i(s)\\
&\overset{(\mathrm{A})}{\preceq}& \dot{f}_i(s')\\
&=& \dot{f}_i(s'_{q-1})\dot{\gamma}(s'_q) \dot{\gamma}(s'_{q+1}) \ldots \dot{\gamma}(s'_{\rho'})\\
&\overset{(\mathrm{B})}{=}& \dot{f}_i(s_{q-1})\dot{\gamma}(s'_q) \dot{\gamma}(s'_{q+1}) \ldots \dot{\gamma}(s'_{\rho'})
\end{eqnarray}
where
we suppose $\dot{f}_i(s_{q-1}) \coloneqq \lambda$ for the case $q = 1$, and
(A) follows from (\ref{eq:w464gftu6uii}),
and (B) follows from the induction hypothesis.
Comparing both sides, we have
\begin{equation}
\dot{\gamma}(s_q) \dot{\gamma}(s_{q+1}) \ldots \dot{\gamma}(s_{\rho}) \preceq \dot{\gamma}(s'_q) \dot{\gamma}(s'_{q+1}) \ldots \dot{\gamma}(s'_{\rho'}).
\end{equation}
Hence, at least one of $\dot{\gamma}(s_q) \preceq \dot{\gamma}(s'_q)$ and $\dot{\gamma}(s_q) \succeq \dot{\gamma}(s'_q)$ holds.
We show that both relations hold, that is,
\begin{equation}
\label{eq:q4w3fk0eznui}
\dot{\gamma}(s_q) = \dot{\gamma}(s'_q)
\end{equation}
by contradiction.
Assume that one does not hold, that is, $\gamma(s_q) \prec \gamma(s'_q)$ by symmetry.
Then we have
\begin{eqnarray}
f_i(s_q)
&=& \gamma(s_1)\gamma(s_2)\ldots \gamma(s_{q-1})\gamma(s_q)\\
&\overset{(\mathrm{A})}{=}& \gamma(s'_1)\gamma(s'_2)\ldots \gamma(s'_{q-1})\gamma(s_q)\\
&\prec& \gamma(s'_1)\gamma(s'_2)\ldots \gamma(s'_{q-1})\gamma(s'_q)\\
&=& f_i(s'_q), \label{eq:t9s1ps62bn2y}
\end{eqnarray}
where (A) follows from the induction hypothesis.
Hence, we obtain
\begin{equation}
s_q
\overset{(\mathrm{A})}{\in} \assign^{\prec}_{F, i}(f_i(s'_q))
=  \{s'_1, s'_2, \ldots s'_{q-1}\}
\overset{(\mathrm{B})}{=} \{s_1, s_2, \ldots s_{q-1}\},
\end{equation}
where (A) follows from (\ref{eq:t9s1ps62bn2y}),
and (B) follows from the induction hypothesis.
This conflicts with the definition of $\gamma$-decomposition of $f_i(s'_{\rho'})$.
Consequently, (\ref{eq:q4w3fk0eznui}) holds.

Since $q = 1$ or $s_{q-1} = s'_{q-1}$ hold by the induction hypothesis and (\ref{eq:q4w3fk0eznui}) holds, we obtain $\gamma(s_q) = \gamma(s'_q)$ by applying Lemma \ref{lem:tilde-equal0}.
Namely, the condition (a) holds for $r = q$.

Next, we show that the condition (b) holds for $r = q$.
We have
\begin{equation}
\label{eq:d8wtr5uk5a99}
f_i(s_q) = \gamma(s_1)\gamma(s_2)\ldots \gamma(s_q)
\overset{(\mathrm{A})}{=} \gamma(s'_1)\gamma(s'_2)\ldots \gamma(s'_q) = f_i(s'_q),
\end{equation}
where (A) follows from the induction hypothesis and $\gamma(s_q) = \gamma(s'_q)$ proven above.
Also, if $q \neq m$, then we have $\bar{\PREF}^0_{F, i}(f_i(s_q)) \neq \emptyset$ applying Lemma \ref{lem:leaf} (i) since $f_i(s_q) \prec f_i(s_m)$.
Hence, by Lemma \ref{lem:propertyT1} (iv), we have
\begin{equation}
\label{eq:pdrwu402wgpb}
|\assign_{F, i}(f_i(s_q))| = 1.
\end{equation}
By (\ref{eq:d8wtr5uk5a99}) and (\ref{eq:pdrwu402wgpb}),
it must hold that $s_q = s'_q$.
Namely, the condition (b) holds for $r = q$.
\end{proof}

\begin{proof}[Proof of Lemma \ref{lem:tilde-equal} (iii)]
Let $\gamma(s_1)\gamma(s_2)\ldots \gamma(s_{\rho})$ (resp. $\gamma(s'_1)\gamma(s'_2)\ldots \gamma(s'_{\rho'})$) be the $\gamma$-decomposition of $f_i(s)$ (resp. $f_i(s')$).

($\implies$): Assume $f_i(s) \preceq f_i(s')$. Then we have
\begin{equation}
f_i(s') = \gamma(s_1)\gamma(s_2)\ldots \gamma(s_{\rho})\gamma(s'_{\rho+1})\gamma(s'_{\rho+2})\ldots \gamma(s'_{\rho'}).
\end{equation}
Hence, we obtain
\begin{eqnarray}
\dot{f}_i(s) &=& \dot{\gamma}(s_1)\dot{\gamma}(s_2)\ldots \dot{\gamma}(s_{\rho})\\
&\preceq& \dot{\gamma}(s_1)\dot{\gamma}(s_2)\ldots \dot{\gamma}(s_{\rho})\dot{\gamma}(s'_{\rho+1})\dot{\gamma}(s'_{\rho+2})\ldots \dot{\gamma}(s'_{\rho'})\\
&=& \dot{f}_i(s')
\end{eqnarray}
as desired.

($\impliedby$): Assume
\begin{equation}
\label{eq:s1cjkvkxq3wk}
\dot{f}_i(s) \preceq \dot{f}_i(s').
\end{equation}
Then we have
\begin{equation}
\label{eq:wigp6x40codu}
f_i(s_m) = \gamma(s_1)\gamma(s_2)\ldots \gamma(s_m)
\overset{(\mathrm{A})}{=} \gamma(s'_1)\gamma(s'_2)\ldots \gamma(s'_m) = f_i(s'_m),
\end{equation}
where $m \coloneqq \{\rho, \rho'\}$ and (A) follows from Lemma \ref{lem:tilde-equal02}.
This implies 
\begin{equation}
\label{eq:e46zvzx9um3v}
\dot{f}_i(s_m) = \dot{f}_i(s'_m)
\end{equation}
by ($\implies$) of this lemma.
We consider the following two cases separately:
the case $m = \rho \leq \rho'$ and the case $m = \rho' < \rho$.
\begin{itemize}
\item The case $m = \rho \leq \rho'$:
We have
\begin{equation}
f_i(s) = f_i(s_m)
\overset{(\mathrm{A})}{=} f_i(s'_m)
\overset{(\mathrm{B})}{\preceq} f_i(s'_m)\gamma(s'_{m+1})\gamma(s'_{m+2})\ldots \gamma(s'_{\rho'}) = f_i(s')
\end{equation}
as desired, where
(A) follows from (\ref{eq:wigp6x40codu}),
and (B) follows from $m = \rho \leq \rho'$.

\item The case $m = \rho' < \rho$:
We show that this case is impossible.
We have
\begin{equation}
\dot{f}_i(s_m)\dot{\gamma}(s_{m+1})\dot{\gamma}(s_{m+2})\ldots \dot{\gamma}(s_{\rho})
= \dot{f}_i(s)
\overset{(\mathrm{A})}{\preceq} \dot{f}_i(s')
\overset{(\mathrm{B})}{=} \dot{f}_i(s'_m)
\overset{(\mathrm{C})}{=} \dot{f}_i(s_m),
\end{equation}
where
(A) follows from (\ref{eq:s1cjkvkxq3wk}),
(B) follows from $m = \rho'$,
and (C) follows from (\ref{eq:e46zvzx9um3v}).
Comparing both sides, we obtain $\dot{\gamma}(s_{m+1})\allowbreak\dot{\gamma}(s_{m+2})\ldots \dot{\gamma}(s_{\rho}) = \lambda$, which leads to $\gamma(s_{m+1})\gamma(s_{m+2})\ldots \gamma(s_{\rho}) = \lambda$ by Lemma \ref{lem:tilde-equal} (i).
In particular, we have $\gamma(s_{m+1}) = \lambda$ by $m < \rho$.
This conflicts with Lemma \ref{lem:dot-length} (ii).
\end{itemize}
\end{proof}

\subsection{Proof of Lemma \ref{lem:tilde-pref}}
\label{subsec:proof-tilde-pref}

\begin{proof}[Proof of Lemma \ref{lem:tilde-pref}]
(Proof of (i) (a)):
For any $\pmb{x} = x_1x_2\ldots x_n \in \mathcal{S}^{\ast}$, we have
\begin{equation}
\label{eq:snt68cyg69lt}
|\dot{\gamma}(s_1)| \overset{(\mathrm{A})}{=}  |\gamma(s_1)| \overset{(\mathrm{B})}{\geq} 2,
\end{equation}
where $\gamma(s_1)\gamma(s_2)\ldots\gamma(s_{\rho})$ is the $\gamma$-decomposition of $f_i(x_1)$, and
(A) follows from Lemma \ref{lem:tilde-equal} (i),
and (B) follows from $|\PREF^2_{F, i}| = 2$ and Lemma \ref{lem:dot-length} (ii).

For any $c \in \mathcal{C}$, we have 
\begin{eqnarray}
c \in \PREF^1_{F, i}
&\iff& {}^{\exists}\pmb{x} \in \mathcal{S}^{+}; f^{\ast}_i(\pmb{x}) \succeq c \\
&\overset{(\mathrm{A})}{\iff}& {}^{\exists}\pmb{x} \in \mathcal{S}^{+}; {}^{\exists}c' \in \mathcal{C}; \gamma(s_1) \succeq cc'\\
&\overset{(\mathrm{B})}{\iff}& {}^{\exists}\pmb{x} \in \mathcal{S}^{+}; \dot{\gamma}(s_1) \succeq a_{F, i}c\\
&\overset{(\mathrm{C})}{\iff}& {}^{\exists}\pmb{x} \in \mathcal{S}^{+}; \dot{f}^{\ast}_i(\pmb{x}) \succeq a_{F, i}c\\
&\iff& a_{F, i}c \in \PREF^2_{\dot{F}, i} \label{lem:k2eis0ln9p0n},
\end{eqnarray}
where $\pmb{x} = x_1x_2\ldots x_n$, and $\gamma(s_1)\gamma(s_2)\ldots \gamma(s_{\rho})$ is the $\gamma$-decomposition of $f_i(x_1)$, and
(A) follows from (\ref{eq:snt68cyg69lt}),
(B) follows from $|\PREF^2_{F, i}| = 2$ and the first case of (\ref{eq:6807mxs1xye4}),
and (C) follows from (\ref{eq:snt68cyg69lt}).
Since $\PREF^1_{F, i} = \{0, 1\}$ by $F \in \mathscr{F}_1$, we obtain $\PREF^2_{\dot{F}, i} = \{a_{F, i}0, a_{F, i}1\}$ by (\ref{lem:k2eis0ln9p0n}) as desired.

(Proof of (i) (b)):
Assume $|\PREF^2_{F, j}| \geq 3$.
We consider the three cases of the right hand side of (\ref{eq:ldq1qk35kx1g}) separately.
\begin{itemize}
\item The case $|\bar{\PREF}^1_{F, i}(f_i(s))| = 0$: Clearly, we have $\PREF^2_{\dot{F}, j} \subseteq\{00, 01, 10, 11\}$ as desired.

\item The case $|\bar{\PREF}^1_{F, i}(f_i(s))| = 1, |\bar{\PREF}^1_{F, j}| = 1$:
We have
\begin{equation}
\label{eq:4c3d7d353z5c}
1 \overset{(\mathrm{A})}{\geq} |\assign_{F, j}(\lambda)| = \frac{2|\assign_{F, j}(\lambda)|}{2}
\overset{(\mathrm{B})}{=} \frac{\sum_{s \in \assign_{F, j}(\lambda)}|\PREF^1_{F, \trans_j(s)}|}{2}
\overset{(\mathrm{C})}{\geq}  \frac{|\PREF^1_{F, j}| - |\bar{\PREF}^1_{F, j}|}{2}
\overset{(\mathrm{D})}{=}  \frac{2-1}{2} > 0,
\end{equation}
where
(A) follows from Lemma \ref{lem:propertyT1} (iv) because $\bar{\PREF}^0_{F, j} \neq \emptyset$ holds by $|\bar{\PREF}^1_{F, j}| = 1$ and Lemma \ref{lem:pref-inc2} (ii) (a),
(B) follows since $|\PREF^1_{F, \trans_j(s)}| = 2$ from $F \in \mathscr{F}_1$,
(C) follows from Lemma \ref{lem:pref-sum} (i),
and (D) follows from $F \in \mathscr{F}_1$ and $|\bar{\PREF}^1_{F, j}| = 1$.
Thus, we have $|\assign_{F, j}(\lambda)| = 1$, that is, there exists $s' \in \mathcal{S}$ such that
\begin{equation}
\label{eq:4c3d7d353z5c}
\assign_{F, j}(\lambda) = \{s'\}.
\end{equation}

Now, we have
\begin{equation}
\label{eq:da6ansmwz4bu}
|\PREF^2_{F, \trans_j(s')}| = 2
\end{equation}
because 
\begin{eqnarray}
2 \overset{(\mathrm{A})}{\leq} |\PREF^2_{F, \trans_j(s')}|
\overset{(\mathrm{B})}{=} |\PREF^2_{F, j}| - |\bar{\PREF}^2_{F, j}| \overset{(\mathrm{C})}{\leq} |\PREF^2_{F, j}| - |\bar{\PREF}^1_{F, j}| \overset{(\mathrm{D})}{=} |\PREF^2_{F, j}| - 1 \overset{(\mathrm{E})}{\leq} 3 - 1 = 2,
\end{eqnarray}
where
(A) follows from Lemma \ref{lem:propertyT1} (i),
(B) follows from Lemma \ref{lem:pref-sum} (ii),
(C) follows from Lemma \ref{lem:pref-inc2} (ii) (b),
(D) follows from $|\bar{\PREF}^1_{F, j}| = 1$,
and (E) follows from Lemma \ref{cor:pref-sum} and $|\bar{\PREF}^1_{F, i}(f_i(s))| = 1$.

Hence, applying the first case of (i) of this lemma, we obtain
\begin{equation}
\label{eq:xrwsnm4lxs1y}
\PREF^2_{\dot{F}, \dot{\trans}_j(s')} = \{a_{F, \trans_j(s')}0, a_{F, \trans_j(s')}1\}.
\end{equation}
Also, by (\ref{eq:da6ansmwz4bu}) and Lemma \ref{lem:propertyT1} (ii) (b), we have $\bar{\PREF}^2_{F, \trans_j(s')} = \{0a, 1b\}$ for some $a, b \in \mathcal{C}$. 
Hence, by Lemma \ref{lem:pref-inc} (ii), we obtain
\begin{equation}
\label{eq:kbl91g7jnwbk}
|\bar{\PREF}^1_{F, \trans_j(s')}| = |\{0, 1\}| = 2.
\end{equation}

Thus, for any $\pmb{x} = x_1x_2\ldots, x_n \in \mathcal{S}^{+}$, we have
\begin{eqnarray}
\dot{f}_j(x_1)
&=& \dot{\gamma}(s_1)\dot{\gamma}(s_2)\ldots \dot{\gamma}(s_{\rho-1})\dot{\gamma}(s_\rho)\\
&\succeq& \dot{\gamma}(s_1)\dot{\gamma}(s_2)\\
&\overset{(\mathrm{A})}{=}& \dot{\gamma}(s')\dot{\gamma}(s_2)\\
&\overset{(\mathrm{B})}{\succeq}& \dot{\gamma}(s')\bar{a}_{F, \trans_j(s')}1\\
&\overset{(\mathrm{C})}{=}& \bar{a}_{F, \trans_j(s')}1,
\end{eqnarray}
where
$\gamma(s_1)\gamma(s_2)\ldots \gamma(s_{\rho-1})\gamma(s_\rho)$ is the $\gamma$-decomposition of $f_j(x_1)$, and 
(A) follows from (\ref{eq:4c3d7d353z5c}) and Lemma \ref{lem:dot-length} (i),
(B) is obtained by applying the fifth case of (\ref{eq:6807mxs1xye4}) by $|\bar{\PREF}^1_{F, j}(f_j(s'))| = |\bar{\PREF}^1_{F, j}| = 1$, (\ref{eq:da6ansmwz4bu}) and (\ref{eq:kbl91g7jnwbk}),
and (C) follows from (\ref{eq:4c3d7d353z5c}) and Lemma \ref{lem:tilde-equal} (i).
This shows
\begin{equation}
\label{eq:aennbnjwrgv8}
\bar{\PREF}^2_{\dot{F}, j} \subseteq \{\bar{a}_{F, \trans_j(s')}1\}.
\end{equation}
Finally, we obtain
\begin{eqnarray}
\PREF^2_{\dot{F}, j}
&\overset{(\mathrm{A})}{=}& \bar{\PREF}^2_{\dot{F}, j} \cup \Big( \bigcup_{s'' \in \assign_{F, j}(\lambda)} \PREF^2_{\dot{F}, \dot{\trans}_j(s'')} \Big)\\
&\overset{(\mathrm{B})}{=}& \bar{\PREF}^2_{\dot{F}, j} \cup \PREF^2_{\dot{F}, \dot{\trans}_j(s')} \\
&\overset{(\mathrm{C})}{\subseteq}& \{a_{F, \trans_j(s')}0, a_{F, \trans_j(s')}1, \bar{a}_{F, \trans_j(s')}1\}\\
&\overset{(\mathrm{D})}{=}& \{a_{F, j}0, a_{F, j}1, \bar{a}_{F, j}1\}
\end{eqnarray}
as desired, where
(A) follows from Lemma \ref{lem:pref-sum} (i),
(B) follows from (\ref{eq:4c3d7d353z5c}),
(C) follows from (\ref{eq:xrwsnm4lxs1y}) and (\ref{eq:aennbnjwrgv8}),
and (D) follows since $a_{F, \trans_j(s')} = a_{F, j}$ by (\ref{eq:4c3d7d353z5c}) and the first case of (\ref{v47umw9fprm9}).

\item The case $|\bar{\PREF}^1_{F, i}(f_i(s))| = 1, |\bar{\PREF}^1_{F, j}| = 2$:
We show $\pmb{c} \in \PREF^2_{F, j}$ for an arbitrarily fixed $\pmb{c} = c_1c_2 \in \PREF^2_{\dot{F}, j}$.

We have
\begin{eqnarray}
|\assign_{F, j}(\lambda)|
&=& \frac{2|\assign_{F, j}(\lambda)|}{2}\\
&\overset{(\mathrm{A})}{\leq}& \frac{\sum_{s' \in \assign_{F, j}(\lambda)} |\PREF^2_{F, \trans_j(s')}|}{2} \\
&\overset{(\mathrm{B})}{=}&  \frac{|\PREF^2_{F, j}| - |\bar{\PREF}^2_{F, j}|}{2}\\
&\overset{(\mathrm{C})}{\leq}&  \frac{|\PREF^2_{F, j}| - |\bar{\PREF}^1_{F, j}|}{2}\\
&\overset{(\mathrm{D})}{\leq}& \frac{3 - |\bar{\PREF}^1_{F, j}|}{2}\\
&\overset{(\mathrm{E})}{=}& \frac{3 - 2}{2}\\
&<& 1,
\end{eqnarray}
where
(A) follows since $|\PREF^2_{F, \trans_j(s')}| \geq 2$ for any $s' \in \assign_{F, j}(\lambda)$ by Lemma \ref{lem:propertyT1} (i),
(B) follows from Lemma \ref{lem:pref-sum} (ii),
(C) follows from Lemma \ref{lem:pref-inc2} (ii) (b),
(D) follows from Lemma \ref{cor:pref-sum} and $|\bar{\PREF}^1_{F, i}(f_i(s))| = 1$,
and (E) follows from $|\bar{\PREF}^1_{F, j}| = 2$.
This shows
\begin{equation}
\label{eq:mh0pu0yfj9pd}
\assign_{F, j}(\lambda) = \emptyset.
\end{equation}

By $\pmb{c} \in \PREF^2_{\dot{F}, j}$, there exists $\pmb{x} = x_1x_2\ldots x_{n} \in \mathcal{S}^{+}$ such that
\begin{equation}
\label{eq:h85un1rwxt1s}
\dot{f}^{\ast}_j(\pmb{x}) \succeq \pmb{c}.
\end{equation}
Then we have 
\begin{equation}
\dot{f}_j(x_1) = \dot{\gamma}(s_1)\dot{\gamma}(s_2)\ldots \dot{\gamma}(s_{\rho-1})\dot{\gamma}(s_{\rho})
\succeq \dot{\gamma}(s_1)
\overset{(\mathrm{A})}{=} \gamma(s_1)\label{eq:6sl0027tsfnj},
\end{equation}
where $\gamma(s_1)\gamma(s_2)\ldots \gamma(s_{\rho-1})\gamma(s_{\rho})$ is the $\gamma$-decomposition $f_j(x_1)$ and (A) follows from $|\PREF^2_{F, j}| \geq 3$ and the second case of (\ref{eq:6807mxs1xye4}).

By (\ref{eq:mh0pu0yfj9pd}) and Lemma \ref{lem:dot-length} (i), it holds that $|\gamma(s_1)| \geq 1$.
We consider the following two cases separately: the case $|\gamma(s_1)| = 1$ and the case $|\gamma(s_1)| \geq 2$.
\begin{itemize}
\item The case $|\gamma(s_1)| = 1$:
By (\ref{eq:h85un1rwxt1s}) and (\ref{eq:6sl0027tsfnj}), we have
\begin{equation}
\label{eq:6q9sy5juzyvo}
f_j(s_1) = \gamma(s_1) = c_1.
\end{equation}
We obtain
\begin{equation}
\PREF^2_{F, j} \overset{(\mathrm{A})}{\supseteq} \bar{\PREF}^2_{F, j} \overset{(\mathrm{B})}{\supseteq} c_1\PREF^1_{F, j}(c_1) \overset{(\mathrm{C})}{=} c_1\PREF^1_{F, j}(f_j(s_1)) \overset{(\mathrm{D})}{\supseteq} c_1\PREF^1_{F, \trans_j(s_1)} \overset{(\mathrm{E})}{=} c_1\{0, 1\} \owns c_1c_2 = \pmb{c}
\end{equation}
as desired, where
(A) follows from Lemma \ref{lem:pref-sum} (i),
(B) follows from Lemma \ref{lem:pref-sum} (iii),
(C) follows from (\ref{eq:6q9sy5juzyvo}),
(D) follows from Lemma \ref{lem:pref-sum} (i),
and (E) follows from $F \in \mathscr{F}_1$.

\item The case $|\gamma(s_1)| \geq 2$: By (\ref{eq:h85un1rwxt1s}) and (\ref{eq:6sl0027tsfnj}), we have $f^{\ast}_j(\pmb{x}) \succeq \gamma(s_1) \succeq \pmb{c}$, which leads to $\pmb{c} \in \PREF^2_{F, j}$.
\end{itemize}
\end{itemize}

(Proof of (ii)):
We consider the following two cases separately: the case $|\bar{\PREF}^1_{F, i}(f_i(s))| = 0$ and the $|\bar{\PREF}^1_{F, i}(f_i(s))| \geq	 1$.
\begin{itemize}
\item The case $|\bar{\PREF}^1_{F, i}(f_i(s))| = 0$:
We have
\begin{eqnarray}
|\bar{\PREF}^1_{F, i}(f_i(s))|  = 0
&\overset{(\mathrm{A})}{\iff}& \bar{\PREF}^0_{F, i}(f_i(s)) = \emptyset\\
&\overset{(\mathrm{B})}{\iff}& {}^{\forall} s' \in \mathcal{S}; f_i(s) \not\prec f_i(s')\\
&\overset{(\mathrm{C})}{\iff}& {}^{\forall} s' \in \mathcal{S}; \dot{f}_i(s) \not\prec \dot{f}_i(s')\\
&\overset{(\mathrm{D})}{\iff}& \bar{\PREF}^0_{\dot{F}, i}(\dot{f}_i(s)) = \emptyset\\
 &\overset{(\mathrm{E})}{\iff}& \bar{\PREF}^2_{\dot{F}, i}(\dot{f}_i(s)) = \emptyset
\end{eqnarray}
as desired, 
where
(A) follows from Lemma \ref{lem:pref-inc2} (ii) (a),
(B) follows from Lemma \ref{lem:leaf} (i),
(C) follows from Lemma \ref{lem:tilde-equal} (iii),
(D) follows from Lemma \ref{lem:leaf} (i),
and (E) follows from Lemma \ref{lem:pref-inc2} (ii) (a).

\item The case $|\bar{\PREF}^1_{F, i}(f_i(s))| \geq 1$:
Choose $\pmb{x} = x_1x_2\ldots x_n \in \mathcal{S}^{+}$ such that $\dot{f}^{\ast}_i(\pmb{x}) \succeq \dot{f}_i(s)$,
and $\dot{f}_i(x_1) \succ \dot{f}_i(s)$ arbitrary and let $\gamma(s_1) \gamma(s_2)\ldots  \gamma(s_{\rho'})$ be the $\gamma$-decomposition of $f_i(x_1)$.
Then by $\dot{f}_i(x_1) \succ \dot{f}_i(s)$, there exists an integer $\rho$ such that $\rho < \rho'$ and $f_i(s) = \gamma(s_1) \gamma(s_2)\ldots  \gamma(s_{\rho})$.
We have
\begin{eqnarray}
\dot{f}^{\ast}_i(\pmb{x})
&\succeq& \dot{f}_i(x_1)\\
&=& \dot{\gamma}(s_1) \dot{\gamma}(s_2)\ldots  \dot{\gamma}(s_{\rho'})\\
&=& \dot{f}_i(s) \dot{\gamma}(s_{\rho+1}) \ldots \dot{\gamma}(s_{\rho'})\\
&\succeq& \dot{f}_i(s) \dot{\gamma}(s_{\rho+1})\\
&\overset{(\mathrm{A})}{\succeq}& \dot{f}_i(s) \dot{g_1}\dot{g_2} \label{eq:2g2hiku0bai7},
\end{eqnarray}
where $\dot{\gamma}(s_{\rho+1}) = \dot{g}_1\dot{g}_2\ldots \dot{g}_l$,
and (A) follows since $|\dot{\gamma}(s_{\rho+1})| = |\gamma(s_{\rho+1})| \geq 2$ by Lemma \ref{lem:dot-length} (ii) and Lemma \ref{lem:tilde-equal} (i).
Therefore, the set $\bar{\PREF}^2_{\dot{F}, i}(\dot{f}_i(s))$ is included in the set of all possible  sequences as $\dot{g}_1\dot{g}_2 \in \mathcal{C}^2$.
We consider what sequences are possible as $\dot{g}_1\dot{g}_2 \in \mathcal{C}^2$ for the following three cases: the case $|\PREF^2_{F, j}| = 2$, the case $|\PREF^2_{F, j}| \geq 3, |\bar{\PREF}^1_{F, j}| = 1$, and the case $|\PREF^2_{F, j}| \geq 3, |\bar{\PREF}^1_{F, j}| = 2$.
\begin{itemize}
\item The case $|\PREF^2_{F, j}| = 2$:
\begin{itemize}
\item The case $|\bar{\PREF}^1_{F, i}(f_i(s))| = 2$: We have $\dot{g}_1\dot{g}_2 \subseteq \{\bar{a}_{F, j}0, \bar{a}_{F, j}1\}$ applying the third case of (\ref{eq:6807mxs1xye4}).
\item The case $|\bar{\PREF}^1_{F, i}(f_i(s))| = 1$:
By $|\PREF^2_{F, j}| = 2$ and Lemma \ref{lem:propertyT1} (ii) (b), we have $|\PREF^2_{F, j}| = \{0a, 1b\}$ for some $a, b \in \mathcal{C}$.
Thus, we have $|\bar{\PREF}^1_{F, j}| = |\{0, 1\}| = 2$ applying Lemma \ref{lem:pref-inc} (ii).
Hence, we obtain $\dot{g}_1\dot{g}_2 = \bar{a}_{F, j}1$ applying the fifth case of (\ref{eq:6807mxs1xye4}).
\end{itemize}
These show $\bar{\PREF}^2_{\dot{F}, i}(\dot{f}_i(s)) \subseteq \{\bar{a}_{F, j}0, \bar{a}_{F, j}1\}$ as desired.

\item The case $|\PREF^2_{F, j}| \geq 3$:
Then we have $|\bar{\PREF}^1_{F, i}(f_i(s))| \leq 1$ by Lemma \ref{cor:pref-sum}.
Combining this with $|\bar{\PREF}^1_{F, i}(f_i(s))| \geq 1$, we obtain
\begin{equation}
\label{eq:vu1fko998ipv}
|\bar{\PREF}^1_{F, i}(f_i(s))| = 1.
\end{equation}

\begin{itemize}
\item The case $|\bar{\PREF}^1_{F, j}| = 1$: 
We obtain $\dot{g}_1\dot{g}_2 =  \bar{a}_{F, j}0$ applying the fourth case of (\ref{eq:6807mxs1xye4}) by (\ref{eq:vu1fko998ipv}) and $|\bar{\PREF}^1_{F, j}| = 1$. 
 This shows $\bar{\PREF}^2_{\dot{F}, i}(\dot{f}_i(s)) \subseteq \{\bar{a}_{F, j}0\}$ as desired.
 
\item The case $|\bar{\PREF}^1_{F, j}| = 2$: 
We obtain $\dot{g}_1\dot{g}_2 =  g_1g_2$ by the sixth case of (\ref{eq:6807mxs1xye4}) by (\ref{eq:vu1fko998ipv}), $|\bar{\PREF}^1_{F, j}| = 2$, and $|\PREF^2_{F, j}| \geq 3$.
This shows $\bar{\PREF}^2_{\dot{F}, i}(\dot{f}_i(s)) \subseteq \bar{\PREF}^2_{F, i}(f_i(s))$ as desired because $g_1g_2 \in \bar{\PREF}^2_{F, i}(f_i(s))$ by Lemma \ref{lem:dot-length} (iii).

\end{itemize}
\end{itemize}

\end{itemize}
\end{proof}

\subsection{Proof of Lemma \ref{lem:tilde-pref2}}
\label{subsec:proof-tilde-pref2}

\begin{proof}[Proof of Lemma \ref{lem:tilde-pref2}]
(Proof of (i)): Assume $|\PREF^2_{F, i}| = 2$. Then we have $\PREF^2_{\dot{F}, i} = \{a_{F, i}0, a_{F, i}1\}$ by Lemma \ref{lem:tilde-pref} (i) (a).
Hence, we have $\PREF^1_{\dot{F}, i} = \{a_{F, i}\}$ by Lemma \ref{lem:pref-inc} (i).
Therefore, by (\ref{eq:emksr78o7m54}), we obtain $d_{\dot{F}, i} = a_{F, i}$ as desired.

(Proof of (ii)):
Assume $s \neq s'$ and $\dot{f}_i(s) = \dot{f}_i(s')$.
Then since $f_i(s) = f_i(s')$ by Lemma \ref{lem:tilde-equal} (iii), 
we have
\begin{equation}
\label{eq:cfgmg3qbdv27}
|\PREF^2_{F, \trans_i(s)}| = |\PREF^2_{F, \trans_i(s')}| = 2
\end{equation}
applying Lemma \ref{lem:propertyT1} (iii).
Hence, by (i) of this lemma, we obtain
\begin{equation}
\label{eq:jegocq7j08yo}
d_{\dot{F}, \trans_i(s)} = a_{F, \trans_i(s)}, \quad d_{\dot{F}, \trans_i(s')} = a_{F, \trans_i(s')}.
\end{equation}
Also, by (\ref{eq:cfgmg3qbdv27}) and Lemma \ref{lem:propertyT1} (ii) (a), we have $\assign_{F, \trans_i(s)}(\lambda) = \assign_{F, \trans_i(s')}(\lambda) = \emptyset$,
in particular,
\begin{equation}
\label{eq:2pn1phdmp1tg}
|\assign_{F, \trans_i(s)}(\lambda)| \neq 1, \quad |\assign_{F, \trans_i(s')}(\lambda)| \neq 1.
\end{equation}

Now we show $\PREF^2_{F, \trans_i(s)} \owns 0a_{F, \trans_i(s)}$ considering the following two cases: the case $\PREF^2_{F, \trans_i(s)} \owns 00$ and the case $\PREF^2_{F, \trans_i(s)} \not\owns 00$.
\begin{itemize}
\item The case $\PREF^2_{F, \trans_i(s)} \owns 00$: 
By (\ref{eq:2pn1phdmp1tg}) and the second case of (\ref{v47umw9fprm9}), we have $a_{F, \trans_i(s)} = 0$ and thus $\PREF^2_{F, \trans_i(s)} \owns 00 = 0a_{F, \trans_i(s)}$.
\item The case $\PREF^2_{F, \trans_i(s)} \not\owns 00$: 
By Lemma \ref{lem:propertyT1} (ii) (b), there exists $b \in \mathcal{C}$ such that 
\begin{equation}
\PREF^2_{F, \trans_i(s)} \owns 0b \overset{(\mathrm{A})}{=} 01 \overset{(\mathrm{B})}{=} 0a_{F, \trans_i(s)},
\end{equation}
where
(A) follows from $\PREF^2_{F, \trans_i(s)} \not\owns 00$,
and (B) follows from (\ref{eq:2pn1phdmp1tg}), $\PREF^2_{F, \trans_i(s)} \not\owns 00$, and the third case of (\ref{v47umw9fprm9}).
\end{itemize}
Therefore, we conclude that $\PREF^2_{F, \trans_i(s)} \owns 0a_{F, \trans_i(s)}$.
By the same argument, we also have $\PREF^2_{F, \trans_i(s')} \owns 0a_{F, \trans_i(s')}$. 
Consequently, we have
\begin{equation}
\label{eq:0p54m80pqmse}
\{0a_{F, \trans_i(s)}\} \cap \{0a_{F, \trans_i(s')}\} \subseteq \PREF^2_{F, \trans_i(s)} \cap \PREF^2_{F, \trans_i(s')} \overset{(\mathrm{A})}{=} \emptyset,
\end{equation}
where (A) follows from $F \in \mathscr{F}_{2\hdec}$.
This shows
\begin{equation}
\label{eq:zrz0v2tzgm8q}
a_{F, \trans_i(s)} \neq a_{F, \trans_i(s')}.
\end{equation}
Combining (\ref{eq:jegocq7j08yo}) and (\ref{eq:zrz0v2tzgm8q}), we obtain the desired result.
\end{proof}

\subsection{Proof of Lemma \ref{lem:transA-arg4}}
\label{subsec:proof-transA-arg4}

 To prove Lemma \ref{lem:transA-arg4}, we prove Lemmas \ref{lem:transA-arg2} and \ref{lem:transA-arg1} as follows.
 
\begin{lemma}
\label{lem:transA-arg2}
For any $F \in \mathscr{F}_1$ and $i \in [F]$, the mapping $\widehat{\dot{f}_i}$ is injective.
\end{lemma}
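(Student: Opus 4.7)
The plan is to argue by contradiction. Fix $F(f, \trans) \in \mathscr{F}_1$ and $i \in [F]$, and suppose there exist distinct $s, s' \in \mathcal{S}$ with $\widehat{\dot{f}_i}(s) = \widehat{\dot{f}_i}(s')$.

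First I would invoke Lemma \ref{lem:rotation}~(i) applied to $\dot{F}$ (which is extendable as a consequence of Lemma \ref{lem:tilde-equal}~(ii) and $F \in \mathscr{F}_1 \subseteq \mathscr{F}_{\ext}$, essentially since $|\dot{f}_i(s)|=|f_i(s)|$). Taking $\pmb{x} = s$ in that identity gives the crucial equation
\begin{equation}
d_{\dot{F}, i}\, \widehat{\dot{f}_i}(s) \;=\; \dot{f}_i(s)\, d_{\dot{F}, \dot{\trans}_i(s)},
\end{equation}
and analogously for $s'$. Canceling the common prefix $d_{\dot{F}, i}$ from both sides of the assumed equality $\widehat{\dot{f}_i}(s) = \widehat{\dot{f}_i}(s')$ yields
\begin{equation}
\dot{f}_i(s)\, d_{\dot{F}, \dot{\trans}_i(s)} \;=\; \dot{f}_i(s')\, d_{\dot{F}, \dot{\trans}_i(s')}. \tag{$\star$}
\end{equation}

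Next I would split according to the lengths of the suffixes $d_{\dot{F}, \dot{\trans}_i(s)}$ and $d_{\dot{F}, \dot{\trans}_i(s')}$, each of which is at most $1$ by (\ref{eq:emksr78o7m54}). In the case where both suffixes have the same length, $(\star)$ forces $\dot{f}_i(s) = \dot{f}_i(s')$ and $d_{\dot{F}, \dot{\trans}_i(s)} = d_{\dot{F}, \dot{\trans}_i(s')}$. But since $s \neq s'$, Lemma \ref{lem:tilde-pref2}~(ii) then gives $d_{\dot{F}, \dot{\trans}_i(s)} \neq d_{\dot{F}, \dot{\trans}_i(s')}$, a contradiction.

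In the remaining case the two suffix lengths differ by $1$; by symmetry assume $|d_{\dot{F}, \dot{\trans}_i(s)}| = 0$ and $|d_{\dot{F}, \dot{\trans}_i(s')}| = 1$. Then $(\star)$ rewrites as $\dot{f}_i(s) = \dot{f}_i(s')\, d_{\dot{F}, \dot{\trans}_i(s')}$, so $\dot{f}_i(s') \prec \dot{f}_i(s)$. By Lemma \ref{lem:tilde-equal}~(iii) this gives $f_i(s') \prec f_i(s)$, and by Lemma \ref{lem:tilde-equal}~(ii) the length difference is preserved, so $f_i(s) = f_i(s')\,c$ for some $c \in \mathcal{C}$. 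This directly contradicts Lemma \ref{lem:propertyT1}~(v). The main obstacle is really just setting up the correct length-by-length case analysis for $(\star)$; once that is in place, the two prior lemmas (Lemma \ref{lem:tilde-pref2}~(ii) for the equal-length branch and Lemma \ref{lem:propertyT1}~(v) for the unequal-length branch) immediately close each case.
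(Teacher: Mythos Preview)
Your proof is correct and follows essentially the same route as the paper's: both derive the identity $\dot{f}_i(s)\,d_{\dot{F},\dot{\trans}_i(s)} = \dot{f}_i(s')\,d_{\dot{F},\dot{\trans}_i(s')}$ from Lemma~\ref{lem:rotation}(i), handle the equal-suffix-length case via Lemma~\ref{lem:tilde-pref2}(ii), and the unequal-length case via Lemma~\ref{lem:propertyT1}(v). Your unequal-length branch is in fact slightly more careful than the paper's, since you first transfer back to $F$ using Lemma~\ref{lem:tilde-equal}(ii)--(iii) before invoking Lemma~\ref{lem:propertyT1}(v), whereas the paper applies that lemma directly to $\dot{F}$ (which is not a priori in $\mathscr{F}_1$); the implicit bridge is exactly the one you spell out.
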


\begin{proof}[Proof of Lemma \ref{lem:transA-arg2}]
Choose $s, s' \in \mathcal{S}$ such that $\widehat{\dot{f}_i}(s) = \widehat{\dot{f}_i}(s')$ arbitrarily. We show $s = s'$.

We have
\begin{equation}
\label{eq:gk92ksr1karf}
\dot{f}_i(s) d_{\dot{F}, \dot{\trans}_i(s)} \overset{(\mathrm{A})}{=} d_{\dot{F}, i} \widehat{\dot{f}_i}(s) \overset{(\mathrm{B})}{=} d_{\dot{F}, i} \widehat{\dot{f}_i}(s') \overset{(\mathrm{C})}{=} \dot{f}_i(s') d_{\dot{F}, \dot{\trans}_i(s')},
\end{equation}
where
(A) follows from Lemma \ref{lem:rotation} (i),
(B) follows directly from $\widehat{\dot{f}_i}(s) = \widehat{\dot{f}_i}(s')$,
and (C) follows from Lemma \ref{lem:rotation} (i).

Also, we have
\begin{equation}
\label{eq:k0r7bthzqnep}
|d_{\dot{F}, \dot{\trans}_i(s)}| = |d_{\dot{F}, \dot{\trans}_i(s')}|
\end{equation}
because if we assume the contrary, that is, $|d_{\dot{F}, \dot{\trans}_i(s)}| = 1$ and $|d_{\dot{F}, \dot{\trans}_i(s')}| = 0$ by symmetry, then by (\ref{eq:gk92ksr1karf}), we have $\dot{f}_i(s)d_{\dot{F}, \dot{\trans}_i(s)} = \dot{f}_i(s')$, which conflicts with Lemma \ref{lem:propertyT1} (v).

By (\ref{eq:gk92ksr1karf}) and (\ref{eq:k0r7bthzqnep}), we obtain
$\dot{f}_i(s) = \dot{f}_i(s')$ and $d_{\dot{F}, \dot{\trans}_i(s)} = d_{\dot{F}, \dot{\trans}_i(s')}$.
Hence, we obtain $s = s'$ as desired applying the contraposition of Lemma \ref{lem:tilde-pref2} (ii).
\end{proof}

\begin{lemma}
\label{lem:transA-arg1}
For any $F \in \mathscr{F}_1, i \in [F]$, and $s \in \mathcal{S}$,
if $\bar{\PREF}^0_{F, i}(f_i(s)) = \emptyset$ or $\trans_i(s) \in \mathcal{M}_F$, then $\bar{\PREF}^0_{\widehat{\dot{F}}, i}(\widehat{\dot{f}_i}(s)) = \emptyset$.
\end{lemma}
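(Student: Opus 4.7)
The plan is to reduce the claim, via Lemma \ref{lem:leaf} (i), to showing that no $s' \in \mathcal{S}$ satisfies $\widehat{\dot{f}_i}(s') \succ \widehat{\dot{f}_i}(s)$. Applying Lemma \ref{lem:rotation} (i) (and using $\dot{\trans} = \trans$), this is equivalent to ruling out
$\dot{f}_i(s') d_{\dot{F}, \trans_i(s')} \succ \dot{f}_i(s) d_{\dot{F}, \trans_i(s)}$
for every $s' \in \mathcal{S}$. I would then split on the comparability of $\dot{f}_i(s')$ and $\dot{f}_i(s)$: if they are incomparable, so are their length-at-most-one extensions; if $\dot{f}_i(s') \prec \dot{f}_i(s)$, the length estimate $|\dot{f}_i(s')| + |d_{\dot{F}, \trans_i(s')}| \leq |\dot{f}_i(s')| + 1 \leq |\dot{f}_i(s)| \leq |\dot{f}_i(s) d_{\dot{F}, \trans_i(s)}|$ already forbids strict extension; and if $\dot{f}_i(s') = \dot{f}_i(s)$ with $s \neq s'$, Lemma \ref{lem:tilde-pref2} (ii) forces the two appended single bits to disagree, again blocking $\succ$. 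These three subcases dispose of $s'$ under either hypothesis.

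It remains to handle $\dot{f}_i(s') \succ \dot{f}_i(s)$. Under the first hypothesis $\bar{\PREF}^0_{F, i}(f_i(s)) = \emptyset$, Lemma \ref{lem:tilde-equal} (iii) transports this order back to $f_i(s') \succ f_i(s)$, which via Lemma \ref{lem:leaf} (i) contradicts the hypothesis; so this subcase is vacuous. The substantive work lies under the second hypothesis $\trans_i(s) \in \mathcal{M}_F$, i.e.\ $|\PREF^2_{F, \trans_i(s)}| = 2$, and this is where I expect the main obstacle.

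For that final subcase, let $\gamma(s_1)\gamma(s_2)\ldots\gamma(s_\rho)$ and $\gamma(s'_1)\gamma(s'_2)\ldots\gamma(s'_{\rho'})$ be the $\gamma$-decompositions of $f_i(s)$ and $f_i(s')$. Mimicking the analysis in the proof of Lemma \ref{lem:tilde-equal02} (and invoking Lemma \ref{lem:propertyT1} (iv), together with $\bar{\PREF}^0_{F, i}(f_i(s)) \neq \emptyset$, to get $|\assign_{F, i}(f_i(s))| = 1$), one obtains $s'_\rho = s$ and $\dot{f}_i(s') = \dot{f}_i(s)\,\dot{\gamma}(s'_{\rho+1})\dot{\gamma}(s'_{\rho+2})\cdots\dot{\gamma}(s'_{\rho'})$. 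Since $\trans_i(s) \in \mathcal{M}_F$, Lemma \ref{lem:tilde-pref2} (i) gives $d_{\dot{F}, \trans_i(s)} = a_{F, \trans_i(s)}$, a single bit, so it suffices to show that the first bit of $\dot{\gamma}(s'_{\rho+1})$ equals $\bar{a}_{F, \trans_i(s)}$. Here is the crux: in the defining formula (\ref{eq:6807mxs1xye4}) for $\dot{\gamma}(s'_{\rho+1})$ applied with $r = \rho+1 \geq 2$ and $s'_{r-1} = s$, case~6 is ruled out by $|\PREF^2_{F, \trans_i(s)}| = 2$, while case~4 is ruled out because Lemma \ref{lem:propertyT1} (ii) (b) combined with Lemma \ref{lem:pref-inc} (ii) forces $|\bar{\PREF}^1_{F, \trans_i(s)}| = 2$. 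In the surviving cases~3 and~5 the formula makes $\dot{\gamma}(s'_{\rho+1})$ begin with $\bar{a}_{F, \trans_i(s'_\rho)} = \bar{a}_{F, \trans_i(s)}$, and nonemptiness of this leading factor is guaranteed by Lemma \ref{lem:dot-length} (ii) together with Lemma \ref{lem:tilde-equal} (i). Thus the leading bit of $\dot{\gamma}(s'_{\rho+1})\cdots\dot{\gamma}(s'_{\rho'}) d_{\dot{F}, \trans_i(s')}$ differs from $d_{\dot{F}, \trans_i(s)}$, which closes the subcase and completes the proof.
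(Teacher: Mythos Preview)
Your proof is correct and takes a genuinely different route from the paper's, particularly in the substantive subcase.

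Both arguments reduce, via Lemma~\ref{lem:leaf}~(i) and Lemma~\ref{lem:rotation}~(i), to controlling $\dot{f}_i(s')\,d_{\dot{F},\trans_i(s')}$ versus $\dot{f}_i(s)\,d_{\dot{F},\trans_i(s)}$. Your organization is cleaner: you dispatch the subcases $\dot{f}_i(s')$ incomparable with, strictly below, or equal to $\dot{f}_i(s)$ uniformly for both hypotheses, whereas the paper folds these into separate case analyses. For the first hypothesis $\bar{\PREF}^0_{F,i}(f_i(s))=\emptyset$, you invoke Lemma~\ref{lem:tilde-equal}~(iii) directly, while the paper goes through Lemma~\ref{lem:tilde-pref}~(ii) and a length-balancing argument. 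For the key subcase $\dot{f}_i(s')\succ\dot{f}_i(s)$ under $\trans_i(s)\in\mathcal{M}_F$, the paper deduces (by contradiction) that the bit following $\dot{f}_i(s)$ in $\dot{f}_i(s')$ would have to be $a_{F,\trans_i(s)}$, then uses Lemma~\ref{lem:tilde-pref}~(i)(a) together with $\dot{F}\in\mathscr{F}_{2\hdec}$ (established inside Lemma~\ref{lem:transA-preserve}) to reach a contradiction. You instead compute that first bit directly from the defining formula~(\ref{eq:6807mxs1xye4}), showing it is $\bar{a}_{F,\trans_i(s)}$; this is more elementary and avoids any appeal to the $2$-decodability of $\dot{F}$. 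One small remark: your exclusion of case~4 in~(\ref{eq:6807mxs1xye4}) is valid but unnecessary, since case~4 already has leading bit $\bar{a}_{F,\trans_i(s_{r-1})}$; only case~6 genuinely needs to be ruled out, and $|\PREF^2_{F,\trans_i(s)}|=2$ does that.
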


\begin{proof}[Proof of Lemma \ref{lem:transA-arg1}]
We assume that $\bar{\PREF}^0_{F, i}(f_i(s)) = \emptyset$ or $\trans_i(s) \in \mathcal{M}_F$ holds and prove by contradiction assuming $\bar{\PREF}^0_{\widehat{\dot{F}}, i}(\widehat{\dot{f}_i}(s)) \neq \emptyset$.
Then by Lemma \ref{lem:leaf} (i), there exist $s' \in \mathcal{S} \setminus \{s\}$ and $c \in \mathcal{C}$ such that
\begin{equation}
\label{eq:gk2aqce0gk4s}
\widehat{\dot{f}_i}(s) c \preceq \widehat{\dot{f}_i}(s').
\end{equation}
Thus, we have
\begin{equation}
\label{eq:d31iwliixt21}
\dot{f}_i(s) d_{\dot{F}, \dot{\trans}_i(s)} c 
\overset{(\mathrm{A})}{=} d_{\dot{F}, i} \widehat{\dot{f}_i}(s) c 
\overset{(\mathrm{B})}{\preceq} d_{\dot{F}, i} \widehat{\dot{f}_i}(s') 
\overset{(\mathrm{C})}{=}  \dot{f}_i(s') d_{\dot{F}, \dot{\trans}_i(s')},
\end{equation}
where
(A) follows from Lemma \ref{lem:rotation} (i),
(B) follows from (\ref{eq:gk2aqce0gk4s}),
and (C) follows from Lemma \ref{lem:rotation} (i).

We consider the following two cases separately: the case $\bar{\PREF}^0_{F, i}(f_i(s)) = \emptyset$ and the case $\trans_i(s) \in \mathcal{M}_F$.

\begin{itemize}
\item The case $\bar{\PREF}^0_{F, i}(f_i(s)) = \emptyset$:
We have
\begin{equation}
\label{eq:cuxzudo2f9uh}
|\bar{\PREF}^0_{\dot{F}, i}(\dot{f}_i(s))|
\overset{(\mathrm{A})}{\leq} |\bar{\PREF}^2_{\dot{F}, i}(\dot{f}_i(s))|
\overset{(\mathrm{B})}{=} 0,
\end{equation}
where
(A) follows from Lemma \ref{lem:pref-inc2} (ii) (b),
and (B) follows from the first case of (\ref{eq:v5z4ipil23dc}) because $\bar{\PREF}^1_{F, i}(f_i(s)) = \emptyset$ holds by $\bar{\PREF}^0_{F, i}(f_i(s)) = \emptyset$ and Lemma \ref{lem:pref-inc2} (ii) (a).

Also, we have
\begin{equation}
\label{eq:1z4yjbb9brsn}
|\dot{f}_i(s)| \overset{(\mathrm{A})}{\leq} |\dot{f}_i(s')| + |d_{\dot{F}, \dot{\trans}_i(s')}| - |d_{\dot{F}, \dot{\trans}_i(s)}| - |c| \overset{(\mathrm{B})}{\leq} |\dot{f}_i(s')|,
\end{equation}
where
(A) follows from (\ref{eq:d31iwliixt21}),
and (B) follows from $|d_{\dot{F}, \dot{\trans}_i(s')}| \leq 1$, $|d_{\dot{F}, \dot{\trans}_i(s)}| \geq 0$, and $|c| = 1$.

In fact, the equalities hold in (\ref{eq:1z4yjbb9brsn}), that is, we have
\begin{equation}
\label{eq:adqlqdbcmzu1}
|\dot{f}_i(s)| = |\dot{f}_i(s')|
\end{equation}
because if we assume $|\dot{f}_i(s)| < |\dot{f}_i(s')|$, then we have $\dot{f}_i(s) \prec \dot{f}_i(s')$ by (\ref{eq:d31iwliixt21}), which conflicts with (\ref{eq:cuxzudo2f9uh}) and Lemma \ref{lem:leaf} (i).

By (\ref{eq:d31iwliixt21}) and (\ref{eq:adqlqdbcmzu1}), we obtain
\begin{equation}
\label{eq:2d37xkllascw}
\dot{f}_i(s) = \dot{f}_i(s').
\end{equation}
Hence, applying Lemma \ref{lem:tilde-pref2} (ii), we have
$d_{\dot{F}, \dot{\trans}_i(s)} = a_{F, \trans_i(s)}$ and $d_{\dot{F}, \dot{\trans}_i(s')} = a_{F, \trans_i(s')}$.
In particular,
\begin{equation}
\label{eq:pi1nxev6u8ir}
|d_{\dot{F}, \dot{\trans}_i(s)}| = |d_{\dot{F}, \dot{\trans}_i(s')}| = 1.
\end{equation}
Thus, we obtain
\begin{equation}
|\dot{f}_i(s)| + 2 \overset{(\mathrm{A})}{=} |\dot{f}_i(s) d_{\dot{F}, \dot{\trans}_i(s)} c| \overset{(\mathrm{B})}{\leq} |\dot{f}_i(s') d_{\dot{F}, \dot{\trans}_i(s')}| \overset{(\mathrm{C})}{=}  |\dot{f}_i(s')| + 1 \overset{(\mathrm{D})}{=} |\dot{f}_i(s)| + 1,
\end{equation}
where
(A) follows from (\ref{eq:pi1nxev6u8ir}),
(B) follows from (\ref{eq:d31iwliixt21}),
(C) follows from (\ref{eq:pi1nxev6u8ir}),
and (D) follows from (\ref{eq:2d37xkllascw}).
This is a contradiction.

\item The case $\trans_i(s) \in \mathcal{M}_F$:
By Lemma \ref{lem:tilde-pref2} (i), we have
\begin{equation}
\label{eq:1vov21om8dxh}
d_{\dot{F}, \dot{\trans}_i(s)} = a_{F, \trans_i(s)}.
\end{equation}
Substituting (\ref{eq:1vov21om8dxh}) for (\ref{eq:d31iwliixt21}), we obtain
\begin{equation}
\label{eq:m8qc97am01if}
\dot{f}_i(s) a_{F, \trans_i(s)} c \preceq \dot{f}_i(s') d_{\dot{F}, \dot{\trans}_i(s')}.
\end{equation}

Also, we have
\begin{equation}
\label{eq:v8pjj512ywyx}
|\dot{f}_i(s)| + 1 = |\dot{f}_i(s)| + |a_{F, \trans_i(s)}|
\overset{(\mathrm{A})}{\leq} |\dot{f}_i(s')| + |d_{\dot{F}, \dot{\trans}_i(s')}| - |c|
\overset{(\mathrm{B})}{\leq} |\dot{f}_i(s')|,
\end{equation}
where
(A) follows from (\ref{eq:m8qc97am01if}),
and (B) follows from $|d_{\dot{F}, \dot{\trans}_i(s')}| \leq 1$ and $|c| = 1$.

By (\ref{eq:m8qc97am01if}) and (\ref{eq:v8pjj512ywyx}), we have $\dot{f}_i(s) a_{F, \trans_i(s)} \preceq \dot{f}_i(s')$, which leads to $\bar{\PREF}^1_{\dot{F}, i}(\dot{f}_i(s)) \owns a_{F, \trans_i(s)}$.
Hence, applying Lemma \ref{lem:pref-inc} (ii), we have
\begin{equation}
\label{eq:u8loznhgxuc0}
\bar{\PREF}^2_{\dot{F}, i}(\dot{f}_i(s)) \owns a_{F, \trans_i(s)} c'
\end{equation}
for some $c' \in \mathcal{C}$.
On the other hand, by $\trans_i(s) \in \mathcal{M}_F$ and Lemma \ref{lem:tilde-pref} (i) (a),
we have
\begin{equation}
\label{eq:m6qw4zvpbmvy}
\PREF^2_{\dot{F}, \dot{\trans}_i(s)} = \{a_{F, \trans_i(s)}0, a_{F, \trans_i(s)}1\}.
\end{equation}
By (\ref{eq:u8loznhgxuc0}) and (\ref{eq:m6qw4zvpbmvy}),
we obtain $\PREF^2_{\dot{F}, \dot{\trans}_i(s)} \cap \bar{\PREF}^2_{\dot{F}, i}(\dot{f}_i(s)) \neq \emptyset$.
Hence, we have $\dot{F} \not\in \mathscr{F}_{2\hdec}$, which conflicts with the proof of Lemma \ref{lem:transA-preserve}.
\end{itemize}
\end{proof}

\begin{proof}[Proof of Lemma \ref{lem:transA-arg4}]
Applying Lemma \ref{lem:transA-preserve} in a repetitive manner, we have
\begin{equation}
\label{eq:eobi3f9itqe1}
F^{(0)}, F^{(1)}, \ldots, F^{(t)}, F^{(t+1)}, \ldots, F^{(t')} \in \mathscr{F}_1
\end{equation}
and
\begin{equation}
\label{eq:33t3kbhtphj6}
L(F) = L(F^{(0)}) = L(F^{(1)}) = \cdots = L(F^{(t)}) = L(F^{(t+1)}) = \cdots = L(F^{(t')}).
\end{equation}

We prove Lemma \ref{lem:transA-arg4} by contradiction assuming that there exists $p \in \mathcal{M}_{F^{(t)}} \cap \mathcal{M}_{F^{(t')}}$.
By $\kernel_{F} = |F|$, there exist $i \in [F]$ and $s \in \mathcal{S}$ such that $\trans_i(s) = p$.
By (\ref{eq:g8ovoqynvwxb}) and (\ref{eq:bpsfrfe2v0zi}), we have $\trans_i^{(t)}(s) = \trans_i^{(t')}(s) = p$ and 
\begin{eqnarray}
\lefteqn{\trans_i^{(t)}(s) = p \in \mathcal{M}_{F^{(t)}}} \nonumber\\
&\overset{(\mathrm{A})}{\implies}& \bar{\PREF}^0_{F^{(t+1)}, i}(f_i^{(t+1)}(s)) = \emptyset.\\
&\overset{(\mathrm{A})}{\implies}& \bar{\PREF}^0_{F^{(t+2)}, i}(f_i^{(t+1)}(s)) = \emptyset.\\
&\overset{(\mathrm{A})}{\implies}& \cdots\\
&\overset{(\mathrm{A})}{\implies}& \bar{\PREF}^0_{F^{(t')}, i}(f_i^{(t')}(s)) = \emptyset, \label{eq:5h02t4n3x37f}
\end{eqnarray}
where (A)s follow from (\ref{eq:eobi3f9itqe1}) and Lemma \ref{lem:transA-arg1}.
Applying Lemma \ref{lem:transA-arg2} to $F^{(t'-1)}$, we see that $f_i^{(t')}(s)$ is injective, in particular,
\begin{equation}
\label{eq:q9fo54xy0ojp}
|\assign_{F^{(t')}, i}(f^{(t')}_i(s))| = 1.
\end{equation}
By (\ref{eq:5h02t4n3x37f}) and (\ref{eq:q9fo54xy0ojp}), we obtain $|\PREF^2_{F^{(t')}, p}| = |\PREF^2_{F^{(t')}, \trans^{(t')}_i(s)}| = 4$ applying Lemma \ref{lem:complete-leaf}, which conflicts with $p \in \mathcal{M}_{F^{(t')}}$.
\end{proof}

\subsection{Proof of Lemma \ref{lem:tilde2-equal} (iii)}
\label{subsec:proof-tilde2-equal}

We can prove Lemma \ref{lem:tilde2-equal} (iii) in a similar way to prove Lemma \ref{lem:tilde-equal} (iii) by using the following Lemma \ref{lem:tilde2-equal0} instead of Lemma \ref{lem:tilde-equal0}.

\begin{lemma}
\label{lem:tilde2-equal0}
Let $F \in \mathscr{F}_2$, $i \in [F]$, and $s, s' \in \mathcal{S}$, and let $\gamma(s_1)\gamma(s_2)\ldots \gamma(s_{\rho})$ (resp. $\gamma(s'_1)\gamma(s'_2)\ldots \gamma(s'_{\rho'})$) be the $\gamma$-decomposition of $f_i(s)$ (resp. $f_i(s')$).
For any $r = 1, 2, \ldots, m \coloneqq \min\{\rho, \rho'\}$, 
if one of the following conditions (a) and (b) holds, then $\gamma(s_r) = \gamma(s'_r) \iff \ddot{\gamma}(s_{r}) = \ddot{\gamma}(s'_r)$:
\begin{enumerate}[(a)]
\item $r = 1$.
\item $r \geq 2$ and $s_{r-1} = s'_{r-1}$.
\end{enumerate}
\end{lemma}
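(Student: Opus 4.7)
The approach mirrors the proof of Lemma~\ref{lem:tilde-equal0}, but is driven by the stronger hypothesis $F\in\mathscr{F}_{2}$, i.e., $|\PREF^{2}_{F,j}|\geq 3$ for every $j\in[F]$. The direction $\implies$ is immediate from~(\ref{eq:oezl1k2insm5}): under (a) or (b), the side data selecting the case for $\ddot{\gamma}(s_r)$ and $\ddot{\gamma}(s'_r)$ coincides, so identical $\gamma$'s produce identical $\ddot{\gamma}$'s. I would concentrate on the contrapositive of $\impliedby$: assuming $\gamma(s_r)\neq\gamma(s'_r)$, deduce $\ddot{\gamma}(s_r)\neq\ddot{\gamma}(s'_r)$. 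If $|\gamma(s_r)|\neq|\gamma(s'_r)|$ this is immediate from Lemma~\ref{lem:tilde2-equal}~(i), so from now on assume $|\gamma(s_r)|=|\gamma(s'_r)|=l$ and write $\gamma(s_r)=g_1g_2\ldots g_l$, $\gamma(s'_r)=g'_1g'_2\ldots g'_l$.

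For the case $r\geq 2$ with $s_{r-1}=s'_{r-1}$, both $\ddot{\gamma}$'s fall in the fifth case of~(\ref{eq:oezl1k2insm5}), yielding $\ddot{\gamma}(s_r)=00g_3\ldots g_l$ and $\ddot{\gamma}(s'_r)=00g'_3\ldots g'_l$. Since $F\in\mathscr{F}_{2}$ and $|\PREF^{2}_{F,\trans_i(s_{r-1})}|\geq 3$, Lemma~\ref{cor:pref-sum} yields $|\bar{\PREF}^{2}_{F,i}(f_i(s_{r-1}))|\leq 1$; as Lemma~\ref{lem:dot-length}~(iii) places both $g_1g_2$ and $g'_1g'_2$ in this set, $g_1g_2=g'_1g'_2$, so $\gamma(s_r)\neq\gamma(s'_r)$ must differ at some position $j\geq 3$, which is visible inside $\ddot{\gamma}$. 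For $r=1$ with $|\PREF^{2}_{F,i}|=4$, $\ddot{\gamma}(s_1)=\gamma(s_1)$ and the claim is trivial. For $r=1$ with $|\PREF^{2}_{F,i}|=3$, I would first rule out the degenerate lengths: $l=0$ would make $s_1\in\assign_{F,i}(\lambda)$, and since $f_i$ is injective by Lemma~\ref{lem:propertyT2}, Lemma~\ref{lem:leaf}~(iii) gives $|\bar{\PREF}^{2}_{F,i}|\geq 1$, whence Lemma~\ref{lem:pref-sum}~(ii) combined with $|\PREF^{2}_{F,\trans_i(s_1)}|\geq 3$ forces $|\PREF^{2}_{F,i}|\geq 4$, a contradiction; $l=1$ with $g_1\neq g'_1$ means $f_i(s_1),f_i(s'_1)\in\{0,1\}$ are distinct, so $F\in\mathscr{F}_{1}$ gives $\PREF^{1}_{F,i}(0)=\PREF^{1}_{F,i}(1)=\{0,1\}$ and Lemma~\ref{lem:pref-sum}~(iii) yields $\bar{\PREF}^{2}_{F,i}=\{00,01,10,11\}$, again contradicting $|\PREF^{2}_{F,i}|=3$.

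It remains to treat $r=1$, $|\PREF^{2}_{F,i}|=3$, $l\geq 2$, which is the main obstacle. Here $\ddot{\gamma}(s_1)$ is produced by the third case (leading bit $0$) or the fourth case (leading bit $1$) of~(\ref{eq:oezl1k2insm5}), so the two sub-cases cannot be conflated when $\ddot{\gamma}(s_1)=\ddot{\gamma}(s'_1)$. In the third sub-case, the unique missing element $u=u_1u_2\in\mathcal{C}^{2}\setminus\PREF^{2}_{F,i}$ forces $g_1=u_1$ and $g_2=\bar{u}_2$, hence $g_1g_2=g'_1g'_2$ and equality of $\ddot{\gamma}$'s transfers to the suffix $g_3\ldots g_l$. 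In the fourth sub-case, the second bit of $\ddot{\gamma}(s_1)=1g_2\ldots g_l$ directly reveals $g_2$, while the constraint $g_10,g_11\in\PREF^{2}_{F,i}$ together with $|\PREF^{2}_{F,i}|=3$ excludes $g_1\neq g'_1$, as the latter would force all of $\mathcal{C}^{2}$ into $\PREF^{2}_{F,i}$. Once this bookkeeping is executed, the remainder reduces to comparing suffixes that appear verbatim inside $\ddot{\gamma}$, and the contrapositive is established.
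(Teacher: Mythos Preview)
Your proposal is correct and follows essentially the same approach as the paper: contrapositive of $\impliedby$, dispatch the unequal-length case via Lemma~\ref{lem:tilde2-equal}~(i), use the bound $|\bar{\PREF}^2_{F,i}(f_i(s_{r-1}))|\le 1$ from Lemma~\ref{cor:pref-sum} to force $g_1g_2=g'_1g'_2$ when $r\ge 2$, and for $r=1$, $|\PREF^2_{F,i}|=3$, $l\ge 2$ verify that the map $g_1g_2\mapsto(\text{first two bits of }\ddot{\gamma}(s_1))$ is injective. The only cosmetic differences are that the paper organizes the $|\PREF^2_{F,i}|=3$ analysis by writing $\PREF^2_{F,i}=\{h_1h_2,\bar{h}_10,\bar{h}_11\}$ and tabulating the three possible values of $g_1g_2$, whereas you split on which of the third/fourth cases of~(\ref{eq:oezl1k2insm5}) applies; and your separate exclusion of $l=0$ is harmless but unnecessary, since equal lengths plus $\gamma(s_1)\neq\gamma(s'_1)$ already rule it out.
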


\begin{proof}[Proof of Lemma \ref{lem:tilde-equal0}]
Assume that (a) or (b) holds.

($\implies$) Directly from (\ref{eq:oezl1k2insm5}).

($\impliedby$) We prove the contraposition.
Namely, we prove $\ddot{\gamma}(s_r) \neq \ddot{\gamma}(s'_r)$ assuming $\gamma(s_{r}) \neq \gamma(s'_r)$.
Put $\gamma(s_{r}) = g_1g_2\ldots g_l$ and $\gamma(s'_r) = g'_1g'_2\ldots g'_{l'}$.
We consider the following two cases separately: the case $|\gamma(s_r)| \neq |\gamma(s'_r)|$ and the case $|\gamma(s_r)| = |\gamma(s'_r)|$.
\begin{itemize}
\item The case $|\gamma(s_r)| \neq |\gamma(s'_r)|$: We have
\begin{equation}
|\ddot{\gamma}(s_r)| \overset{(\mathrm{A})}{=} |\gamma(s_r)| \overset{(\mathrm{B})}{\neq} |\gamma(s'_r)| \overset{(\mathrm{C})}{=} |\ddot{\gamma}(s'_r)|,
\end{equation}
where
(A) follows from Lemma \ref{lem:tilde2-equal} (i),
(B) follows from the assumption,
and (C) follows from Lemma \ref{lem:tilde2-equal} (i).
This shows $\ddot{\gamma}(s_r) \neq \ddot{\gamma}(s'_r)$.

\item The case $|\gamma(s_r)| = |\gamma(s'_r)|$:
If $|\gamma(s_r)| = |\gamma(s'_r)| \geq 3$ and $g_3g_4\ldots g_{l} \neq g'_3g'_4\ldots g'_{l'}$, then we obtain $\ddot{\gamma}(s_r) \neq \ddot{\gamma}(s'_r)$ directly from (\ref{eq:oezl1k2insm5}).
Thus, we assume 
\begin{equation}
\label{eq:5258bln24k8f}
g_j \neq g'_j \,\,\text{for some}\,\, 1 \leq j \leq \min\{2, |\gamma(s_r)|\}. 
\end{equation}

Now we show that the condition (a) is necessarily holds by contradiction assuming that the condition (a) does not hold and the condition (b) holds.
Then we have $|\gamma(s_r)| = |\gamma(s'_r)| \geq 2$ by Lemma \ref{lem:dot-length} (ii) and 
we have $g_1g_2 \in \bar{\PREF}^2_{F, i}(f_i(s_{r-1}))$ and $g'_1g'_2 \in \bar{\PREF}^2_{F, i}(f_i(s'_{r-1}))$ by Lemma \ref{lem:dot-length} (iii).
Since $s_{r-1} = s'_{r-1}$ by the condition (b), we have
\begin{equation}
\label{eq:3ycapemtus74}
\{g_1g_2, g'_1g'_2\} \subseteq \bar{\PREF}^2_{F, i}(f_i(s_{r-1})).
\end{equation}
Therefore, we have
\begin{equation}
|\{g_1g_2, g'_1g'_2\}| \overset{(\mathrm{A})}{\leq} |\bar{\PREF}^2_{F, i}(f_i(s_{r-1}))| \overset{(\mathrm{B})}{\leq} |\PREF^2_{F, i}(f_i(s_{r-1}))| - |\PREF^2_{F, \trans_i(s_{r-1})}| \overset{(\mathrm{C})}{\leq} 4 - 3 = 1,
\end{equation}
where
(A) follows from (\ref{eq:3ycapemtus74}),
(B) follows from Lemma \ref{lem:pref-sum} (ii),
and (C) follows from $F \in \mathscr{F}_{2}$.
This leads to $g_1g_2 = g'_1g'_2$, which conflicts with (\ref{eq:5258bln24k8f}).
Therefore, the condition (a), that is, $r = 1$ holds.

We consider the following two cases separately: the case $|\PREF^2_{F, i}| = 4$ and the case $|\PREF^2_{F, i}| = 3$.
\begin{itemize}
\item The case $|\PREF^2_{F, i}| = 4$: We obtain
\begin{equation}
\ddot{\gamma}(s_1) \overset{(\mathrm{A})}{=} \gamma(s_1)  \overset{(\mathrm{B})}{\neq} \gamma(s'_1) \overset{(\mathrm{C})}{=}  \ddot{\gamma}(s'_1)
\end{equation}
as desired, where
(A) follows from $|\PREF^2_{F, i}| = 4$ and the first case of (\ref{eq:oezl1k2insm5}),
(B) follows from (\ref{eq:5258bln24k8f}),
and (C) follows from $|\PREF^2_{F, i}| = 4$ and the first case of (\ref{eq:oezl1k2insm5}).

\item The case $|\PREF^2_{F, i}| = 3$:
We first prove
\begin{equation}
\label{eq:h3kyk0pw4u1y}
|\gamma(s_1)| = |\gamma(s'_1)| \geq 2
\end{equation}
by assuming the contrary $|\gamma(s_1)| = |\gamma(s'_1)| = 1$.
Then by (\ref{eq:5258bln24k8f}), we may assume $\gamma(s_1) = 0$ and $\gamma(s'_1) = 1$ without loss of generality.
Hence, we have
\begin{equation}
\PREF^2_{F, i}
\overset{(\mathrm{A})}{\supseteq} \bar{\PREF}^2_{F, i}
\overset{(\mathrm{B})}{=} 0\PREF^1_{F, i}(0) \cup 1\PREF^1_{F, i}(1)
\overset{(\mathrm{C})}{\supseteq} 0\PREF^1_{F, \trans_i(s_1)} \cup 1\PREF^1_{F, \trans_i(s'_1)}
\overset{(\mathrm{D})}{=} 0\{0, 1\} \cup 1\{0, 1\}
= \{00, 01, 10, 11\},
\end{equation}
where
(A) follows from Lemma \ref{lem:pref-sum} (i),
(B) follows from Lemma \ref{lem:pref-sum} (iii),
(C) follows from Lemma \ref{lem:pref-sum} (i),
and (D) follows from $F \in \mathscr{F}_2 \subseteq \mathscr{F}_1$.
This conflicts with $|\PREF^2_{F, i}| = 3$.
Therefore, (\ref{eq:h3kyk0pw4u1y}) holds.

By $|\PREF^2_{F, i}| = 3$, we have $\PREF^2_{F, i} = \{h_1h_2, \bar{h}_10, \bar{h}_11\}$ for some $h_1h_2 \in \mathcal{C}^2$.
By (\ref{eq:h3kyk0pw4u1y}), we have $g_1g_2 \in \PREF^2_{F, i} = \{h_1h_2, \bar{h}_10, \bar{h}_11\}$.
\begin{itemize}
\item If $g_1g_2 = h_1h_2$, then $\ddot{\gamma}(s_1) = 01$ by the third case of (\ref{eq:oezl1k2insm5}).
\item If $g_1g_2 = \bar{h}_10$, then $\ddot{\gamma}(s_1) = 10$ by the fourth case of (\ref{eq:oezl1k2insm5}).
\item If $g_1g_2 = \bar{h}_11$, then $\ddot{\gamma}(s_1) = 11$ by the fourth case of (\ref{eq:oezl1k2insm5}).
\end{itemize}
By the same argument, we have $\ddot{\gamma}(s'_1) = 01$ (resp. $10, 11$) if $g'_1g'_2 = h_1h_2$ (resp. $\bar{h}_10, \bar{h}_11$).
In particular, $\ddot{\gamma}(s_1) = \ddot{\gamma}(s'_1)$ holds if and only if $g_1g_2 = g'_1g'_2$.
Therefore, $\ddot{\gamma}(s_1) \neq \ddot{\gamma}(s'_1)$ is implied by (\ref{eq:5258bln24k8f}) as desired.
\end{itemize}

\end{itemize}
\end{proof}

\subsection{Proof of Lemma \ref{lem:tilde2-pref}}
\label{subsec:proof-lem:tilde2-pref}

\begin{proof}[Proof of Lemma \ref{lem:tilde2-pref}]
(Proof of (i)):
We consider the following two cases separately: (I) the case $|\PREF^2_{F, i}| = 3$; (II) the case $|\PREF^2_{F, i}| = 4$.
\begin{itemize}
\item[(I)] The case $|\PREF^2_{F, i}| = 3$:
Choose $\pmb{x} = x_1x_2\ldots x_{n} \in \mathcal{S}^{\ast}$ arbitrarily, and let $\gamma(s_1)\gamma(s_2)\ldots \gamma(s_{\rho})$ be the $\gamma$-decomposition of $f_i(x_1)$.
By $|\PREF^2_{F, i}| = 3$,
applying second, third, and fourth cases of (\ref{eq:oezl1k2insm5}), we have either $\ddot{\gamma}(s_1) \succeq 1$ or $\ddot{\gamma}(s_1) \succeq 01$, in particular,
$\ddot{f_i}^{\ast}(\pmb{x}) \not \succeq 00$.
This implies 
\begin{equation}
\label{eq:7zma735nyikb}
\PREF^2_{\ddot{F}, i} \subseteq \{01, 10, 11\}.
\end{equation}
By $|\PREF^2_{F, i}| = 3$, there exists $\pmb{c} = c_1c_2 \in \mathcal{C}^2$ such that 
\begin{equation}
\label{eq:he2rzbncp66u}
\PREF^2_{F, i} = \{c_1c_2, \bar{c_1}0, \bar{c_1}1\}.
\end{equation}
Then there exists $\pmb{x}' = x'_1x'_2\ldots x'_{n'} \in \mathcal{S}^{+}$ such that
\begin{equation}
\label{eq:behdwirdloju}
f^{\ast}_i(\pmb{x}') \succeq \pmb{c}.
\end{equation}
Let $\gamma(s'_1)\gamma(s'_2)\ldots \gamma(s'_{\rho'})$ be the $\gamma$-decomposition of $f_{i}(x'_1)$.
Now we show $|\gamma(s'_1)| \geq 2$ by deriving a contradiction for the following two cases separately: the case $|\gamma(s'_1)| = 0$ and the case $|\gamma(s'_1)| = 1$.
\begin{itemize}
\item If we assume $|\gamma(s'_1)| = 0$: We have
\begin{equation}
|\PREF^2_{F, i}| \overset{(\mathrm{A})}{\geq} |\bar{\PREF}^2_{F, i}| + |\PREF^2_{F, \trans_i(s'_1)}|
\overset{(\mathrm{B})}{\geq} |\bar{\PREF}^0_{F, i}| + |\PREF^2_{F, \trans_i(s'_1)}|
\overset{(\mathrm{C})}{\geq} 1 + |\PREF^2_{F, \trans_i(s'_1)}|
\overset{(\mathrm{D})}{\geq} 1 + 3 = 4,
\end{equation}
where
(A) follows from Lemma \ref{lem:pref-sum} (ii) and $|\gamma(s'_1)| = 0$,
(B) follows from Lemma \ref{lem:pref-inc2} (ii) (b),
(C) follows from Lemma \ref{lem:leaf} (iii) because $f_i$ is injective by Lemma \ref{lem:propertyT2}, 
and (D) follows from $F \in \mathscr{F}_2$.
This conflicts with $|\PREF^2_{F, i}| = 3$.

\item If we assume $|\gamma(s'_1)| = 1$: We have
\begin{equation}
\PREF^2_{F, i} \overset{(\mathrm{A})}{\supseteq} \bar{\PREF}^2_{F, i} \overset{(\mathrm{B})}{\supseteq} c_1\PREF^1_{F, i}(c_1) \overset{(\mathrm{C})}{=} c_1\PREF^1_{F, i}(f_i(s'_1)) \overset{(\mathrm{D})}{\supseteq} c_1\PREF^1_{F, \trans_i(s'_1)} \overset{(\mathrm{E})}{=} c_1\{0, 1\} \owns c_1\bar{c_2},
\end{equation}
where
(A) follows from Lemma \ref{lem:pref-sum} (i),
(B) follows from Lemma \ref{lem:pref-sum} (iii),
(C) follows since $c_1 = f_i(s'_1)$ by (\ref{eq:behdwirdloju}) and $|\gamma(s'_1)| = 1$,
(D) follows from Lemma \ref{lem:pref-sum} (i),
and (E) follows from $F \in \mathscr{F}_2 \subseteq \mathscr{F}_1$.
This conflicts with (\ref{eq:he2rzbncp66u}).
\end{itemize}

Hence, we have $|\gamma(s'_1)| \geq 2$ and thus $\gamma(s'_1) \succeq c_1c_2$ by (\ref{eq:behdwirdloju}).
Therefore, by the third case of (\ref{eq:oezl1k2insm5}), we obtain $\ddot{f}^{\ast}_i(\pmb{x}') \succeq \ddot{f}^{\ast}_i(x'_1) \succeq \ddot{\gamma}(s'_1) \succeq 01$, which leads to
\begin{equation}
\label{eq:8ic1sqqabh1q}
01 \in \PREF^2_{\ddot{F}, i}.
\end{equation}

Next, we show that
\begin{equation}
\label{eq:2r20ae7sycq0}
10, 11 \in \PREF^2_{\ddot{F}, i}.
\end{equation}
To prove it, we choose $a \in \mathcal{C}$ arbitrarily and show that $1a \in \PREF^2_{\ddot{F}, i}$.
Since $\bar{c_1}a \in \PREF^2_{F, i}$ by (\ref{eq:he2rzbncp66u}), 
there exists $\pmb{x}'' = x''_1x''_2\ldots x''_{n''} \in \mathcal{S}^{+}$ such that
\begin{equation}
\label{eq:4imtv3zkl0vb}
f^{\ast}_i(\pmb{x}'') \succeq \bar{c_1}a.
\end{equation}
Let $\gamma(s''_1)\gamma(s''_2)\ldots \gamma(s''_{\rho''})$ be the $\gamma$-decomposition of $f_{i}(x''_1)$.
We consider the following two cases separately: the case $|\gamma(s''_1)| \geq 2$ and the case $|\gamma(s''_1)| = 1$.
\begin{itemize}
\item The case $|\gamma(s''_1)| \geq 2$: 
Then we have $\gamma(s''_1) \succeq \bar{c_1}a$ by (\ref{eq:4imtv3zkl0vb}).
Hence, by $|\PREF^2_{F, i}| = 3$, $|\gamma(s''_1)| \geq 2$, and (\ref{eq:he2rzbncp66u}), we have $\ddot{\gamma}(s''_1) \succeq 1a$ applying the fourth case of (\ref{eq:oezl1k2insm5}).
Thus, we obtain $\ddot{f}^{\ast}_i(\pmb{x}'') \succeq \ddot{\gamma}(s''_1) \succeq 1a$, which leads to $1a \in \PREF^2_{\ddot{F}, i}$ as desired.
\item The case $|\gamma(s''_1)| = 1$: We have
\begin{equation}
\PREF^2_{\ddot{F}, i}
\overset{(\mathrm{A})}{\supseteq} \bar{\PREF}^2_{\ddot{F}, i}
\overset{(\mathrm{B})}{\supseteq}  1\PREF^1_{\ddot{F}, i}(1)
\overset{(\mathrm{C})}{=}  1\PREF^1_{\ddot{F}, i}(\ddot{\gamma}(s''_1))
\overset{(\mathrm{D})}{\supseteq}  1\PREF^1_{\ddot{F}, \ddot{\trans}_i(s''_1)}
\overset{(\mathrm{E})}{=}  1\{0, 1\} \owns 1a,
\end{equation}
where
(A) follows from Lemma \ref{lem:pref-sum} (i),
(B) follows from Lemma \ref{lem:pref-sum} (iii),
(C) is obtained by applying the second case of (\ref{eq:oezl1k2insm5}) by $|\PREF^2_{F, i}| = 3$ and $|\gamma(s''_1)| = 1$,
(D) follows from Lemma \ref{lem:pref-sum} (i),
and (E) follows from $F \in \mathscr{F}_2 \subseteq \mathscr{F}_1$.

\end{itemize}
Therefore, we conclude that (\ref{eq:2r20ae7sycq0}) holds.
By (\ref{eq:7zma735nyikb}), (\ref{eq:8ic1sqqabh1q}), and (\ref{eq:2r20ae7sycq0}), we obtain $\PREF^2_{\ddot{F}, i} = \{01, 10, 11\}$ as desired.

\item[(II)] The case $|\PREF^2_{F, i}| = 4$:
We consider the following two cases separately: (II-A) the case $\assign_{F, i}(\lambda) \neq \emptyset$; (II-B) the case $\assign_{F, i}(\lambda) = \emptyset$.
\begin{itemize}
\item[(II-A)] The case $\assign_{F, i}(\lambda) \neq \emptyset$:
Since $f_i$ is injective by Lemma \ref{lem:propertyT2}, we can choose $s \in \mathcal{S}$ such that $\assign_{F, i}(\lambda) = \{s\}$.
Also, we have $\bar{\PREF}^0_{F, i} \neq \emptyset$ applying Lemma \ref{lem:leaf} (iii).
Hence, by Lemma \ref{cor:pref-sum}, we have $|\PREF^2_{F, \trans_i(s)}| \leq 3$.
In particular, it holds that $|\PREF^2_{F, \trans_i(s)}| = 3$ by $F \in \mathscr{F}_2$.
Therefore, by the result of the case (I), we obtain
\begin{equation}
\label{eq:5jqigczmqbuk}
\PREF^2_{\ddot{F}, \trans_i(s)} = \{01, 10, 11\}.
\end{equation}

Since $f_i$ is injective, we can choose $s' \in \mathcal{S}$ such that $s' \neq \lambda$.
Let $\gamma(s'_1)\gamma(s'_2)\ldots \gamma(s'_{\rho'})$ be the $\gamma$-decomposition of $f_i(s')$. 
By Lemma \ref{lem:dot-length} (i) and $\assign_{F, i}(\lambda) \neq \emptyset$, we have
\begin{equation}
\label{eq:ai6o2dog2hs4}
\gamma(s'_1) = \lambda.
\end{equation}
Note that $\rho' \geq 2$ holds by (\ref{eq:ai6o2dog2hs4}) and $s'_{\rho'} = s' \neq \lambda$.
We have
\begin{eqnarray}
\ddot{f}_i(s') &=& \ddot{\gamma}(s'_1)\ddot{\gamma}(s'_2)\ldots \ddot{\gamma}(s'_{\rho'})\\
&\succeq& \ddot{\gamma}(s'_1)\ddot{\gamma}(s'_2)\\
&\overset{(\mathrm{A})}{=}& \ddot{\gamma}(s'_2)\\
&\overset{(\mathrm{B})}{\succeq}& 00,
\end{eqnarray}
where
(A) follows from (\ref{eq:ai6o2dog2hs4}) and Lemma \ref{lem:tilde2-equal} (i),
and (B) follows from the fifth case of (\ref{eq:oezl1k2insm5}).

Hence, we have
\begin{equation}
\label{eq:y9joty2syliq}
00 \in \bar{\PREF}^2_{\ddot{F}, i}.
\end{equation}
We obtain
\begin{equation}
\PREF^2_{\ddot{F}, i} \overset{(\mathrm{A})}{\supseteq} \PREF^2_{\ddot{F}, \trans_i(s)}  \cup \bar{\PREF}^2_{\ddot{F}, i} \overset{(\mathrm{B})}{\supseteq} \{01, 10, 11\} \cup \{00\} = \{00, 01, 10, 11\}
\end{equation}
as desired, where
(A) follows from Lemma \ref{lem:pref-sum} (i),
and (B) follows from (\ref{eq:5jqigczmqbuk}) and (\ref{eq:y9joty2syliq}).

\item[(II-B)] The case $\assign_{F, i}(\lambda) = \emptyset$:
It suffices to show that $\PREF^2_{\ddot{F}, i} \supseteq \PREF^2_{F, i}$ since $|\PREF^2_{F, i}| = 4$.
Choose $\pmb{c} = c_1c_2 \in \PREF^2_{F, i} = \{00, 01, 10, 11\}$ arbitrarily.
Then there exists $\pmb{x} = x_1x_2\ldots x_{n} \in \mathcal{S}^{+}$ such that
\begin{equation}
\label{eq:szvhp47q8soi}
f^{\ast}_i(\pmb{x}) \succeq \pmb{c}.
\end{equation}
Let $\gamma(s_1)\gamma(s_2)\ldots \gamma(s_{\rho})$ be the $\gamma$-decomposition of $f_i(x_1)$.
We consider the following two cases separately: the case $|\gamma(s_1)| \geq 2$ and the case $|\gamma(s_1)| = 1$.
Note that we can exclude the case $|\gamma(s_1)| = 0$ since $\assign_{F, i}(\lambda) = \emptyset$.

\begin{itemize}
\item The case $|\gamma(s_1)| \geq 2$:
We have
\begin{equation}
\ddot{f}_{i}(x_1) \succeq \ddot{\gamma}(s_1) \overset{(\mathrm{A})}{=} \gamma(s_1) \overset{(\mathrm{B})}{\succeq} \pmb{c}, 
\end{equation}
where
(A) follows from $|\PREF^2_{F, i}| = 4$ and the first case of (\ref{eq:oezl1k2insm5}), 
and (B) follows from (\ref{eq:szvhp47q8soi}) and $|\gamma(s_1)| \geq 2$.
This implies $\pmb{c} \in \PREF^2_{\ddot{F}, i}$ as desired.

\item The case $|\gamma(s_1)| = 1$:
We have
\begin{equation}
\label{eq:r5zt0mjs9ah3}
\ddot{f}_i(s_1) = \ddot{\gamma}(s_1) \overset{(\mathrm{A})}{=} \gamma(s_1) \overset{(\mathrm{B})}{=} c_1,
\end{equation}
where
(A) follows from $|\PREF^2_{F, i}| = 4$ and the first case of (\ref{eq:oezl1k2insm5}),
and (B) follows from (\ref{eq:szvhp47q8soi}) and $|\gamma(s_1)| = 1$.

Put $j \coloneqq \trans_i(s_1)$. By Lemma \ref{lem:longest}, we can choose the longest sequence $\pmb{x}' = x'_1x'_2\ldots x'_{n'} \in \mathcal{S}^{+}$ such that $f^{\ast}_j(\pmb{x}') = \lambda$.
Then we have $\assign_{F, \trans^{\ast}_j(\pmb{x}')}(\lambda) = \emptyset$.
Also, we have $|\PREF^2_{F, \trans^{\ast}_j(\pmb{x}')}| \geq 3$ by $F \in \mathscr{F}_2$.
In particular, we have at one of the following conditions (a) and (b).
\begin{enumerate}[(a)]
\item $|\PREF^2_{F, \trans^{\ast}_j(\pmb{x}')}| = 3$.
\item $|\PREF^2_{F, \trans^{\ast}_j(\pmb{x}')}| = 4$ and $\assign_{F, \trans^{\ast}_j(\pmb{x}')}(\lambda) = \emptyset$.
\end{enumerate}
Therefore, from the cases (I) and (II-A) proven above,
we have $\PREF^2_{\ddot{F}, \ddot{\trans}^{\ast}_j(\pmb{x}')} \supseteq \{01, 10, 11\}$, which leads to
\begin{equation}
\label{eq:6omo9o8vukce}
\PREF^1_{\ddot{F}, \ddot{\trans}^{\ast}_j(\pmb{x}')} = \{0, 1\}
\end{equation}
by Lemma \ref{lem:pref-inc} (i).
Thus, we have
\begin{eqnarray}
\PREF^2_{\ddot{F}, i}
&\overset{(\mathrm{A})}{\supseteq}& \bar{\PREF}^2_{\ddot{F}, i} \overset{(\mathrm{B})}{\supseteq} c_1\PREF^1_{\ddot{F}, i}(c_1) \overset{(\mathrm{C})}{=} c_1\PREF^1_{\ddot{F}, i}(\ddot{f}_i(s_1)) \nonumber\\
&\overset{(\mathrm{D})}{\supseteq}& c_1\PREF^1_{\ddot{F}, j}
\overset{(\mathrm{D})}{\supseteq} c_1\PREF^1_{\ddot{F}, \ddot{\trans}^{\ast}_j(x'_1)} 
\overset{(\mathrm{D})}{\supseteq} c_1\PREF^1_{\ddot{F}, \ddot{\trans}^{\ast}_j(x'_1x'_2)}
\overset{(\mathrm{D})}{\supseteq} \cdots
\overset{(\mathrm{D})}{\supseteq} c_1\PREF^1_{\ddot{F}, \ddot{\trans}^{\ast}_j(\pmb{x}')} \nonumber\\
&\overset{(\mathrm{E})}{=}& c_1\{0, 1\} \owns c_1c_2 = \pmb{c},
\end{eqnarray}
where
(A) follows from Lemma \ref{lem:pref-sum} (i),
(B) follows from Lemma \ref{lem:pref-sum} (iii),
(C) follows from (\ref{eq:r5zt0mjs9ah3}),
(D)s follow from Lemma \ref{lem:pref-sum} (i),
and (E) follows from (\ref{eq:6omo9o8vukce}).
Therefore, we conclude that $\PREF^2_{\ddot{F}, i} \supseteq \PREF^2_{F, i} = \{00, 01, 10, 11\}$ as desired.
\end{itemize}

\end{itemize}
\end{itemize}

(Proof of (ii)):
We have
\begin{eqnarray}
\bar{\PREF}^0_{F, i}(f_i(s)) \neq \emptyset
&\overset{(\mathrm{A})}{\iff}& \bar{\PREF}^2_{F, i}(f_i(s)) \neq \emptyset\\
&\iff& {}^{\exists}\pmb{x} \in \mathcal{S}^{+}; {}^{\exists}\pmb{c} \in \mathcal{C}^2;  (f^{\ast}_i(\pmb{x}) \succeq f_i(s)\pmb{c}, f_i(x_1) \succ f_i(s))\\
&\overset{(\mathrm{B})}{\iff}& {}^{\exists}\pmb{x} \in \mathcal{S}^{+};  {}^{\exists}\pmb{c} \in \mathcal{C}^2; (\ddot{f}^{\ast}_i(\pmb{x}) \succeq \ddot{f}_i(s)\pmb{c}, \ddot{f}_i(x_1) \succ \ddot{f}_i(s))\\
&\iff& \bar{\PREF}^2_{\ddot{F}, i}(\ddot{f}_i(s)) \neq \emptyset, \label{eq:gj94kfo0jtgl}
\end{eqnarray}
where
(A) follows from Lemma \ref{lem:pref-inc2} (ii) (a),
and (B) follows from Lemma \ref{lem:tilde2-equal} (iii).

We consider the following two cases separately: the case $\bar{\PREF}^0_{F, i}(f_i(s)) = \emptyset$ and the case $\bar{\PREF}^0_{F, i}(f_i(s)) \neq \emptyset$.
\begin{itemize}
\item The case $\bar{\PREF}^0_{F, i}(f_i(s)) = \emptyset$:
By (\ref{eq:gj94kfo0jtgl}), the condition $\bar{\PREF}^0_{F, i}(f_i(s)) = \emptyset$ is equivalent to $\bar{\PREF}^2_{\ddot{F}, i}(\ddot{f}_i(s)) = \emptyset$ as desired.

\item The case $\bar{\PREF}^0_{F, i}(f_i(s)) \neq \emptyset$:
Then since $\bar{\PREF}^2_{\ddot{F}, i}(\ddot{f}_i(s)) \neq \emptyset$ holds by (\ref{eq:gj94kfo0jtgl}),
 it suffices to show that $\bar{\PREF}^2_{\ddot{F}, i}(\ddot{f}_i(s)) \subseteq \{00\}$.
Moreover, to prove this, it suffices to show that for any $\pmb{x} = x_1x_2\ldots x_{n} \in \mathcal{S}^{+}$ such that $\ddot{f}_i(x_1) \succ \ddot{f}_i(s)$, we have $\ddot{f}^{\ast}_i(\pmb{x}) \succeq \ddot{f}_i(s)00$.

Choose $\pmb{x} = x_1x_2\ldots x_{n} \in \mathcal{S}^{+}$ such that
\begin{equation}
\label{eq:alh8bogogq0g}
\ddot{f}_i(x_1) \succ \ddot{f}_i(s).
\end{equation}
Let $\gamma(s_1)\gamma(s_2)\ldots \gamma(s_{\rho})$ be the $\gamma$-decomposition of $f_i(x_1)$.
Because $f_i(x_1) \succ f_i(s)$ holds by (\ref{eq:alh8bogogq0g}) and Lemma \ref{lem:tilde2-equal} (iii), we have $s = s_r$ and $\ddot{f}_i(s) = \ddot{\gamma}(s_1)\ddot{\gamma}(s_2)\ldots \ddot{\gamma}(s_r)$ for some $r = 1, 2, \ldots, \rho-1$.
For such $r$, we have
\begin{eqnarray}
\ddot{f}^{\ast}_i(\pmb{x})
&\succeq& \ddot{f}_i(x_1)\\
&=& \ddot{\gamma}(s_1)\ddot{\gamma}(s_2)\ldots \ddot{\gamma}(s_r)\ddot{\gamma}(s_{r+1}) \ldots\ddot{\gamma}(s_{\rho})\\
&\succeq& \ddot{f}_i(s)\ddot{\gamma}(s_{r+1})\\
&\overset{(\mathrm{A})}{\succeq}& \ddot{f}_i(s)00 \label{eq:tw0s28r5grc3}
\end{eqnarray}
as desired, where (A) follows from the fifth case of (\ref{eq:oezl1k2insm5}).
\end{itemize}
\end{proof}

\subsection{List of Notations}
\label{subsec:notation}

\begin{longtable}{cp{14.5cm}}
  $|\mathcal{A}|$ & the cardinality of a set $\mathcal{A}$, defined at the beginning of Section \ref{sec:preliminary}. \\
  $\mathcal{A}^k$ & the set of all sequences of length $k$ over a set $\mathcal{A}$, defined at the beginning of Section \ref{sec:preliminary}. \\
  $\mathcal{A}^{\ast}$ & the set of all sequences of finite length over a set $\mathcal{A}$, defined at the beginning of Section \ref{sec:preliminary}.\\
  $\mathcal{A}^{+}$ & the set of all sequences of finite positive length over a set $\mathcal{A}$, defined at the beginning of Section \ref{sec:preliminary}.\\
  $a_{F, i}$ & defined in Definition \ref{def:tilde}.\\
  $\mathcal{C}$ & the coding alphabet $\mathcal{C} = \{0, 1\}$, defined at the beginning of Section \ref{sec:preliminary}. \\
  $d_{F, i}$ & defined in (\ref{eq:emksr78o7m54}). \\
  $f^{\ast}_i$ & defined in Definition \ref{def:f_T}. \\
  $F$ & shorthand for a code-tuple $F(f_0, f_1, \ldots, f_{m-1}, \trans_0, \trans_1, \ldots, \trans_{m-1})$, also written as $F(f, \trans)$, defined after Definition \ref{def:treepair}.\\
  $|F|$ & the number of code tables of $F$, defined after Definition \ref{def:treepair}. \\
  $[F]$ & shorthand for $[|F|] = \{0, 1, 2, \ldots, |F|-1\}$, defined below Definition \ref{def:treepair}.\\
  $\widehat{F}$ & defined in Definition \ref{def:rotation}. \\
  $\dot{F}$ & defined in Definition \ref{def:tilde}.\\
  $\ddot{F}$ & defined in Definition \ref{def:tilde2}.\\
  $\mathscr{F}^{(m)}$ & the set of all $m$-code-tuples, defined after Definition \ref{def:treepair}.\\
  $\mathscr{F}$ & the set of all code-tuples, defined after Definition \ref{def:treepair}.\\
  $\mathscr{F}_{\AIFV}$ & the set of all AIFV codes, defined in Definition \ref{def:subsetTaifv}. \\
  $\mathscr{F}_{\ext}$ & the set of all extendable code-tuples, defined in Definition \ref{def:F_ext}. \\
  $\mathscr{F}_{k\hdec}$ & the set of all $k$-bit delay decodable code-tuples, defined in Definition \ref{def:k-bitdelay}. \\
  $\mathscr{F}_{\opt}$ & the set of all optimal code-tuples, defined in Definition \ref{def:optimalset}. \\
  $\mathscr{F}_{\reg}$ & the set of all regular code-tuples, defined in Definition \ref{def:regular}. \\
  $\mathscr{F}_0$ & $\{F \in \mathscr{F}_{\reg} \cap \mathscr{F}_{2\hdec}: {}^{\forall}i \in [F]; \PREF^1_{F, i} \neq \emptyset \} = \mathscr{F}_{\reg} \cap \mathscr{F}_{\ext} \cap \mathscr{F}_{2\hdec}$, defined in Definition \ref{def:subsetT0}.\\
  $\mathscr{F}_1$ & $\{F \in \mathscr{F}_{\reg} \cap \mathscr{F}_{2\hdec}: {}^{\forall}i \in [F]; \PREF^1_{F, i} = \{0, 1\} \}$, defined in Definition \ref{def:classes}.\\
  $\mathscr{F}_2$ & $\{F \in \mathscr{F}_{\reg} \cap \mathscr{F}_{2\hdec}: {}^{\forall}i \in [F]; |\PREF^2_{F, i}| \geq 3 \}$, defined in Definition \ref{def:classes}.\\
  $\mathscr{F}_3$ & $\{F \in \mathscr{F}_{\reg} \cap \mathscr{F}_{2\hdec}: {}^{\forall}i \in [F]; \PREF^2_{F, i} \supseteq \{01, 10, 11\} \}$, defined in Definition \ref{def:classes}.\\
  $\mathscr{F}_4$ & $\{F \in \mathscr{F}_{\reg} \cap \mathscr{F}_{2\hdec} \cap \mathscr{F}^{(2)}: \PREF^2_{F, 0} = \{00, 01, 10, 11\}, \PREF^2_{F, 1} = \{01, 10, 11\}\}$, defined in Definition \ref{def:classes}.\\
  $L(F)$ & the average codeword length of a code-tuple $F$, defined in Definition \ref{def:evaluation}. \\
  $L_i(F)$ & the average codeword length of the $i$-th code table of $F$, defined in Definition \ref{def:evaluation}. \\
  $[m]$ & $\{0, 1, 2, \ldots, m-1\}$, defined at the beginning of Section \ref{sec:introduction}. \\
  $\mathcal{M}_F$ & $\{i \in [F] : |\PREF^2_{F, i}| = 2\}$, defined in Lemma \ref{lem:transA-arg4}.\\
  $\mathcal{P}^k_{F, i}$ & $\{\pmb{c} \in \mathcal{C}^k : \pmb{x} = x_1x_2\ldots x_{n} \in \mathcal{S}^{+}, f^{\ast}_i(\pmb{x}) \succeq \pmb{b}\pmb{c}, f_i(x_1) \succeq \pmb{b} \}$, defined in Definition \ref{def:pref}.\\
  $\bar{\mathcal{P}}^k_{F, i}$ & $\{\pmb{c} \in \mathcal{C}^k : \pmb{x} = x_1x_2\ldots x_{n} \in \mathcal{S}^{+}, f^{\ast}_i(\pmb{x}) \succeq \pmb{b}\pmb{c}, f_i(x_1) \succ \pmb{b} \}$, defined in Definition \ref{def:pref}.\\
  $\pref(\pmb{x})$ & the sequence obtained by deleting the last letter of $\pmb{x}$, defined at the beginning of Section \ref{sec:preliminary}. \\
  $Q(F)$ & the transition probability matrix, defined in Definition \ref{def:transprobability}.\\
  $Q_{i, j}(F)$ & the transition probability, defined in Definition \ref{def:transprobability}.\\
  $\mathcal{S}$ & the source alphabet, defined at the beginning of Section \ref{sec:preliminary}.\\
  $\assign_{F, i}$ & $\assign_{F, i}(\pmb{b}) \coloneqq \{s \in \mathcal{S} : f_i(s) = \pmb{b}\}$, defined in Definition \ref{def:assign}. \\
  $\pmb{x} \preceq \pmb{y}$ & \pmb{x} is a prefix of \pmb{y}, defined at the beginning of Section \ref{sec:preliminary}.\\
    $\pmb{x} \prec \pmb{y}$ & $\pmb{x} \preceq \pmb{y}$ and $\pmb{x} \neq \pmb{y}$, defined at the beginning of Section \ref{sec:preliminary}. \\
  $\suff(\pmb{x})$ & the sequence obtained by deleting the first letter of $\pmb{x}$, defined at the beginning of Section \ref{sec:preliminary}. \\
  $|\pmb{x}|$ & the length of a sequence $\pmb{x}$, defined at the beginning of Section \ref{sec:preliminary}.\\
  $\gamma(s_r)$ & defined in Definition \ref{def:gamma}.\\
  $\lambda$ & the empty sequence, defined at the beginning of Section \ref{sec:preliminary}.\\
  $\mu(s)$ & the probability of occurrence of symbol $s$, defined at the beginning of Subsection \ref{sec:preliminary}. \\
  $\pmb{\pi}(F)$ & defined in Definition \ref{def:regular}.\\
  $\sigma$ & the alphabet size $|\mathcal{S}|$, defined at the beginning of Section \ref{sec:preliminary}.\\
  $\trans^{\ast}_i$ & defined in Definition \ref{def:f_T}. \\
\end{longtable}


\nocite{*}
\bibliographystyle{IEEE}

%

\end{document}